\keywords{%
  term rewriting,
  abstract completion,
  ordered completion,
  canonicity,
  \mbox{Isabelle/HOL}}
\tikzset{
  big arrow/.style={
    decoration={markings, mark=at position 1 with {\arrow[scale=1.5]{>}}},
    postaction={decorate},},
}
\definecolor{linkblue}{RGB}{0,0,180}
\setlist[itemize]{leftmargin=*}
\setlist[enumerate]{leftmargin=*}
\newcommand\toolname[1]{\mbox{\tsfs{#1}}}
\newcommand\tool[1]{\toolname{#1}\xspace}
\newcommand\ceta{\tool{C\kern-0.2exe\kern-0.5exT\kern-0.5exA}}
\newcommand\kbcv{\tool{KBCV}}
\newcommand\mkbtt{\tool{mkb\kern-0.2exTT}}
\newcommand\cime{\tool{C\kern-0.1ex\textsl{i}ME}}
\newcommand\slothrop{\tool{Slothrop}}
\newcommand{\isafor}{\tsfs{Isa\kern-0.2exF\kern-0.2exo\kern-0.2exR}%
\xspace}
\newcommand{\theory}[2]{\nolinkurl{#2}}
\newcommand{\theorydir}[1]{\nolinkurl{#1}}
\newcommand{\m}[1]{\mathsf{#1}}
\newcommand{\tsfs}[1]{\textsf{\small #1}}
\newcommand{\mc}[1]{\mathcal{#1}}
\newcommand{\e}[1]{\m{#1}\mathstrut}
\renewcommand{\AA}{\mc{A}}
\newcommand{\BB}{\mc{B}}
\newcommand{\EE}{\mc{E}}
\newcommand{\FF}{\mc{F}}
\newcommand{\MM}{\mc{M}}
\newcommand{\PP}{\mc{P}}
\newcommand{\RR}{\mc{R}}
\renewcommand{\SS}{\mc{S}}
\newcommand{\TT}{\mc{T}}
\newcommand{\VV}{\mc{V}}
\newcommand{\KK}{\mc{K}}
\newcommand{\QQ}{\mc{Q}}
\newcommand{\NF}{\m{NF}}
\newcommand{\Pos}{\mc{P}\m{os}}
\newcommand{\PosF}{\Pos_\FF}
\newcommand{\PosV}{\Pos_\VV}
\newcommand{\Var}{\mc{V}\m{ar}}
\newcommand{\ARS}{\langle{} A, {\to} \rangle}
\newcommand{\ARSn}[3][\alpha]{\ensuremath{\langle #2, {\{ {\xr[#1]{}} \}}_{#1 \in #3} \rangle}}
\newcommand{\permute}{\cdot}
\newcommand{\from}{\mathrel{\leftarrow}}
\newcommand{\fromto}{\xlr{}}
\newcommand{\fromB}[1]{\mathrel{_{#1}{\leftarrow}}}
\renewcommand{\conv}{\fromto^*}
\newcommand{\join}{\mathrel{\downarrow}}
\newcommand{\FromB}[1]{\mathrel{{\vphantom{\to}_{#1}}{\from}}}
\newcommand{\FromT}[1]{\mathrel{{\vphantom{\to}^{#1}}{\from}}}
\newcommand{\xfrom}[2]{\mathrel{{\vphantom{\xleftarrow{#2}}_{#1}}%
{\xleftarrow{#2}}}}
\def\test#1#2#3{\setbox0=\hbox{$\vphantom{#1}^{#2}_{#3}$}%
                \dimen0=\wd0%
                \setbox1=\hbox{$\scriptstyle #2$}%
                \advance\dimen0-\wd1%
                \setbox1=\hbox{\hskip\dimen0\copy1}%
                \dimen0=\wd0%
                \setbox2=\hbox{$\scriptstyle #3$}%
                \advance\dimen0-\wd2%
                \setbox2=\hbox{\hskip\dimen0\copy2}%
                {\vphantom{#1}^{\box1}_{\box2}}{#1}
}
\newcommand{\From}[2]{\mathrel{\test{\from}{#2}{#1}}}
\newcommand{\FRom}[3]{\mathrel{\test{\xleftarrow{#3}}{#2}{#1}}}
\newcommand{\scup}{\hspace{.3mm}\cup\hspace{.3mm}}
\newcommand{\KBC}[1]{\tsfs{KB}$_\m{#1}$\xspace}
\newcommand{\KBf}{\KBC{f}}
\newcommand{\KBg}{\KBC{g}}
\newcommand{\KBi}{\KBC{i}}
\newcommand{\KBl}{\KBC{l}}
\newcommand{\KBo}{\KBC{o}}
\newcommand{\vd}[1]{\vdash_\m{#1}}
\newcommand{\vdf}{\vd{f}}
\newcommand{\vdg}{\vd{g}}
\newcommand{\vdi}{\vd{i}}
\newcommand{\vdl}{\vd{l}}
\newcommand{\vdo}{\vd{o}}
\newcommand{\vdfr}[1]{\vdf^\tsfs{\tiny #1}}
\newcommand{\vdir}[1]{\vdi^\tsfs{\tiny #1}}
\newcommand{\vdor}[1]{\vdo^\tsfs{\tiny #1}}
\newcommand{\vdlr}[1]{\vdl^\tsfs{\tiny #1}}
\newcommand{\kbo}{{\m{kbo}}}
\newcommand{\lex}{{\m{lex}}}
\newcommand{\lpo}{{\m{lpo}}}
\newcommand{\mul}{{\m{mul}}}
\newcommand{\KBO}[1][>]{#1_\kbo}
\newcommand{\LPO}[1][>]{#1_\lpo}
\newcommand{\MUL}[1][>]{#1_\mul}
\newcommand{\peak}[5]{#1 \xleftarrow{#2} #3 \xrightarrow{#4} #5}
\newcommand{\cpeak}[4]{#1 \xleftarrow{#2} #3 \xrightarrow{\epsilon} #4}
\newcommand{\tds}{\triangledown_{\hspace{-.5mm}s}}
\newcommand{\CP}{\m{CP}}
\newcommand{\PCP}{\m{PCP}}
\newcommand{\LCP}{\m{LCP}}
\newcommand{\EEn}{\EE_n}
\newcommand{\EEw}{\EE_\omega}
\newcommand{\EEi}{\EE_\infty}
\newcommand{\RRn}{{\RR_n}}
\newcommand{\RRw}{{\RR_\omega}}
\newcommand{\RRi}{{\RR_\infty}}
\newcommand{\SSi}{{\SS_\infty}}
\newcommand{\SSw}{{\SS_\omega}}
\newcommand{\ER}{\ensuremath{\EE \scup \RR}}
\newcommand{\REwgt}{\ensuremath{\EEw^> \scup \RRw}}
\newcommand{\ERw}{\ensuremath{\EEw \scup \RRw}}
\newcommand{\REiw}{\ensuremath{\EEw \scup \RRi}}
\newcommand{\ERi}{\ensuremath{\EEi \scup \RRi}}
\newcommand{\encompasses}{\mathrel{\makebox[0pt]{\makebox[9pt][r]%
{\raise 0.9pt \hbox{$\cdot$}}}{\unrhd}}} 
\newcommand{\rencompasses}{\mathrel{\reflectbox{$\encompasses$}}}
\newcommand{\smprencompasses}{\mathrel{\makebox[0pt]{\makebox[8pt][r]%
{\lower 0.7pt \hbox{$\cdot$}}}{\rhd}}} 
\newcommand{\prencompasses}{\mathrel{\ooalign{\hss\ensuremath%
{\rhd}\hss\cr\kern0.4ex\raise0.0ex\hbox{\scalebox{1.0}%
{$\cdot$}}}}}
\newcommand{\dDg}[1]{\smash{\raisebox{-.5mm}{\rotatebox{#1}{%
$\scriptstyle \eqslantless$}}}}
\newcommand{\ddg}{\dDg{90}}
\newlength{\dotheight}
\newcommand{\sdot}[1]{\settoheight{\dotheight}{$scriptstyle #1$}%
\vphantom{\rule{1pt}{\dotheight}}\smash{\dot{#1}}}
\newcommand{\pe}{\smprencompasses}
\newcommand{\seq}[2][n]{{#2_1},\dots,{#2_{#1}}}
\newcommand{\ito}{\xrightarrow{\smash{\m{i}}}}
\newcommand{\xlr}[2][]{\xleftrightarrow[#1]{#2}}
\newcommand{\xlrs}[2][]{\smash{\xlr[\smash{\raisebox{2pt}{$\scriptstyle%
#1$}}]{#2}}}
\newcommand{\xrs}[2][]{\mathrel{\smash{\xrightarrow[#1]%
{\smash{\raisebox{-1pt}{$\scriptstyle #2$}}}}}}
\newcommand{\CCR}[3]{\mathrel{\smash{%
\xrightarrow[#1]{#2}}^{#3}\vphantom{\xrightarrow[#1]{#2}}}}
\newcommand{\CCL}[3]{\mathrel{\smash{%
{}^{#3}{\xleftarrow[#1]{#2}}}\vphantom{\xleftarrow[#1]{#2}}}}
\newcommand{\CCC}[3]{\mathrel{\smash{%
\xleftrightarrow[#1]{#2}}^{#3}\vphantom{\xleftrightarrow[#1]{#2}}}}
\newcommand{\Xlr}[3]{\xleftrightarrow[\smash{\raisebox{#3mm}%
{\smash{$\scriptstyle #1$}}}]{#2}}
\newcommand{\r@rrow}[3]{%
  \newcommand{#1}[2][]{%
    \def\next{#2\@ifempty{##1}{}{_{##1}}\@ifempty{##2}{}{^{##2}}}%
    \mathchoice{#3[##1]{##2}}{\next}{\next}{\next}%
  }%
}
\newcommand{\l@rrow}[3]{%
  \newcommand{#1}[2][]{%
    \def\next####1{%
      \setbox0=\hbox{$####1\vphantom{#2}\@ifempty{##1}{}{_{\vphantom{##1}}}%
      \@ifempty{##2}{}{^{##2}}$}%
      \setbox1=\hbox{$####1\vphantom{#2}\@ifempty{##1}{}{_{##1}}%
      \@ifempty{##2}{}{^{\vphantom{##2}}}$}%
      \setbox2=\vbox{\hbox to\wd0{}\hbox to\wd1{}}
      \mathrel{\hskip\wd2\hskip-\wd0\box0\hskip-\wd1\box1{#2}}%
    }%
    \mathchoice{#3[##1]{##2}}{\next\textstyle}%
    {\next\scriptstyle}{\next\scriptscriptstyle}%
  }%
}
\l@rrow{\xl}{\leftarrow}{\xleftarrow}
\r@rrow{\xr}{\rightarrow}{\xrightarrow}
\r@rrow{\sxlr}{\leftrightarrow}{\xleftrightarrow}
\l@rrow{\xphl}{\phleftarrow}{\xphleftarrow}
\r@rrow{\xphr}{\phrightarrow}{\xphrightarrow}
\newcommand\isaforversion{2.37}
\newcommand\isabelleversion{Isabelle2019}
\newcommand\afpbase{https://www.isa-afp.org/browser_info/\isabelleversion/AFP}
\newcommand\afplink[2][Relative_Rewriting]{{\normalfont%
  \urlex{\afpbase/Abstract-Rewriting/#1.html\##2}}}
\newcommand\isaforbase{http://cl-informatik.uibk.ac.at/isafor/v\isaforversion}
\newcommand\isaforlink[2]{{\normalfont%
  \urlex{\isaforbase/LMCS2019/#1.html\##2}}}
\newcommand\extlinkicon{\faicon{check-square-o}}
\newcommand\extlink[1]{%
  \href{#1}{\textcolor{linkblue}{\extlinkicon}}}
\newcommand\urlex[1]{\extlink{#1}}
\newcommand{\lemref}[2][]{Lemma~\ifthenelse{\equal{#1}{}}%
  {\ref{lem:#2}}{\ref{lem:#2}(\ref{lem:#2:#1})}} 
\newcommand{\defref}[1]{Definition~\ref{def:#1}}
\newcommand{\corref}[1]{Corollary~\ref{cor:#1}}
\newcommand{\secref}[1]{Section~\ref{sec:#1}}
\newcommand{\ssecref}[1]{Subsection~\ref{ssec:#1}}
\renewcommand{\ssecref}[1]{Section~\ref{ssec:#1}}
\newcommand{\thmref}[1]{Theorem~\ref{thm:#1}}
\newcommand{\exaref}[1]{Example~\ref{exa:#1}}
\newcommand\ie{that is}
\newcommand\etal{et al.}
\begin{document}

\title{Abstract Completion, Formalized}
\titlecomment{The research described in this paper is supported
by JSPS KAKENHI Grant Number 17K00011, JSPS Core to Core Program and FWF
(Austrian Science Fund) projects P27528, T789 and P27502.
A preliminary version of this paper appeared in
the proceedings of the 2nd International Conference on Formal Structures
for Computation and Deduction~\cite{HMSW17}
and we also incorporated parts of our preceding work~\cite{HMS14}.}

\author[N.~Hirokawa]{Nao Hirokawa\rsuper{a}}
\address{\lsuper{a}School of Information Science, JAIST, Nomi, Japan}
\email{hirokawa@jaist.ac.jp}
\author[A.~Middeldorp]{Aart Middeldorp\rsuper{b}}
\address{\lsuper{b}Department of Computer Science, University of Innsbruck, Innsbruck, Austria}
\email{\{aart.middeldorp,christian.sternagel,sarah.winkler\}@uibk.ac.at}
\author[C.~Sternagel]{Christian Sternagel\rsuper{b}}
\author[S.~Winkler]{Sarah Winkler\rsuper{b}}

\begin{abstract}
\noindent
Completion is one of the most studied techniques in term rewriting and
fundamental to automated reasoning with equalities. In this paper we
present new correctness proofs of abstract completion,
both for finite and infinite runs.
For the special case of ground completion we present a new proof
based on random descent. We moreover extend the results to ordered
completion, an important extension of completion that aims to produce
ground-complete presentations of the initial equations. We present new
proofs concerning the completeness of ordered completion
for two settings. Moreover, we revisit and extend results of
M\'etivier concerning canonicity of rewrite systems. All proofs presented
in the paper have been formalized in Isabelle/HOL\@.
\end{abstract}

\maketitle

\section{Introduction}

Reasoning with equalities is pervasive in computer science and
mathematics, and has consequently been one of the main research areas of
automated deduction. Indeed completion as introduced by Knuth and
Bendix~\cite{KB70} has evolved into a fundamental technique whose ideas
appear throughout automated reasoning whenever equalities are present.
Many variants of the original calculus have since been proposed.

Bachmair, Dershowitz, and Hsiang~\cite{BDH86} recast completion procedures
as inference systems. This style of presentation,
\emph{abstract completion}, has become the standard to describe completion
procedures and \emph{proof orders} the accompanying tool to establish
correctness~\cite{BDH86,BDP89,B91}, \ie, that under certain
conditions, exhaustive application of the inference rules results in a
terminating and confluent rewrite system whose equational theory is
equivalent to the initial set of equations.

In this paper we present new, modular correctness proofs,
not relying on proof orders, for five
abstract completion systems presented in the literature.
Here, we use \emph{modular} in the following sense:
Proof orders have to be powerful (and thus complex) enough to cover
all intermediate results (that is, proof orders are a \emph{global}
method), while for our new proofs, we \emph{locally} apply well-founded
induction with an order that is just strong enough for the current
intermediate result.
All proofs are fully formalized in Isabelle/HOL\@.
First, we consider finite (\KBf) and infinite (\KBi) runs of classical
Knuth-Bendix completion~\cite{KB70}.
These two settings demand different proofs since in the latter
case the inference system exhibits a stronger side condition. While
our correctness proof for \KBf relies on a new notion we dub
\emph{peak decreasingness}, for the case of \KBi we employ a simpler
version of this criterion called \emph{source decreasingness}. To enhance
applicability by covering efficient implementations, our proofs
support the critical pair criterion known as primality~\cite{KMN88}.

The relevance of infinite runs is illustrated by the following example.

\begin{exa}%
\label{example: braid monoid}
Consider the set of equations $\EE = \{ \m{aba} \approx \m{bab} \}$
of the three-strand positive braid monoid. Kapur and Narendran~\cite{KN85}
proved that $\EE$ admits no \emph{finite} complete presentation.
However, taking the Knuth-Bendix order~\cite{KB70} with $\m{a}$ and
$\m{b}$ of weight $1$ and $\m{a} > \m{b}$ in the precedence, completion
produces in the limit the following infinite complete
presentation of $\EE$
\begin{align*}
\{ \m{aba} \to \m{bab} \} \cup
\{ \m{ab}^n\m{ab} \to \m{babba}^{n-1} \mid n \geqslant 2 \}
\end{align*}
which can be used to decide the validity problem for $\EE$.%
\footnote{Burckel~\cite{B01} constructed a complete rewrite system
consisting of four rules with an additional symbol, which is no longer a
complete presentation of $\EE$ but can be also used to decide the
validity problem for $\EE$.}
\end{exa}

Completion procedures, when successful, produce a complete system.
Natural questions include whether such systems are unique and whether
all complete systems for a given set of equations can be obtained by
completion. For canonical systems, which are complete systems that
satisfy an additional normalization requirement,
M\'etivier~\cite{M83} obtained interesting results.
In this paper we revisit and extend his work.

A special case of \KBf that is known to be decidable is the
completion of ground systems~\cite{S93}. We present new
correctness and completeness proofs for the
corresponding inference system \KBg, based on the recent notion of
random descent~\cite{vOT16}.

On a given set of input equalities, Knuth-Bendix completion can behave
in three different ways: it may (1) succeed to compute a complete
system in finitely many steps, (2) fail due to unorientable equalities,
or (3) continuously compute approximations of a complete system without
ever terminating.
As a remedy to problem (2), ordered completion was developed by Bachmair,
Dershowitz, and Plaisted~\cite{BDP89}. Ordered completion never fails
and can produce a ground-complete system in the limit.
Although the price to be paid is that the resulting system is in general
only complete on ground terms, this is actually sufficient for many
applications in theorem proving.
Refutational theorem proving~\cite{BDP89} owes its semi-decidability
to the unfailing nature of ordered completion.
Again employing peak decreasingness, we obtain a new correctness
proof of ordered completion (\KBo). Next, we turn to completeness
results for ordered completion, \ie, to sufficient criteria for an
ordered completion procedure to produce a complete system. We first
reprove the case of a total reduction order, which assumes a
slightly stronger notion of simplifiedness than the original
result~\cite{BDP89} though. Then we
consider the completeness result for linear completion (\KBl)
due to Devie~\cite{D91}.

For easy reference, Table~\ref{roadmap} provides pointers to
the main definitions and results we present in this paper.
\begin{table}
\caption{Roadmap.}%
\label{roadmap}
\begin{center}
\renewcommand{\arraystretch}{1.25}
\begin{tabular}{@{}rllllc@{}}
\toprule
& \KBf & \KBg & \KBi & \KBo & \KBl \\
\midrule
inference system &
\ref{def:KBf} & 
\ref{def:KBg} & 
\ref{def:KBi} & 
\ref{def:KBo} & 
\ref{def:KBl} 
\\
fairness &
\ref{def:KBf fairness} & 
-- &
\ref{def:KBi fairness} & 
\ref{def:KBo fairness} & 
\ref{def:KBl fairness} 
\\
correctness &
\ref{thm:KBf correctness} & 
\ref{thm:KBg correctness} & 
\ref{thm:KBi correctness} & 
\ref{thm:complete presentation okb} & 
\ref{thm:KBl correctness} 
\\[-0.7ex]
& & & &
\ref{thm:KBo correctness} & 
\\
completeness &
-- &
\ref{thm:KBg completeness} & 
-- &
\ref{thm:KBo completeness} & 
\ref{thm:KBl completeness} 
\\
\bottomrule
\end{tabular}
\renewcommand{\arraystretch}{1}
\end{center}
\end{table}

The remainder of this paper is organized as follows.
We present required preliminaries in \secref{preliminaries},
followed by the abstract confluence criteria of peak and source
decreasingness, as well as a fairly detailed analysis of critical
pairs.
In \secref{finite runs} we recall the
inference rules for (abstract) Knuth-Bendix completion and present our
formalized correctness proof for finite runs. In
\secref{canonicity} we present our results on
canonical systems and normalization equivalence.
We discuss ground completion in \secref{ground completion}.
Infinite runs are the subject of \secref{infinite runs} and
in \secref{ordered completion} we extend our correctness results to
ordered completion. Completeness of ordered completion is the
topic of \secref{completeness}.
We conclude in \secref{conclusion} with a few suggestions for future
research.

Our formalizations are part of
the \emph{Isabelle Formalization of Rewriting}
\isafor~\cite{TS09}\footnote{\url{http://cl-informatik.uibk.ac.at/isafor}}
version~{\isaforversion}.
Below we list the relevant Isabelle theory files grouped by their
subdirectories inside \isafor:
\begin{center}
\def\ind{\makebox[5mm]{}}%
\begin{tabular}[t]{l}
\theorydir{thys/Abstract_Completion/} \\
\ind\theory{thys/Abstract_Completion}{Abstract_Completion.thy} \\
\ind\theory{thys/Abstract_Completion}{Completion_Fairness.thy} \\
\ind\theory{thys/Abstract_Completion}{CP.thy} \\
\ind\theory{thys/Abstract_Completion}{Ground_Completion.thy} \\
\ind\theory{thys/Abstract_Completion}{Peak_Decreasingness.thy} \\
\ind\theory{thys/Abstract_Completion}{Prime_Critical_Pairs.thy}
\end{tabular}
\hfil
\begin{tabular}[t]{l}
\theorydir{thys/Confluence_and_Completion/} \\
\ind\theory{thys/Confluence_and_Completion}{Ordered_Completion.thy} \\
\\
\theorydir{thys/Normalization_Equivalence/} \\
\ind\theory{thys/Normalization_Equivalence}{Encompassment.thy} \\
\ind\theory{thys/Normalization_Equivalence}{Normalization_Equivalence.thy}
\end{tabular}
\end{center}
In the remainder we provide hyperlinks (marked by \extlinkicon) to an HTML
rendering of our formalization.
Moreover, whenever we say that a proof is ``formalized,'' what we mean is
that it is``formalized in Isabelle/HOL.''
And when we ``present a formalized proof,'' we give a textual representation of 
a formalized proof.

This paper and the accompanying formalization are substantially extended
and revised versions of some of our previous work we published in the
ITP~\cite{HMS14} and FSCD~\cite{HMSW17} conferences.
The former presented a new correctness proof for finite runs of
Knuth-Bendix completion. Its modular design separates
concerns rather than relying on a single proof order, thus
rendering it more formalization friendly.
In revised form, these results are included in
\secref{finite runs}.
The FSCD contribution extended this novel proof approach to both
infinite runs and ordered completion (see Sections~\ref{sec:infinite runs} and~\ref{sec:ordered completion}).
It moreover incorporated canonicity results (\secref{canonicity}).
In addition to these results we present
new and
formalized proofs of correctness and completeness of
ground completion (\secref{ground completion}), as well as
completeness of ordered completion for two different cases
(\secref{completeness}).
At the end of each section, we remark on the novelty of the respective
results and their proofs.

\section{Preliminaries}%
\label{sec:preliminaries}

We assume familiarity with the basic notions of
abstract rewrite systems, term rewrite systems, and
completion~\cite{B91,BN98}, but nevertheless shortly recapitulate
terminology and notation that we use in the remainder.

\subsection{Rewrite Systems}

For an arbitrary binary relation $\xr[\alpha]{}$, we write
$\xl[\alpha]{}$, $\xlrs[\alpha]{}$, $\xr[\alpha]{=}$, $\xr[\alpha]{+}$,
and $\xr[\alpha]{*}$
to denote its \emph{inverse}, its \emph{symmetric closure}, its
\emph{reflexive closure}, its \emph{transitive closure}, and its
\emph{reflexive transitive closure}, respectively.
The reflexive, transitive, and symmetric closure $\xlrs[\alpha]{*}$
of $\xr[\alpha]{}$ is called \emph{conversion}, and a sequence
of the form
\(
c_0 \mathrel{\xlrs[\alpha]{}} c_1 \mathrel{\xlrs[\alpha]{}} \cdots
\mathrel{\xlrs[\alpha]{}} c_n
\)
is referred to as a conversion between $c_0$ and $c_n$ (of length $n$).
For a binary relation $R$ without arrow notation, we also write $R^{-1}$
for its \emph{inverse} and $R^\pm$ for its \emph{symmetric closure}
$R \cup R^{-1}$.
We further use $\downarrow_\alpha$ as abbreviation for the
\emph{joinability relation} $\xr[\alpha]{*} \cdot \xl[\alpha]{*}$, where
from here on $\cdot$ denotes relation composition.
If $a \xr[\alpha]{} b$ for no $b$
then we say that $a$ is a \emph{($\xr[\alpha]{}$-)normal form}. 
The set of all normal forms of a given relation $\xr[\alpha]{}$ is denoted
by $\NF({\xr[\alpha]{}})$. By
$a \xr[\alpha]{!} b$ we abbreviate ${a \xr[\alpha]{*} b} \land {b \in
\NF({\xr[\alpha]{}})}$ and we call $b$ a \emph{normal form of $a$}.
Given two binary relations $\xr[\alpha]{}$ and $\xr[\beta]{}$, we use
$\xr[\alpha]{} / \xr[\beta]{}$ as shorthand for
the \emph{relative rewrite relation}
$\xr[\beta]{*} \cdot \xr[\alpha]{} \cdot \xr[\beta]{*}$.
An \emph{abstract rewrite system} (ARS for short) $\AA$ is a set $A$,
the carrier, equipped with a binary relation $\to$. Sometimes
we partition the binary relation into parts according to a set $I$ of
indices (or labels). Then we write
$\AA = \ARSn{A}{I}$ where we denote the part of the relation with label
$\alpha$ by $\xr[\alpha]{}$, \ie,
${\to} = \bigcup~\{ {\xr[\alpha]{}} \mid \alpha \in I \}$.

We assume a given signature $\FF$ and a set of variables $\VV$. The set
of terms built up from $\FF$ and $\VV$ is denoted by
$\TT(\FF,\VV)$, while $\TT(\FF)$ denotes the set of ground terms.
\emph{Positions} are strings of positive integers which are
used to address subterms. The set of positions in a term $t$ is
denoted by $\Pos(t)$. The subset consisting of the positions addressing
function symbols in $t$ is denoted by $\PosF(t)$ whereas
$\PosV(t) = \Pos(t) - \PosF(t)$ is the set of variable positions in $t$.
We write
$p \leqslant q$ if $p$ is a prefix of $q$ and $p \parallel q$ if neither
$p \leqslant q$ nor $q \leqslant p$. If $p \leqslant q$ then the unique
position $r$ such that $pr = q$ is denoted by $q \backslash p$.
A \emph{substitution} is a mapping $\sigma$ from variables to terms such
that its domain $\{ x \in \VV \mid \sigma(x) \neq x \}$ is finite.
Applying a substitution $\sigma$ to a term $t$ is written $t\sigma$. A
variable substitution is a substitution from $\VV$ to $\VV$ and a
\emph{renaming} is a bijective variable substitution. A term $s$ is a
\emph{variant} of a term $t$ if $s = t\sigma$ for some renaming $\sigma$.
A pair of terms $(s,t)$ is sometimes considered an \emph{equation},
then we write $s \approx t$, and sometimes a \emph{(rewrite) rule}, then
we write $s \to t$. In the latter case we assume the
\emph{variable condition}, \ie, that the left-hand side $s$ is not a
variable and that variables of the right-hand side $t$ are all contained
in $t$.
A set $\EE$ of equations is called an \emph{equational system} (ES for
short) and a set $\RR$ of rules a \emph{term rewrite system} (TRS for
short).
Sets of pairs of terms $\EE$ induce a \emph{rewrite relation}
$\xr[\EE]{}$ by closing their components under contexts and substitutions.
A rewrite step $s \to_\EE t$ at a position $p \in \Pos(s)$ is called
\emph{innermost} and denoted by $s \ito_\EE t$ if no proper subterm of
$s|_p$ is reducible in $\EE$.
The \emph{equational theory} induced by $\EE$ consists of all
pairs of terms $s$ and $t$ such that $s \conv_\EE t$.
If $\ell \to r$ is a rewrite rule and $\sigma$ is a renaming then the
rewrite rule $\ell\sigma \to r\sigma$ is a \emph{variant} of $\ell \to r$.
A TRS is said to be \emph{variant-free} if it does not contain rewrite
rules that are variants of each other.

Two terms $s$ and $t$ are called \emph{literally similar}, written
$s \doteq t$, if $s\sigma = t$ and $s = t\tau$ for some substitutions
$\sigma$ and $\tau$.
Two TRSs $\RR_1$ and $\RR_2$
are called \emph{literally similar}, denoted by $\RR_1 \doteq \RR_2$, if
every rewrite rule in $\RR_1$ has a variant in $\RR_2$ and vice versa.
The following result is folklore; we formalized the non-trivial proof.

\begin{lem}%
\label{lem:variants-terms}
Two terms $s$ and $t$ are variants of each other if and only if
$s \doteq t$.
\hfill\isaforlink{Term_More}{lem:variants_imp_renaming}
\end{lem}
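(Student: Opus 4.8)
The plan is to prove the two implications separately, with the right‑to‑left direction (from $s \doteq t$ to ``$s$ and $t$ are variants of each other'') being the substantial one. The forward direction is immediate: if $s = t\rho$ for a renaming $\rho$, then $t = s\rho^{-1}$ with $\rho^{-1}$ again a renaming, so the pair of equations $s\rho^{-1} = t$ and $t\rho = s$ witnesses $s \doteq t$ directly from the definition. Note that both relations involved are symmetric, so there is no ambiguity in which of $s$, $t$ plays which role.

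For the converse I assume $s\sigma = t$ and $t\tau = s$ for substitutions $\sigma,\tau$, writing composition so that $x(\sigma\tau) = (x\sigma)\tau$. Combining the two equations gives $s(\sigma\tau) = (s\sigma)\tau = t\tau = s$, hence $(x\sigma)\tau = x$ for every variable $x \in \Var(s)$. The key observation is that a non‑variable term retains its root symbol under any substitution, so $(x\sigma)\tau$ can equal the variable $x$ only if $x\sigma$ is itself a variable; writing $x\sigma = x'$ then forces $x'\tau = x$. Thus $\sigma$ restricts to a map $\Var(s) \to \VV$ sending each variable to a variable, it has $\tau$ as a left inverse and is therefore injective, and — since $t = s\sigma$ means $\Var(t) = \{\,x\sigma \mid x \in \Var(s)\,\}$ — it maps $\Var(s)$ onto $\Var(t)$. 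The symmetric argument starting from $t(\tau\sigma) = t$ shows $\tau$ restricts to the inverse bijection $\Var(t) \to \Var(s)$. Hence $\sigma$ and $\tau$ exhibit a bijection between the two finite variable sets $\Var(s)$ and $\Var(t)$.

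It then remains to upgrade this finite bijection to a genuine renaming, that is, a bijective variable substitution of finite domain that agrees with $\sigma$ on $\Var(s)$. Because $|\Var(s)| = |\Var(t)|$, the two leftover sets $(\Var(s)\cup\Var(t))\setminus\Var(s)$ and $(\Var(s)\cup\Var(t))\setminus\Var(t)$ have equal finite cardinality, so I can pick any bijection between them and combine it with $\sigma|_{\Var(s)}$ to obtain a permutation $\rho$ of the finite set $\Var(s)\cup\Var(t)$, extended by the identity elsewhere. Then $\rho$ is a renaming and $s\rho = s\sigma = t$, since $\rho$ agrees with $\sigma$ on $\Var(s)$ and only those variables affect $s$; thus $t$ is a variant of $s$, as required. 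I expect this last step — extending a bijection between finite variable sets to a renaming on all of $\VV$ while respecting the finite‑domain requirement — to be the main obstacle: not because the idea is deep, but because it is the fiddly, bookkeeping‑heavy construction that a formalization must spell out in full, which is presumably the ``non‑trivial'' part alluded to in the statement.
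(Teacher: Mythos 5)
Your proof is correct. The paper gives no textual proof of this lemma --- it is stated as folklore with the proof existing only in the Isabelle formalization, whose name \texttt{variants\_imp\_renaming} indicates that the formalized content is precisely your right-to-left direction --- and your argument (extracting from the two substitutions a bijection $\Var(s) \to \Var(t)$ via the root-symbol observation, then extending it to a bijective variable substitution of finite domain by matching up the leftover variables) is the standard one such a formalization follows, with the extension step you flag at the end indeed being the bookkeeping-heavy part that makes the formalized proof ``non-trivial.''
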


We say that $s$ \emph{encompasses} $t$, written $s \encompasses t$,
whenever $s = C[t\sigma]$ for some context $C$ and substitution $\sigma$.
\emph{Proper encompassment} is defined by
${\prencompasses} = {\encompasses \setminus \rencompasses}$
and known to be well-founded.
The identity ${\encompasses} = {\prencompasses} \cup {\doteq}$ is
well-known.
For a well-founded order $>$, we write $\MUL$ to denote its
\emph{multiset extension} and $>_\lex$ to denote its
\emph{lexicographic extension} as defined by Baader and
Nipkow~\cite{BN98}.

A TRS $\RR$ is \emph{terminating} if $\to_\RR$ is well-founded, and
\emph{weakly normalizing} if every term has a normal form.
It is
\emph{(ground-)confluent} if $s \From{\RR}{*} \cdot \to_\RR^* t$ 
implies $s \to_\RR^* \cdot \From{\RR}{*} t$ for all (ground) terms
$s$ and $t$. It is \emph{(ground-)complete} if it is terminating and 
(ground) confluent. We say that $\RR$ is a
\emph{complete presentation} of
an ES $\EE$ if $\RR$ is complete and ${\conv_\RR} = {\conv_\EE}$.
A TRS $\RR$ is \emph{left-reduced} if
$\ell \in \NF(\RR \setminus \{ \ell \to r \})$ for every rewrite rule
$\ell \to r$ in $\RR$, and
\emph{right-reduced} if
$r \in \NF(\RR)$ for every rewrite rule $\ell \to r$ in $\RR$. A
\emph{reduced} TRS is left- and right-reduced. A reduced complete TRS is
called \emph{canonical}.

We make use of the following result due to Bachmair and
Dershowitz~\cite{BD86}, where \emph{quasi-commutation} of $R$ over $S$
means that the inclusion $S \cdot R \subseteq R \cdot {(R \cup S)}^*$ holds.

\begin{lem}%
\label{lem:BD}
Let $R$ and $S$ be binary relations.
\begin{enumerate}
\item%
\label{lem:BD:1}
If $R$ quasi-commutes over $S$ then well-foundedness of $\mathrel{R} /
\mathrel{S}$ and $R$ coincide.
\hfill\afplink[Abstract_Rewriting]{qc_SN_relto_iff}
\item%
\label{lem:BD:2}
If $\mathrel{R} / \mathrel{S}$ and $S$ are well-founded then $R \cup S$ is
well-founded.
\hfill\afplink[Relative_Rewriting]{SN_relto_split}
\end{enumerate}
\end{lem}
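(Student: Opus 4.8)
Both statements rest on one elementary observation that I would isolate first, since it is the bridge used in each part: there is an infinite $R/S$-derivation if and only if there is an infinite $(R \cup S)$-chain containing infinitely many $R$-steps. Expanding each $R/S = S^* \cdot R \cdot S^*$ step of an infinite $R/S$-chain produces an infinite $(R \cup S)$-chain with one $R$-step per original step, hence infinitely many; conversely, an infinite $(R \cup S)$-chain with infinitely many $R$-steps can be cut at its $R$-steps into infinitely many blocks of shape $S^* \cdot R \cdot S^*$, giving an infinite $R/S$-chain.

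For part~\ref{lem:BD:1}, the direction from $R/S$ to $R$ is immediate, since $R \subseteq R/S$ and thus every infinite $R$-chain is an infinite $R/S$-chain. For the converse I would first strengthen quasi-commutation $S \cdot R \subseteq R \cdot (R \cup S)^*$ to $S^* \cdot R \subseteq R \cdot (R \cup S)^*$ by a routine induction on the number of leading $S$-steps, the step case applying the hypothesis once and absorbing the trailing factor into $(R \cup S)^*$. I would then show, by well-founded induction on $R$, that no element starts an infinite $(R \cup S)$-chain with infinitely many $R$-steps; by the bridging observation this is exactly well-foundedness of $R/S$. In the induction step, given such a chain from $a$, I locate its first $R$-step, which is preceded by only finitely many $S$-steps, so that $a \xr[S]{*} b \xr[R]{} c$ with the suffix from $c$ still carrying infinitely many $R$-steps. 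The strengthened inclusion rewrites this prefix as $a \xr[R]{} d$ followed by $(R \cup S)$-steps down to $c$, so the suffix from $d$ is again an infinite $(R \cup S)$-chain with infinitely many $R$-steps. Since $a \xr[R]{} d$, the induction hypothesis applied at $d$ contradicts this.

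For part~\ref{lem:BD:2}, assume $R/S$ and $S$ are both well-founded and suppose for contradiction that $R \cup S$ is not, \ie, there is an infinite $(R \cup S)$-chain. If it contains infinitely many $R$-steps, the bridging observation turns it into an infinite $R/S$-chain, contradicting well-foundedness of $R/S$. Otherwise it contains only finitely many $R$-steps, so from some point on all steps are $S$-steps, yielding an infinite $S$-chain and contradicting well-foundedness of $S$. This two-case split is the whole argument.

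The main obstacle is the commutation rearrangement in part~\ref{lem:BD:1}: packaging the local move ``$S^* \cdot R$ becomes $R \cdot (R \cup S)^*$'' into a global statement about infinite chains. Done naively this seems to require dependent choice to assemble the infinite $R$-chain, but casting it as well-founded induction on $R$ as above sidesteps any explicit construction and is precisely the form amenable to formalization through the relative-rewriting library. Everything else — the strengthening of quasi-commutation and the case analysis in part~\ref{lem:BD:2} — is routine.
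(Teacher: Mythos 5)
Your proof is correct. There is, however, nothing in the paper to compare it against: Lemma~\ref{lem:BD} is stated as a known result of Bachmair and Dershowitz, and its proof is delegated to pre-existing lemmas of the Isabelle Abstract-Rewriting library (\texttt{qc\_SN\_relto\_iff} and \texttt{SN\_relto\_split}) rather than given in the text. Your argument is the standard one and fills this gap soundly: the bridge between infinite $R/S$-chains and infinite $(R \cup S)$-chains containing infinitely many $R$-steps, the strengthening of quasi-commutation to $S^* \cdot R \subseteq R \cdot (R \cup S)^*$ by induction on the leading $S$-steps, and the replacement of an explicit (choice-flavoured) construction of an infinite $R$-chain by well-founded induction on $R$ are all exactly the ingredients one would use, and the induction-based formulation is indeed the style in which such statements are handled in the cited formalization. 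The only point to watch is in locating ``the first $R$-step'': one needs the convention that a pair lying in both $R$ and $S$ counts as an $R$-step, so that all earlier steps are genuine $S$-steps; your phrasing is compatible with this, but it is worth making explicit.
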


\begin{lem}%
\label{lem:proper encompassment extension1}
If $R$ is a well-founded rewrite relation then
$\mathrel{(R \cup {\prencompasses})} / \mathrel{\encompasses}$
is well-founded.
\hfill
\isaforlink{Abstract_Completion}{lem:SN_encomp_Un_less_relto_encompeq}
\end{lem}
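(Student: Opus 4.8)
The plan is to reduce the claim to the two parts of \lemref{BD} by exhibiting suitable quasi-commutation properties, so that well-foundedness of the various relative relations can be traded for well-foundedness of $R$ and of ${\prencompasses}$ (both available: $R$ by assumption, proper encompassment by the remark preceding the statement). Writing $P = R \cup {\prencompasses}$, the goal is that $P/{\encompasses}$ is well-founded. The two facts I would establish first are (i)~that $R$ quasi-commutes over ${\encompasses}$, in the strong form ${\encompasses}\cdot R \subseteq R\cdot{\encompasses}$, and (ii)~that ${\encompasses}\cdot{\prencompasses} \subseteq {\prencompasses}$.

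For (i) I would use that $R$, being a rewrite relation, is closed under contexts and substitutions: if $a \encompasses b$ and $b \mathrel{R} c$, then $a = C[b\sigma]$ for some context $C$ and substitution $\sigma$, and closure gives $a = C[b\sigma] \mathrel{R} C[c\sigma]$, while $C[c\sigma] \encompasses c$ via the very same $C$ and $\sigma$; hence $(a,c)\in R\cdot{\encompasses}$. This is the only place the rewrite-relation hypothesis (as opposed to mere well-foundedness) is used. For (ii) I would argue purely order-theoretically: from $a \encompasses b \prencompasses c$ and transitivity of ${\encompasses}$ we get $a \encompasses c$, and if this were not proper we would have $c \encompasses a$, whence $c \encompasses a \encompasses b$ yields $c \encompasses b$, contradicting $b \prencompasses c$; thus $a \prencompasses c$. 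Combining (i) and (ii), ${\encompasses}\cdot P \subseteq (R\cdot{\encompasses})\cup{\prencompasses} \subseteq P\cdot{(P\cup{\encompasses})}^{*}$, so $P$ too quasi-commutes over ${\encompasses}$.

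With these in hand I would assemble the result as follows. From (i) and \lemref[1]{BD}, well-foundedness of $R/{\encompasses}$ coincides with that of $R$, so $R/{\encompasses}$ is well-founded. Since ${\prencompasses}\subseteq{\encompasses}$ implies $R/{\prencompasses}\subseteq R/{\encompasses}$, and subrelations of well-founded relations are well-founded, $R/{\prencompasses}$ is well-founded; as ${\prencompasses}$ is well-founded, \lemref[2]{BD} yields that $P = R\cup{\prencompasses}$ is well-founded. Finally, because $P$ quasi-commutes over ${\encompasses}$, \lemref[1]{BD} gives that $P/{\encompasses}$ is well-founded iff $P$ is, and $P$ has just been shown well-founded; this is exactly the claim.

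The main obstacle, and the point that needs care, is resisting the temptation to prove quasi-commutation of $R$ (or of $P$) over the \emph{proper} relation ${\prencompasses}$ directly: that fails, because after the rewrite step the residual encompassment $C[c\sigma]\encompasses c$ may collapse to a variant (the ${\doteq}$ part of ${\encompasses}$) rather than a proper one — for instance when the applied rule drops a variable that $\sigma$ had renamed — and such ${\doteq}$-steps are not permitted on the right-hand side of quasi-commutation over ${\prencompasses}$. The clean workaround is to quasi-commute over the full non-strict relation ${\encompasses}$ throughout, and to recover well-foundedness of $R\cup{\prencompasses}$ by the subrelation argument above rather than by a second, doomed quasi-commutation.
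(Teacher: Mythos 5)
Your proof is correct and follows essentially the same route as the paper's: the strong quasi-commutation ${\encompasses}\cdot R \subseteq R\cdot{\encompasses}$ via closure under contexts and substitutions, the inclusion ${\encompasses}\cdot{\prencompasses}\subseteq{\prencompasses}$, well-foundedness of $R/{\encompasses}$ and hence of $R/{\prencompasses}$ by \lemref[1]{BD}, then of $R\cup{\prencompasses}$ by \lemref[2]{BD}, and a final application of \lemref[1]{BD} to conclude. The only additions are your explicit verification of ${\encompasses}\cdot{\prencompasses}\subseteq{\prencompasses}$ (which the paper states without proof) and the closing remark on why quasi-commuting over ${\prencompasses}$ directly would fail, both of which are sound.
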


\begin{proof}
First we show the inclusion ${\encompasses \cdot \mathrel{R}}
\subseteq {\mathrel{R} \cdot \encompasses}$.
Suppose $s \encompasses t \mathrel{R} u$. So $s = C[t\sigma]$ for some
context $C$ and substitution $\sigma$. Because $R$ is closed under
contexts and substitutions, $s \mathrel{R} C[u\sigma]$. Moreover,
$C[u\sigma] \encompasses u$. This establishes the inclusion, and we
conclude that $R$ (quasi-)commutes over $\encompasses$. 
Because $R$ is well-founded, it follows from \lemref[1]{BD}
that the relation $\mathrel{R} / \mathrel{\encompasses}$ is well-founded
too. Then $\mathrel{R} / \mathrel{\prencompasses}$ is well-founded since
it is contained in $\mathrel{R} / \mathrel{\encompasses}$.
As $\prencompasses$ is well-founded, it follows from \lemref[2]{BD}
that $\mathrel{R} \cup \mathrel{\prencompasses}$ is well-founded.
We have ${\encompasses} \cdot {\prencompasses} \subseteq {\prencompasses}$
and thus $R \cup {\prencompasses}$ quasi-commutes over $\encompasses$.
Another application of \lemref[1]{BD} yields the well-foundedness of
$\mathrel{(R \cup {\prencompasses})} / \mathrel{\encompasses}$.
\end{proof}

\subsection{Abstract Confluence Criteria}%
\label{sec:decreasingness}

We use the following simple confluence criterion for ARSs to replace
Newman's Lemma in the correctness proof of abstract completion.
In the sequel, we will refer to a conversion of the form
$\FromB{\AA} \cdot \to_\AA$ as a \emph{peak}.

\begin{defi}[Peak Decreasingness
\isaforlink{Peak_Decreasingness}{asm:peak_decreasing}]%
\label{def:peak decreasing}
An ARS $\AA = \ARSn{A}{I}$ is \emph{peak decreasing} if there exists a
well-founded order $>$ on $I$ such that for all $\alpha, \beta \in I$ the
inclusion
\[
{\FromB{\alpha} \cdot \to_\beta}
~\subseteq~
{\xlr[\,\vee\alpha\beta~]{*}}
\]
holds. Here ${\vee\alpha\beta}$ denotes the set
$\{ \gamma \in I \mid \text{$\alpha > \gamma$ or $\beta > \gamma$} \}$ and
if $J \subseteq I$ then
$\smash{\Xlr{\,J~}{*}{-.1}}$ denotes a conversion consisting of
${\xr[J]{}} = \bigcup~\{ {\xr[\gamma]{}} \mid \gamma \in J \}$ steps.
\end{defi}

Peak decreasingness is a special case of decreasing diagrams~\cite{vO94},
which is known as a very powerful confluence criterion. For the sake
of completeness, we present an easy direct (and formalized) proof of the
sufficiency of peak decreasingness for confluence. We denote by
$\MM(J)$ the set of all multisets over a set $J$.

\begin{lem}%
\label{lem:pd => cr}
Every peak decreasing ARS is confluent.
\hfill\isaforlink{Peak_Decreasingness}{lem:CR}
\end{lem}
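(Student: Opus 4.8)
The plan is to prove the equivalent Church--Rosser property, namely that the two ends of an \emph{arbitrary} conversion are joinable, and to obtain this by well-founded induction on a multiset measure. To a conversion $C$ I would associate the multiset $\mu(C) \in \MM(I)$ of the labels of all its steps, and order these measures by the multiset extension $\MUL$ of the given well-founded order $>$ on $I$. Since $>$ is well-founded, so is $\MUL$, and this is the order along which the induction runs. Confluence then follows at once, because a peak $b \xl[\AA]{*} a \xr[\AA]{*} c$ is a particular conversion between $b$ and $c$, whose joinability is exactly $b \join c$.

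For the analysis I would first record a purely combinatorial fact: a conversion that contains no \emph{peak}, \ie, no length-two subsequence of the form $b \FromB{\alpha} a \to_\beta c$, is already a valley $\xr[\AA]{*} \cdot \xl[\AA]{*}$. Indeed, reading the orientations of the steps from left to right, forbidding a backward step immediately followed by a forward step forces the orientation word to be a block of forward steps followed by a block of backward steps, which is precisely a valley and hence witnesses joinability. For the inductive step, suppose $C$ does contain a peak $b \FromB{\alpha} a \to_\beta c$. As the two steps are single applications of the relation, \defref{peak decreasing} applies directly and yields a conversion from $b$ to $c$ using only labels in $\vee\alpha\beta$, that is, labels each below $\alpha$ or below $\beta$. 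Splicing this conversion in place of the two offending steps produces a conversion $C'$ with the same endpoints.

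It then remains to compare measures. Passing from $C$ to $C'$ leaves every other step untouched and replaces the labels $\alpha$ and $\beta$ by a multiset all of whose elements are below some element of $\{\alpha,\beta\}$; by the Dershowitz--Manna characterization of the multiset order this is a strict decrease, so $\mu(C) \MUL \mu(C')$. The induction hypothesis applied to $C'$ then furnishes a valley between $b$ and $c$, completing the inductive step. I expect the main obstacle to be exactly this bookkeeping: one must set up conversions so that the splice is strictly local, guaranteeing that the common steps contribute identically to $\mu(C)$ and $\mu(C')$ and that only $\alpha$ and $\beta$ are deleted, before the multiset comparison can be invoked. By contrast, the well-foundedness of $\MUL$ and the peak-free-implies-valley observation are routine.
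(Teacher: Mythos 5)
Your proposal is correct and matches the paper's proof essentially step for step: both associate to each conversion the multiset of its step labels, perform well-founded induction on the multiset extension $\MUL$ of $>$, replace a peak by the smaller-labeled conversion supplied by peak decreasingness, and conclude via the strict multiset decrease $\{\alpha,\beta\} \MUL \Gamma$. The only cosmetic difference is that you state the peak-free-implies-valley observation explicitly, whereas the paper phrases the same dichotomy as ``either $a \join b$ or the conversion contains a peak.''
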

\begin{proof}
Let $>$ be a well-founded order on $I$ which shows that the ARS
$\AA = \ARSn{A}{I}$ is peak decreasing. With every conversion $C$ in $\AA$
we
associate the multiset $M_C$ consisting of the labels of its steps. These
multisets are compared by the multiset extension $\MUL$ of $>$, which is a
well-founded order on $\MM(I)$. We prove
${\conv} \subseteq {\join}$ by
well-founded induction on $\MUL$. Consider a conversion $C$ between $a$
and $b$. We either have $a \join b$ or
$a \conv \cdot \from \cdot \to \cdot \conv b$. In the former case we are
done. In the latter case there exist labels $\alpha, \beta \in I$ and
multisets $\Gamma_1, \Gamma_2 \in \MM(A)$ such that
$M_C = \Gamma_1 \uplus \{ \alpha, \beta \} \uplus \Gamma_2$. By the peak
decreasingness assumption there exists a conversion $C'$ between $a$ and
$b$ such that $M_{C'} = \Gamma_1 \uplus \Gamma \uplus \Gamma_2$ with
$\Gamma \in \MM(\vee\alpha\beta)$. We obviously have
$\{ \alpha, \beta \} \MUL \Gamma$ and hence $M_C \MUL M_{C'}$. Finally,
we obtain $a \join b$ from the induction hypothesis.
\end{proof}

A similar criterion to show the Church-Rosser modulo property will be used
in \secref{completeness}. Here an ARS $\AA$ is called
\emph{Church-Rosser modulo} an ARS $\BB$ if the inclusion
\[
{\xlr[\AA \scup \BB\,]{*}} \subseteq
{\xrightarrow[\AA]{*} \cdot \xlr[\BB]{*} \cdot \xleftarrow[\AA]{*}}
\]
holds.

\begin{defi}[Peak Decreasingness Modulo
\isaforlink{Peak_Decreasingness}{asm:peak_decreasing_mod}]%
\label{def:peak decreasing modulo}
Consider two ARSs $\AA = \ARSn{A}{I}$ and $\BB = \ARSn[\beta]{B}{J}$.
Then $\AA$ is \emph{peak decreasing modulo} $\BB$ if there exists a
well-founded order $>$ on $I \cup J$ such that for all $\alpha \in I$
and $\gamma \in I \cup J$ the inclusion
\[
{{} \FromB{\alpha} \cdot \to_{\gamma} {}}
\:\subseteq\:
{\xlr[\smash{\,\vee \alpha\gamma~}]{*}}
\]
holds. Here $\vee \alpha\gamma$ denotes the set $\{ \delta \in I \cup J
\mid \text{$\alpha > \delta$ or $\gamma > \delta$} \}$.
\end{defi}

\begin{lem}%
\label{lem:pdm => crm}
If $\AA$ is peak decreasing modulo $\BB$ then $\AA$ is Church-Rosser
modulo $\BB$.
\hfill\isaforlink{Peak_Decreasingness}{lem:CRm}
\end{lem}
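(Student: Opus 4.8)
The plan is to follow the proof of \lemref{pd => cr} almost verbatim. I would fix a well-founded order $>$ on $I \cup J$ witnessing that $\AA$ is peak decreasing modulo $\BB$, and order the multisets $\MM(I \cup J)$ by its multiset extension $\MUL$, which is again well-founded. With every $\xlr[\AA \scup \BB]{*}$-conversion $C$ I associate the multiset $M_C$ of the labels of its steps, and I prove the inclusion ${\xlr[\AA \scup \BB]{*}} \subseteq {\xr[\AA]{*} \cdot \xlr[\BB]{*} \cdot \xl[\AA]{*}}$ by well-founded induction on $M_C$ with respect to $\MUL$. So I take a conversion $C$ between $a$ and $b$ and try to bring it into the required shape.

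The first step is a purely combinatorial analysis of when $C$ already has the desired form. Reading $C$ from left to right, a routine inspection of consecutive steps shows that $C$ matches the pattern $\xr[\AA]{*} \cdot \xlr[\BB]{*} \cdot \xl[\AA]{*}$ exactly when it avoids the five local factors $\FromB{\AA} \cdot \to_\AA$, $\FromB{\AA} \cdot \to_\BB$, $\FromB{\BB} \cdot \to_\AA$, ${\to_\BB} \cdot {\to_\AA}$, and $\FromB{\AA} \cdot \FromB{\BB}$: once a $\BB$-step or a backward $\AA$-step occurs no forward $\AA$-step may follow, and once a backward $\AA$-step occurs no $\BB$-step may follow. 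If $C$ avoids all five factors I am done, so I assume one of them occurs.

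Next I would argue that each such factor can be turned into a peak $\FromB{\alpha} \cdot \to_\gamma$ with $\alpha \in I$ or $\gamma \in I$, \ie, a peak containing at least one $\AA$-step. The first three factors already have this shape. For the two remaining ones I reorient the $\BB$-step so that, for instance, $x \mathrel{\to_\BB} y \mathrel{\to_\AA} z$ turns into the peak $x \mathrel{\FromB{\BB}} y \mathrel{\to_\AA} z$ with apex $y$, and symmetrically $\FromB{\AA} \cdot \FromB{\BB}$ becomes $\FromB{\AA} \cdot \to_\BB$. To such a peak between points $s$ and $t$ I apply the peak decreasingness modulo assumption: when the backward label lies in $I$ the inclusion applies directly, and when only the forward label lies in $I$ I apply it to the reversed peak and use that conversions are symmetric. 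In either case the peak is replaced by a conversion $s \mathrel{\xlr[\,\vee\alpha\gamma~]{*}} t$ all of whose labels belong to $\vee\alpha\gamma$. Splicing this back into $C$ yields a conversion $C'$ between $a$ and $b$ with $M_{C'} = \Gamma_1 \uplus \Gamma \uplus \Gamma_2$, where $M_C = \Gamma_1 \uplus \{\alpha,\gamma\} \uplus \Gamma_2$ and $\Gamma \in \MM(\vee\alpha\gamma)$. Since every element of $\Gamma$ is below $\alpha$ or below $\gamma$ we have $\{\alpha,\gamma\} \MUL \Gamma$ and hence $M_C \MUL M_{C'}$, so the induction hypothesis applied to $C'$ delivers $a \mathrel{\xr[\AA]{*}} \cdot \mathrel{\xlr[\BB]{*}} \cdot \mathrel{\xl[\AA]{*}} b$.

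The main obstacle is the combinatorial step, not the induction itself: unlike in \lemref{pd => cr}, where failure to be a valley immediately exposes a peak, here a conversion that is not yet in normal form may contain the two factors ${\to_\BB} \cdot {\to_\AA}$ and $\FromB{\AA} \cdot \FromB{\BB}$, which are not peaks as written. Recasting them as genuine peaks with an $\AA$-step is precisely the point where the symmetry of the modulo-relation $\BB$ enters, and it is also the step that must be handled with care, since the peak decreasingness modulo inclusion is stated only for a backward label in $I$ and thus has to be invoked through its reversed reading to cover peaks whose sole $\AA$-step is the forward one.
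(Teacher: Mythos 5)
Your proposal is correct and takes essentially the same route as the paper's proof: the same well-founded induction on the multiset of step labels, where an offending local pattern is replaced by a conversion with strictly smaller labels obtained from the peak decreasingness modulo assumption (applied to the reversed peak when only the forward step carries a label in $I$), after which the induction hypothesis finishes the argument. The only difference is explicitness: the paper compresses your five-factor analysis and the reorientation of $\BB$-steps into the single assertion that a conversion not of the desired shape contains a peak $\FromB{\alpha} \cdot \to_\gamma$ or $\FromB{\gamma} \cdot \to_\alpha$ with $\alpha \in I$ --- an assertion that, as you rightly point out, relies on $\BB$-steps being readable in either direction, which is exactly where your proof and the paper's make the same (implicit, in the paper's case) use of the symmetry of the modulo relation.
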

\begin{proof}
Let $x_1 \fromto_{\alpha_1} \cdots \fromto_{\alpha_n} x_{n+1}$ and
$M = \{ \seq{\alpha} \}$.  We use induction on $M$ with respect to
$>_\mul$ to show
$x_1 \to_\AA^{*} \cdot \fromto_\BB^{*} \cdot \From{\AA}{*} x_{n+1}$.
If the given conversion is not of the desired shape, there is an
index $1 \leqslant i < n$ such that
$x_i \FromB{\alpha} x_{i+1} \to_\gamma x_{i+2}$ or
$x_i \FromB{\gamma} x_{i+1} \to_\alpha x_{i+2}$
for some $\alpha \in I$ and $\gamma \in I \cup J$.
As the reasoning is similar, we only consider the former case. By peak
decreasingness there are labels
$\seq[m]{\beta}$ with
$x_i \fromto_{\beta_1} \cdots \fromto_{\beta_m} x_{i+2}$ such that
$\beta_j \in \vee\alpha\gamma$ for all $1 \leqslant j \leqslant m$.
Writing $N$ for the multiset $\{ \seq[m]{\beta} \}$,
we obtain $M >_\mul (M - \{ \alpha, \gamma \}) \uplus N$
from $\alpha, \gamma \in M$ and
$\{ \alpha, \gamma \} >_\mul N$.
Therefore, the claim follows from the induction hypothesis.
\end{proof}

For the correctness proof in \secref{infinite runs}
we use a simpler notion than peak decreasingness.

\begin{defi}[Source Decreasingness
\isaforlink{Peak_Decreasingness}{asm:source_decreasing}]
Let $\AA = \ARS$ be an ARS equipped with a well-founded relation $>$ on
$A$, and we write $b \xrightarrow{a} c$ if $b \to c$ and
$a = b$. We say that $\AA$ is \emph{source decreasing} if the inclusion
\[
{{} \from a \to {}} \:\subseteq\: {\xlr{\smash{\,\vee a~}}^*}
\]
holds for all $a \in A$. Here
$\from a \to$ denotes the binary relation consisting of all
pairs $(b,c)$ such that $a \to b$ and $a \to c$. Moreover,
$\xlrs{\smash{\,\vee a~}}^*$ denotes the binary relation
consisting of all pairs of elements that are connected by a conversion
in which all steps are labeled with an element smaller than $a$.
\end{defi}

Source decreasingness is the specialization of peak decreasingness to
source labeling~\cite[Example~6]{vO08}.
It is closely related to the \emph{connectedness-below}
criterion of Winkler and Buchberger~\cite{WB86}. Unlike the latter,
source decreasingness does not entail termination.
For instance, for $\m{a} > \m{b}$ and $\m{a} > \m{c}$ the non-terminating
ARS
\begin{center}
\begin{tikzpicture}[on grid,node distance=12mm,baseline=(1).baseline]
\node (1)              {$\e{b}$};
\node (2) [right=of 1] {$\e{a}$};
\node (3) [right=of 2] {$\e{c}$};
\draw[->] (1) edge[bend right] (2);
\draw[->] (2) edge[bend right] (1);
\draw[->] (2) edge[bend left] (3);
\draw[->] (3) edge[bend left] (2);
\end{tikzpicture}
\end{center}
is source decreasing but the connectedness-below criterion does not apply.

\begin{lem}%
\label{lem:sd => pd}
Every source decreasing ARS is peak decreasing.
\hfill
\isaforlink{Peak_Decreasingness}{sub:ars_source_decreasing}
\end{lem}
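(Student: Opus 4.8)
The plan is to exhibit a source-decreasing ARS as an indexed ARS under \emph{source labeling} and then verify the defining inclusion of peak decreasingness by hand. Let $\AA = \ARS$ be source decreasing, witnessed by a well-founded relation $>$ on $A$. I regard $\AA$ as the indexed ARS $\ARSn{A}{A}$ whose index set is the carrier $A$ itself and in which the step $b \to c$ carries the label $b$, so that ${\xr[a]{}} = \{ (a,c) \mid a \to c \}$. As the well-founded order on indices I take $>$ (replacing it by its transitive closure $>^+$ if an \emph{order} is required to be transitive; this is harmless, as explained below).

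The key observation is that under source labeling every step is labeled by its left endpoint. Hence in an arbitrary peak $b \FromB{\alpha} z \to_\beta c$ both steps $z \xr[\alpha]{} b$ and $z \xr[\beta]{} c$ emanate from $z$, which forces $\alpha = \beta = z$. Consequently, whenever $\alpha \neq \beta$ the composite relation ${\FromB{\alpha} \cdot \to_\beta}$ is empty and the peak-decreasingness inclusion holds vacuously. The only peaks that require attention are therefore those with a common label $\alpha = \beta = a$.

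For such a peak we have $\vee\alpha\beta = \{ \gamma \mid a > \gamma \} = \vee a$, and the set of all pairs $(b,c)$ arising as $b \FromB{a} \cdot \to_a c$ is exactly ${\from a \to}$, since every $\xr[a]{}$ step starts at $a$. The peak-decreasingness requirement thus instantiates to ${\from a \to} \subseteq {\xlr[\,\vee a~]{*}}$, which is precisely the source-decreasingness hypothesis at $a$. As this holds for every $a \in A$, the witness $>$ shows that $\AA$ is peak decreasing.

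There is essentially no real obstacle here; the single point deserving care is the mismatch between the ``well-founded relation'' of source decreasingness and the ``well-founded order'' demanded by peak decreasingness. If transitivity of the latter is insisted upon, I pass to $>^+$, which is again well-founded. Enlarging the order only enlarges each set $\vee a$ and hence only enlarges the admissible conversions $\xlr[\,\vee a~]{*}$, so the inclusion established with $>$ remains valid for $>^+$, and the reduction above goes through verbatim.
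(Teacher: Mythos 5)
Your proof is correct and takes essentially the same route as the paper, which offers no textual proof but remarks that source decreasingness is exactly the specialization of peak decreasingness to source labeling---precisely the reduction you spell out (empty peaks for distinct labels, the source-decreasingness hypothesis for equal labels). Your handling of the relation-versus-order mismatch via the transitive closure $>^+$ is also sound, since the transitive closure of a well-founded relation is again well-founded and irreflexive, and enlarging the order only enlarges each set $\vee a$ of admissible labels.
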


Peak decreasingness as a special case of decreasing diagrams was first
considered in our ITP publication~\cite{HMS14}
(the modulo version in \defref{peak decreasing modulo} is new).
Source decreasingness originates from our later FSCD
contribution~\cite{HMSW17}.

\subsection{Critical Peaks}%
\label{sec:critical peaks}

Completion is based on critical pair analysis. In this subsection we
present a version of the critical pair lemma that incorporates
primality (cf.\ \defref{pcp} below).

\begin{defi}[Overlaps
\isaforlink{CP}{def:overlap}]
An \emph{overlap} of a TRS $\RR$ is a triple
$\langle \ell_1 \to r_1, p, \ell_2 \to r_2 \rangle$,
consisting of two rewrite rules and a position, satisfying the
following properties:
\begin{itemize}
\item
there are renamings $\pi_1$ and $\pi_2$ such that
$\pi_1 (\ell_1 \to r_1), \pi_2 (\ell_2 \to r_2) \in \RR$
(\ie, the rules are variants of rules in $\RR$),
\smallskip
\item
$\Var(\ell_1 \to r_1) \cap \Var(\ell_2 \to r_2) = \varnothing$
(\ie, the rules have no common variables),
\smallskip
\item
$p \in \Pos_\FF(\ell_2)$,
\smallskip
\item
$\ell_1$ and ${\ell_2}|_p$ are unifiable,
\smallskip
\item
if $p = \epsilon$ then $\ell_1 \to r_1$ and $\ell_2 \to r_2$ are not
variants of each other.
\end{itemize}
\end{defi}

\noindent
In general this definition may lead to an infinite set of overlaps, since
there are infinitely many possibilities of taking variable disjoint
variants of rules. Fortunately it can be shown
that overlaps that originate from the same two rules are variants of each
other. Overlaps give rise to critical peaks and pairs.

\begin{defi}[Critical Peaks
\isaforlink{CP}{def:cpeaks2}
and Pairs
\isaforlink{CP}{def:CP2}]
Suppose $\langle \ell_1 \to r_1, p, \ell_2 \to r_2 \rangle$ is an overlap
of a TRS $\RR$. Let $\sigma$ be a most general unifier of $\ell_1$ and
${\ell_2}|_p$. The term $\ell_2\sigma{[\ell_1\sigma]}_p = \ell_2\sigma$
can be reduced in two different ways:
\begin{center}
\begin{tikzpicture}[minimum height=6mm]
\node(s){$\ell_2\sigma[\ell_1\sigma]_p = \ell_2\sigma$};
\coordinate(swest) at (s.west);
\coordinate(seast) at (s.east);
\node(t)[below left of=swest, anchor=north east]%
  {$\ell_2\sigma[r_1\sigma]_p$};
\node(u)[below right of=seast, anchor=north west]%
  {$r_2\sigma$};
\draw[big arrow] (s.south west) -- node [left=-1mm,yshift=2mm]%
  {\scriptsize $\ell_1 \to r_1$} node [right=0mm,yshift=-1mm]%
  {\scriptsize $p$} (t.north east);
\draw[big arrow] (s.south east) -- node [right=-1mm,yshift=2mm]%
  {\scriptsize $\ell_2 \to r_2$} node [left=0mm,yshift=-1mm]%
  {\scriptsize $\epsilon$} (u.north west);
\end{tikzpicture}
\end{center}
We call the quadruple
$(\ell_2\sigma{[r_1\sigma]}_p, p, \ell_2\sigma, r_2\sigma)$ a
\emph{critical peak}
and the equation
$\ell_2\sigma{[r_1\sigma]}_p \approx r_2\sigma$ a \emph{critical pair} of
$\RR$, obtained from the overlap. The set of all critical pairs of $\RR$
is denoted by $\CP(\RR)$.
\end{defi}

In our formalization of the above definition, instead of an arbitrary
most general unifier, we use \emph{the} most general unifier
computed by the formalized unification algorithm that is part of
\isafor (thereby removing one degree of freedom and making it easier to
show that only finitely many critical pairs have to be considered for
finite TRSs).

A critical peak $(t, p, s, u)$ is usually denoted by $\cpeak{t}{p}{s}{u}$.
It can be shown
that different critical peaks
and pairs obtained from two variants of the same overlap are variants of
each other. Since rewriting is equivariant under permutations, it is
enough to consult finitely many critical pairs or peaks for finite TRSs
(one for each pair of rules and each appropriate position) in order to
conclude rewriting related properties (like joinability or fairness,
see below) for all of them.

We present a variation of the well-known critical pair lemma for critical
peaks and its formalized proof. The slightly cumbersome statement is
essential to avoid gaps in the proof of \lemref{pcp_root} below.

\begin{lem}%
\label{lem:cpeakL}
Let $\RR$ be a TRS\@. If $t \xfrom{\RR}{p_1} s \xrightarrow{p_2}_\RR u$ then
one of the following holds:
\hfill\isaforlink{CP}{lem:peak_imp_join_or_S3_cpeaks}
\begin{enumerate}
\item%
\label{lem:cpeakL:a}
$t \join_\RR u$,
\item%
\label{lem:cpeakL:b}
$p_2 \leqslant p_1$ and
${\cpeak{t|_{p_2}}{p_1 \backslash p_2}{s|_{p_2}}{u|_{p_2}}}$
is an instance of a critical peak, or
\item%
\label{lem:cpeakL:c}
$p_1 \leqslant p_2$ and
${\cpeak{u|_{p_1}}{p_2 \backslash p_1}{s|_{p_1}}{t|_{p_1}}}$
is an instance of a critical peak.
\end{enumerate}
\end{lem}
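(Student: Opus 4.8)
The plan is to run the classical case analysis underlying the critical pair lemma, but to keep careful track of subterms so that the concrete peak, restricted to the relevant position, is \emph{literally} an instance of a critical peak. Write $s \to_\RR t$ for the step at $p_1$ using a rule $\ell_1 \to r_1 \in \RR$, so that $s|_{p_1} = \ell_1\sigma_1$ and $t = s[r_1\sigma_1]_{p_1}$, and likewise $s \to_\RR u$ for the step at $p_2$ using $\ell_2 \to r_2 \in \RR$, with $s|_{p_2} = \ell_2\sigma_2$ and $u = s[r_2\sigma_2]_{p_2}$. By renaming I may assume the two rules share no variables, so $\sigma_1$ and $\sigma_2$ can be merged into a single substitution $\sigma$. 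If $p_1 \parallel p_2$ the two redexes are disjoint, so the contractions commute and $t \to_\RR s[r_1\sigma]_{p_1}[r_2\sigma]_{p_2} \from_\RR u$; hence $t \join_\RR u$ and alternative~\ref{lem:cpeakL:a} holds.

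For the remaining cases $p_1$ and $p_2$ are comparable, and the situations $p_2 \leqslant p_1$ and $p_1 \leqslant p_2$ are symmetric: swapping the roles of the two rules turns one into the other, which is exactly why alternatives~\ref{lem:cpeakL:b} and~\ref{lem:cpeakL:c} are mirror images. I would therefore treat $p_2 \leqslant p_1$ in detail and set $q = p_1 \backslash p_2$, so that the step at $p_1$ takes place strictly inside the redex $s|_{p_2} = \ell_2\sigma$ at the relative position $q$, with $(\ell_2|_q)\sigma = s|_{p_1} = \ell_1\sigma$. The decisive split is whether $q$ lies below a variable position of $\ell_2$ or in $\PosF(\ell_2)$.

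If $q$ lies below some $p' \in \PosV(\ell_2)$, this is a variable overlap and I expect joinability directly. The inner step rewrites one occurrence of $\sigma(x)$, where $x = \ell_2|_{p'}$; writing $\sigma'$ for the substitution that contracts $\sigma(x)$ accordingly and agrees with $\sigma$ elsewhere, I rewrite the remaining copies of $\sigma(x)$ to obtain $t|_{p_2} \to_\RR^* \ell_2\sigma' \to_\RR r_2\sigma'$ (contracting the outer redex), and symmetrically $u|_{p_2} = r_2\sigma \to_\RR^* r_2\sigma'$ since every variable of $r_2$ occurs in $\ell_2$. Lifting these sequences back through the context at $p_2$ yields $t \join_\RR u$, that is, alternative~\ref{lem:cpeakL:a}. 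If instead $q \in \PosF(\ell_2)$, then $\ell_1$ and $\ell_2|_q$ unify (both are matched by the variable-disjoint $\sigma$) and thus admit a most general unifier $\mu$. Provided we are not in the degenerate situation $q = \epsilon$ with $\ell_1 \to r_1$ and $\ell_2 \to r_2$ variants---in which $t = u$ and alternative~\ref{lem:cpeakL:a} applies---the triple $\langle \ell_1 \to r_1, q, \ell_2 \to r_2 \rangle$ meets all conditions of an overlap. Its critical peak is $\cpeak{\ell_2\mu[r_1\mu]_q}{q}{\ell_2\mu}{r_2\mu}$, and since $\sigma$ is an instance of $\mu$ there is a substitution $\delta$ with $\mu\delta = \sigma$ on the relevant variables; applying $\delta$ shows that $\cpeak{t|_{p_2}}{p_1 \backslash p_2}{s|_{p_2}}{u|_{p_2}}$ is exactly an instance of this critical peak, which is alternative~\ref{lem:cpeakL:b}.

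I expect the variable-overlap case to be the main obstacle: the bookkeeping of which occurrence of $\sigma(x)$ is contracted, and the parallel rewriting of the other occurrences in both $\ell_2\sigma$ and $r_2\sigma$ before applying the outer rule, is where a formalized argument must be most delicate. The only other point needing care is the explicit routing of the root-overlap-of-variants case into alternative~\ref{lem:cpeakL:a}, which is precisely the clause the overlap definition excludes; verifying $t = u$ there, via $\sigma_1 = \pi\sigma_2$ for the connecting renaming $\pi$, closes the one gap that the variant side condition would otherwise leave open. The case $p_1 \leqslant p_2$ is then obtained verbatim with the two rules interchanged, yielding alternative~\ref{lem:cpeakL:c}.
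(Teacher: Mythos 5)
Your proposal follows essentially the same route as the paper's proof: the identical case analysis on parallel versus comparable positions, the same split between a function-position overlap (yielding an instance of a critical peak via the mgu and a substitution factoring through it) and a variable overlap (resolved by rewriting the remaining occurrences of the substituted variable on both sides), and the same routing of the root-overlap-of-variants case into $t = u$. The only cosmetic difference is that you rename the rules apart at the outset and merge $\sigma_1,\sigma_2$ into one substitution, whereas the paper introduces the connecting permutation $\pi$ and merges the substitutions only when constructing the unifier; the content is the same.
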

\begin{proof}
Consider an arbitrary peak $t \FromB{p_1, \ell_1' \to r_1', \sigma_1'} s
\to_{p_2, \ell_2 \to r_2, \sigma_2} u$. If $p_1 \parallel p_2$ then
\[
t \to_{p_2, \ell_2 \to r_2, \sigma_2} t{[r_2\sigma_2]}_{p_2} =
u{[r_1'\sigma_1']}_{p_1} \fromB{p_1, \ell_1' \to r_1', \sigma_1'} u
\]
If the positions of the contracted redexes are not parallel then one of
them is above the other.  Without loss of generality we assume that
$p_1 \geqslant p_2$. Let $p = p_1 \backslash p_2$. Moreover, let $\pi$ be
a permutation such that $\ell_1 \to r_1 = \pi (\ell_1' \to r_1')$
and $\ell_2 \to r_2$ have no variables in common. Such a permutation
exists since we only have to avoid the finitely many variables of
$\ell_2 \to r_2$ and assume an infinite set of variables. Furthermore, let
$\sigma_1 = \pi^{-1} \permute \sigma_1'$. We have
$t = s{[r_1\sigma_1]}_{p_1} = s{[\ell_2\sigma_2{[r_1\sigma_1]}_p]}_{p_2}$ and
$u = s{[r_2\sigma_2]}_{p_2}$. We consider two cases depending on whether
$p \in \Pos_\FF(\ell_2)$ in conjunction with the fact that whenever
$p = \epsilon$ then $\ell_1 \to r_1$ and $\ell_2 \to r_2$ are not
variants, is true or not.
\begin{itemize}
\item
Suppose $p \in \Pos_\FF(\ell_2)$ and $p = \epsilon$ implies that
$\ell_1 \to r_1$ and $\ell_2 \to r_2$ are not variants. Let
$\sigma'(x) = \sigma_1(x)$ for $x \in \Var(\ell_1 \to r_1)$ and
$\sigma'(x) = \sigma_2(x)$, otherwise. The substitution $\sigma'$ is a
unifier of ${\ell_2}|_p$ and $\ell_1$: $({\ell_2}|_p)\sigma' =
(\ell_2\sigma_2)|_p = \ell_1\sigma_1 = \ell_1\sigma'$. Then
$\langle \ell_1 \to r_1, p, \ell_2 \to r_2 \rangle$ is an overlap. Let
$\sigma$ be a most general unifier of ${\ell_2}|_p$ and $\ell_1$. Hence
$\cpeak{\ell_2\sigma{[r_1\sigma]}_p}{p}{\ell_2\sigma}{r_2\sigma}$ is a
critical peak and there exists a substitution $\tau$ such that
$\sigma' = \sigma\tau$. Therefore
\[
\cpeak%
  {\ell_2\sigma_2{[r_1\sigma_1]}_p = (\ell_2\sigma{[r_1\sigma]}_p)\tau}%
  {p}
  {(\ell_2\sigma)\tau}
  {(r_2\sigma)\tau = r_2\sigma_2}
\]
and thus~\eqref{lem:cpeakL:b} is obtained.
\item
Otherwise, either $p = \epsilon$ and $\ell_1 \to r_1$, $\ell_2 \to r_2$
are variants, or $p \notin \Pos_\FF(\ell_2)$. In the former case it is
easy to show that $r_1\sigma_1 = r_2\sigma_2$ and hence $t = u$. In the
latter case, there exist positions $q_1$, $q_2$ such that $p = q_1 q_2$
and $q_1 \in \PosV(\ell_2)$. Let ${\ell_2}|_{q_1}$ be the variable $x$.
We have $\sigma_2(x)|_{q_2} = \ell_1\sigma_1$. Define the substitution
$\sigma_2'$ as follows:
\[
\sigma_2'(y) = \begin{cases}
\sigma_2(y){[r_1\sigma_1]}_{q_2} & \text{if $y = x$} \\
\sigma_2(y) & \text{if $y \neq x$}
\end{cases}
\]
Clearly $\sigma_2(x) \to_\RR \sigma_2'(x)$, and thus
$r_2\sigma_2 \to^* r_2\sigma_2'$. We also have
\[
\ell_2\sigma_2{[r_1\sigma_1]}_p = \ell_2\sigma_2{[\sigma_2'(x)]}_{q_1}
\to^* \ell_2\sigma_2' \to r_2\sigma_2'
\]
Consequently, $t \to^* s{[r_2\sigma_2']}_{p_2} \FromT{*} u$. Hence,~\eqref{lem:cpeakL:a} is concluded.
\qedhere
\end{itemize}
\end{proof}

\noindent
An easy consequence of the above lemma is that for every peak
$t \FromB{\RR} s \to_\RR u$ we have $t \join_\RR u$ or
$t \fromto_{\CP(\RR)} u$. It might be interesting to note that in our
formalization of the above proof we do actually not need the fact that
left-hand sides of rules are not variables.

\begin{defi}[Prime Critical Peaks and Pairs
\isaforlink{Prime_Critical_Pairs}{def:PCP}]%
\label{def:pcp}
A critical peak $\cpeak{t}{p}{s}{u}$ is \emph{prime} if all proper
subterms of $s|_p$ are normal forms.
A critical pair is called prime if it is derived from a prime critical
peak. We write $\PCP(\RR)$ to denote the set of all prime critical pairs
of a TRS $\RR$.
\end{defi}

\begin{defi}
Given a TRS $\RR$ and terms $s$, $t$, and $u$, we write
$t \mathrel{\tds} u$ if $s \to_\RR^+ t$, $s \to_\RR^+ u$, and
$t \join_\RR u$ or $t \fromto_{\PCP(\RR)} u$.
\hfill\isaforlink{Prime_Critical_Pairs}{def:nabla}
\end{defi}

\begin{lem}%
\label{lem:pcp_root}
Let $\RR$ be a TRS\@. If $\cpeak{t}{p}{s}{u}$ is a critical peak then
$t \mathrel{\tds^2} u$.
\hfill
\isaforlink{Prime_Critical_Pairs}{lem:cpeaks_imp_nabla2}
\end{lem}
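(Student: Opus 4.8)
The plan is to split on whether the given critical peak $\cpeak{t}{p}{s}{u}$ is prime. If it is prime, then its critical pair $t \approx u$ belongs to $\PCP(\RR)$ by \defref{pcp}, so $t \fromto_{\PCP(\RR)} u$; together with $s \to_\RR t$ and $s \to_\RR u$ this gives $t \mathrel{\tds} u$, and appending the trivial step $u \mathrel{\tds} u$ (from $u \join_\RR u$ and $s \to_\RR u$) yields $t \mathrel{\tds^2} u$. So the real work is confined to the non-prime case, where the pleasant surprise is that no induction seems to be needed: a single well-chosen reduction should suffice.

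If the peak is not prime, some proper subterm of $s|_p$ is reducible. First I would pick an \emph{innermost} such redex, at some position $pq$ of $s$ with $q \neq \epsilon$, and contract it, obtaining a step $s \to_\RR s_1$. The idea is to route both $t$ and $u$ through $s_1$ using exactly one $\tds$-step each. To that end I apply \lemref{cpeakL} to the two peaks $t \xfrom{\RR}{p} s \xrightarrow{pq}_\RR s_1$ and $u \xfrom{\RR}{\epsilon} s \xrightarrow{pq}_\RR s_1$. Since $p \leqslant pq$ and $\epsilon \leqslant pq$, alternative (b) is excluded, so each peak falls under (a) or (c). Alternative (a) directly yields $t \join_\RR s_1$ (resp.\ $u \join_\RR s_1$). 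Alternative (c) provides the critical peaks $\cpeak{s_1|_p}{q}{s|_p}{t|_p}$ and $\cpeak{s_1}{pq}{s}{u}$, whose critical pairs are $s_1|_p \approx t|_p$ and $s_1 \approx u$; lifting the former through the context $s[\,\cdot\,]_p$ turns it into $s_1 \fromto_{\PCP(\RR)} t$, while the latter already reads $s_1 \fromto_{\PCP(\RR)} u$. In every combination I obtain $t \mathrel{\tds} s_1$ and $s_1 \mathrel{\tds} u$ (the reachability side conditions $s \to_\RR^+ t, s_1, u$ all hold by single steps), and composing the two gives $t \mathrel{\tds^2} u$.

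The crux is justifying that the critical peaks produced by alternative (c) are \emph{prime}, which is precisely what lets me treat their pairs as elements of $\PCP(\RR)$ rather than arbitrary critical pairs. Both peaks have overlapped subterm $s|_{pq}$, and by the innermost choice every proper subterm of $s|_{pq}$ is a normal form. The subtlety is that \lemref{cpeakL} returns these only as \emph{instances} of the most-general-unifier critical peaks, so I must argue that the more general peak is already prime: if its overlapped subterm had a reducible proper subterm, then—since $\to_\RR$ is closed under substitution—the corresponding proper subterm of the instance $s|_{pq}$ would be reducible too, contradicting innermostness. Hence the general peaks are prime, their pairs lie in $\PCP(\RR)$, and the instantiated conversions are genuine single $\fromto_{\PCP(\RR)}$ steps. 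I expect this interplay between the innermost choice, instantiation, and closure of reducibility under substitution to be the only delicate point; the remaining bookkeeping (handling parallel versus nested positions through \lemref{cpeakL}, lifting a $\PCP$ step through the context $s[\,\cdot\,]_p$, and verifying the reachability conditions of $\tds$) should be routine.
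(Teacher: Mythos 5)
Your proposal is correct and follows essentially the same route as the paper's proof: a case split on primality of the given peak, then in the non-prime case contracting an innermost redex strictly below $p$ and applying \lemref{cpeakL} to the two resulting peaks, with the key observation that the innermost choice makes the extracted critical peaks prime (since reducibility is closed under substitution, so primality transfers from the instance to the most-general peak). The only difference is presentational—you treat the two peaks symmetrically and spell out the instance-to-mgu primality transfer that the paper handles more tersely.
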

\begin{proof}
First suppose that all proper subterms of $s|_p$ are normal forms. Then
$t \approx u \in \PCP(\RR)$ and thus $t \mathrel{\tds} u$. Since also
$u \mathrel{\tds} u$, we obtain the desired $t \mathrel{\tds^2} u$. This
leaves us with the case that there is a proper subterm of $s|_p$ that is
not a normal form. By considering an innermost redex in $s|_p$ we obtain
a position $q > p$ and a term $v$ such that $\smash{s \xrightarrow{q} v}$
and all proper subterms of $s|_q$ are normal forms. Now, if
$\smash{\cpeak{v}{q}{s}{u}}$ is an instance of a critical peak then
$v \to_{\PCP(\RR)} u$. Otherwise, $v \join_\RR u$ by \lemref{cpeakL},
since $q \not\leqslant \epsilon$. In both cases we obtain
$v \mathrel{\tds} u$. Finally, we analyze the peak
$\smash{\peak{t}{p}{s}{q}{v}}$ by another application of \lemref{cpeakL}.
\begin{enumerate}
\item
If $t \join_\RR v$, we obtain $t \mathrel{\tds} v$ and thus
$t \mathrel{\tds^2} u$, since also $v \mathrel{\tds} u$.
\medskip
\item
Since $p < q$, only the case that
$\smash{\cpeak{v|_p}{q \backslash p}{s|_p}{t|_p}}$ is an instance of a
critical peak remains. Moreover, all proper subterms of $s|_q$ are
normal forms and thus we have an instance of a prime critical peak. Hence
$t \fromto_{\PCP(\RR)} v$ and together with $v \mathrel{\tds} u$ we
conclude $t \mathrel{\tds^2} u$.
\qedhere
\end{enumerate}
\end{proof}

\begin{lem}%
\label{lem:pcp}
Let $\RR$ be a TRS\@. If $t \FromB{\RR} s \to_\RR u$ then
$t \mathrel{\tds^2} u$.
\hfill
\isaforlink{Prime_Critical_Pairs}{lem:peak_imp_nabla2}
\end{lem}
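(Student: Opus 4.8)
The plan is to reduce the arbitrary peak to the critical-peak case already settled in \lemref{pcp_root}, using the case distinction provided by \lemref{cpeakL}. So I would start by applying \lemref{cpeakL} to the peak $t \xfrom{\RR}{p_1} s \xrightarrow{p_2}_\RR u$, which yields three cases.

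If $t \join_\RR u$, then since the peak provides $s \to_\RR t$ and $s \to_\RR u$ we have $s \to_\RR^+ t$ and $s \to_\RR^+ u$, so $t \mathrel{\tds} u$ holds outright. Because $u \mathrel{\tds} u$ holds trivially (from $s \to_\RR^+ u$ and $u \join_\RR u$), this already gives $t \mathrel{\tds^2} u$.

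The two remaining cases are symmetric; I would treat $p_2 \leqslant p_1$, where $\cpeak{t|_{p_2}}{p_1 \backslash p_2}{s|_{p_2}}{u|_{p_2}}$ is an instance of some critical peak $\cpeak{t_0}{p}{s_0}{u_0}$, witnessed by a substitution $\tau$. Applying \lemref{pcp_root} to this critical peak gives $t_0 \mathrel{\td{s_0}^2} u_0$, say $t_0 \mathrel{\td{s_0}} w_0 \mathrel{\td{s_0}} u_0$ with a single source $s_0$. Writing $C$ for the context with $s = C[s|_{p_2}]$, and noting that both contracted redexes lie at or below $p_2$, we have $s = C[s_0\tau]$, $t = C[t_0\tau]$, and $u = C[u_0\tau]$.

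The crux is then to transport $\td{s_0}$ along $C$ and $\tau$ into $\tds$. Each defining ingredient of $\tds$ is stable under contexts and substitutions: $s_0 \to_\RR^+ t_0$ lifts to $s = C[s_0\tau] \to_\RR^+ C[t_0\tau] = t$ (and likewise for the other endpoints), $\join_\RR$ is closed under contexts and substitutions, and so is $\fromto_{\PCP(\RR)}$, since $\PCP(\RR)$ induces its rewrite relation precisely by such closure. Hence $t \mathrel{\tds} C[w_0\tau] \mathrel{\tds} u$, that is, $t \mathrel{\tds^2} u$. For the symmetric case $p_1 \leqslant p_2$ the same lifting produces $u \mathrel{\tds^2} t$, and since every condition defining $\tds$ is symmetric in its two arguments, the relation $\tds$ is symmetric, so $t \mathrel{\tds^2} u$ follows there as well.
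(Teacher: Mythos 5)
Your proposal is correct and follows essentially the same route as the paper's proof: apply \lemref{cpeakL} to split into the joinable case and the (possibly reversed) critical-peak cases, then invoke \lemref{pcp_root} and lift the resulting $\td{s_0}^2$ sequence through the enclosing context and substitution, using closure of $\to_\RR^+$, $\join_\RR$, and $\fromto_{\PCP(\RR)}$ under contexts and substitutions. You merely spell out details the paper leaves implicit, namely padding the joinable case with $u \mathrel{\tds} u$ to reach $\tds^2$ and using symmetry of $\tds$ for the reversed-peak case.
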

\begin{proof}
From \lemref{cpeakL}, either $t \join_\RR u$ and we are done, or
$t \FromB{\RR} s \to_\RR u$ contains a (possibly reversed) instance of a
critical peak. By \lemref{pcp_root} we conclude the proof, since
rewriting is closed under substitutions and contexts.
\end{proof}

The following result is due to Kapur
\etal~\cite[Corollary~4]{KMN88}.

\begin{cor}%
\label{cor:pcp - confluence}
A terminating TRS is confluent if and only if all its prime critical pairs
are joinable.
\hfill
\isaforlink{Prime_Critical_Pairs}{lem:SN_imp_CR_iff_PCP_join}
\end{cor}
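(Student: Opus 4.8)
The plan is to treat the two implications separately. The forward direction is immediate and needs no termination: if $\RR$ is confluent, then every critical peak $\cpeak{t}{p}{s}{u}$ is in particular a peak $t \FromB{\RR} s \to_\RR u$, whence $t \join_\RR u$; thus every prime critical pair, being obtained from such a peak, is joinable. So I would spend effort only on the converse and assume from now on that $\RR$ is terminating and that all prime critical pairs are joinable.

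For the converse I would obtain confluence through \lemref{pd => cr} via \lemref{sd => pd}, by showing that the ARS $\langle \TT(\FF,\VV), {\to_\RR} \rangle$ is \emph{source decreasing}. Since $\RR$ terminates, ${>} = {\to_\RR^+}$ is a well-founded order on terms, and I take it as the witnessing order, so that each rewrite step is labelled by its source. With this choice $\vee s = \{ b \mid s \to_\RR^+ b \}$, and a conversion belongs to $\xlr{\smash{\,\vee s~}}^*$ precisely when the source of every one of its steps is a proper $\to_\RR$-descendant of $s$.

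To verify the required inclusion ${\from s \to} \subseteq {\xlr{\smash{\,\vee s~}}^*}$, I would start from a peak $t \FromB{\RR} s \to_\RR u$ and apply \lemref{pcp} to obtain $t \mathrel{\tds^2} u$, that is, a term $w$ with $t \mathrel{\td{s}} w \mathrel{\td{s}} u$. Unfolding \defref{pcp} gives $s \to_\RR^+ t$, $s \to_\RR^+ w$, and $s \to_\RR^+ u$, together with the fact that each of the pairs $(t,w)$ and $(w,u)$ is either joinable or linked by a single $\fromto_{\PCP(\RR)}$ step. This is exactly where the hypothesis enters: because all prime critical pairs are joinable and $\to_\RR$ is closed under contexts and substitutions, joinability is preserved under instantiation, so a $\fromto_{\PCP(\RR)}$ step always already implies joinability. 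Hence both $t \join_\RR w$ and $w \join_\RR u$ hold.

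The step I expect to require the most care is the label bookkeeping in the final assembly. Every step in the valleys witnessing $t \join_\RR w$ and $w \join_\RR u$ has a source that is $\to_\RR^*$-reachable from one of $t$, $w$, $u$; combined with $s \to_\RR^+ t$, $s \to_\RR^+ w$, $s \to_\RR^+ u$ and the inclusion ${\to_\RR^+ \cdot \to_\RR^*} \subseteq {\to_\RR^+}$, each such source is a proper $\to_\RR$-descendant of $s$ and therefore lies in $\vee s$. Concatenating the two valleys thus yields a conversion $t \xlr{\smash{\,\vee s~}}^* u$, which is precisely the desired source decreasingness. Confluence then follows from \lemref{sd => pd} and \lemref{pd => cr}, completing the converse.
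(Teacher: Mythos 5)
Your proof is correct and takes essentially the same route as the paper: establish source decreasingness of $\to_\RR$ with respect to the well-founded order ${>} = {\to_\RR^+}$ by combining \lemref{pcp} with the joinability assumption on prime critical pairs (lifted to instances via closure of $\join_\RR$ under contexts and substitutions), then conclude via \lemref{sd => pd} and \lemref{pd => cr}. The only difference is presentational: you make explicit the instantiation step and the label bookkeeping that the paper's proof leaves implicit.
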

\begin{proof}
Let $\RR$ be a terminating TRS such that
$\PCP(\RR) \subseteq {\join_\RR}$. We claim that $\RR$ is source
decreasing. As well-founded order we take ${>} = {\to_\RR^+}$. Consider an
arbitrary peak \mbox{$t \FromB{\RR} s \to_\RR u$}. \lemref{pcp} yields a
term $v$
such that $t \mathrel{\tds} v \mathrel{\tds} u$. From the assumption
$\PCP(\RR) \subseteq {\join_\RR}$ we obtain $t \join_\RR v \join_\RR u$.
Since $s \to_\RR^+ v$, all steps in the conversion
$t \join_\RR v \join_\RR u$ are labeled with a term that is smaller than
$s$. Since the two steps in the peak receive the same label $s$, source
decreasingness is established and hence we obtain the confluence of $\RR$
from \lemref{pd => cr}. The reverse direction is trivial.
\end{proof}

Note that unlike for ordinary critical pairs, joinability of prime
critical pairs does not imply local confluence.

\begin{exa}
Consider the TRS $\RR$ given by the three rules:
\begin{xalignat*}{3}
\m{f}(\m{a}) & \to \m{b} &
\m{f}(\m{a}) & \to \m{c} &
\m{a} & \to \m{a}
\end{xalignat*}
The set $\PCP(\RR)$ consists of the two pairs $\m{f}(\m{a}) \approx \m{b}$
and $\m{f}(\m{a}) \approx \m{c}$, which are trivially joinable. But $\RR$
is not locally confluent because the peak
$\m{b} \fromB{\RR} \m{f}(\m{a}) \to_\RR \m{c}$ is not joinable.
\end{exa}

The critical pair lemma (\lemref{cpeakL}) in this section is
due to Knuth and Bendix~\cite{KB70} and Huet~\cite{H80}.
The primality critical pair criterion was first presented by
Kapur, Musser, and Narendran~\cite{KMN88}.
Our presentation is based on the simpler correctness arguments
from our earlier work~\cite{HMS14,HMSW17}.

\section{Correctness for Finite Runs}%
\label{sec:finite runs}

The original completion procedure by Knuth and Bendix~\cite{KB70} was
presented as a concrete algorithm. Later on, Bachmair, Dershowitz, and
Hsiang~\cite{BDH86} presented an inference system for completion and
showed that all \emph{fair} implementations thereof (in particular the
original procedure) are correct. Abstracting from a concrete strategy,
their approach thus has the advantage to cover a variety of
implementations. Below, we recall the inference system, which constitutes
the basis of the results presented in this section.

\begin{defi}[Knuth-Bendix Completion
\isaforlink{Abstract_Completion}{ind:KB}]%
\label{def:KBf}
The inference system \KBf of abstract (Knuth-Bendix) completion
operates on pairs $(\EE,\RR)$ of sets of equations $\EE$ and rules
$\RR$ over a common signature $\FF$.
It consists of the following inference rules,
where we write $\EE,\RR$ for a pair $(\EE,\RR)$ and $\uplus$ denotes
disjoint set union:
\begin{center}
\bigskip
\begin{tabular}{@{}lcl@{\qquad\qquad}lcl@{}}
\tsfs{deduce} &
$\displaystyle \frac
{\EE,\RR}
{\EE \cup \{ s \approx t \},\RR}$
& if $s \FromB{\RR} \cdot \to_\RR t$
&
\tsfs{compose} &
$\displaystyle \frac
{\EE,\RR \uplus \{ s \to t \}}
{\EE,\RR \cup \{ s \to u \}}$
& if $t \to_\RR u$
\\ & \\
&
$\displaystyle \frac
{\EE \uplus \{ s \approx t \},\RR}
{\EE,\RR \cup \{ s \to t \}}$
& if $s > t$
&
&
$\displaystyle \frac
{\EE \uplus \{ s \approx t \},\RR}
{\EE \cup \{ u \approx t \},\RR}$
& if $s \to_\RR u$
\\[-.5ex]
\tsfs{orient} & & & \tsfs{simplify}
\\[-.5ex]
&
$\displaystyle \frac
{\EE \uplus \{ s \approx t \},\RR}
{\EE,\RR \cup \{ t \to s \}}$
& if $t > s$
&
&
$\displaystyle \frac
{\EE \uplus \{ s \approx t \},\RR}
{\EE \cup \{ s \approx u \},\RR}$
& if $t \to_\RR u$
\\ & \\
\tsfs{delete} &
$\displaystyle \frac
{\EE \uplus \{ s \approx s \},\RR}
{\EE,\RR}$ &
&
\tsfs{collapse} &
$\displaystyle \frac
{\EE,\RR \uplus \{ t \to s \}}
{\EE \cup \{ u \approx s \},\RR}$
&
if $t \to_\RR u$
\end{tabular}
\bigskip
\end{center}
Here $>$ is a fixed reduction order on $\TT(\FF,\VV)$.
\end{defi}

\defref{KBf} differs from most of the
inference systems in the literature (like those devised by Bachmair and
Dershowitz~\cite{B91,BD94}) in that we do
not impose an encompassment condition on \tsfs{collapse}.
As long as we only consider \emph{finite} runs (see \defref{KBf fairness}
below)---like in Sections~\ref{sec:finite runs}
to~\ref{sec:ground completion}---this change is
valid (as shown by Sternagel and Thiemann~\cite{ST13}).

Concerning notation, we write $(\EE,\RR) \vdf (\EE',\RR')$ whenever we
can obtain $(\EE',\RR')$ from $(\EE,\RR)$ by applying one of the
inference rules of \defref{KBf}.
While it is well-known that applying the inference rules of \KBf does not
affect the equational theory induced by $\EE \cup \RR$,
our formulation is new and paves the way for a simple correctness proof.

\begin{lem}%
\label{lem:KB rewrite steps}
Suppose $(\EE,\RR) \vdf (\EE',\RR')$. Then, the following two inclusions
hold:
\begin{enumerate}
\item%
\label{lem:KB rewrite steps:a}
If $s \xrightarrow[\ER]{} t$ then
$s \xrightarrow[\RR']{=} \cdot \xrightarrow[\EE' \scup \RR']{=}
\cdot \xleftarrow[\RR']{=} t$.
\hfill\isaforlink{Abstract_Completion}{lem:KB_subset}
\item%
\label{lem:KB rewrite steps:b}
If $s \xrightarrow[\EE' \scup \RR']{} t$ then
$s \xleftrightarrow[\EE \scup \RR]{*} t$.
\hfill\isaforlink{Abstract_Completion}{lem:KB_subset'}
\end{enumerate}
\end{lem}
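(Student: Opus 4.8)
The plan is to establish both inclusions by a single case analysis on the inference rule justifying $(\EE,\RR) \vdf (\EE',\RR')$, after reducing each claim to a statement about the single pair that the rule adds or removes. The enabling observation is that both right-hand sides are closed under contexts and substitutions, so it suffices to inspect root rewrite steps, \ie, to verify each claim for the individual equations and rules involved.

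For the second inclusion I would first note that conversion $\xlr[\EE \scup \RR]{*}$ is closed under contexts and substitutions. Hence it is enough to show $\ell \xlr[\EE \scup \RR]{*} r$ for every pair $(\ell,r) \in \EE' \scup \RR'$: any step $s \xr[\EE' \scup \RR']{} t$ has the shape $s = C[\ell\sigma]$, $t = C[r\sigma]$ with $(\ell,r) \in \EE' \scup \RR'$, whence $s \xlr[\EE \scup \RR]{*} t$ follows. Pairs already present in $\EE \scup \RR$ are immediate, so only the pair freshly introduced by each rule needs attention, and in every case the required conversion is read off from the side condition: for \tsfs{deduce} the new equation arises from the peak $s \FromB{\RR} \cdot \to_\RR t$; for \tsfs{orient} the oriented rule was previously an equation; and for \tsfs{compose}, \tsfs{simplify}, and \tsfs{collapse} the new pair differs from an existing one by the single $\RR$-step recorded in the side condition.

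For the first inclusion I would proceed analogously, this time using that the composite relation $Q = {\xr[\RR']{=}} \cdot {\xr[\EE' \scup \RR']{=}} \cdot {\xl[\RR']{=}}$ is closed under contexts and substitutions, being a composition of reflexive closures and inverses of rewrite relations. Thus it again suffices to check $\ell \mathrel{Q} r$ for every $(\ell,r) \in \EE \scup \RR$, since then $s \xr[\EE \scup \RR]{} t$ yields $s \mathrel{Q} t$ by the same contextual argument. Every pair surviving into $\EE' \scup \RR'$ satisfies $\ell \xr[\EE' \scup \RR']{} r$ and hence $\ell \mathrel{Q} r$ with both $\RR'$-components reflexive, so only the modified pairs matter. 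These I would discharge one by one: an oriented equation becomes a single $\RR'$-step, absorbed into the outer $\xr[\RR']{=}$ or $\xl[\RR']{=}$; a composed rule $s \to t$ is replayed as $s \to_{\RR'} u \xl[\RR']{} t$; a simplified equation as $s \to_{\RR'} u \xr[\EE' \scup \RR']{} t$ or $s \xr[\EE' \scup \RR']{} u \xl[\RR']{} t$; and a collapsed rule $t \to s$ as $t \to_{\RR'} u \xr[\EE' \scup \RR']{} s$; the deleted equation $s \approx s$ is the identity and needs nothing.

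I expect the only delicate points to be the book-keeping in \tsfs{compose} and \tsfs{collapse}, where a rule of $\RR$ is discarded and one must confirm that its side condition reduces via the retained part of $\RR$ (which is contained in $\RR'$), so that the lost step can indeed be simulated through the new pair. Everything else is routine once the closure-under-contexts reduction has localized the reasoning to the single added or removed pair of each rule.
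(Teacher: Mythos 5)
Your proposal is correct and follows essentially the same route as the paper's proof: both reduce each claim to the individual pairs added or removed by an inference rule (via closure of the target relations under contexts and substitutions) and then discharge the per-rule cases exactly as you describe, the paper presenting them as a table of inclusions for the left- and right-hand sides of each rule. Your observation about \textsf{compose} and \textsf{collapse} is also resolved the same way in the paper: the side condition $t \to_\RR u$ refers to the retained part of the rule set, which is contained in $\RR'$, so the discarded rule is simulated exactly as you propose.
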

\begin{proof}
By inspecting the inference rules of \KBf we easily obtain the
following inclusions:
\begin{xalignat*}{2}
\intertext{\tsfs{deduce}} \\[-3ex]
\EE \cup \RR ~&\subseteq~
\EE' \cup \RR'
&
\EE' \cup \RR' ~&\subseteq~
\EE \cup \RR \cup {\xleftarrow[\RR]{} \cdot \xrightarrow[\RR]{}}
\intertext{\tsfs{orient}} \\[-3ex]
\EE \cup \RR ~&\subseteq~
\EE' \cup \RR' \cup {(\RR')}^{-1}
&
\EE' \cup \RR' ~&\subseteq~
\EE \cup \RR \cup \EE^{-1}
\intertext{\tsfs{delete}} \\[-3ex]
\EE \cup \RR ~&\subseteq~
\EE' \cup \RR' \cup {=}
&
\EE' \cup \RR' ~&\subseteq~
\EE \cup \RR
\intertext{\tsfs{compose}} \\[-3ex]
\EE \cup \RR ~&\subseteq~
\EE' \cup \RR' \cup {\xrightarrow[\RR']{} \cdot \xleftarrow[\RR']{}}
&
\EE' \cup \RR' ~&\subseteq~
\EE \cup \RR \cup {\xrightarrow[\RR]{} \cdot \xrightarrow[\RR]{}}
\intertext{\tsfs{simplify}} \\[-3ex]
\EE \cup \RR ~&\subseteq~
\EE' \cup \RR' \cup {\xrightarrow[\RR']{} \cdot \xrightarrow[\EE']{}}
\cup {\xrightarrow[\EE']{} \cdot \xleftarrow[\RR']{}}
&
\EE' \cup \RR' ~&\subseteq~
\EE \cup \RR \cup {\xleftarrow[\RR]{} \cdot \xrightarrow[\EE]{}}
\cup {\xrightarrow[\EE]{} \cdot \xrightarrow[\RR]{}}
\intertext{\tsfs{collapse}} \\[-3ex]
\EE \cup \RR ~&\subseteq~
\EE' \cup \RR' \cup {\xrightarrow[\RR']{} \cdot \xrightarrow[\EE']{}}
&
\EE' \cup \RR' ~&\subseteq~
\EE \cup \RR \cup {\xleftarrow[\RR]{} \cdot \xrightarrow[\RR]{}}
\end{xalignat*}
Consider for instance the \tsfs{collapse} rule and suppose that
$s \approx t \in \EE \cup \RR$. If $s \approx t \in \EE$ then
$s \approx t \in \EE'$ because $\EE \subseteq \EE'$. If
$s \approx t \in \RR$ then either $s \approx t \in \RR'$ or
$s \to_\RR u$ with $u \approx t \in \EE'$ and thus
$s \to_{\RR'} \cdot \to_{\EE'} t$. This proves the inclusion on the left.
For the inclusion on the right the reasoning is similar. Suppose that
$s \approx t \in \EE' \cup \RR'$. If $s \approx t \in \RR'$ then
$s \approx t \in \RR$ because $\RR' \subseteq \RR$. If
$s \approx t \in \EE'$ then either $s \approx t \in \EE$ or
there exists a rule $u \to t \in \RR$ with $u \to_\RR s$ and thus
$s \FromB{\RR} \cdot \to_\RR t$.

Since rewrite relations are closed under contexts and substitutions,
the inclusions in the right column prove statement~\eqref{lem:KB rewrite
steps:b}. Moreover note that each
inclusion in the left column is a special case of
\[
\EE \cup \RR ~\subseteq~
{\xrightarrow[\RR']{=} \cdot \xrightarrow[\EE' \scup \RR']{=}
\cdot \xleftarrow[\RR']{=}}
\]
and thus also statement~\eqref{lem:KB rewrite steps:a} follows from
closure under contexts and substitutions of rewrite relations.
\end{proof}

\begin{cor}%
\label{cor:KB equational theory}
If $(\EE,\RR) \vdf^* (\EE',\RR')$ then the relations
$\xlr[\ER]{*}$ and $\xlr[\EE' \scup \RR']{*}$ coincide.
\hfill\isaforlink{Abstract_Completion}{lem:KB_conversion}
\end{cor}

The next lemma states that termination of $\RR$ is preserved by
applications of the inference rules of \KBf. It is the final result
in this section whose proof refers to the inference rules.

\begin{lem}%
\label{lem:KB termination}
If $(\EE,\RR) \vdf^* (\EE',\RR')$ and $\RR \subseteq {>}$ then
$\RR' \subseteq {>}$.
\hfill\isaforlink{Abstract_Completion}{lem:KB_rtrancl_rules_subset_less}
\end{lem}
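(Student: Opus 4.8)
The plan is to proceed by induction on the length of the derivation $(\EE,\RR) \vdf^* (\EE',\RR')$. The base case, where $(\EE',\RR') = (\EE,\RR)$, is immediate from the assumption $\RR \subseteq {>}$. For the step case I would decompose the derivation as $(\EE,\RR) \vdf^* (\EE'',\RR'') \vdf (\EE',\RR')$; the induction hypothesis yields $\RR'' \subseteq {>}$, so the whole statement reduces to the single-step claim: \emph{if $(\EE'',\RR'') \vdf (\EE',\RR')$ and $\RR'' \subseteq {>}$, then $\RR' \subseteq {>}$}. This in turn is settled by a case analysis on the inference rule applied in the last step.

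Before the case analysis I would record one auxiliary observation that is used repeatedly: since $>$ is a reduction order it is closed under contexts and substitutions, so $\RR'' \subseteq {>}$ implies ${\to_{\RR''}} \subseteq {>}$; that is, every $\RR''$-rewrite step is already a $>$-step. With this in hand, most of the inference rules are trivial. For \tsfs{deduce}, \tsfs{delete}, and both variants of \tsfs{simplify} the rule component is unchanged, so $\RR' = \RR'' \subseteq {>}$. For \tsfs{collapse} the rule component only shrinks (a rule is turned into an equation), so $\RR' \subseteq \RR'' \subseteq {>}$. For either variant of \tsfs{orient} the new rule is oriented precisely by its side condition $s > t$ (respectively $t > s$), so the added pair lies in $>$ and $\RR' = \RR'' \cup \{ s \to t \} \subseteq {>}$.

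The only case requiring a genuine argument—and the one I expect to be the main obstacle—is \tsfs{compose}, which replaces a rule $s \to t$ by $s \to u$ under the side condition $t \to u$ by a rule of the current system (which is contained in $\RR''$). Here I would argue as follows: from $s \to t \in \RR'' \subseteq {>}$ we obtain $s > t$, and by the auxiliary observation the step $t \to u$ gives $t > u$; transitivity of $>$ then yields $s > u$, so the replacement rule lies in $>$ and hence $\RR' \subseteq {>}$. The subtlety is only that one must simultaneously invoke closure of the reduction order under contexts and substitutions (to turn the rewrite step into a $>$-step) and its transitivity.

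It is worth emphasizing that beyond these two properties—closure under contexts/substitutions and transitivity—no further reduction-order machinery is needed; in particular well-foundedness of $>$ plays no role in this lemma, and the induction is entirely on the derivation length rather than on any term order.
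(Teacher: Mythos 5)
Your proposal is correct and follows essentially the same route as the paper: reduce to a single inference step by induction on the derivation, then do a case analysis in which \tsfs{deduce}, \tsfs{delete}, \tsfs{simplify} leave $\RR$ unchanged, \tsfs{collapse} shrinks it, \tsfs{orient} is handled by its side condition, and \tsfs{compose} is settled by combining $s > t$, compatibility of $>$ (closure under contexts and substitutions applied to the step $t \to u$), and transitivity. The paper's proof is exactly this argument, stated more tersely.
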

\begin{proof}
We consider a single step $(\EE,\RR) \vdf (\EE',\RR')$. The
statement of the lemma follows by a straightforward induction proof.
Observe that \tsfs{deduce}, \tsfs{delete}, and \tsfs{simplify} do not
change the set of rewrite rules and hence $\RR' = \RR \subseteq {>}$. For
\tsfs{collapse} we have $\RR' \subsetneq \RR \subseteq {>}$. In the case
of \tsfs{orient} we have $\RR' = \RR \cup \{ s \to t \}$ with $s > t$ and
hence $\RR' \subseteq {>}$ follows from the assumption
$\RR \subseteq {>}$. Finally, consider an application of \tsfs{compose}.
So $\RR = \RR'' \uplus \{ s \to t \}$ and
$\RR' = \RR'' \cup \{ s \to u \}$ with $t \to_\RR u$. We obtain $s > t$
from the assumption $\RR \subseteq {>}$. Since $>$ is a reduction order,
$t > u$ follows from $t \to_\RR u$. Transitivity of $>$ yields $s > u$ and
hence $\RR' \subseteq {>}$ as desired.
\end{proof}

To guarantee that the result of a finite \KBf derivation is a
complete TRS equivalent to the initial $\EE$, \KBf derivations must
satisfy the \emph{fairness condition} that prime
critical pairs of the final TRS $\RRn$ which were not considered during
the derivation are joinable in $\RRn$.

\begin{defi}[Finite Runs and Fairness]%
\label{def:KBf fairness}
A \emph{finite run} for a given ES $\EE$ is a finite sequence
\[
\EE_0,\RR_0
~\vdf~ \EE_1,\RR_1
~\vdf~ \cdots
~\vdf~ \EEn,\RRn
\]
such that $\EE_0 = \EE$ and $\RR_0 = \varnothing$.
The run is \emph{fair} if $\EEn = \varnothing$ and
\[
\PCP(\RRn) ~\subseteq~ {\join_{\RRn}} \cup
\bigcup_{i=0}^n {\xlr[\EE_i]{}}
\]
\end{defi}

The reason for writing $\fromto_{\EE_i}$ instead of $\EE_i$ in the
definition of fairness is that critical pairs are ordered, so in a fair
run a (prime) critical pair $s \approx t$ of $\RRn$ may be ignored by
\tsfs{deduce} if $t \approx s$ was generated, or more
generally, if $s \fromto_{\EE_i} t$ holds at some point in the run.
Non-prime critical pairs can always be ignored.
Note that our fairness condition differs from earlier notions by
permitting that (prime) critical pairs may be joinable in $\RRn$.
This was done to allow for more flexibility in
implementations. Our proofs smoothly extend to the relaxed
condition.

According to the main result of this section (\thmref{KBf correctness}),
a completion procedure that produces fair runs is correct. The challenge
is the confluence proof of $\RRn$. We show that $\RRn$ is peak decreasing
by labeling rewrite steps (not only in $\RR_n$) with multisets of terms.
As well-founded order on these multisets we take the multiset extension
$\MUL$ of the given reduction order $>$.

\begin{defi}[Labeled Rewriting
\isaforlink{Abstract_Completion}{def:mstep}]%
\label{def:mset labeled rewriting}
Let $\to$ be a rewrite relation and $M$ a finite multiset of terms. We
write \smash{$s \xrightarrow{M} t$} if $s \to t$ and there exist terms
$s', t' \in M$
such that $s' \geqslant s$ and $t' \geqslant t$. Here $\geqslant$ denotes
the reflexive closure of the given reduction order $>$.
\end{defi}

Since both $\to$ and $\geqslant$ are closed
under contexts and substitutions, we have
\text{$C[t\sigma] \xrs{M'} C[u\sigma]$} whenever $t \xrs{M} u$ and
$M' = \{ C[s\sigma] \mid s \in M \}$,
for all contexts $C$ and substitutions $\sigma$.

\begin{lem}%
\label{lem:KB key}
Let $(\EE,\RR) \vdf (\EE',\RR')$. If
$\smash{t \CCC{\ER}{M}{*} u}$ and $\RR' \subseteq {>}$ then
$\smash{t \CCC{\EE' \scup \RR'}{M}{*} u}$.
\hfill
\isaforlink{Abstract_Completion}{lem:msteps_subset}
\end{lem}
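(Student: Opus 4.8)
The statement to prove is Lemma~\ref{lem:KB key}: given a single inference step $(\EE,\RR) \vdf (\EE',\RR')$ with $\RR' \subseteq {>}$, a labeled conversion $t \CCC{\ER}{M}{*} u$ using $\ER$-steps all labeled by the multiset $M$ can be replaced by a labeled conversion $t \CCC{\EE' \scup \RR'}{M}{*} u$ over the new pair, with the \emph{same} label $M$. The plan is to reduce the whole-conversion claim to a single-step claim: since a $\CCC{\ER}{M}{*}$ conversion is just a finite sequence of individual $\xlr[\ER]{}$-steps each carrying label $M$, it suffices to show that every single labeled step $a \mathrel{\sxlr[M]} b$ over $\ER$ can be converted into a labeled conversion $a \CCC{\EE' \scup \RR'}{M}{*} b$ over $\EE' \scup \RR'$, and then concatenate. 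So first I would state and prove this one-step reduction, and then the lemma follows by induction on the length of the conversion.

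**The one-step core.** For the single-step claim I would revisit Lemma~\ref{lem:KB rewrite steps}\eqref{lem:KB rewrite steps:a}, which already tells us that $s \xrightarrow[\ER]{} t$ implies $s \xrightarrow[\RR']{=} \cdot \xrightarrow[\EE' \scup \RR']{=} \cdot \xleftarrow[\RR']{=} t$. The new ingredient here is that the labels must be tracked: I need that each of the (at most three) resulting steps can be labeled by the \emph{same} $M$ that labeled the original $\ER$-step. Concretely, suppose $a \mathrel{\sxlr[M]} b$ via some $a \to_{\ER} b$ (the reverse direction is symmetric). By definition of labeled rewriting (Definition~\ref{def:mset labeled rewriting}), there are $a', b' \in M$ with $a' \geqslant a$ and $b' \geqslant b$. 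Inspecting the inclusions collected in the proof of Lemma~\ref{lem:KB rewrite steps}, each intermediate term produced lies $\leqslant a$ or $\leqslant b$ with respect to $>$ (because every rewrite step of $\RR' \subseteq {>}$ strictly decreases the term, and every step happens below the endpoints $a,b$). Hence the bounding terms $a', b' \in M$ still dominate all intermediate terms, so each new step carries the label $M$ as well.

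**The key obstacle.** The main difficulty I anticipate is precisely this domination bookkeeping. I would proceed rule by rule, mirroring the six-way case analysis of Lemma~\ref{lem:KB rewrite steps}, and for each rule verify that the intermediate term $c$ in a factorization $a \to_{\RR'}^{=} c \to_{\EE' \scup \RR'}^{=} \cdots$ satisfies $a' \geqslant c$ or $b' \geqslant c$ for the witnesses $a',b' \in M$. The crucial leverage is the hypothesis $\RR' \subseteq {>}$: any $\RR'$-step strictly decreases a term in $>$, so if $c$ is reached from $a$ by an $\RR'$-step then $a \geqslant a$ already gives $a' \geqslant a > c$, placing $c$ below a member of $M$; and symmetrically for steps approaching $b$. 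The cases where an intermediate step uses $\EE'$ rather than $\RR'$ (as in \textsf{simplify} and \textsf{collapse}) require a little more care, since $\EE'$-steps need not decrease terms, but there the relevant intermediate term coincides with an endpoint or is obtained by an $\RR'$-step from one, so the same bounding witness still applies. Since $\geqslant$ is closed under contexts and substitutions (as remarked after Definition~\ref{def:mset labeled rewriting}), the witnesses propagate correctly through the context in which the step occurs. Collecting these cases establishes the one-step claim, and the induction over the conversion length then yields the lemma.
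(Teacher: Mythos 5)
Your proposal is correct and takes essentially the same route as the paper: induction on the length of the conversion reduces everything to a single step, \lemref[a]{KB rewrite steps} supplies the factorization $t \xrightarrow[\RR']{=} v \xrightarrow[\EE' \scup \RR']{=} w \xleftarrow[\RR']{=} u$, and the hypothesis $\RR' \subseteq {>}$ gives $t \geqslant v$ and $u \geqslant w$, so the witnesses $t', u' \in M$ dominate all intermediate terms and every non-empty step can again be labeled by $M$. The only inessential difference is your planned rule-by-rule case analysis: it is not needed, since the stated shape of the factorization already guarantees that the two outer steps are (possibly empty) $\RR'$-steps, which makes the label bookkeeping uniform across all six inference rules.
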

\begin{proof}
We consider a single $(\EE \cup \RR)$-step from $t$ to $u$.
The lemma follows then by induction on the length of the
conversion between $t$ and $u$. According to \lemref[a]{KB rewrite steps}
there exist terms $v$ and $w$ such that
\[
t \xrightarrow[\RR']{=}
v \xrightarrow[\EE' \scup \RR']{=}
w \xleftarrow[\RR']{=} u
\]
We claim that the (non-empty) steps can be labeled by $M$. There exist
terms $t', u' \in M$ with $t' \geqslant t$ and
$u' \geqslant u$. Since $\RR' \subseteq {>}$, we have
$t \geqslant v$ and $u \geqslant w$ and thus also
$t' \geqslant v$ and $u' \geqslant w$. Hence
\[
t \CCR{\RR'}{M}{=}
v \CCR{\EE' \scup \RR'}{M}{=}
w \CCL{\RR'}{M}{=} u
\]
and thus also $t \CCC{\EE' \scup \RR'}{M}{*} u$.
\end{proof}

\begin{thm}%
\label{thm:KBf correctness}
For every fair run $\Gamma$
\hfill
\isaforlink{Completion_Fairness}{lem:finite_fair_new_run}
\[
\EE_0,\RR_0
~\vdf~ \EE_1,\RR_1
~\vdf~ \cdots
~\vdf~ \EEn,\RRn
\]
the TRS $\RRn$ is a complete presentation of $\EE$.
\end{thm}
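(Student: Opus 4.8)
To establish that $\RRn$ is a complete presentation of $\EE$, I must verify three things: (i) $\conv_{\RRn} = \conv_\EE$, (ii) $\RRn$ terminates, and (iii) $\RRn$ is confluent. The first two are already within easy reach. Since $\EE_0 = \EE$ and $\EE_n = \varnothing$ (by fairness), \corref{KB equational theory} immediately gives $\conv_\EE = \conv_{\EE_0 \scup \RR_0} = \conv_{\EEn \scup \RRn} = \conv_{\RRn}$, using $\RR_0 = \varnothing$. For termination, $\RR_0 = \varnothing \subseteq {>}$, so \lemref{KB termination} yields $\RRn \subseteq {>}$; since $>$ is a reduction order (hence well-founded and closed under contexts and substitutions), $\to_{\RRn}$ is well-founded and $\RRn$ terminates. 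So the crux is confluence.

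\textbf{The confluence argument via peak decreasingness.} The plan is to show $\RRn$ is peak decreasing, and then invoke \lemref{pd => cr}. I label a step $s \to_{\RRn} t$ with a multiset of terms as in \defref{mset labeled rewriting}, and order labels by the multiset extension $\MUL$ of the reduction order $>$, which is well-founded. Concretely, for an arbitrary peak $t \From{\RRn}{} s \to_{\RRn} u$, I label both steps with the singleton-based multiset $\{s\}$ (or a multiset dominating $s$), so that the peak is $t \From{\{s\}}{} s \to_{\{s\}} u$. I must then produce a conversion between $t$ and $u$ all of whose step-labels are $\MUL$-smaller than $\{s\}$. Because $\RRn \subseteq {>}$, every term reachable from $s$ is $<s$, so any conversion between $t$ and $u$ that stays strictly below $s$ will have labels dominated by $\{s\}$.

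\textbf{Where the real work lies.} The heart of the proof is analyzing the peak using critical-pair theory and the fairness condition. By \lemref{pcp} the peak gives $t \mathrel{\tds^2} u$, i.e.\ there is a term $v$ with $t \tds v \tds u$ where each $\tds$-step is either a join in $\RRn$ or a $\PCP(\RRn)$-conversion, and crucially $s \to_{\RRn}^+ t$, $s \to_{\RRn}^+ u$ control the source. For the join cases, the connecting terms are all $<s$ (as they are $\RRn$-reachable from $t$ or $u$, which are already $<s$), so those steps are safely labeled below $\{s\}$. For the prime-critical-pair cases I invoke fairness: each such $s' \approx t'$ lies in ${\join_{\RRn}} \cup \bigcup_i {\fromto_{\EE_i}}$. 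If it is $\RRn$-joinable, I again get a low-labeled conversion. If instead $s' \fromto_{\EE_i} t'$ for some $i$, I use \lemref{KB key} repeatedly (applying it along the run $\vdf$ from stage $i$ to stage $n$, using $\RR_j \subseteq {>}$ at each stage) to transport that single $\EE_i$-step into a conversion over $\EEn \scup \RRn = \RRn$ carrying a label bounded by the same multiset. The subtle point — and the main obstacle — is that to apply \lemref{KB key} I need the endpoints of the critical-pair step to be suitably instantiated (by the context and substitution coming from \lemref{pcp}) and bounded above by terms in the peak's label multiset; this is exactly why \defref{mset labeled rewriting} demands the existence of dominating terms $s', t' \in M$ rather than equality, and why the label $\{s\}$ works: the prime critical pair is an instance occurring inside $s$, so its instantiated sides are $\leqslant s$. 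Once every $\tds$-step is shown to admit a conversion labeled strictly below $\{s\}$, peak decreasingness of $\RRn$ is established, \lemref{pd => cr} gives confluence, and combined with (i) and (ii) we conclude that $\RRn$ is a complete presentation of $\EE$.
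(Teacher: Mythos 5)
Your proposal is correct and follows essentially the same route as the paper's proof: equational equivalence via \corref{KB equational theory}, termination via \lemref{KB termination}, and confluence by showing peak decreasingness of multiset-labeled rewriting, combining \lemref{pcp} with fairness and repeated applications of \lemref{KB key}, and concluding with \lemref{pd => cr}. The only cosmetic difference is that the paper analyzes an arbitrary labeled peak $t \xleftarrow{M_1} s \xrightarrow{M_2} u$ and labels the connecting conversion with $\{v,w\}$, whereas you fix the peak label $\{s\}$; since any admissible label of a step issuing from $s$ must contain a term $\geqslant s$, the two formulations are interchangeable.
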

\begin{proof}
We have $\EEn = \varnothing$. From \corref{KB equational theory} we know
that ${\conv_\EE} = {\conv_\RRn}$. \lemref{KB termination} yields
$\RRn \subseteq {>}$ and hence $\RRn$ is terminating. It remains to
prove that $\RRn$ is confluent.
Let
\[
t \xleftarrow[\RR_n]{M_1} s \xrightarrow[\RR_n]{M_2} u
\]
be a labeled local peak in $\RR_n$.
From \lemref{pcp} we obtain $t \mathrel{\tds^2} u$.
Let $v \mathrel{\tds} w$ appear in this sequence
(so $t = v$ or $w = u$). We obtain
\[
(v,w) \in {\join_{\RR_n}} \cup \bigcup_{i=0}^n {\xlr[\EE_i]{}}
\]
from the definition of $\tds$ and fairness of $\Gamma$. We label all steps
between $v$ and $w$ with the multiset $\{ v, w \}$. Because $s > v$ and
$s > w$ we have $M_1 \MUL \{ v, w \}$ and $M_2 \MUL \{ v, w \}$. Hence by
repeated applications of \lemref{KB key} we obtain a conversion
in $\RR_n$ between $v$ and $w$ in which each step is labeled with a
multiset that is smaller than both $M_1$ and $M_2$. It follows that
$\RR_n$ is peak decreasing and thus confluent by \lemref{pd => cr}.
\end{proof}

A completion procedure is a program that generates \KBf runs. In
order to ensure that the final outcome $\RR_n$ is a complete presentation
of the initial ES, fair runs should be produced. Fairness
requires that prime critical pairs of $\RR_n$ are considered during the
run. Of course, $\RR_n$ is not known during the run, so to be on the safe
side, prime critical pairs of any $\RR$ that appears during the run should
be generated by \tsfs{deduce}.
In particular, there is no need to deduce equations that
are not prime critical pairs. So we may strengthen the condition
$s \FromB{\RR} \cdot \to_\RR t$ of \tsfs{deduce} to
$s \approx t \in \PCP(\RR)$ without affecting \thmref{KBf correctness}.

The following example shows that the success of a run may depend on
the order in which inference rules are applied~\cite{BDP89}.

\begin{exa}%
\label{exa:strategy}
Consider the ES $\EE$ consisting of the four equations
\begin{xalignat*}{4}
\m{a} &\approx \m{b} &
\m{a} &\approx \m{c} &
\m{f}(\m{b}) &\approx \m{b} &
\m{f}(\m{a}) &\approx \m{d}
\end{xalignat*}
and the reduction order $\LPO$ with the partial precedence
$\m{a} > \m{b} > \m{d}$ and $\m{a} > \m{c} > \m{d}$
but where $\m{b}$ and $\m{c}$ are incomparable.
One possible run is
\begin{alignat*}{2}
(\EE,\varnothing) ~
& \mathrel{{\vdfr{orient}}^+} &&
(\{ \m{a} \approx \m{c}, \m{f}(\m{a}) \approx \m{d} \},
 \{ \m{a} \to \m{b},\m{f}(\m{b}) \to \m{b} \}) \\
& \mathrel{{\vdfr{simplify}}^+} &&
(\{ \m{b} \approx \m{c}, \m{f}(\m{b}) \approx \m{d} \},
 \{ \m{a} \to \m{b}, \m{f}(\m{b}) \to \m{b} \}) \\
& \vdfr{simplify} &&
(\{ \m{b} \approx \m{c}, \m{b} \approx \m{d} \},
 \{ \m{a} \to \m{b}, \m{f}(\m{b}) \to \m{b} \}) \\
& \vdfr{orient} &&
(\{ \m{b} \approx \m{c} \},
 \{ \m{a} \to \m{b}, \m{f}(\m{b}) \to \m{b}, \m{b} \to \m{d} \}) \\
& \vdfr{collapse} &&
(\{ \m{b} \approx \m{c}, \m{f}(\m{d}) \approx \m{b} \},
 \{ \m{a} \to \m{b}, \m{b} \to \m{d} \}) \\
& \mathrel{{\vdfr{simplify}}^+} &~&
(\{ \m{d} \approx \m{c}, \m{f}(\m{d}) \approx \m{d} \},
 \{ \m{a} \to \m{b}, \m{b} \to \m{d} \}) \\
& \mathrel{{\vdfr{orient}}^+} &&
(\varnothing,
 \{ \m{a} \to \m{b}, \m{b} \to \m{d}, \m{c} \to \m{d},
    \m{f}(\m{d}) \to \m{d}
\})
\end{alignat*}
which derives a complete presentation of $\EE$. However, the run
\begin{alignat*}{2}
(\EE,\varnothing) ~
& \vdfr{orient} &&
(\{ \m{a} \approx \m{b}, \m{f}(\m{b}) \approx \m{b},
    \m{f}(\m{a}) \approx \m{d} \},
 \{ \m{a} \to \m{c} \}) \\
& \mathrel{{\vdfr{simplify}}^+} &~&
(\{ \m{c} \approx \m{b}, \m{f}(\m{b}) \approx \m{b},
    \m{f}(\m{c}) \approx \m{d} \},
 \{ \m{a} \to \m{c} \}) \\
& \mathrel{{\vdfr{orient}}^+} &&
(\{ \m{c} \approx \m{b} \},
 \{ \m{a} \to \m{c}, \m{f}(\m{b}) \to \m{b}, \m{f}(\m{c}) \to \m{d} \})
\end{alignat*}
cannot be extended to a successful one because the equation
$\m{c} \approx \m{b}$ cannot be oriented.
\end{exa}

The following example shows that even after a \KBf run derived a
complete system, exponentially many steps might be performed to obtain a
canonical TRS\@.

\begin{exa}
Consider the ESs
$\EE_n = \{ \m{f}(\m{g}^i(\m{c})) \approx \m{g}(\m{f}^i(\m{c})) \mid
0 \leqslant i \leqslant n \}$ for $n \geqslant 1$.
By taking the Knuth-Bendix order $\KBO$ with precedence $\m{f} > \m{g}$
and where $w(\m{f}) = w(\m{g})$, all equations can be oriented from left
to right. Since there are no critical pairs, the resulting TRSs
$\RR_n = \{ \m{f}(\m{g}^i(\m{c})) \to \m{g}(\m{f}^i(\m{c})) \mid
0 \leqslant i \leqslant n \}$
are complete by \thmref{KBf correctness}.
However, it is not canonical since right-hand sides are not
normal forms. When applying \tsfs{compose} steps in a naive way by
simplifying the rules in descending order, exponentially many steps are
required to obtain a canonical system~\cite{PSK96}. However, when
processing the rules in reverse order only a polynomial number of steps
is necessary.
\end{exa}

This section resumes our results on finite runs~\cite{HMS14}.
The presented correctness proof differs substantially from all earlier
proofs in that it does not rely on a proof order~\cite{BDH86} but is
instead based on peak decreasingness. It supports a relaxed side
condition of the \textsf{collapse} rule as first used in~\cite{ST13},
but in contrast to the latter demands only prime critical pairs to
be considered.

\section{Canonicity and Normalization Equivalence}%
\label{sec:canonicity}

A natural question arising in the context of completion concerns
uniqueness of resulting systems: Is there a single
complete presentation of a given equational theory conforming to a
certain reduction order? M\'etivier~\cite{M83} showed that for reduced
and hence canonical systems this is indeed the case, up to renaming
variables. In this section we revisit his work, aiming at
generalizing his uniqueness result for canonical TRSs and at establishing
a transformation to simplify ground-complete TRSs. A key notion
to that end is normalization equivalence.

\begin{defi}[Conversion/Normalization Equivalence]
Two ARSs $\AA$ and $\BB$ are said to be \emph{(conversion) equivalent} if
${\conv_\AA} = {\conv_\BB}$ and \emph{normalization equivalent} if
${\to_\AA^!} = {\to_\BB^!}$.
\end{defi}

The following example shows that these two equivalence notions do not
coincide.

\begin{exa}%
\label{equivalence example}
Consider the four ARSs:
\begin{xalignat*}{2}
&\begin{tikzpicture}[on grid,node distance=12mm,baseline=(1).baseline]
\node (0)              {$\AA_1\colon\mathstrut$};
\node (1) [right=of 0] {$\e{a}$};
\node (2) [right=of 1] {$\e{b}$};
\draw[->] (1) -- (2);
\end{tikzpicture}
&
&\begin{tikzpicture}[on grid,node distance=12mm,baseline=(1).baseline]
\node (0)              {$\BB_1\colon\mathstrut$};
\node (1) [right=of 0] {$\e{a}$};
\node (2) [right=of 1] {$\e{b}$};
\draw[->] (2) -- (1);
\end{tikzpicture}
\\
&\begin{tikzpicture}[on grid,node distance=12mm,baseline=(1).baseline]
\node (0)              {$\AA_2\colon\mathstrut$};
\node (1) [right=of 0] {$\e{a}$};
\node (2) [right=of 1] {$\e{b}$};
\draw[->] (1) -- (2);
\draw[->] (2) edge[out=295,in=245,loop] ();
\end{tikzpicture}
&
&\begin{tikzpicture}[on grid,node distance=12mm,baseline=(1).baseline]
\node (0)              {$\BB_2\colon\mathstrut$};
\node (1) [right=of 0] {$\e{a}$};
\node (2) [right=of 1] {$\e{b}$};
\draw[->] (1) edge[out=295,in=245,loop] ();
\draw[->] (2) edge[out=295,in=245,loop] ();
\end{tikzpicture}
\\[-3ex]
\end{xalignat*}
While $\AA_1$ and $\BB_1$ are conversion equivalent but not normalization
equivalent, the ARSs $\AA_2$ and $\BB_2$ are normalization equivalent but
not conversion equivalent.
\end{exa}

The easy proof (by induction on the length of conversions) of the
following result is omitted.

\begin{lem}%
\label{lem:(normalization) equivalence}
Normalization equivalent terminating ARSs are equivalent.
\hfill\isaforlink{Normalization_Equivalence}{lem:WN_NE_imp_conv_eq}
\end{lem}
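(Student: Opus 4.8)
The plan is to exploit the fact that the relation $\to^!$ records simultaneously which elements are normal forms and which normal forms each element can reach. Throughout I assume, as the definition of the two equivalence notions implicitly requires, that $\AA$ and $\BB$ share a common carrier $A$. Since the hypothesis ``normalization equivalent and terminating'' is completely symmetric in $\AA$ and $\BB$, it suffices to establish the single inclusion ${\conv_\AA} \subseteq {\conv_\BB}$; the reverse inclusion then follows by interchanging the roles of the two systems.

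To prove ${\conv_\AA} \subseteq {\conv_\BB}$ I would argue by induction on the length of an $\AA$-conversion, so that the only interesting case is a single step. Because conversion is symmetric, it is moreover enough to treat a forward step $a \to_\AA b$ and show $a \conv_\BB b$; a backward step is handled identically after swapping $a$ and $b$, and the empty conversion gives $a \conv_\BB a$ trivially.

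For the single step, I would first use termination of $\BB$ to obtain a $\BB$-normal form $c$ of $b$, that is, $(b,c) \in {\to_\BB^!}$. Normalization equivalence gives $(b,c) \in {\to_\AA^!}$, hence $b \to_\AA^* c$ and $c \in \NF({\to_\AA})$. Prepending the given step yields $a \to_\AA b \to_\AA^* c$, so that $(a,c) \in {\to_\AA^!}$ as well. Transferring back through normalization equivalence gives $(a,c) \in {\to_\BB^!}$, in particular $a \to_\BB^* c$. Now both $a$ and $b$ reduce to $c$ in $\BB$, i.e.\ $a \to_\BB^* c \From{\BB}{*} b$, and therefore $a \conv_\BB b$, as desired.

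I do not expect any genuine obstacle here: the entire content is the observation that $c \in \NF({\to_\AA})$ is equivalent to $(c,c) \in {\to_\AA^!}$, so that equality of the two normalization relations forces both the normal forms and the reductions reaching them to agree. The only points requiring minor care are that termination of $\BB$ is precisely what guarantees the existence of the normal form $c$ in this direction (termination of $\AA$ being used symmetrically for the reverse inclusion), and that the symmetry argument at the outset genuinely reduces the work to one inclusion and a single forward step.
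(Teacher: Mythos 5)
Your proof is correct and follows exactly the route the paper indicates for this lemma (whose detailed proof it omits): induction on the length of the conversion, reducing to a single step, which is then handled by normalizing in the other system and transferring the normalization relation back and forth. Your closing observation that only weak normalization of $\BB$ (existence of normal forms) is needed matches the paper's remark that the termination assumption can be weakened to weak normalization.
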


Note that the termination assumption can be weakened to weak
normalization. However, the present version suffices to prove the
following lemma that we employ in our proof of M\'etivier's transformation
result~\cite{M83} (\thmref{Metivier normalization equivalent} below).

\begin{lem}%
\label{lem:reduced abstract}
Let $\AA$ and $\BB$ be ARSs such that $\NF(\BB) \subseteq \NF(\AA)$ and
either
\begin{enumerate}
\item%
\label{lem:reduced abstract:a}
${\to_\BB} \subseteq {\to_\AA^+}$ or
\hfill\isaforlink{Normalization_Equivalence}{lem:complete_NE_intro}

\smallskip
\item%
\label{lem:reduced abstract:b}
${\to_\BB} \subseteq {\conv_\AA}$ and $\BB$ is terminating.
\hfill\isaforlink{Normalization_Equivalence}{lem:complete_NE_intro1}
\end{enumerate}
\smallskip
If $\AA$ is complete then $\BB$ is complete and normalization equivalent
to $\AA$.
\end{lem}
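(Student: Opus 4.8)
The plan is to prove both cases in parallel, reducing everything to three facts established in order: that $\AA$ and $\BB$ have the same normal forms, that ${\to_\BB^!} \subseteq {\to_\AA^!}$, and that $\BB$ is terminating. Completeness of $\AA$ makes $\to_\AA^!$ a \emph{total function} on the carrier (every element has a unique $\AA$-normal form), and this uniqueness will let me upgrade the one-sided inclusion to full normalization equivalence and simultaneously derive confluence of $\BB$.

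First I would show $\NF(\BB) = \NF(\AA)$. The inclusion $\NF(\BB) \subseteq \NF(\AA)$ is given, so only $\NF(\AA) \subseteq \NF(\BB)$ needs work. Take $a \in \NF(\AA)$. In case~\ref{lem:reduced abstract:a} any $\BB$-step from $a$ would, via ${\to_\BB} \subseteq {\to_\AA^+}$, induce an $\AA$-step from $a$, contradicting $a \in \NF(\AA)$; hence $a \in \NF(\BB)$. In case~\ref{lem:reduced abstract:b} I argue indirectly: if $a \notin \NF(\BB)$ then, since $\BB$ is terminating, $a \to_\BB^+ c$ for some $c \in \NF(\BB) \subseteq \NF(\AA)$. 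From ${\to_\BB} \subseteq {\conv_\AA}$ we get $a \conv_\AA c$, and confluence of $\AA$ together with $a, c \in \NF(\AA)$ forces $a = c$. But $a \to_\BB^+ a$ contradicts termination of $\BB$, so again $a \in \NF(\BB)$.

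Next I would establish ${\to_\BB^!} \subseteq {\to_\AA^!}$ and termination of $\BB$. Given $a \to_\BB^! b$, we have $a \to_\BB^* b$ and $b \in \NF(\BB) = \NF(\AA)$. In case~\ref{lem:reduced abstract:a}, ${\to_\BB} \subseteq {\to_\AA^+} \subseteq {\to_\AA^*}$ yields $a \to_\AA^* b$ directly; in case~\ref{lem:reduced abstract:b}, ${\to_\BB} \subseteq {\conv_\AA}$ yields $a \conv_\AA b$, and confluence of $\AA$ with $b \in \NF(\AA)$ again gives $a \to_\AA^* b$. Either way $a \to_\AA^! b$. Termination of $\BB$ is assumed in case~\ref{lem:reduced abstract:b}, and in case~\ref{lem:reduced abstract:a} it follows since ${\to_\BB} \subseteq {\to_\AA^+}$ and $\to_\AA$ is well-founded.

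Finally comes the uniqueness argument, uniform across both cases. For ${\to_\AA^!} \subseteq {\to_\BB^!}$, take $a \to_\AA^! b$; termination of $\BB$ supplies some $c$ with $a \to_\BB^! c$, whence $a \to_\AA^! c$ by the previous inclusion, and uniqueness of $\AA$-normal forms gives $b = c$, so $a \to_\BB^! b$. This proves ${\to_\AA^!} = {\to_\BB^!}$, i.e.\ normalization equivalence. In particular $\to_\BB^!$ is a function, so every element has a unique $\BB$-normal form; together with termination this forces confluence of $\BB$ (any $\BB$-peak reduces both sides to their common normal form), and hence $\BB$ is complete. I expect the main obstacle to be the case~\ref{lem:reduced abstract:b} half of $\NF(\AA) \subseteq \NF(\BB)$, where one must combine confluence of $\AA$ with acyclicity of the terminating relation $\BB$ to rule out $\BB$-reducible $\AA$-normal forms; once the normal forms are known to coincide, the remaining steps are routine.
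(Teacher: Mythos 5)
Your proof is correct, and its core coincides with the paper's: both establish ${\to_\BB^!} \subseteq {\to_\AA^!}$ by the same case split (in case~(\ref{lem:reduced abstract:a}) via ${\to_\BB^*} \subseteq {\to_\AA^*}$ and the normal-form inclusion, in case~(\ref{lem:reduced abstract:b}) via ${\to_\BB^!} \subseteq {\conv_\AA}$ and confluence of $\AA$), and both obtain the reverse inclusion by normalizing in $\BB$ and invoking uniqueness of $\AA$-normal forms. You deviate in two places. First, your opening step proving $\NF(\AA) = \NF(\BB)$ is dead weight: everywhere you use it, only the given inclusion $\NF(\BB) \subseteq \NF(\AA)$ is needed, and the reverse inclusion never enters your steps 2--4 (it also falls out for free from normalization equivalence at the end); the paper skips it. Second, the final confluence argument is genuinely different: the paper shows that $\BB$ is \emph{locally} confluent via the inclusion chain ${\FromB{\BB} \cdot \to_\BB} \subseteq {\conv_\AA} \subseteq {\to_\AA^! \cdot \From{\AA}{!}} \subseteq {\to_\BB^! \cdot \From{\BB}{!}}$ and then relies (implicitly) on Newman's lemma together with the already-established termination of $\BB$, whereas you observe that ${\to_\BB^!} = {\to_\AA^!}$ makes $\to_\BB^!$ a total function and conclude confluence directly by reducing both sides of a peak to their unique common normal form. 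Your route buys independence from Newman's lemma (the standard ``unique normalization plus normalization implies confluence'' argument suffices), at the cost of the extra preliminary step; the paper's version is more compact and stays entirely within the relational-inclusion style it uses for the rest of the proof.
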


\begin{proof}
We first show ${\to_\BB^!} \subseteq {\to_\AA^!}$. In
case~\eqref{lem:reduced
abstract:a}, from the
inclusion ${\to_\BB} \subseteq {\to_\AA^+}$ we infer that $\BB$ is
terminating. Moreover, ${\to_\BB^*} \subseteq {\to_\AA^*}$ and, since
$\NF(\BB) \subseteq \NF(\AA)$, also ${\to_\BB^!} \subseteq {\to_\AA^!}$.
For case~\eqref{lem:reduced abstract:b},
${\to_\BB^!} \subseteq {\to_\AA^!}$ holds because
${\to_\BB^!} \subseteq {\conv_\AA}$, so by confluence of $\AA$ and
$\NF(\BB) \subseteq \NF(\AA)$ we obtain
${\to_\BB^!} \subseteq {\to_\AA^!}$. Next we show that the reverse
inclusion ${\to_\AA^!} \subseteq {\to_\BB^!}$ holds in both cases. Let
$a \to_\AA^! b$. Because $\BB$ is terminating, $a \to_\BB^! c$ for some
$c \in \NF(\BB)$. So $a \to_\AA^! c$ and thus $b = c$ from the confluence
of $\AA$. It follows that $\AA$ and $\BB$ are normalization equivalent. It
remains to show that $\BB$ is locally confluent. This follows from the
sequence of inclusions
\[
{\FromB{\BB} \cdot \to_\BB}
~\subseteq~ {\conv_\AA}
~\subseteq~ {\to_\AA^! \cdot \From{\AA}{!}}
~\subseteq~ {\to_\BB^! \cdot \From{\BB}{!}}
\]
where we obtain the inclusions from ${\to_\BB} \subseteq {\conv_\AA}$,
confluence of $\AA$, termination of $\AA$, and normalization equivalence
of $\AA$ and $\BB$, respectively.
\end{proof}

In the above lemma, completeness can be weakened to semi-completeness
(\ie, the combination of confluence and weak normalization), which is not
true for \thmref{Metivier normalization equivalent} as shown by
Gramlich~\cite{G01}. Again, the present version suffices for our purposes.
Condition~\eqref{lem:reduced abstract:b} of the lemma
can be regarded as a specialization of an abstract result of
Toyama~\cite[Corollary~3.2]{T91} to complete systems
and will be used in \secref{ordered completion}.

\thmref{Metivier normalization equivalent} below shows that we can
always eliminate redundancy in a complete TRS\@. This is achieved by the
following two-stage transformation, where,
given a TRS $\RR$, we write $\RR_{\doteq}$ for a set of representatives of
the equivalence classes of rules in $\RR$ with respect to $\doteq$
(\ie, $\RR_{\doteq}$ is a variant-free version of $\RR$).

\begin{defi}%
\label{def:reduced transformation}
Given a terminating TRS $\RR$, the TRSs $\dot{\RR}$ and $\ddot{\RR}$ are
defined as follows:
\begin{align*}
\dot{\RR} &= {\{ \ell \to r{\downarrow_\RR} \mid
\ell \to r \in \RR \}}_{\doteq}
\tag*{\isaforlink{Normalization_Equivalence}{def:dot}} 
\\
\ddot{\RR} &= \{ \ell \to r \in \dot{\RR} \mid
\ell \in \NF(\dot{\RR} \setminus \{ \ell \to r \}) \}
\tag*{\isaforlink{Normalization_Equivalence}{def:ddot}} 
\end{align*}
Here $t{\downarrow_\RR}$ stands for an arbitrary but fixed normal form of
$t$.
\end{defi}

The TRS $\dot{\RR}$ is obtained from $\RR$ by first normalizing the
right-hand sides and then taking representatives of variants of the
resulting rules, thereby making sure that the result does not contain
several variants of the same rule. To obtain $\ddot{\RR}$ we remove the
rules of $\dot{\RR}$ whose left-hand sides are reducible with another rule
of $\dot{\RR}$. (This is the only place in the paper where
variant-freeness of TRSs is important.)

The following example shows why the result of $\dot{\RR}$ has to be
variant-free.

\begin{exa}%
\label{ex:2}
Consider the TRS $\RR$ consisting of the four rules
\begin{xalignat*}{4}
\m{f}(x) &\to \m{a} &
\m{f}(y) &\to \m{b} &
\m{a} &\to \m{c} &
\m{b} &\to \m{c}
\end{xalignat*}
Then the first transformation without taking representatives of rules
would yield $\dot{\RR}$
\begin{xalignat*}{4}
\m{f}(x) &\to \m{c} &
\m{f}(y) &\to \m{c} &
\m{a} &\to \m{c} &
\m{b} &\to \m{c}
\end{xalignat*}
and the second one $\ddot{\RR}$
\begin{xalignat*}{2}
\m{a} &\to \m{c} &
\m{b} &\to \m{c}
\end{xalignat*}
Note that $\ddot{\RR}$ is \emph{not} equivalent to $\RR$. This is
caused by the fact that the result of the first transformation was no
longer variant-free.
\end{exa}

The following result, due to M\'etivier~\cite[Theorem~7]{M83}, allows us
to obtain a canonical representation of any complete TRS.\footnote{%
We were not able to reconstruct enough detail for an Isabelle/HOL
formalization from its original proof. Another textbook proof~\cite[Exercise~7.4.7]{Terese} involves 13 steps with lots of redundancy.}
Our proof below proceeds by induction on the well-founded
encompassment order $\prencompasses$.

\begin{thm}%
\label{thm:Metivier normalization equivalent}
If $\RR$ is a complete TRS then $\ddot{\RR}$ is a normalization and
conversion equivalent canonical TRS\@.
\hfill\isaforlink{Normalization_Equivalence}{lem:canonical_NE_conv_eq}
\end{thm}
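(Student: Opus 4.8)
The plan is to show three things about $\ddot{\RR}$: that it is canonical (reduced and complete), and that it is both normalization and conversion equivalent to $\RR$. The natural strategy is to exploit the two-stage construction and Lemma~\ref{lem:reduced abstract}, which is precisely the tool that lets us transfer completeness and normalization equivalence along inclusions. I would first establish that $\dot{\RR}$ is complete and normalization equivalent to $\RR$, and then do the same for $\ddot{\RR}$ relative to $\dot{\RR}$; transitivity of both equivalences (together with Lemma~\ref{lem:(normalization) equivalence} to pass from normalization to conversion equivalence at the end) yields the theorem.

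**Step 1: $\dot{\RR}$ versus $\RR$.** Each rule $\ell \to r{\downarrow_\RR}$ of $\dot{\RR}$ satisfies $\ell \to_\RR^+ r{\downarrow_\RR}$, since $r{\downarrow_\RR}$ is a normal form of $r$ and $\ell \to_\RR r$. Hence ${\to_{\dot{\RR}}} \subseteq {\to_\RR^+}$. Passing to representatives under $\doteq$ does not change the rewrite relation up to renaming, and by Lemma~\ref{lem:variants-terms} variants induce the same steps, so this inclusion survives. I also need $\NF(\dot{\RR}) \subseteq \NF(\RR)$: a term reducible by $\RR$ has some redex $\ell'\sigma$, and since $\ell'$ is the left-hand side of a rule $\ell' \to r'$ whose normalized variant $\ell' \to r'{\downarrow_\RR}$ lies (up to $\doteq$) in $\dot{\RR}$, the same term is $\dot{\RR}$-reducible. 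With $\RR$ complete, Lemma~\ref{lem:reduced abstract}(\ref{lem:reduced abstract:a}) then gives that $\dot{\RR}$ is complete and normalization equivalent to $\RR$. By construction right-hand sides of $\dot{\RR}$ are $\RR$-normal forms, and since $\NF(\dot{\RR}) = \NF(\RR)$ (equality follows because the two inclusions combine with the fact that every $\RR$-step is also reflected), $\dot{\RR}$ is right-reduced.

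**Step 2: $\ddot{\RR}$ versus $\dot{\RR}$, by induction on $\prencompasses$.** Here ${\to_{\ddot{\RR}}} \subseteq {\to_{\dot{\RR}}}$ trivially, and $\ddot{\RR}$ is terminating as a subsystem, so I would aim for Lemma~\ref{lem:reduced abstract}(\ref{lem:reduced abstract:b}) with $\AA = \dot{\RR}$ and $\BB = \ddot{\RR}$; this needs ${\to_{\ddot{\RR}}} \subseteq {\conv_{\dot{\RR}}}$ (immediate) and the crucial $\NF(\ddot{\RR}) \subseteq \NF(\dot{\RR})$. The latter says every rule of $\dot{\RR}$ that was \emph{deleted} (because its left-hand side $\ell$ was reducible by another rule $\ell_0 \to r_0$ of $\dot{\RR}$, i.e.\ $\ell \encompasses \ell_0$ with the step being proper) can still be simulated by $\ddot{\RR}$, so that no new normal forms appear. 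This is the main obstacle: I must show that such a redex $\ell$ is $\ddot{\RR}$-reducible. I would argue by well-founded induction on $\prencompasses$: since $\ell \prencompasses \ell_0$ (the collapsing left-hand side properly encompasses the collapsing rule's left-hand side, as $\ell_0$ is a proper subterm-instance of $\ell$ and the two are not variants because $\dot{\RR}$ is variant-free), the smaller left-hand side $\ell_0$ is, by the induction hypothesis, reducible in $\ddot{\RR}$, and hence so is $\ell$. Well-foundedness of $\prencompasses$ is exactly what makes this induction legitimate, and variant-freeness of $\dot{\RR}$ is what guarantees the encompassment is \emph{proper} rather than a mere variant collapse.

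**Step 3: assembling the conclusion.** From Step~2, Lemma~\ref{lem:reduced abstract}(\ref{lem:reduced abstract:b}) yields that $\ddot{\RR}$ is complete and normalization equivalent to $\dot{\RR}$. Composing with Step~1, $\ddot{\RR}$ is complete and normalization equivalent to $\RR$. For canonicity I still need $\ddot{\RR}$ to be reduced: it is left-reduced directly by its definition (each $\ell \in \NF(\dot{\RR}\setminus\{\ell\to r\})$, and since $\ddot{\RR} \subseteq \dot{\RR}$ this persists), and right-reduced because its rules come from $\dot{\RR}$ whose right-hand sides are already $\dot{\RR}$-normal; one checks removing rules cannot create reducibility of surviving right-hand sides using $\NF(\ddot{\RR}) = \NF(\dot{\RR})$ from Step~2. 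Finally, since $\ddot{\RR}$ is normalization equivalent to the complete (hence terminating) $\RR$, Lemma~\ref{lem:(normalization) equivalence} upgrades this to conversion equivalence, completing the proof.
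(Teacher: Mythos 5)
Your overall strategy is essentially the paper's: establish $\ddot{\RR} \subseteq \dot{\RR} \subseteq {\to_\RR^+}$, prove the key inclusion $\NF(\ddot{\RR}) \subseteq \NF(\dot{\RR}) = \NF(\RR)$ by well-founded induction on $\prencompasses$, transfer completeness and normalization equivalence via \lemref{reduced abstract}, and upgrade to conversion equivalence with \lemref{(normalization) equivalence}. (Whether one applies part~(a) of \lemref{reduced abstract} to both $\dot{\RR}$ and $\ddot{\RR}$ relative to $\RR$, as the paper does, or part~(a) for $\dot{\RR}$ and then part~(b) for $\ddot{\RR}$ relative to $\dot{\RR}$, as you do, is immaterial.) However, there is a genuine gap in your Step~2, located at exactly the point where the theorem is delicate.

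In the induction you assume that whenever $\ell \to r \in \dot{\RR} \setminus \ddot{\RR}$ and $\ell$ is reducible by another rule $\ell' \to r' \in \dot{\RR}$, one has $\ell \prencompasses \ell'$, justified by ``the two are not variants because $\dot{\RR}$ is variant-free.'' This conflates variant \emph{rules} with variant \emph{left-hand sides}: variant-freeness of $\dot{\RR}$ says that no two of its rules are variants of each other; it does not exclude two distinct rules $\ell \to r$ and $\ell' \to r'$ whose left-hand sides satisfy $\ell \doteq \ell'$ while their right-hand sides differ (compare $\m{f}(x) \to \m{a}$ and $\m{f}(y) \to \m{b}$). Since ${\encompasses} = {\prencompasses} \cup {\doteq}$, in that case $\ell \prencompasses \ell'$ fails, the induction hypothesis is not applicable, and your argument stalls. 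Ruling out this case is precisely the non-trivial part of the paper's proof: from $\ell \doteq \ell'$ and \lemref{variants-terms} one obtains a renaming $\sigma$ with $\ell = \ell'\sigma$; since $\dot{\RR}$ is right-reduced by construction, $r$ and $r'\sigma$ are $\dot{\RR}$-normal forms; the peak $r \FromB{\dot{\RR}} \ell = \ell'\sigma \to_{\dot{\RR}} r'\sigma$ together with confluence of $\dot{\RR}$ (which your Step~1 does make available) forces $r = r'\sigma$, so the two rules are variants after all, contradicting variant-freeness. Your proof needs this argument spelled out; as written, the case $\ell \doteq \ell'$ is dismissed on incorrect grounds rather than handled.
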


\begin{proof}
Let $\RR$ be a complete TRS\@. The inclusions
$\ddot{\RR} \subseteq \dot{\RR} \subseteq {\to_\RR^+}$ are obvious from
the definitions. Since $\RR$ and $\dot{\RR}$ have the same left-hand
sides, their normal forms coincide. We show that
$\NF(\ddot{\RR}) \subseteq \NF(\dot{\RR})$. To this end we show that
$\ell \notin \NF(\ddot{\RR})$ whenever $\ell \to r \in \dot{\RR}$ by
induction on $\ell$ with respect to the well-founded order
$\prencompasses$. If $\ell \to r \in \ddot{\RR}$ then
$\ell \notin \NF(\ddot{\RR})$ holds. So suppose
$\ell \to r \notin \ddot{\RR}$. By definition of $\ddot{\RR}$,
$\ell \notin \NF(\dot{\RR} \setminus \{ \ell \to r \})$. So there exists a
rewrite rule $\ell' \to r' \in \dot{\RR}$ different from $\ell \to r$ such
that $\ell \encompasses \ell'$. We distinguish two cases.
\begin{itemize}
\item
If $\ell \prencompasses \ell'$ then we obtain
$\ell' \notin \NF(\ddot{\RR})$ from the induction hypothesis and hence
$\ell \notin \NF(\ddot{\RR})$ as desired.
\item
If $\ell \doteq \ell'$ then by \lemref{variants-terms} there exists a
renaming $\sigma$ such that $\ell = \ell'\sigma$. Since $\dot{\RR}$ is
right-reduced by construction, $r$ and $r'$ are normal forms of
$\dot{\RR}$. The same holds for $r'\sigma$ because normal forms are closed
under renaming. We have
$r \FromB{\sdot{\RR}} \ell = \ell'\sigma \to_{\sdot{\RR}} r'\sigma$. Since
$\dot{\RR}$ is confluent as a consequence of \lemref[a]{reduced abstract},
$r = r'\sigma$. Hence $\ell' \to r'$ is a variant of $\ell \to r$,
contradicting the fact that $\dot{\RR}$ is variant-free (by construction).
\end{itemize}
From \lemref[a]{reduced abstract} we infer that the TRSs $\dot{\RR}$ and
$\ddot{\RR}$ are complete and normalization equivalent to $\RR$. The TRS
$\ddot{\RR}$ is right-reduced because $\ddot{\RR} \subseteq \dot{\RR}$ and
$\dot{\RR}$ is right-reduced. From $\NF(\ddot{\RR}) = \NF(\dot{\RR})$ we
easily infer that $\ddot{\RR}$ is left-reduced. It follows that
$\ddot{\RR}$ is canonical. It remains to show that $\ddot{\RR}$ is not
only normalization equivalent but also (conversion) equivalent to $\RR$.
This is an immediate consequence of \lemref{(normalization) equivalence}.
\end{proof}

Before we proceed to show uniqueness of normalization equivalent TRSs, we
need the following technical lemma.

\begin{lem}%
\label{lem:right-reduced minimal}
Let $\RR$ be a right-reduced TRS and let $s$ be a reducible term which is
minimal with respect to $\prencompasses$. If $s \to^+_\RR t$ then
$s \to t$ is a variant of a rule in $\RR$.
\hfill
\isaforlink{Normalization_Equivalence}{lem:right_reduced_min_step_rule}
\end{lem}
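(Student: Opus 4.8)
The plan is to show that $\prencompasses$-minimality forces every reduction of $s$ to take place at the root with a renaming as matching substitution, and that right-reducedness then prevents any continuation of the derivation, so that $s \to^+_\RR t$ collapses to a single rule application.

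First I would use the first step of the given derivation $s \to^+_\RR t$: there are a rule $\ell \to r \in \RR$, a position $p$, and a substitution $\tau$ with $s|_p = \ell\tau$, so that $s = C[\ell\tau]$ where $C$ is the context with hole at $p$. This immediately yields $s \encompasses \ell$. Invoking the well-known identity ${\encompasses} = {\prencompasses} \cup {\doteq}$, we are in one of two cases. Since $\ell \to r$ makes $\ell$ reducible (apply the rule at the root with the identity substitution), the case $s \prencompasses \ell$ would exhibit a reducible term strictly below $s$, contradicting the assumed $\prencompasses$-minimality of $s$ among reducible terms. Hence $s \doteq \ell$, and \lemref{variants-terms} turns this into the statement that $s$ and $\ell$ are variants. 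Comparing sizes (the subterm $\ell\tau$ of $s$ is already at least as large as $\ell$) then forces $C$ to be trivial, i.e.\ $p = \epsilon$, and $\tau$ to restrict to a renaming on $\Var(\ell)$; thus $s = \ell\tau$ with $\tau$ a renaming, and the step is $s \to_\RR r\tau$.

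Next I would close the argument using right-reducedness. Because $\RR$ is right-reduced we have $r \in \NF(\RR)$, and since normal forms are closed under renaming, $r\tau \in \NF(\RR)$ as well. Therefore $r\tau$ is irreducible, so the derivation $s \to^+_\RR t$ cannot proceed beyond $r\tau$ and must consist of exactly this one step, giving $t = r\tau$. Consequently $s \to t$ is $\ell\tau \to r\tau$, a variant of $\ell \to r \in \RR$ via the renaming $\tau$, as required. I expect the only delicate point to be the clean passage from $s \encompasses \ell$ to the variant conclusion: it is smoother to split via ${\encompasses} = {\prencompasses} \cup {\doteq}$ and apply \lemref{variants-terms} than to reason about $p$ and $\tau$ by hand, and one must not overlook the elementary fact that normal forms are preserved under renaming, which is precisely what lets right-reducedness terminate the derivation after a single step.
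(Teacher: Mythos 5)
Your proposal is correct and follows essentially the same route as the paper's proof: first step gives $s \encompasses \ell$, minimality rules out $s \prencompasses \ell$ so $s \doteq \ell$, \lemref{variants-terms} yields a renaming, and right-reducedness plus closure of normal forms under renaming forces the derivation to stop after one step. Your explicit size argument showing the step must occur at the root is a detail the paper leaves implicit, but it is the same underlying reasoning.
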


\begin{proof}
Let $\ell \to r$ be the rewrite rule that is used in the first step from
$s$ to $t$. So $s \encompasses \ell$. By assumption,
$s \prencompasses \ell$ does not hold and thus $s \doteq \ell$
because ${\encompasses} = {\prencompasses} \cup {\doteq}$.
According
to \lemref{variants-terms} there exists a renaming $\sigma$ such that
$s = \ell\sigma$. We have $s \to_\RR r\sigma \to_\RR^* t$. Because $\RR$ is
right-reduced, $r \in \NF(\RR)$. Since normal forms are closed under
renaming, also $r\sigma \in \NF(\RR)$ and thus $r\sigma = t$. It follows
that $s \to t$ is a variant of $\ell \to r$.
\end{proof}

In our formalization, the above proof is the first spot of this section
where we actually need that $\RR$ satisfies the variable condition (more
precisely, only the part of it that right-hand sides of rules do not
introduce fresh variables). We are now in a position to present
the main result of this section.

\begin{thm}%
\label{thm:Metivier 2a}
Normalization equivalent reduced TRSs are unique up to literal similarity.
\hfill
\isaforlink{Normalization_Equivalence}{lem:reduced_NE_imp_unique}
\end{thm}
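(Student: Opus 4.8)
The plan is to prove the two inclusions underlying literal similarity: every rule of $\RR_1$ is a variant of a rule of $\RR_2$, and vice versa. By symmetry it suffices to treat one direction. First I would record the elementary consequence of ${\to_{\RR_1}^!} = {\to_{\RR_2}^!}$ that $\NF(\RR_1) = \NF(\RR_2)$: a term $t$ is a normal form of $\RR_i$ precisely when $t \to_{\RR_i}^! t$, so the two normal-form predicates agree and hence so do the two reducibility predicates.

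The heart of the argument is a minimality observation about left-reduced systems, which I would isolate as a \emph{claim}: in a left-reduced TRS the left-hand side $\ell$ of any rule $\ell \to r$ is minimal with respect to $\prencompasses$ among the reducible terms. To prove it, suppose towards a contradiction that $\ell \prencompasses u$ for some reducible $u$, say $\ell = C[u\sigma]$ with the hole at position $p$ (so $\ell|_p = u\sigma$), and let $u|_q = g\theta$ be a redex of $u$ for a rule $g \to d$. Then $\ell|_{pq} = (u|_q)\sigma = g\theta\sigma$, so $\ell$ is reducible at position $pq$ by $g \to d$. Since $\RR$ is left-reduced, only $\ell \to r$ can reduce $\ell$, and only at the root: an application at a position $q' \neq \epsilon$ would yield $\ell \rhd \ell|_{q'} = \ell\tau \encompasses \ell$, hence $\ell \prencompasses \ell$, contradicting well-foundedness of $\prencompasses$. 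Thus $g \to d = \ell \to r$ and $pq = \epsilon$, forcing $p = q = \epsilon$. But then $\ell = u\sigma$ and $u = \ell\theta$ give $\ell \encompasses u$ and $u \encompasses \ell$, whence $\ell \doteq u$ by ${\encompasses} = {\prencompasses} \cup {\doteq}$, contradicting $\ell \prencompasses u$.

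With the claim available I would finish as follows. Fix a rule $\ell \to r \in \RR_1$. The term $\ell$ is reducible in $\RR_1$ (by its own rule), hence in $\RR_2$, and by the claim it is minimal with respect to $\prencompasses$ among $\RR_1$-reducible—equivalently, by the reducibility coincidence above, $\RR_2$-reducible—terms. Since $\RR_1$ is right-reduced we have $r \in \NF(\RR_1) = \NF(\RR_2)$, and $\ell \to_{\RR_1} r$ gives $\ell \to_{\RR_1}^! r$; normalization equivalence then yields $\ell \to_{\RR_2}^! r$, and as $\ell$ is $\RR_2$-reducible we obtain $\ell \to_{\RR_2}^+ r$. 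Now \lemref{right-reduced minimal}, applied to the right-reduced TRS $\RR_2$, shows that $\ell \to r$ is a variant of a rule of $\RR_2$. Exchanging the roles of $\RR_1$ and $\RR_2$ establishes the converse inclusion, so $\RR_1 \doteq \RR_2$.

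I expect the minimality claim to be the main obstacle: it is the one place where left-reducedness, the behaviour of encompassment under subterm positions, and well-foundedness of $\prencompasses$ must be combined carefully, and the subtle case is when the redex of $u$ lies at the root (forcing $p = q = \epsilon$). Everything else is bookkeeping with normal forms together with a single appeal to \lemref{right-reduced minimal}.
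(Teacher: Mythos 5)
Your proposal is correct and takes essentially the same route as the paper: both arguments reduce the theorem to \lemref{right-reduced minimal}, using that normalization equivalence makes normal forms (hence reducible terms) coincide, that left-reducedness makes each left-hand side $\prencompasses$-minimal among reducible terms, and that $\ell \to_{\RR_1} r$ with $r$ a common normal form yields $\ell \to_{\RR_2}^+ r$, followed by a symmetry argument. The only difference is one of detail: you prove the minimality claim explicitly, whereas the paper asserts it as an immediate consequence of left-reducedness.
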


\begin{proof}
Let $\RR$ and $\SS$ be normalization equivalent reduced TRSs. Suppose
$\ell \to r \in \RR$. Because $\RR$ is right-reduced, $r \in \NF(\RR)$ and
thus $\ell \neq r$. Hence $\ell \to_\SS^+ r$ by normalization equivalence.
Because $\RR$ is left-reduced, $\ell$ is a minimal (with respect to
$\prencompasses$) $\RR$-reducible term. Another application of
normalization equivalence yields that $\ell$ is minimal $\SS$-reducible.
Hence $\ell \to r$ is a variant of a rule in $\SS$ by
\lemref{right-reduced minimal}.
\end{proof}

\begin{exa}
Consider the rewrite system $\RR$ of combinatory logic with equality test,
studied by Klop~\cite{K80}:
\begin{xalignat*}{2}
@(@(@(\m{S},x),y), z) & \to @(x,@(z,@(y,z))) &
@(@(\m{K},x),y) & \to x \\
@(\m{I},x) & \to x &
@(@(\m{D},x),x) & \to \m{E}
\end{xalignat*}
The rewrite system $\RR$ is reduced, but neither terminating nor
confluent. One might ask whether there is another reduced rewrite system
that computes the same normal forms for every starting term.
\thmref{Metivier 2a} shows that $\RR$ is unique up to variable
renaming.
\end{exa}

As the final result of this section, we prove this
result of M\'etivier~\cite[Theorem~8]{M83} to be
an easy consequence of \thmref{Metivier 2a}. Here a TRS $\RR$ is
said to be compatible with a reduction order $>$ if $\ell > r$ for every
rewrite rule $\ell \to r$ of $\RR$.

\begin{thm}%
\label{thm:Metivier 2b}
Let $\RR$ and $\SS$ be equivalent canonical TRSs. If $\RR$ and $\SS$
are compatible with the same reduction order then $\RR \doteq \SS$.
\hfill
\isaforlink{Normalization_Equivalence}{lem:EQ_imp_litsim}
\end{thm}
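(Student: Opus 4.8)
The plan is to derive this statement from \thmref{Metivier 2a}. Since canonical TRSs are by definition reduced, it suffices to show that $\RR$ and $\SS$ are normalization equivalent; \thmref{Metivier 2a} then immediately yields $\RR \doteq \SS$. Let $>$ be the common reduction order, so that $\RR \subseteq {>}$ and $\SS \subseteq {>}$, and recall that equivalence of $\RR$ and $\SS$ means $\conv_\RR = \conv_\SS$. As both systems are canonical they are in particular complete, hence terminating and confluent, so every term has a unique normal form with respect to each of them.

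The first and crucial step is to show $\NF(\RR) = \NF(\SS)$, and this is exactly where the \emph{shared} reduction order is indispensable. I would argue by contradiction: suppose $t \in \NF(\RR)$ but $t \notin \NF(\SS)$, say $t \to_\SS t''$. Compatibility of $\SS$ with $>$ gives $t > t''$. On the other hand, $t \to_\SS t''$ yields $t \conv_\SS t''$ and hence $t \conv_\RR t''$ by conversion equivalence; confluence of $\RR$ together with $t \in \NF(\RR)$ then forces $t'' \to_\RR^* t$. Since $\RR \subseteq {>}$, this gives $t'' = t$ or $t'' > t$, and both alternatives contradict $t > t''$ by irreflexivity of the well-founded order $>$. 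Therefore $t \in \NF(\SS)$, and by the symmetric argument $\NF(\RR) = \NF(\SS)$.

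Once the normal forms are identified, normalization equivalence follows easily. Given $s \to_\RR^! t$, we have $t \in \NF(\RR) = \NF(\SS)$ and $s \conv_\RR t$, hence $s \conv_\SS t$. Completeness of $\SS$ provides $s$ with a unique $\SS$-normal form, which must equal $t$ because $t$ is itself an $\SS$-normal form connected to $s$ by an $\SS$-conversion. Thus $s \to_\SS^! t$, establishing ${\to_\RR^!} \subseteq {\to_\SS^!}$; the reverse inclusion is symmetric, so ${\to_\RR^!} = {\to_\SS^!}$. Applying \thmref{Metivier 2a} to the reduced, normalization equivalent systems $\RR$ and $\SS$ then gives $\RR \doteq \SS$. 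I expect the main obstacle to be precisely the identification $\NF(\RR) = \NF(\SS)$: conversion equivalence on its own is too weak, since conversion and normalization equivalence do not coincide in general, and the contradiction genuinely relies on combining one $\SS$-step with an $\RR$-reduction through a \emph{single} order.
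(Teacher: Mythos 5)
Your proposal is correct and follows essentially the same route as the paper's proof: both establish normalization equivalence of $\RR$ and $\SS$ by combining conversion equivalence, confluence, and compatibility with the shared reduction order $>$ to force $\RR$- and $\SS$-normal forms to coincide (the paper inlines this via the $\SS$-normal form $u$ of $t$ and the contradiction $t > u$, $u > t$; you factor it out as $\NF(\RR) = \NF(\SS)$ using a single $\SS$-step), and both then conclude by applying \thmref{Metivier 2a}.
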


\begin{proof}
Suppose $\RR$ and $\SS$ are compatible with the reduction order $>$.
We show that ${\to_\RR^!} \subseteq {\to_\SS^!}$.
Let $s \to_\RR^! t$. We show that $t \in \NF(\SS)$. Let $u$ be the unique
$\SS$-normal form of $t$. We have $t \to_\SS^! u$ and thus $t \conv_\RR u$
because $\RR$ and $\SS$ are equivalent. Since $t \in \NF(\RR)$, we have
$u \to_\RR^! t$. If $t \neq u$ then both $t > u$ (as $t \to_\SS^! u$) and
$u > t$ (as $u \to_\RR^! t$), which is impossible. Hence $t = u$ and thus
$t \in \NF(\SS)$. Together with $s \conv_\SS t$, which follows from the
equivalence of $\RR$ and $\SS$, we conclude that $s \to_\SS^! t$.
We obtain ${\to_\SS^!} \subseteq {\to_\RR^!}$ by symmetry. Hence
$\RR$ and $\SS$ are normalization equivalent and the result follows from
\thmref{Metivier 2a}.
\end{proof}

This section resumes our results on canonicity~\cite{HMSW17}.
While the results of \thmref{Metivier normalization equivalent} and
\thmref{Metivier 2b} are due to M{\'e}tivier\cite{M83}, we present novel
and simpler proofs based on the (new) auxiliary results
\lemref{reduced abstract} and \thmref{Metivier 2a}.

\section{Ground Completion}%
\label{sec:ground completion}

In this section we focus on the special case of ground equations, \ie,
equations where both sides are ground terms.

\begin{defi}[Ground Completion
\isaforlink{Ground_Completion}{ind:gkb}]%
\label{def:KBg}
The inference system \KBg consists of the inference
rules of \KBf except for \tsfs{deduce}.
\end{defi}

Snyder~\cite{S93} proved that sets of ground equations can always be
completed by \KBg, provided a \emph{ground-total} reduction order $>$
is used, \ie, for all ground terms $s, t \in \TT(\FF)$ either
$s > t$, $t > s$, or $s = t$. He further proved that every reduced ground
rewrite system is canonical and can be obtained by completion from any
equivalent set of ground equations. Below, we present the proofs
of these results that we formalized in Isabelle/HOL\@.

The following example illustrates the inference system \KBg
on a set of ground equations.

\begin{exa}
Consider the ES $\EE$ consisting of the ground equations
\begin{align*}
\m{f}(\m{f}(\m{f}(\m{a}))) &\approx \m{f}(\m{b}) &
\m{f}(\m{f}(\m{b})) &\approx \m{c} &
\m{f}(\m{c}) &\approx \m{a} &
\m{f}(\m{a}) &\approx \m{f}(\m{f}(\m{b}))
\intertext{As reduction order we take LPO induced by the total precedence
$\m{a} > \m{b} > \m{c} > \m{f}$. We start by applying \tsfs{orient}
to the last two equations:}
\m{f}(\m{f}(\m{f}(\m{a}))) &\approx \m{f}(\m{b}) &
\m{f}(\m{f}(\m{b})) &\approx \m{c} &
\m{f}(\m{c}) &\from \m{a} &
\m{f}(\m{a}) &\to \m{f}(\m{f}(\m{b}))
\intertext{An application of \tsfs{collapse} produces}
\m{f}(\m{f}(\m{f}(\m{a}))) &\approx \m{f}(\m{b}) &
\m{f}(\m{f}(\m{b})) &\approx \m{c} &
\m{f}(\m{c}) &\from \m{a} &
\m{f}(\m{f}(\m{c})) &\approx \m{f}(\m{f}(\m{b}))
\intertext{Next we orient the second equation:}
\m{f}(\m{f}(\m{f}(\m{a}))) &\approx \m{f}(\m{b}) &
\m{f}(\m{f}(\m{b})) &\to \m{c} &
\m{f}(\m{c}) &\from \m{a} &
\m{f}(\m{f}(\m{c})) &\approx \m{f}(\m{f}(\m{b}))
\intertext{Two applications of \tsfs{simplify} produce}
\m{f}(\m{f}(\m{f}(\m{f}(\m{c})))) &\approx \m{f}(\m{b}) &
\m{f}(\m{f}(\m{b})) &\to \m{c} &
\m{f}(\m{c}) &\from \m{a} &
\m{f}(\m{f}(\m{c})) &\approx \m{c}
\intertext{We continue by orienting the last equation:}
\m{f}(\m{f}(\m{f}(\m{f}(\m{c})))) &\approx \m{f}(\m{b}) &
\m{f}(\m{f}(\m{b})) &\to \m{c} &
\m{f}(\m{c}) &\from \m{a} &
\m{f}(\m{f}(\m{c})) &\to \m{c}
\intertext{Two applications of \tsfs{simplify} produce}
\m{c} &\approx \m{f}(\m{b}) &
\m{f}(\m{f}(\m{b})) &\to \m{c} &
\m{f}(\m{c}) &\from \m{a} &
\m{f}(\m{f}(\m{c})) &\to \m{c}
\intertext{Orienting the remaining equation followed by a collapse step
produces}
\m{c} &\from \m{f}(\m{b}) &
\m{f}(\m{c}) &\approx \m{c} &
\m{f}(\m{c}) &\from \m{a} &
\m{f}(\m{f}(\m{c})) &\to \m{c}
\intertext{Finally, we orient the only remaining equation and
\tsfs{collapse}, \tsfs{compose}, \tsfs{simplify}, and \tsfs{delete}
exhaustively, thereby obtaining the TRS $\RR$}
\m{c} &\from \m{f}(\m{b}) &
\m{f}(\m{c}) &\to \m{c} &
\m{c} &\from \m{a} &
&
\end{align*}
which constitutes a canonical presentation of $\EE$.
\end{exa}

The absence of \tsfs{deduce} from \KBg does not hurt for ground
systems. If $s \from \cdot \to t$ and the two contracted redexes are at
parallel positions then trivially $s \to \cdot \from t$. If the steps are
identical then $s = t$. In the remaining case one of the contracted
redexes is a subterm of the other contracted redex, and the effect of
\tsfs{deduce} is achieved by the \tsfs{collapse} inference rule.
On the contrary, the absence of \tsfs{deduce} is crucial to conclude
that \KBg derivations are always finite, as illustrated by the following
simple example.

\begin{exa}%
\label{exa:KBg deduce}
Consider the ground ES $\EE$ consisting of the single equation
$\m{a} \approx \m{b}$ and LPO induced by the precedence
$\m{a} > \m{b}$. Using \KBf (i.e., \KBg with \tsfs{deduce})
the following infinite run is possible:
\begin{alignat*}{2}
(\EE,\varnothing) ~
& \vdfr{orient} &&
(\varnothing, \{ \m{a} \to \m{b} \}) \\
& \vdfr{deduce} &~&
(\{ \m{b} \approx \m{b} \}, \{ \m{a} \to \m{b} \}) \\
& \vdfr{delete} &&
(\varnothing, \{ \m{a} \to \m{b} \}) \\
& \vdfr{deduce} && \dots
\end{alignat*}
\end{exa}

\begin{lem}%
\label{KB- termination}
There are no infinite runs
\(
\EE_0,\varnothing \,\vdg\, \EE_1,\RR_1 \,\vdg\, \cdots
\)
for finite ground ES $\EE_0$.
\hfill
\isaforlink{Ground_Completion}{lem:SN_on_GKB}
\end{lem}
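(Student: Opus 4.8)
The plan is to attach to each configuration a well-founded measure that strictly decreases with every $\vdg$ step; since a reduction order is by definition well-founded, this rules out infinite runs at once. First I would establish the invariant $\RR_i \subseteq {>}$ for every configuration $(\EE_i,\RR_i)$ occurring in the run. This follows immediately from $\RR_0 = \varnothing$ and \lemref{KB termination}, because \KBg is \KBf without \tsfs{deduce} and hence every $\vdg$ step is in particular a $\vdf$ step. The use of this invariant is that whenever an inference condition refers to a rewrite step $v \to_\RR w$---as in \tsfs{compose}, \tsfs{simplify}, and \tsfs{collapse}---we obtain $v > w$, so that replacing $v$ by $w$ inside the measure is a strict decrease.

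As measure I take the finite multiset of terms
\[
\mu(\EE,\RR) ~=~
\biguplus_{s \approx t \in \EE} \{ s, s, t, t \}
~\uplus~
\biguplus_{\ell \to r \in \RR} \{ \ell, r \}
\]
where each equation contributes both of its sides \emph{twice} and each rule contributes both of its sides \emph{once}, and I compare such multisets with the multiset extension $\MUL$ of the reduction order. Because $\EE_0$ is finite and every inference rule adds and deletes only finitely many equations and rules, each $(\EE_i,\RR_i)$ is finite and $\mu$ is well-defined; moreover $\MUL$ is well-founded since $>$ is. The extra copies carried by equations are exactly what allows an orientation step to pay for turning an equation into a rule and, more importantly, what allows \tsfs{collapse} to pay for turning a rule back into an equation.

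The heart of the proof is the routine case analysis establishing $\mu(\EE,\RR) \MUL \mu(\EE',\RR')$ for every single step $(\EE,\RR) \vdg (\EE',\RR')$. Both variants of \tsfs{orient} and the rule \tsfs{delete} merely remove terms from $\mu$ (orienting $s \approx t$ removes the net multiset $\{s,t\}$, and \tsfs{delete} removes $\{s,s,s,s\}$), so they decrease $\mu$ outright. For \tsfs{compose} and both variants of \tsfs{simplify} the invariant supplies a term $v$ that is replaced by a strictly smaller $w$, whence the affected occurrences in $\mu$ strictly decrease. The one delicate case is \tsfs{collapse}: a rule $t \to s$, which contributes $\{t,s\}$ and satisfies $t > s$, is replaced by an equation $u \approx s$ contributing $\{u,u,s,s\}$, where $t \to_\RR u$. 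Here the invariant yields $t > u$, and after cancelling the common part of the two multisets the comparison reduces to replacing one occurrence of $t$ by the terms $u,u,s$, each strictly below $t$; that is precisely one step of the multiset order, so $\mu$ again decreases.

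Since every step strictly decreases $\mu$ with respect to the well-founded order $\MUL$, an infinite run $\EE_0,\varnothing \vdg \EE_1,\RR_1 \vdg \cdots$ would yield an infinite $\MUL$-descending sequence, which is impossible. I expect the main obstacle to be the \tsfs{collapse} case, since it is the only inference that replaces a rule by an equation and would increase a naive count of terms; doubling the weight of equations (together with $t > s$ and $t > u$) is the device that makes even this step strictly decreasing. It is worth noting that the argument uses neither ground-totality of $>$ nor groundness of the terms, so it in fact establishes termination of \KBg for arbitrary finite equational systems.
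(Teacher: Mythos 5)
Your proof is correct, and it follows the same overall strategy as the paper: attach a well-founded measure to configurations, built from the multiset extension $\MUL$ of the reduction order together with the invariant $\RR_i \subseteq {>}$ (which, as you say, comes from \lemref{KB termination} because every \KBg step is a \KBf step), and verify a strict decrease for every inference rule. The one genuine difference is how the measure is assembled. The paper takes the lexicographic pair $P(\EE,\RR) = (M(\EE,\RR), |\EE|)$, where $M(\EE,\RR)$ collects each left- and right-hand side exactly once: there \tsfs{orient} leaves the multiset component untouched and is paid for by the drop in $|\EE|$, while \tsfs{collapse} already decreases $M$ itself (the side $s$ is common to both multisets, so only $t > u$ is needed). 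You avoid the lexicographic composition by giving equation sides multiplicity two, so that \tsfs{orient} becomes an outright removal of $\{s,t\}$; the price is that \tsfs{collapse} becomes the delicate case, trading one occurrence of $t$ for $\{u,u,s\}$, which needs both $t > u$ and $t > s$ from the invariant --- and you supply both correctly. Both devices are sound; yours buys a single, self-contained multiset order, while the paper's keeps each case analysis slightly simpler at the cost of a two-component order. Your closing remark is also accurate, and it applies equally to the paper's proof: neither groundness of $\EE_0$ nor ground-totality of $>$ enters the termination argument --- what makes \KBg runs terminate is solely the absence of \tsfs{deduce}, as \exaref{KBg deduce} in the paper illustrates.
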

\begin{proof}
Let $\succ$ denote the lexicographic combination of the multiset extension
$\MUL$ of the reduction order $>$ with the standard order on natural
numbers $>_\mathbb{N}$. Furthermore let $M(\EE,\RR)$ denote the (finite)
multiset of left-hand sides and right-hand sides occurring in $\EE$ and
$\RR$
\[
M(\EE,\RR) =
\bigcup\,\{ \{ s, t \} \mid (s,t) \in \EE \}
~\cup~
\bigcup\,\{ \{ s, t \} \mid (s,t) \in \RR \}
\]
and consider the function $P$ that maps the pair $(\EE,\RR)$ to
$(M(\EE,\RR), |\EE|)$. Now it is straightforward to verify that any
infinite $\vdg$-sequence would give rise to an infinite
sequence $P(\EE_0,\varnothing) \succ P(\EE_1,\RR_1) \succ \cdots$,
contradicting the well-foundedness of $\succ$.
\end{proof}

\begin{thm}%
\label{thm:KBg correctness}
If $>$ is total on $\EE$-equivalent ground terms then every maximal
\textup{\KBg} run produces an equivalent canonical presentation for
every ground ES $\EE$.
\hfill
\isaforlink{Ground_Completion}{lem:ground_max_run_canonical}
\end{thm}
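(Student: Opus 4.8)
The plan is to use finiteness of maximal runs together with a characterization of the state in which such a run gets stuck. By Lemma~\ref{KB- termination} a maximal \KBg run for a finite ground ES $\EE$ is finite, ending in a pair $(\EEn,\RRn)$ to which no inference rule applies; since every \KBg rule maps ground pairs to ground pairs and $\RR_0 = \varnothing$, both $\EEn$ and $\RRn$ are ground. Because every \KBg step is in particular a \KBf step, \corref{KB equational theory} gives ${\conv_{\EEn \scup \RRn}} = {\conv_\EE}$. First I would show $\EEn = \varnothing$: any $s \approx t \in \EEn$ satisfies $s \conv_\EE t$, so $s$ and $t$ are $\EE$-equivalent ground terms; by totality of $>$ on such terms either $s = t$ (so \tsfs{delete} applies) or $s \neq t$ (so \tsfs{orient} applies), both contradicting maximality. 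Consequently ${\conv_{\RRn}} = {\conv_\EE}$, which settles conversion equivalence.

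Next I would establish that $\RRn$ is terminating and reduced. Termination is immediate from \lemref{KB termination}, since $\RR_0 = \varnothing \subseteq {>}$ forces $\RRn \subseteq {>}$ and hence $\to_{\RRn} \subseteq {>}$. For reducedness, inapplicability of \tsfs{collapse} says exactly that no left-hand side is reducible by a \emph{different} rule, \ie\ $\RRn$ is left-reduced; inapplicability of \tsfs{compose} says that no right-hand side is reducible by a different rule, and a right-hand side $r$ can never be reducible by its own rule $\ell \to r$ (otherwise $\ell$ would occur in $r$, which is impossible since $\ell > r$ and a reduction order is well-founded and closed under contexts), so $\RRn$ is also right-reduced.

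The crux is confluence, for which I would show that a left-reduced \emph{ground} TRS has no critical pairs at all. Let $\langle \ell_1 \to r_1, p, \ell_2 \to r_2 \rangle$ be an overlap of $\RRn$. Groundness makes all renamings vacuous and reduces unifiability of $\ell_1$ and ${\ell_2}|_p$ to the equality $\ell_1 = {\ell_2}|_p$. If $p = \epsilon$ then $\ell_1 = \ell_2$ while the two rules differ (the overlap condition at the root), so $\ell_2$ is reducible at its root by the different rule $\ell_1 \to r_1 \in \RRn \setminus \{\ell_2 \to r_2\}$. If $p \neq \epsilon$ then $\ell_1 = {\ell_2}|_p$ is a proper subterm of $\ell_2$, whence the rules again differ (a term is not its own proper subterm) and $\ell_2$ is reducible at $p$ by $\ell_1 \to r_1 \in \RRn \setminus \{\ell_2 \to r_2\}$. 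Either way left-reducedness of $\RRn$ is violated, so $\CP(\RRn) = \varnothing$ and thus $\PCP(\RRn) = \varnothing \subseteq {\join_{\RRn}}$. Since $\RRn$ is terminating, \corref{pcp - confluence} yields confluence.

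Combining, $\RRn$ is terminating and confluent, hence complete, and it is reduced, hence canonical; together with ${\conv_{\RRn}} = {\conv_\EE}$ this makes $\RRn$ an equivalent canonical presentation of $\EE$. The step I expect to demand the most care is the critical-pair analysis: the reduction of unifiability to syntactic equality in the ground setting and, above all, the observation that an overlapping (sub)term always forces a left-hand side to be reducible by a genuinely different rule---precisely what left-reducedness forbids.
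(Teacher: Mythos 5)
Your proposal is correct and follows essentially the same route as the paper's proof: maximality plus inapplicability of \tsfs{compose}/\tsfs{collapse} gives reducedness, groundness then excludes critical pairs (hence confluence), \lemref{KB termination} gives termination, and equivalence plus totality of $>$ on $\EE$-equivalent ground terms forces $\EEn = \varnothing$ via \tsfs{delete}/\tsfs{orient}. You merely spell out details the paper leaves implicit, such as why a right-hand side cannot be reducible by its own rule and why a ground left-reduced TRS admits no overlaps.
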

\begin{proof}
Consider a maximal \KBg run
\(
\EE_0,\varnothing
\,\vdg\, \EE_1,\RR_1
\,\vdg\, \cdots
\,\vdg\, \EE_n,\RR_n
\)
where $\EE_0 = \EE$ is a ground ES\@. Because the run is
maximal, no inference rule of \KBg is applicable to the final pair
$(\EE_n,\RR_n)$. In particular, \tsfs{compose} and \tsfs{collapse} are not
applicable and hence the final TRS $\RR_n$ is reduced.
Since $\RR_n$ is ground, this means in particular that there are no
critical pairs. Moreover, termination of $\RR_n$ follows from
\lemref{KB termination} (since any \KBg run is also a \KBf run), so
$\RR_n$ is canonical.
From \corref{KB equational theory} and the
inclusion $\text{\KBg} \subseteq \text{\KBf}$ we infer that $\EE$ and
$\EE_n \cup \RR_n$ are equivalent. It follows that $>$ is total on
$\EE_n$-equivalent ground terms and thus $\EE_n = \varnothing$, for
otherwise the run could be extended with an application of \tsfs{delete}
or \tsfs{orient}. Hence $\RR_n$ and $\EE$ are equivalent.
\end{proof}

The restriction on the reduction order $>$ in the above correctness
theorem is easy to satisfy. In particular, it holds for any LPO or KBO
based on a total precedence.

Next we consider completeness of ground completion. Our proof makes
use of the following concept.

\begin{defi}[Random Descent
\isaforlink{Ground_Completion}{def:RD}]
An ARS $\AA$ has \emph{random descent} if for every conversion $a \conv b$
with normal form $b$ we have $a \to^n b$ with $n + l = r$. Here $l$ ($r$)
denotes the number of $\from$ ($\to$) steps in the conversion $a \conv b$.
\end{defi}

Random descent is useful in the analysis of rewrite
strategies~\cite{vOT16}. It generalizes a number of earlier concepts,
including the property
${\from \cdot \to} \subseteq ({\to \cdot \from}) \cup {=}$
which is known as \tsfs{WCR1} and holds for left-reduced ground TRSs.
We formalized a new, short and direct proof of the following result due
to van Oostrom~\cite{vO07}.
Here an element $a$ is said to be complete if it is both terminating
(there are no infinite rewrite sequences starting at $a$) and
confluent (if $b \FromT{*} a \to^* c$ then $b \join c$).

\begin{thm}%
\label{thm:random descent}
Let $\AA$ be an ARS with random descent. If $a \conv b$ with normal form
$b$ then $a$ is complete and all rewrite sequences from $a$ to $b$ have
the same length.
\hfill
\isaforlink{Ground_Completion}{lem:RD_NF}
\end{thm}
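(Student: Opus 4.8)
The plan is to derive all three conclusions—confluence of $a$, termination of $a$, and the length statement—directly from the random descent property, by feeding it carefully constructed conversions that end in the normal form $b$. In each case the idea is the same: build a conversion whose backward/forward step counts I can control, then read off the constraint $n + l = r$ that random descent imposes.

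First I would establish confluence of $a$. Suppose $c \FromT{*} a \to^* d$. Since $a \to^* c$ and $a \conv b$, concatenation yields $c \conv b$, and likewise $d \conv b$. As $b$ is a normal form, random descent applies to both conversions and produces rewrite sequences $c \to^* b$ and $d \to^* b$. Hence $c \to^* b \FromT{*} d$, that is, $c \join d$, which is exactly the required confluence of $a$.

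Next I would prove termination of $a$. Fix one conversion witnessing $a \conv b$ and let $l_0$ and $r_0$ be its numbers of $\from$- and $\to$-steps, so random descent gives $a \to^{n} b$ with $n = r_0 - l_0 \ge 0$. Now assume for contradiction an infinite rewrite sequence $a \to a_1 \to a_2 \to \cdots$. For each $i$, prepending the reversed prefix $a_i \FromT{i} a$ to the fixed conversion produces a conversion $a_i \conv b$ with exactly $i + l_0$ backward steps and $r_0$ forward steps. Random descent then demands $a_i \to^{n_i} b$ with $n_i + (i + l_0) = r_0$, so $n_i = (r_0 - l_0) - i$. Since $n_i$ counts rewrite steps it must be nonnegative, yet $n_i < 0$ once $i > r_0 - l_0$, a contradiction. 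Thus no infinite rewrite sequence starts at $a$, and together with the previous paragraph $a$ is complete.

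Finally, for the length statement, I would consider two rewrite sequences $a \to^{m_1} b$ and $a \to^{m_2} b$. Reversing the first and appending the second yields a conversion from $b$ to $b$ with $m_1$ backward and $m_2$ forward steps. Random descent gives $b \to^{k} b$ with $k + m_1 = m_2$; but $b$ is a normal form, forcing $k = 0$ and hence $m_1 = m_2$. The only real subtlety throughout is bookkeeping the orientation of steps in these assembled conversions so that the equation $n + l = r$ is applied with the correct counts; the termination argument is the crux, since it hinges on the observation that stacking backward steps in front of a fixed conversion drives $l$ arbitrarily far above $r$, which random descent cannot tolerate.
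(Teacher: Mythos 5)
Your proposal is correct and takes essentially the same approach as the paper: the confluence and equal-length arguments are identical, and the termination argument uses the same construction of prepending a prefix of the hypothetical infinite rewrite sequence to a fixed conversion $a \conv b$. The only (harmless) difference is how the contradiction is extracted—you let $i$ exceed $r_0 - l_0$ and invoke nonnegativity of the step count, whereas the paper takes $i = r - l$ exactly, concludes $a_{r-l} = b$, and then contradicts the normality of $b$.
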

\begin{proof}
Let $l$ ($r$) be the number of $\from$ ($\to$) steps in the conversion
from $a$ to $b$. We have $l \leqslant r$ since $n + l = r$ for some $n$ by
random descent. First we prove termination of $a$. For a proof by
contradiction, suppose the existence of an infinite rewrite sequence
\[
a = a_0 \to a_1 \to a_2 \to \cdots
\]
Clearly, $a \to^{r-l} a_{r-l}$ and thus there exists a conversion
$a_{r-l} \FromT{*} a \conv b$ with $r$ backwards and $r$ forwards steps.
Hence $a_{r-l} = b$ by another application of random descent and
therefore $b \to a_{r-l+1}$, contradicting the fact that $b$ is a normal
form. Next we prove confluence of $a$. Suppose $c \FromT{*} a \to^* d$. We
obtain the two conversions $c \conv b$ and $d \conv b$, which are
transformed into $c \join d$ by two applications of random descent.
Finally, assume there are two rewrite sequences $a \to^m b$ and
$a \to^n b$ from $a$ to $b$ of length $m$ and $n$. Reversing the first
sequence and appending the second one yields a conversion $b \conv b$ with
$m$ backwards and $n$ forwards steps. A final application of random
descent yields $b \to^k b$ for some $k$ with $k + m = n$. Since $b$ is a
normal form, $k = 0$ and thus $m = n$ as desired.
\end{proof}

In the series of lemmas below, we establish that reduced ground TRSs
are canonical and have random descent.

\begin{lem}%
\label{lem:left-reduced WCR1}
Left-reduced TRSs enjoy the \tsfs{WCR1} property.
\hfill
\isaforlink{Ground_Completion}{lem:left_reduced_WCR1}
\end{lem}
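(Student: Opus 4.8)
The plan is to establish the $\tsfs{WCR1}$ inclusion ${\from \cdot \to} \subseteq {(\to \cdot \from)} \cup {=}$ for $\to_\RR$ by a direct case analysis of a local peak $t \FromB{\RR} s \to_\RR u$ according to the relative positions of the two contracted redexes. I would write the two steps as $s \to_\RR t$ by a rule $\ell_1 \to r_1$ at a position $p_1$ and $s \to_\RR u$ by a rule $\ell_2 \to r_2$ at a position $p_2$. Since we are in the ground setting of this section, the matched subterms are literally the (ground) left-hand sides, so $s|_{p_1} = \ell_1$ and $s|_{p_2} = \ell_2$, and there are no variable overlaps to contend with. This observation is the crux: it is exactly the absence of variable overlaps that keeps the joining sequences to a single step on each side, so that we obtain $\tsfs{WCR1}$ and not merely local confluence.

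The two easy cases are parallel and identical positions. If $p_1 \parallel p_2$, the two redexes are disjoint and the steps commute: both $t = s[r_1]_{p_1}$ and $u = s[r_2]_{p_2}$ rewrite in one step to the common term $v = s[r_1]_{p_1}[r_2]_{p_2}$, giving $t \to_\RR v \FromB{\RR} u$. If $p_1 = p_2$, then $\ell_1 = s|_{p_1} = s|_{p_2} = \ell_2$; were the two rules distinct, $\ell_1$ would be reducible at its root by the other rule, which lies in $\RR \setminus \{\ell_1 \to r_1\}$, contradicting left-reducedness. Hence the rules coincide, and since $p_1 = p_2$ we get $t = u$.

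The remaining case is proper nesting, say $p_1 < p_2$ (with the symmetric case $p_2 < p_1$ handled identically), and this is where left-reducedness does the real work and where I expect the only genuine subtlety to lie. Put $p = p_2 \backslash p_1 \neq \epsilon$. Because $\ell_1$ is ground it has no variable positions, so $p$ is a genuine position of $\ell_1$ and $\ell_1|_p = s|_{p_2} = \ell_2$ is reducible by $\ell_2 \to r_2$. If $\ell_2 \to r_2$ differs from $\ell_1 \to r_1$, this exhibits $\ell_1$ as reducible by a rule in $\RR \setminus \{\ell_1 \to r_1\}$, contradicting left-reducedness; and if the two rules coincide, then $\ell_1|_p = \ell_1$ with $p \neq \epsilon$, which is impossible since no term equals one of its proper subterms. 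Thus proper nesting cannot occur, and the parallel and identical cases cover every peak, establishing the claim. The obstacle here is conceptual rather than computational: one must recognise that groundness eliminates variable overlaps (the situation that would otherwise force multi-step joins, as for the left-reduced but non-ground system $\{ \m{f}(x) \to \m{g}(x,x),\, \m{a} \to \m{b} \}$), which is precisely what lets left-reducedness outright exclude all genuinely overlapping peaks and leave only the single-step cases.
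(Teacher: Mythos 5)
Your proof is correct and follows essentially the same route as the paper's: a case analysis on the relative positions of the two redexes, combined with the fact that left-reducedness makes a left-hand side uniquely determine its rule (your equal-position case) and rules out properly nested redexes altogether. Your explicit restriction to ground TRSs is in fact necessary---as your counterexample $\{\m{f}(x) \to \m{g}(x,x),\ \m{a} \to \m{b}\}$ shows, the claim fails for non-ground left-reduced TRSs---and this matches the setting in which the paper motivates, formalizes (in \nolinkurl{Ground_Completion.thy}), and applies the lemma, even though the lemma statement itself omits the word ``ground''.
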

\begin{proof}
This follows from a straightforward case analysis on the relative positions
of the two redexes that are part of a peak together with the fact that for
left-reduced TRSs the left-hand side alone uniquely determines the employed
rewrite rule.
\end{proof}

\begin{lem}%
\label{lem:left-reduced ground}
Left-reduced ground TRSs have random descent.
\hfill
\isaforlink{Ground_Completion}{lem:left_reduced_ground_RD}
\end{lem}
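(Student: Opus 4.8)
The plan is to reduce the statement to the abstract implication ``\tsfs{WCR1} implies random descent'' and then to prove that implication by a tailored measure induction on conversions. Since $\RR$ is left-reduced, \lemref{left-reduced WCR1} applies and yields that $\RR$ enjoys \tsfs{WCR1}, \ie, ${\from \cdot \to} \subseteq ({\to \cdot \from}) \cup {=}$ for $\to = {\to_\RR}$. (Groundness is exactly what makes that lemma available here: there are no variable positions in left-hand sides, so in a peak the nested case forces the two redexes to sit at the same position, where a left-reduced left-hand side determines the rule, collapsing the join to equality; the parallel case joins in a single step on each side.) It therefore suffices to show, for an arbitrary ARS $\AA$ satisfying \tsfs{WCR1}, that every conversion $a \conv b$ with $b$ a normal form can be replayed as a forward sequence $a \to^{r-l} b$, where $l$ and $r$ denote the numbers of backward and forward steps of the given conversion. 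This is precisely the random descent condition $a \to^n b$ with $n + l = r$.

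I would prove this by well-founded induction on a measure $M(C)$ of the conversion $C\colon a \conv b$, defined as the number of pairs of indices $(j,j')$ with $j < j'$ such that the $j$-th step of $C$ is a backward step — equivalently, the sum over all backward steps of the number of steps occurring after them. For the base case, suppose $C$ contains no peak, \ie, no backward step is immediately followed by a forward step. Then the pattern of directions is a block of forward steps followed by a block of backward steps; reading the backward block starting from $b$ exhibits $b \to^* m$ for the turning point $m$, and since $b$ is a normal form this block must be empty. Hence $C$ is already a forward sequence $a \to^r b$ with $l = 0$, which is the desired $a \to^{r-l} b$.

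For the inductive step, suppose $C$ contains a peak $d_{j-1} \from d_j \to d_{j+1}$, and resolve it with \tsfs{WCR1}, which offers exactly the two disjuncts of its right-hand side. If $d_{j-1} = d_{j+1}$, I delete the two steps and splice the endpoints; this removes one backward and one forward step, so $r - l$ is unchanged, and $M$ strictly decreases. If instead $d_{j-1} \to e \from d_{j+1}$ for some $e$, I replace the peak by this valley; the former backward step at position $j$ becomes a forward step and the former forward step at position $j+1$ becomes a backward step, so $r$ and $l$ — hence $r-l$ — are again preserved, while the sole backward step involved moves one position later, reducing by exactly one the number of steps following a backward step and thus decreasing $M$ by one. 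In both cases the induction hypothesis applied to the resulting conversion produces a forward sequence of the correct length $r - l$.

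The main obstacle is exactly the choice of measure: the valley-resolution case keeps the length of the conversion fixed, so a naive induction on length fails, and invoking full decreasing diagrams would be heavier than necessary. The key observation that makes the argument go through is that \tsfs{WCR1} resolves each peak by swapping a backward step with the forward step immediately following it, which strictly decreases the ``number of steps lying after a backward step''; this quantity is the termination measure, and the fact that $r - l$ is invariant under both resolution cases is what pins down the length $n = r - l$ demanded by random descent.
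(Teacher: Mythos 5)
Your proposal is correct, and it rests on the same two-part decomposition as the paper's proof: first invoke \lemref{left-reduced WCR1} to obtain \tsfs{WCR1} (your aside about groundness is accurate and worth making---the paper states that lemma without the word ``ground'', but groundness is what excludes variable overlaps, which for duplicating rules destroy \tsfs{WCR1}), then show abstractly that \tsfs{WCR1} forces random descent toward normal forms. The difference lies in how that second, abstract part is organized. The paper peels off the first step of the conversion by induction on its length; in the interesting case $s \from u \conv t$ the induction hypothesis yields $u \to^k t$, and a second, inner induction on $k$ uses \tsfs{WCR1} to push the lone backward step through the forward sequence until it either cancels or would have to emanate from the normal form. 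Your proof performs exactly the same local moves---cancel a peak, or swap a backward step past the forward step following it---but applies them anywhere in the conversion and terminates the process with a single well-founded induction on the inversion count (the number of steps occurring after a backward step), with the invariance of $r-l$ under both resolutions stated explicitly. Both arguments are sound and of comparable weight: the paper's nested structural inductions map directly onto a routine proof-assistant development, while your global measure isolates the combinatorial content (a bubble-sort of backward steps toward the normal-form end of the conversion) and makes plain that nothing beyond \tsfs{WCR1} for an arbitrary ARS is used, so the abstract half of your argument is stated at exactly the right level of generality.
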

\begin{proof}
Let $\RR$ be a left-reduced TRS and $s \conv t$ a conversion between two
arbitrary but fixed terms $s$ and $t$ such that $t$ is a normal form.
We proceed by induction on the length of this conversion. If
it is empty or the first step is to the right, we are done. Otherwise, we
have $s \from u \conv t$ where the conversion between $u$ and $t$
has $l$ ($r$) $\from$ ($\to$) steps
and obtain $u \to^k t$ with $k + l = r$ by the induction
hypothesis. The remainder of the proof proceeds by induction on $k$
together with \lemref{left-reduced WCR1}.
\end{proof}

\begin{lem}%
\label{lem:right-reduced ground SN}
Right-reduced ground TRSs are terminating.
\hfill
\isaforlink{Ground_Completion}{lem:right_reduced_ground_SN}
\end{lem}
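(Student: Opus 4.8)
The plan is to prove that $\to_\RR$ is well-founded by structural induction on ground terms, exploiting a single crucial feature of right-reduced ground systems: a rewrite step applied at the root always lands in a normal form. Indeed, since $\RR$ is ground, a root step rewrites a term $s$ that coincides with some left-hand side $\ell$ to the corresponding right-hand side $r$ (substitutions being irrelevant for ground rules, as $\ell\sigma = \ell$ and $r\sigma = r$), and right-reducedness guarantees $r \in \NF(\RR)$. Consequently no infinite rewrite sequence can ever contain a root step, since such a step would immediately produce a normal form and thereby terminate the sequence.

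First I would set up the induction so that the hypothesis states that every proper subterm of the term under consideration is terminating. For the base case of a constant $c$ the only position is $\epsilon$, so every step from $c$ is a root step and hence ends in a normal form; thus $c$ is terminating. For the inductive step, let $t = f(t_1, \dots, t_n)$ with $n \geqslant 1$ and assume each $t_i$ is terminating. Suppose, for a contradiction, that there is an infinite sequence starting from $t$. By the observation above it contains no root step, so every step occurs at a position of the form $i\,q$ inside a unique argument $t_i$ and affects only that argument. Restricting the sequence to a fixed argument index $i$ yields a rewrite sequence issuing from $t_i$; were that argument rewritten infinitely often we would obtain an infinite sequence from $t_i$, contradicting the induction hypothesis. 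Thus each of the finitely many arguments is rewritten only finitely often, and since every step touches some argument, the whole sequence is finite --- the desired contradiction.

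The step I expect to be the main obstacle, at least for the formalization, is making precise the ``projection onto an argument'' bookkeeping: one has to track how a step at position $i\,q$ in the global sequence induces a step at position $q$ in the $i$-th argument, verify that steps in distinct arguments are independent (positions in different arguments being parallel), and argue that finiteness of the per-argument activity sums up to finiteness of the whole. In the informal proof this is routine, but phrasing it so that the induction hypothesis on subterms is cleanly applicable --- for instance by counting steps per argument, or by extracting a genuine infinite subsequence confined to one argument --- is where the care is needed. Everything else reduces to the elementary facts that ground rules apply without genuine substitution and that right-reducedness forces right-hand sides into $\NF(\RR)$.
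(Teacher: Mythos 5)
Your proof is correct and is essentially the paper's argument: the paper takes a minimal non-terminating term (one whose proper subterms all terminate) and derives a contradiction from exactly your two ingredients --- non-root rewriting is confined to the terminating arguments, and any root step, by groundness and right-reducedness, immediately yields a ground normal form --- with your structural induction merely replacing the minimal-counterexample phrasing. One small caveat: the lemma asserts well-foundedness of $\to_\RR$ on \emph{all} terms, not just ground ones, so your induction needs either a (trivial) variable case --- variables are $\RR$-normal forms because left-hand sides of a ground TRS are ground --- or the remark that an infinite sequence from a non-ground term can be turned into one from a ground term using closure of $\to_\RR$ under substitutions; with that addition the argument is complete.
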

\begin{proof}
Let $\RR$ be a right-reduced ground TRS\@. For the sake of a contradiction,
assume that $\RR$ is non-terminating. Then there is a minimal
non-terminating term $t$ (that is, all its proper subterms are
terminating). This means that after a finite number of non-root steps
$t \to^* u$ there will be a root step $u \to v$ such that $v$ is
non-terminating. But since $\RR$ is right-reduced and
ground, $v$ is a ground normal form, deriving the desired
contradiction.
\end{proof}

\begin{cor}%
\label{cor:reduced ground}
Reduced ground TRSs are canonical and have random descent.
\hfill
\isaforlink{Ground_Completion}{lem:reduced_ground_RD_and_canonical}
\end{cor}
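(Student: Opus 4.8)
The plan is to observe that a reduced TRS is, by definition, both left-reduced and right-reduced, so that the two desired properties follow almost immediately by assembling the preceding results. Let $\RR$ be a reduced ground TRS.

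First I would dispose of random descent. Since $\RR$ is reduced it is in particular left-reduced, and it is ground, so \lemref{left-reduced ground} applies directly and yields that $\RR$ has random descent.

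For canonicity, recall that a canonical TRS is just a reduced complete one; as $\RR$ is already reduced, it remains only to establish completeness, \ie, termination together with confluence. Termination is immediate, since $\RR$ is right-reduced and ground and hence terminating by \lemref{right-reduced ground SN}. Confluence I would obtain indirectly from random descent. Because $\RR$ is terminating, every term $a$ reaches some normal form $b$, so $a \to_\RR^* b$ and in particular $a \conv_\RR b$ with $b$ a normal form. \thmref{random descent} then tells us that $a$ is complete, so in particular $\RR$ is confluent at $a$. As $a$ is arbitrary, $\RR$ is confluent, and combined with termination this gives completeness; together with reducedness we conclude that $\RR$ is canonical.

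I do not anticipate any genuine obstacle: the corollary is a direct combination of \lemref{left-reduced ground}, \lemref{right-reduced ground SN}, and \thmref{random descent}. The only point meriting care is that confluence is derived not by a direct joinability argument but via the \emph{completeness-of-elements} conclusion of \thmref{random descent}, instantiated at each term and its normal form. This is exactly where termination enters in an essential way, since it guarantees that every term does reach a normal form and hence lies on a conversion to a normal form to which the random descent theorem can be applied.
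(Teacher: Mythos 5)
Your proof is correct and follows essentially the same route as the paper: random descent via \lemref{left-reduced ground}, termination via \lemref{right-reduced ground SN}, and confluence from random descent plus termination. The only cosmetic difference is that you extract confluence by invoking \thmref{random descent} at each term's normal form, whereas the paper appeals directly to the definition of random descent; the substance is identical.
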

\begin{proof}
Let $\RR$ be a reduced ground TRS\@. Then, by \lemref{left-reduced ground},
$\RR$ has random descent. Moreover, by \lemref{right-reduced ground SN},
$\RR$ is terminating. Finally, since all terms are $\RR$-terminating,
confluence of $\RR$ is an immediate consequence of the definition of
random descent.
\end{proof}

\begin{thm}%
\label{thm:KBg completeness}
For every ground ES $\EE$ and every equivalent reduced
ground TRS $\RR$ there exist a reduction order $>$ and a derivation
$\EE,\varnothing \,\vdg\, \cdots \,\vdg\, \varnothing,\RR$.
\hfill
\isaforlink{Ground_Completion}{lem:gkb_complete}
\end{thm}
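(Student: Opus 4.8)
The plan is to instantiate the reduction order as ${>} = {\to_\RR^+}$ and to drive a $\KBg$ run into the shape $\varnothing,\RR$, identifying the final rule set through the uniqueness result \thmref{Metivier 2b}. By \corref{reduced ground} a reduced ground TRS is canonical, so $\to_\RR$ is well-founded (on all terms, as $\RR$ is ground) and ${\to_\RR^+}$ is a reduction order; moreover $\RR \subseteq {\to_\RR^+} = {>}$ since every rule is a single $\RR$-step. \corref{reduced ground} together with \thmref{random descent} equips each ground term $t$ with a unique normal form and a well-defined rank $\mathit{rk}(t)$---the common length of its reductions to that normal form, satisfying $\mathit{rk}(s) = \mathit{rk}(t)+1$ whenever $s \to_\RR t$. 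Finally, as $\KBg \subseteq \KBf$, \corref{KB equational theory} keeps ${\conv_{\EE_i \scup \RR_i}} = {\conv_\EE} = {\conv_\RR}$ invariant along the run.

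Assuming $\EE$ finite, I would run $\KBg$ under the strategy: simplify any equation with an $\RR_i$-reducible side (\tsfs{simplify}); otherwise orient any nontrivial equation whose sides are $>$-comparable (\tsfs{orient}); delete trivial equations (\tsfs{delete}); and keep the rules reduced with \tsfs{compose} and \tsfs{collapse}. By Lemma~\ref{KB- termination} every $\KBg$ run on a finite ground ES is finite, so the strategy reaches a state $(\EEn,\RRn)$ to which no rule applies. Exactly as in the proof of \thmref{KBg correctness}, maximality excludes \tsfs{compose} and \tsfs{collapse}, so $\RRn$ is reduced---hence canonical by \corref{reduced ground}---while \lemref{KB termination} gives $\RRn \subseteq {>}$.

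It remains to argue $\EEn = \varnothing$. If some $s \approx t \in \EEn$ survived, maximality (no \tsfs{delete}, \tsfs{simplify}, \tsfs{orient}) would force $s \neq t$, both sides $\RRn$-normal, and $s,t$ incomparable under $>$. As $s \conv_\RR t$ and $\RR$ is confluent, they share an $\RR$-normal form $n$, and incomparability gives $s \neq n \neq t$; thus both sides are $\RR$-reducible, necessarily by rules $\ell \to \hat\ell \in \RR \setminus \RRn$ whose left- and right-hand sides are themselves $\RRn$-normal and satisfy $\ell \to_\RR^+ \hat\ell$. Such a rule equation $\ell \approx \hat\ell$ is therefore orientable, and the strategy---processing minimal reducible terms in increasing $\to_\RR^+$ order, so that their proper subterms are already $\RRn$-normalized---produces it before stalling, contradicting maximality. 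Hence $\EEn = \varnothing$ and ${\conv_{\RRn}} = {\conv_\RR}$. Now $\RRn$ and $\RR$ are equivalent canonical TRSs both compatible with the single reduction order $>$, so \thmref{Metivier 2b} yields $\RRn \doteq \RR$; being ground, this is $\RRn = \RR$, and the run is the claimed derivation.

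The main obstacle is the emptiness claim. Because ${\to_\RR^+}$ need not be total on $\EE$-equivalent ground terms---indeed, reduced ground systems such as $\{\m{f}(\m{f}(\m{a})) \to \m{f}(\m{g}(\m{f}(\m{a})))\}$ admit no compatible ground-total reduction order at all---I cannot invoke \thmref{KBg correctness} to force $\EEn = \varnothing$, and must instead prove that the strategy never deadlocks on an incomparable, unsimplifiable equation. The delicate part is that the rule equations $\ell \approx \hat\ell$ needed to unblock such equations are generally not present verbatim but hidden inside longer $\EE$-conversions; showing rigorously that the strategy, well-ordered by the rank $\mathit{rk}$ and by $\to_\RR^+$ on left-hand sides, always surfaces and orients them in time is the technical heart of the proof.
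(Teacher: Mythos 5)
Your skeleton (run \KBg\ to a maximal state, use maximality to get a reduced---hence canonical---$\RRn$ with $\RRn \subseteq {>}$, then identify $\RRn$ with $\RR$ via \thmref{Metivier 2b}) matches the paper's endgame, but the decisive step, $\EEn = \varnothing$, is missing, and you say so yourself. The sketch you give for it does not close: \KBg\ has no \tsfs{deduce} rule, so the ``rule equations'' $\ell \approx \hat\ell$ that would unblock a stuck equation can enter $\EE_i$ only as descendants (via \tsfs{simplify}, \tsfs{collapse}, \tsfs{compose}) of equations already present, and no argument is given that your strategy ever surfaces them---a strategy can only choose among inferences applicable to the current $(\EE_i,\RR_i)$, it cannot ``process terms.'' With ${>} = {\to_\RR^+}$ nothing shown so far prevents a reachable state in which every remaining equation $s \approx t$ has $s \neq t$, both sides $\RRn$-irreducible, and $s$, $t$ incomparable; in such a state every rule of \KBg\ is inapplicable, so the run is maximal yet fails. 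Ruling this out is not a routine refinement: it is the entire difficulty of the theorem, and your final paragraph concedes it is unproven.

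The paper avoids the deadlock problem altogether by choosing the order more cleverly, and here your ``obstacle'' paragraph contains a misdiagnosis: \thmref{KBg correctness} does not require an order total on \emph{all} ground terms, only one total on \emph{$\EE$-equivalent} ground terms, so your (correct) observation that $\{\m{f}(\m{f}(\m{a})) \to \m{f}(\m{g}(\m{f}(\m{a})))\}$ admits no compatible ground-total order does not block this route. Since the theorem quantifies existentially over the reduction order, the paper defines $s > t$ if and only if $s \conv_\EE t$ and either $d_\RR(s) > d_\RR(t)$, or $d_\RR(s) = d_\RR(t)$ and $s \LPO[\sqsupset] t$ for a total precedence $\sqsupset$, where $d_\RR$ is exactly your rank $\mathit{rk}$ (well defined by \corref{reduced ground} and \thmref{random descent}); closure under contexts and substitutions follows from the identity $d_\RR(C[t\sigma]) = d_\RR(C[t{\downarrow}\sigma]) + d_\RR(t)$. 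This order contains $\RR$ and is total on $\EE$-equivalent ground terms, so \thmref{KBg correctness} forces every maximal run to end in $(\varnothing,\RR')$ with $\RR'$ canonical and equivalent, and \thmref{Metivier 2b} gives $\RR' \doteq \RR$, hence $\RR' = \RR$ since both are ground. In short, you assembled the right ingredients---the rank function and M\'etivier's uniqueness theorem---but the rank must be used to \emph{build the order} (restoring totality within equivalence classes, and thereby orientability of every surviving equation), rather than to schedule a strategy whose success you cannot certify.
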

\begin{proof}
Let $>$ be a reduction order that contains $\RR$ and is
total on $\EE$-equivalent ground terms. Consider a maximal
\KBg run starting from $\EE$ and using $>$. According to
\thmref{KBg correctness}, the run produces an equivalent reduced
TRS $\RR'$. Since $\RR \subseteq {>}$ and $\RR' \subseteq {>}$, we obtain
$\RR = \RR'$ from \thmref{Metivier 2b}.
It remains to show that $>$ exists. Let $\sqsupset$ be a total precedence
and define $s > t$ if and only if $s \conv_\EE t$ and either
$d_\RR(s) > d_\RR(t)$ or both $d_\RR(s) = d_\RR(t)$ and
$s \LPO[\sqsupset] t$.\footnote{In the formalization we actually use
$\KBO[\sqsupset]$ with all weights set to $1$, since in contrast to LPO,
for KBO ground-totality for total precedences has already been formalized
before.}
Here $d_\RR(u)$ is the number of rewrite steps in
$\RR$ to normalize the term $u$, which is well-defined since all
normalizing sequences in a reduced ground TRS have the same length
as a consequence of \corref{reduced ground} and
\thmref{random descent}.
It is easy to show that $>$ has the required properties. The only
interesting cases are closure under contexts and substitutions. Both are
basically handled by the following observation:
$d_\RR(C[t\sigma]) = d_\RR(C[t{\downarrow}\sigma]) + d_\RR(t)$ for any
term $t$ (which holds due to random descent together with
termination). This allows us to lift $d_\RR(s) = d_\RR(t)$ and
$d_\RR(s) > d_\RR(t)$ into arbitrary contexts and substitutions.
\end{proof}

The above result cannot be generalized to left-linear right-ground
systems, as shown in the following example due to Dominik Klein
(personal communication).

\begin{exa}
Consider the ES $\EE$ consisting of the two equations
$\m{f}(x) \approx \m{f}(\m{a})$ and $\m{f}(\m{b}) \approx \m{b}$.
Let $>$ be a reduction order. If $\m{f}(\m{b}) > \m{b}$ does not hold, no
inference rule of \KBg\ is applicable to $(\EE,\varnothing)$. If
$\m{f}(\m{b}) > \m{b}$ then the second equation can be oriented
\begin{gather*}
(\EE,\varnothing) ~\vdg~
(\{ \m{f}(x) \approx \m{f}(\m{a}) \},\{ \m{f}(\m{b}) \to \m{b} \})
\end{gather*}
but no further inference steps of \KBg are possible.
Hence completion will fail on $\EE$. Nevertheless, the TRS $\RR$
consisting of the rewrite rule $\m{f}(x) \to \m{b}$ constitutes a
canonical presentation of $\EE$.
\end{exa}

The correctness result of ground completion (\thmref{KBg correctness}) is
due to Snyder~\cite{S93}, and our formalized proof basically follows
his approach. In addition, we present a new completeness proof based on
random descent (\thmref{KBg completeness}).

\section{Correctness for Infinite Runs}%
\label{sec:infinite runs}

Completion as presented in the preceding sections does not always
succeed in producing a finite complete presentation. It may fail because
an unorientable equation is encountered or it may run forever.
In the latter case it is possible that in the limit a possibly
infinite complete presentation is obtained. In this case, completion
can serve as a semi-decision procedure for the validity problem of
the initial equations~\cite{H81}. In this section we give a new proof that
fair infinite runs produce complete presentations of the initial equations,
provided the \tsfs{collapse} rule is restored to its original formulation
(cf.\ \defref{KBi} below).

The reason why this restriction is necessary is provided by
the following example (due to Baader and Nipkow~\cite{BN98}),
which shows that the correctness result (\thmref{KBf correctness}) of
\secref{finite runs} does not extend to infinite runs without further ado.

\begin{exa}%
\label{exa:collapse encompassment}
Consider the ES $\EE$ consisting of the equations
\begin{align*}
\m{a}\m{b}\m{a} &\approx \m{a}\m{b} &
\m{b}\m{b} &\approx \m{b}
\end{align*}
and LPO with precedence $\m{a} > \m{b}$ as reduction order. After two
\tsfs{orient} steps, we apply \tsfs{deduce} to generate the two critical
pairs:
\begin{align*}
\m{a}\m{b}\m{a} &\to \m{a}\m{b} &
\m{b}\m{b} &\to \m{b} &
\m{a}\m{b}\m{a}\m{b} &\approx \m{a}\m{b}\m{b}\m{a} &
\m{b}\m{b} &\approx \m{b}\m{b}
\intertext{The second one is immediately deleted and the first one is
simplified:}
\m{a}\m{b}\m{a} &\to \m{a}\m{b} &
\m{b}\m{b} &\to \m{b} &
\m{a}\m{b}\m{b} &\approx \m{a}\m{b}\m{a}
\intertext{and subsequently oriented:}
\m{a}\m{b}\m{a} &\to \m{a}\m{b} &
\m{b}\m{b} &\to \m{b} &
\m{a}\m{b}\m{a} &\to \m{a}\m{b}\m{b}
\intertext{At this point we use the third rule to \tsfs{collapse} the
first rule:}
\m{a}\m{b}\m{b} &\approx \m{a}\m{b} &
\m{b}\m{b} &\to \m{b} &
\m{a}\m{b}\m{a} &\to \m{a}\m{b}\m{b}
\intertext{An application of \tsfs{simplify} followed by \tsfs{delete}
results in:}
&&
\m{b}\m{b} &\to \m{b} &
\m{a}\m{b}\m{a} &\to \m{a}\m{b}\m{b}
\intertext{Repeating the above process produces}
&&
\m{b}\m{b} &\to \m{b} &
\m{a}\m{b}\m{a} &\to
\makebox[0mm][l]{$\m{a}\m{b}\m{b}\m{b}$}
\intertext{and then}
&&
\m{b}\m{b} &\to \m{b} &
\m{a}\m{b}\m{a} &\to
\makebox[0mm][l]{$\m{a}\m{b}\m{b}\m{b}\m{b}$}
\end{align*}
ad infinitum. Since none of the rules
$\m{a}\m{b}\m{a} \to \m{a}\m{b}^n$ survives, in the
limit we obtain the TRS consisting of the single rule
$\m{b}\m{b} \to \m{b}$. This TRS is complete but
not equivalent to $\EE$ as witnessed by non-joinability of
$\m{a}\m{b}\m{a}$ and $\m{a}\m{b}$.
\end{exa}

\begin{defi}[Knuth-Bendix Completion]%
\label{def:KBi}
The inference system \KBi consists of the inference
rules \tsfs{deduce}, \tsfs{orient}, \tsfs{delete}, \tsfs{compose}, and
\tsfs{simplify} of \KBf together with the following modified collapse
rule:
\begin{center}
\bigskip
\begin{tabular}{@{}l@{\qquad}c@{\qquad}l@{}}
\tsfs{collapse}$_{\smprencompasses}$ &
$\displaystyle \frac
{\EE,\RR \uplus \{ t \to s \}}
{\EE \cup \{ u \approx s \},\RR}$
& if $t \xrightarrow{\smprencompasses}_\RR u$
\end{tabular}
\bigskip
\end{center}
Here the condition \smash{$t \xrightarrow{\smprencompasses}_\RR u$} is
defined as \smash{$t \xrightarrow{} u$} using some rule
$\ell \to r \in \RR$ such that $t \prencompasses \ell$.
\end{defi}

Note that the collapse step in \exaref{collapse encompassment} does
not satisfy the encompassment condition from the previous
definition.

\smallskip

We write $(\EE,\RR) \vdi (\EE',\RR')$ if $(\EE',\RR')$ can be reached
from $(\EE,\RR)$ by employing one of the inference rules of \defref{KBi}.

\begin{defi}%
\label{def:KBi fairness}
An \emph{infinite run} is a maximal sequence of the form
\[
\Gamma\colon
(\EE_0,\RR_0) ~\vdi~ (\EE_1,\RR_1) ~\vdi~ (\EE_2,\RR_2) ~\vdi~ \cdots
\]
with $\RR_0 = \varnothing$. We define
\begin{align*}
\EEi & = \bigcup_{i \geqslant 0} \EE_i &
\RRi & = \bigcup_{i \geqslant 0} \RR_i &
\EEw & = \bigcup_{i \geqslant 0} \bigcap_{j \geqslant i} \EE_j &
\RRw & = \bigcup_{i \geqslant 0} \bigcap_{j \geqslant i} \RR_j
\end{align*}
Equations in $\EEw$ and rules in $\RRw$ are called \emph{persistent}.
The run $\Gamma$ is called \emph{fair} if $\EEw = \varnothing$ and the
inclusion
\(
\PCP(\RRw) \subseteq {\join_{\RRw}} \cup {\fromto_{\EEi}}
\)
holds.
\end{defi}

Bachmair \etal~\cite{BDH86} proved that for every
fair run satisfying $\EEw = \varnothing$
the TRS $\RRw$ constitutes a complete presentation of $\EE_0$.
The remainder of this section is dedicated to establish the same result,
but on a different route without encountering proof orders.

Compared to our proofs for finite runs from \secref{finite runs}, in the
following we will disentangle our reasoning about rules from our reasoning
about equations and furthermore replace peak decreasingness by the
slightly simpler concept of source decreasingness.
So why not use this more modular and simpler approach also in our earlier
proofs for finite runs? The main difference between the two situations is
the encompassment condition of \tsfs{deduce}. Unfortunately, without the
encompassment condition the equivalent of \lemref{Ri to Rw} below for
finite runs breaks down and we are forced to reason about rules and
equations simultaneously (\lemref{KB key}).
Nevertheless, it seems useful to also have a correctness proof for
\KBf (lacking the encompassment condition),
since out of the four completion tools we are
aware of (\cime3~\cite{CCFPU11},
\kbcv~\cite{kbcv}, \mkbtt~\cite{mkbtt},
\slothrop~\cite{slothrop}), only
\cime3 actually implements the encompassment condition.

\begin{lem}%
\label{lem:inclusions infinite}
If $(\EE,\RR) \vdi (\EE',\RR')$ then the following inclusions
hold:
\begin{enumerate}
\item
\(
\EE' \cup \RR' \:\subseteq\: {\xlr[\ER]{*}}
\)
\hfill\isaforlink{Abstract_Completion}{lem:KB_subset'}

\smallskip
\item%
\label{lem:inclusions infinite:E}
\(
\EE \setminus \EE' \:\subseteq\:
(\xrightarrow[\RR']{} \cdot \mathrel{\EE'}) \:\cup\:
(\mathrel{\EE'} \cdot \xleftarrow[\RR']{}) \:\cup\:
\RR' \:\cup\: \RR'^{-1} \:\cup\: {=}
\)
\hfill\isaforlink{Abstract_Completion}{lem:KB_E_subset}

\smallskip
\item%
\label{lem:inclusions infinite:R}
\(
\RR \setminus \RR' \:\subseteq\:
(\xrightarrow[\RR']{\smprencompasses} \cdot \mathrel{\EE'}) \:\cup\:
(\mathrel{\RR'} \cdot \xleftarrow[\RR']{})
\)
\hfill\isaforlink{Abstract_Completion}{lem:KB_R_subset}
\end{enumerate}
\end{lem}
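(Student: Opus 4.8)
The plan is to prove the three inclusions simultaneously by a case analysis on which inference rule of \defref{KBi} justifies the step $(\EE,\RR) \vdi (\EE',\RR')$. Inclusion~(1) I would dispatch first: five of the six rules of \KBi are shared verbatim with \KBf, and \tsfs{collapse}$_\smprencompasses$ is merely an instance of the \KBf rule \tsfs{collapse} (it restricts the rewrite step $t \to u$ to rules $\ell \to r$ with $t \prencompasses \ell$). Hence the right-hand inclusions gathered in the proof of \lemref{KB rewrite steps} apply unchanged, and since $\xlr[\ER]{*}$ is closed under contexts and substitutions, every pair added to $\EE' \cup \RR'$ is an $\ER$-conversion, giving $\EE' \cup \RR' \subseteq {\xlr[\ER]{*}}$.

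For inclusion~(2) I would note that only \tsfs{orient}, \tsfs{delete}, and \tsfs{simplify} can delete an equation, and that each of them satisfies $\RR \subseteq \RR'$. An \tsfs{orient} step turns $s \approx t$ into $s \to t$ or $t \to s$, so the lost pair lands in $\RR'$ or $\RR'^{-1}$; \tsfs{delete} removes only $s \approx s$, covered by ${=}$; and \tsfs{simplify} rewrites $s \approx t$ either to $u \approx t$ with $s \to_{\RR'} u$, placing the lost pair in $\xrightarrow[\RR']{} \cdot \mathrel{\EE'}$, or to $s \approx u$ with $t \to_{\RR'} u$, placing it in $\mathrel{\EE'} \cdot \xleftarrow[\RR']{}$. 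These exhaust $\EE \setminus \EE'$.

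For inclusion~(3) only \tsfs{compose} and \tsfs{collapse}$_\smprencompasses$ discard a rule. In \tsfs{compose} the discarded rule is $s \to t$; its replacement $s \to u$ lies in $\RR'$ and the side condition rewrites the right-hand side $t$ by a rule of the retained part, so $s \to_{\RR'} u$ and $t \to_{\RR'} u$, whence $(s,t) \in {\mathrel{\RR'} \cdot \xleftarrow[\RR']{}}$. In \tsfs{collapse}$_\smprencompasses$ the discarded rule $t \to s$ produces the equation $u \approx s \in \EE'$ via a step $t \xrightarrow{\smprencompasses}_\RR u$ that, by the side condition, uses a rule $\ell \to r$ of the retained part with $t \prencompasses \ell$; therefore $t \xrightarrow[\RR']{\smprencompasses} u$ and $(t,s) \in {\xrightarrow[\RR']{\smprencompasses} \cdot \mathrel{\EE'}}$. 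I expect the only genuinely delicate point to be this last case: the $\smprencompasses$-restricted arrow demanded by the statement is furnished exactly by the encompassment side condition of the modified collapse rule. This is the ingredient missing from \KBf's unrestricted \tsfs{collapse}, and---as \exaref{collapse encompassment} illustrates---its absence is precisely what invalidates correctness for infinite runs.
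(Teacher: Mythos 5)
Your proof is correct. The paper states this lemma without a textual proof (only the Isabelle formalization is referenced), but your rule-by-rule case analysis---reusing the \KBf inclusions for part~(1), treating \tsfs{orient}/\tsfs{delete}/\tsfs{simplify} for part~(2) and \tsfs{compose}/\tsfs{collapse}$_{\smprencompasses}$ for part~(3)---is exactly the argument carried out in the formalization and mirrors the paper's own proof of the finite-run counterpart (\lemref{KB rewrite steps}), including the key observation that the encompassment side condition of the modified collapse rule is precisely what furnishes the $\smprencompasses$-restricted step required in part~(3).
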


\noindent
Together these properties reveal that inference steps do not change
the conversion relation.

\begin{cor}%
\label{cor:1}
If $(\EE,\RR) \vdi^{*} (\EE',\RR')$ then the relations
$\xlr[\ER]{*}$ and $\xlr[\EE' \scup \RR']{*}$ coincide.
\hfill
\isaforlink{Abstract_Completion}{lem:KB_conversion}
\end{cor}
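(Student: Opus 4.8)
The plan is to reduce the claim to a single inference step and then lift it along $\vdi^{*}$ by a routine induction on the number of steps, using that ``having the same conversion relation'' is reflexive and transitive. Thus it suffices to prove that one step $(\EE,\RR) \vdi (\EE',\RR')$ leaves the conversion relation unchanged, \ie, that ${\xlr[\ER]{*}} = {\xlr[\EE' \scup \RR']{*}}$, and everything of substance will come from \lemref{inclusions infinite}.

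For the inclusion ${\xlr[\EE' \scup \RR']{*}} \subseteq {\xlr[\ER]{*}}$ I would invoke the first inclusion of \lemref{inclusions infinite}, namely $\EE' \cup \RR' \subseteq {\xlr[\ER]{*}}$ at the level of the generating pairs. Since $\xlr[\ER]{*}$ is an equivalence relation that is closed under contexts and substitutions, it contains all rewrite steps induced by $\EE' \cup \RR'$ and hence their reflexive--transitive--symmetric closure, which is exactly ${\xlr[\EE' \scup \RR']{*}}$. For the reverse inclusion ${\xlr[\ER]{*}} \subseteq {\xlr[\EE' \scup \RR']{*}}$ it again suffices, by the same closure argument, to show that every generating pair of $\EE \cup \RR$ is $(\EE' \scup \RR')$-convertible. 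I would split $\EE \cup \RR$ into the pairs retained in $\EE' \cup \RR'$, which are convertible in a single step, and the removed pairs in $\EE \setminus \EE'$ and $\RR \setminus \RR'$. For the former, membership in $\EE' \cup \RR'$ is immediate; for the latter, \lemref[E]{inclusions infinite} and \lemref[R]{inclusions infinite} express each removed equation or rule as a short composition of $\RR'$- and $\EE'$-steps (some possibly reversed), which is by definition a $(\EE' \scup \RR')$-conversion. Combining the two inclusions gives the single-step equality, and the induction then yields the corollary.

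The only delicate point is the lifting from the root-level generating pairs to full conversions: the inclusions of \lemref{inclusions infinite} concern the pairs themselves, so one must appeal to closure of the conversion relations under contexts and substitutions---and keep track of the symmetric orientation of equations---before taking the equivalence closure. I expect this to be entirely routine, so there is no real obstacle here; the corollary is essentially a bookkeeping consequence of \lemref{inclusions infinite}, exactly paralleling how \corref{KB equational theory} follows from \lemref{KB rewrite steps} in the finite case.
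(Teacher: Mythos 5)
Your proposal is correct and matches the paper's approach: the paper leaves this corollary without explicit proof, remarking only that the three inclusions of \lemref{inclusions infinite} ``together reveal that inference steps do not change the conversion relation,'' which is exactly your argument---inclusion (1) for one direction, inclusions (2) and (3) for the removed pairs in the other direction, lifted via closure under contexts, substitutions, and equivalence closure, then induction on the number of steps. The parallel you draw to how \corref{KB equational theory} follows from \lemref{KB rewrite steps} is precisely the intended reading.
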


Below, we consider an infinite run
$\Gamma\colon (\EE_0,\RR_0) \,\vdi\, (\EE_1,\RR_1) \,\vdi\,
(\EE_2,\RR_2) \,\vdi\, \cdots$ such that $\EEw = \varnothing$.
First we show that all rewrite rules are compatible with the reduction
order $>$.

\begin{lem}%
\label{lem:Rw terminating}
The inclusions $\RRw \subseteq \RRi$
$\subseteq {>}$ hold.
\hfill
\isaforlink{Abstract_Completion}{lem:R_per_subset_R_inf}
\isaforlink{Abstract_Completion}{lem:run_R_less}
\end{lem}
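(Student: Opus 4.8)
The plan is to prove the two inclusions separately. The first, $\RRw \subseteq \RRi$, is immediate from the definitions: if $\ell \to r \in \RRw$ then $\ell \to r \in \bigcap_{j \geqslant i} \RR_j$ for some $i$, so in particular $\ell \to r \in \RR_i \subseteq \RRi$.

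For the second inclusion $\RRi \subseteq {>}$, I would show by induction on $i$ that $\RR_i \subseteq {>}$; since every rule in $\RRi$ already occurs in some $\RR_i$, this suffices despite the run being infinite. The base case $\RR_0 = \varnothing \subseteq {>}$ is trivial. For the inductive step I consider the single inference $(\EE_i,\RR_i) \vdi (\EE_{i+1},\RR_{i+1})$ and argue exactly as in the proof of \lemref{KB termination}. The rules \tsfs{deduce}, \tsfs{delete}, and \tsfs{simplify} leave the rule component unchanged, so $\RR_{i+1} = \RR_i \subseteq {>}$. The modified rule \tsfs{collapse}$_{\smprencompasses}$ of \defref{KBi} only deletes a rule (moving its collapsed form into $\EE$), whence $\RR_{i+1} \subseteq \RR_i \subseteq {>}$. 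An \tsfs{orient} step adds a rule whose side condition ($s > t$ or $t > s$) directly places it in $>$. Finally, for \tsfs{compose} we have $\RR_i = \RR'' \uplus \{ s \to t \}$ and $\RR_{i+1} = \RR'' \cup \{ s \to u \}$ with $t \to_{\RR_i} u$; from $\RR_i \subseteq {>}$ we obtain $s > t$, and since $>$ is a reduction order $t > u$ follows from $t \to_{\RR_i} u$, so transitivity yields $s > u$ and hence $\RR_{i+1} \subseteq {>}$.

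I do not expect a genuine obstacle here. The only point requiring care is that, in contrast to \lemref{KB termination}, the run is infinite and employs $\vdi$ rather than $\vdf$. This is harmless: membership in $\RRi$ is always witnessed at a finite stage, and the sole inference rule of \KBi differing from \KBf, namely \tsfs{collapse}$_{\smprencompasses}$, still merely removes rules from the rule component. Thus the compatibility argument of \lemref{KB termination} transfers verbatim, and the lemma follows.
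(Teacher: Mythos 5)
Your proposal is correct and matches the intended argument: the paper gives no separate textual proof of this lemma (deferring to the formalization), precisely because it follows by the same straightforward induction as \lemref{KB termination}, which is what you carry out, noting correctly that \tsfs{collapse}$_{\smprencompasses}$ only removes rules and that membership in $\RRi$ is witnessed at a finite stage. No gaps.
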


Next, we verify that every equality in $\EE_i$ can be turned into a
valley in $\RRi$. Note that in contrast to the proof order
approach~\cite{BDH86} and to the correctness proof for finite runs given
in \secref{finite runs} we reason separately about equations and rules.
This more local rationale simplifies the analysis as we can use different
well-founded induction arguments for the two cases, rather than
synthesizing an order that covers both.

\begin{lem}%
\label{lem:E to R}
The inclusion
$\EEi \subseteq {\join_\RRi}$ holds.
\hfill\isaforlink{Abstract_Completion}{lem:E_i_subset_join_R_inf}
\end{lem}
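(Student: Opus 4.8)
The plan is to show that every equation occurring in the run can be joined in $\RRi$ by following what happens to it until it is eliminated. Since $\EEw = \varnothing$, no equation is persistent, so every equation that ever appears is at some later stage consumed by \tsfs{orient}, \tsfs{delete}, or \tsfs{simplify}. The structural inclusion \lemref[E]{inclusions infinite} records precisely the shape of such a removal step, and this is exactly the information an inductive argument needs.

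Concretely, I would prove by well-founded induction on the multiset $\{s,t\}$ with respect to $\MUL$ (well-founded because the reduction order $>$ is) that $s \join_{\RRi} t$ for every $(s,t) \in \EEi$. Given such a pair, it lies in some $\EE_i$ by definition of $\EEi$, and because it is not persistent there is a least index $j > i$ with $(s,t) \in \EE_{j-1} \setminus \EE_j$; that is, the step $(\EE_{j-1},\RR_{j-1}) \vdi (\EE_j,\RR_j)$ removes it. Applying \lemref[E]{inclusions infinite} to this step yields five cases. If $(s,t) \in \RR_j$ or $(t,s) \in \RR_j$ then, as $\RR_j \subseteq \RRi$, one of $s \to_{\RRi} t$ or $t \to_{\RRi} s$ holds, so $s \join_{\RRi} t$. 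If $s = t$ the claim is immediate. The two remaining cases are the \tsfs{simplify}-like ones, where there is a term $u$ with (say) $s \to_{\RR_j} u$ and $(u,t) \in \EE_j$, or symmetrically $t \to_{\RR_j} u$ and $(s,u) \in \EE_j$.

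For these simplify cases I would combine the single $\RRi$-step with the induction hypothesis. This is legitimate precisely because the measure strictly decreases: by \lemref{Rw terminating} we have $\RR_j \subseteq \RRi \subseteq {>}$, hence $s > u$ and therefore $\{s,t\} \MUL \{u,t\}$. Since $(u,t) \in \EE_j \subseteq \EEi$, the induction hypothesis gives $u \join_{\RRi} t$, and prepending $s \to_{\RRi} u$ yields $s \join_{\RRi} t$; the symmetric case is handled identically by appending the step to a valley for $(s,u)$.

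The only genuinely delicate points are extracting from $\EEw = \varnothing$ the stage at which the given equation disappears, and confirming that the measure strictly decreases in the simplify cases, for which the compatibility $\RRi \subseteq {>}$ of \lemref{Rw terminating} is essential. Everything else is a routine case distinction driven by \lemref[E]{inclusions infinite}; in particular, no reasoning about critical pairs or confluence enters here, since this lemma is purely about turning equations into $\RRi$-valleys and is kept deliberately separate from the analysis of peaks.
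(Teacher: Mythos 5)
Your proposal is correct and follows essentially the same route as the paper's proof: the identical well-founded induction on the multiset $\{s,t\}$ with respect to $\MUL$, the identical case analysis supplied by \lemref[E]{inclusions infinite}, and the identical appeal to $\RR_j \subseteq \RRi \subseteq {>}$ (\lemref{Rw terminating}) to justify the measure decrease in the two simplify-like cases. The only cosmetic difference is that you single out a least removal index $j$ and spell out the symmetric case, where the paper is terser; mathematically the arguments coincide.
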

\begin{proof}
Let $s \approx t \in \EE_i$ for some $i \geqslant 0$. By induction
on $\{ s, t \}$ with respect to $>_\mul$ we show $s \join_\RRi t$. Because
$\EEw = \varnothing$, $s \approx t \in \EE_{j-1} \setminus \EE_j$
for some $j > i$. Following \lemref[E]{inclusions infinite},
we distinguish three cases.
\begin{itemize}
\item
If $s \approx t \in \RR_j \cup \RR_j^{-1} \cup {=}$ then the
claim trivially holds.
\smallskip
\item
If $s \to_{\RR_j} u$ and $u \approx t \in \EE_j$ for some term $u$ then
$\{ s, t \} >_\mul \{ u, t \}$ and thus
$u \join_\RRi t$ by the induction hypothesis. Hence also
$s \join_\RRi t$.
\smallskip
\item
Similarly, if $s \approx u \in \EE_j$ and $u \FromB{\RR_j} t$ for some
term $u$ then $\{ s, t \} >_\mul \{ s, u \}$ and we obtain
$s \join_\RRi t$ as in the preceding case.
\qedhere
\end{itemize}
\end{proof}

\begin{cor}%
\label{cor:2}
The inclusion ${\xrightarrow[\EEi]{}} \subseteq {\xlr[\RRi]{*}}$ holds.
\hfill\isaforlink{Abstract_Completion}{lem:rstep_E_i_subset}
\end{cor}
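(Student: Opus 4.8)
The plan is to obtain this as an immediate consequence of \lemref{E to R}. First I would unfold what a single $\EEi$-step means: if $a \xrightarrow[\EEi]{} b$, then there are an equation $s \approx t \in \EEi$, a context $C$, and a substitution $\sigma$ such that $a = C[s\sigma]$ and $b = C[t\sigma]$, where, since equations are symmetric, the pair may be used in either orientation.

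By \lemref{E to R} we have $s \join_\RRi t$, and joinability is contained in conversion, so $s \xlr[\RRi]{*} t$. Because $\to_\RRi$ is closed under contexts and substitutions, so is its induced conversion $\xlr[\RRi]{*}$; applying this closure to $s \xlr[\RRi]{*} t$ yields
\[
a = C[s\sigma] \xlr[\RRi]{*} C[t\sigma] = b .
\]
This establishes the desired inclusion ${\xrightarrow[\EEi]{}} \subseteq {\xlr[\RRi]{*}}$.

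There is no real obstacle here: the result is essentially a corollary of \lemref{E to R} combined with the standard fact that rewrite relations, and hence their conversions, are stable under contexts and substitutions. The only minor point to check is that an $\EEi$-step may apply an equation from right to left as well; this causes no difficulty, since both joinability and conversion are symmetric, so $t \join_\RRi s$ gives the same conversion $t \xlr[\RRi]{*} s$.
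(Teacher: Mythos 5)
Your proof is correct and matches the paper's intent: the paper states this as an immediate corollary of \lemref{E to R}, relying on exactly the argument you spell out (joinability implies conversion, and conversions over $\RRi$ are closed under contexts and substitutions). Your extra remark about right-to-left application of equations is harmless but not even needed, since the rewrite relation $\xrightarrow[\EEi]{}$ as defined in the paper only uses equations left to right.
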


In order to show confluence of $\RRw$ we use
source decreasingness as defined in \secref{preliminaries}, employing
the following extension of the reduction order $>$.

\begin{defi}%
\label{def:succ}
We define ${\succ} =
{(\mathrel{({>} \cup {\prencompasses})} / \mathrel{\encompasses})}^+$.
\end{defi}

According to \lemref{proper encompassment extension1}, $\succ$ is a
well-founded order.
The next lemma allows us to transform every non-persistent rule
$\ell \to r$ into an $\RRw$-conversion below $\ell$.

\begin{lem}%
\label{lem:Ri to Rw}
The inclusion
${\xrightarrow[\RRi]{s}} \subseteq {\xlr[\RRw]{\smash{\ddg\,s}}^*}$
holds for all terms $s$.
\hfill\isaforlink{Abstract_Completion}{lem:slab_R_inf_subset}
\end{lem}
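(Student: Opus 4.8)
The plan is to prove the inclusion by well-founded induction, the underlying order being the $\succ$ of \defref{succ}, which is well-founded by \lemref{proper encompassment extension1}. More precisely, I would induct with respect to the lexicographic combination of two components: the \emph{primary} component compares the source $s$ of the step by $\succ$, and the \emph{secondary} component compares the applied rule $\ell \to r$ by comparing left-hand sides with $\prencompasses$ and, for literally equal left-hand sides, right-hand sides with the reduction order $>$ (both well-founded, so the product is well-founded). The reason this combination is the right one will become visible in the compose/collapse cases below. Throughout I use that $\RRi \subseteq {>} \subseteq {\succ}$, where the first inclusion is \lemref{Rw terminating} and the second holds because $\encompasses$ is reflexive; hence every $\RRi$-step takes $s$ to a $\succ$-smaller term.

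Fix a step $s \to_{\RRi} t$ using $\ell \to r$ at position $p$ with substitution $\sigma$, so that $s|_p = \ell\sigma$ and $t = s[r\sigma]_p$. If $\ell \to r \in \RRw$ is \emph{persistent}, then $s \to_{\RRw} t$ is a single $\RRw$-step whose source is $s$ itself, which is admissible for the label $\ddg\,s$, and we are done. Otherwise $\ell \to r \in \RRi \setminus \RRw$, so this rule is removed at some inference step $(\EE_k,\RR_k) \vdi (\EE_{k+1},\RR_{k+1})$, and \lemref[R]{inclusions infinite} tells us that the removal is by composition or by collapse.

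In the \emph{compose} case there is a rule $\ell \to r' \in \RR_{k+1} \subseteq \RRi$ with $r \to_{\RRi} r'$. Lifting both facts into the context at $p$ gives $s \to_{\RRi} s[r'\sigma]_p \from_{\RRi} t$, where the forward step uses $\ell \to r'$ (same source $s$, but a strictly $>$-smaller right-hand side) and the backward step has source $t$ with $s \succ t$. In the \emph{collapse} case there is a rule $g \to d \in \RR_{k+1} \subseteq \RRi$ with $\ell \prencompasses g$, together with an equation $\ell' \approx r \in \EE_{k+1} \subseteq \EEi$ arising from $\ell \to_{\RRi} \ell'$ via $g \to d$. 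In context this yields $s \to_{\RRi} s[\ell'\sigma]_p$ (same source $s$, but a strictly $\prencompasses$-smaller left-hand side $g$). To connect $s[\ell'\sigma]_p$ to $t = s[r\sigma]_p$ I invoke \lemref{E to R} to replace the equation by a valley $\ell' \join_{\RRi} r$; lifting this valley into the context keeps every intermediate term $>$-below either $s[\ell'\sigma]_p$ or $t$, hence $\succ$-below $s$.

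In every subcase the original step is thus rewritten into finitely many $\RRi$-steps, each strictly smaller in the lexicographic measure: the valley steps strictly lower the source below $s$ (primary component), while the single same-source step uses a strictly smaller rule (compose lowers the right-hand side, collapse lowers the left-hand side under $\prencompasses$, the secondary component). Applying the induction hypothesis to each of these steps and concatenating the resulting conversions yields an $\RRw$-conversion from $s$ to $t$ all of whose labels are $\ddg\,s$. I expect the main obstacle to be exactly the same-source compose and collapse steps: since the source does not strictly decrease there, one cannot close the induction on $\succ$-on-sources alone and must instead track the simultaneous improvement of the rule — which is precisely why $\succ$ must blend the reduction order $>$ with proper encompassment $\prencompasses$. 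A secondary technical point is verifying that lifting single steps and joinability through the context $s[\,\cdot\,]_p$ and the substitution $\sigma$ preserves both $\RRi$/$\RRw$-membership and the source bound $\ddg\,s$.
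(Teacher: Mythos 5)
Your proof is correct and rests on the same pillars as the paper's: the case split via \lemref[R]{inclusions infinite} into the compose and collapse disjuncts, \lemref{E to R} to turn the collapsed rule's equation into an $\RRi$-valley, closure under contexts and substitutions, and a well-founded induction that terminates because either the source of a step drops or the applied rule gets simpler. Where you differ is the induction measure and the level at which the induction runs. The paper inducts \emph{only} on the applied rule $(\ell,r)$, ordered by $\succ_\lex$ with \emph{both} components compared by the order $\succ$ of \defref{succ}, proves the rule-level claim $\ell \xlr[\RRw]{\smash{\ddg\,\ell}}^* r$, and lifts to arbitrary steps once at the very end. To close that induction, the paper must argue that every rule $\ell'' \to r''$ employed inside the valley $u \join_\RRi r$ is itself $\succ_\lex$-smaller than $(\ell,r)$, and this is precisely where the mixed property ${>} \cdot {\encompasses} \subseteq {\succ}$ built into \defref{succ} is needed: $\ell > v \encompasses \ell''$ yields $\ell \succ \ell''$. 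Your (source, rule) measure makes this inspection unnecessary: valley steps are absorbed by the strictly smaller source regardless of which rules they use, so your secondary order on rules can be the elementary lexicographic combination of $\prencompasses$ on left-hand sides and $>$ on right-hand sides instead of $\succ_\lex$; in fact, since your primary component only ever decreases along $>$, plain $>$ would suffice there in place of $\succ$. What the paper's arrangement buys in exchange is that the context/substitution lifting is factored into a single final appeal to closure properties, whereas in your proof it is interleaved with every application of the induction hypothesis. Both arrangements are sound; your measure is well-founded as a lexicographic product of well-founded orders, and your decreases (compose lowers the right-hand side with the same left-hand side, collapse lowers the left-hand side under $\prencompasses$, valleys lower the source) are exactly what is needed to close it.
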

\begin{proof}
Let $s \xrightarrow{s}_\RRi t$ by employing
the rewrite rule $\ell \to r$. We prove
$s \xlr{\smash{\ddg\,s}}_\RRw^* t$ by
induction on $(\ell,r)$ with respect to $\succ_\lex$.
If $\ell \to r \in \RRw$ then the claim trivially holds.
Otherwise, $\ell \to r \in \RR_{i-1} \setminus \RR_i$
for some $i > 0$. Using
\lemref[R]{inclusions infinite},
we distinguish two cases.
\begin{itemize}
\item
Suppose $\ell \xrightarrow{\smprencompasses}_{\ell' \to r'} u$ and
$u \approx r \in \EE_i$ for some term $u$ and rule
$\ell' \to r' \in \RR_i$.
We obtain
$\ell \xrightarrow{\smash{\smprencompasses}}_{\ell' \to r'}^{} u
\join_\RRi r$ from \lemref{E to R}.
We have $\ell \prencompasses \ell'$ and both $\ell > u$ and
$\ell > r$. It follows that all rewrite rules $\ell'' \to r''$ employed in
$\ell \xrightarrow{\smash{\smprencompasses}} u \join_\RRi r$
satisfy $(\ell,r) \succ_\lex (\ell'',r'')$. Moreover, all steps in
$\ell \join_\RRi r$ are labeled with a term $\leqslant \ell$.
Hence we obtain $\ell \xlr{\smash{\ddg\,\ell}}_\RRw^* r$
from the induction hypothesis.
\smallskip
\item
Suppose $\ell \to u \in \RR_i$ and $u \from_{\ell' \to r'} r$
for some term $u$ and rewrite rule $\ell' \to r' \in \RR_i$.
We have $(\ell,r) \succ_\lex (\ell,u)$ and
$(\ell,r) \succ_\lex (\ell',r')$
because $r > u$ and $\ell > r \encompasses \ell' > r'$.
Moreover, both steps are labeled with a term $\leqslant \ell$ and thus
we obtain $\ell \xlr{\smash{\ddg\,\ell}}_\RRw^* r$
from the induction hypothesis.
\end{itemize}
So in both cases we have $\ell \xlr{\smash{\ddg\,\ell}}_\RRw^* r$
and thus also $s \xlr{\smash{\ddg\,s}}_\RRw^* t$.
\end{proof}

\begin{cor}%
\label{cor:3}
The relations $\xlr[\RRi]{*}$ and $\xlr[\RRw]{*}$ coincide.
\hfill\isaforlink{Abstract_Completion}{lem:rstep_R_inf_conv_iff}
\end{cor}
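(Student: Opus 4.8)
The plan is to prove the two inclusions ${\xlr[\RRw]{*}} \subseteq {\xlr[\RRi]{*}}$ and ${\xlr[\RRi]{*}} \subseteq {\xlr[\RRw]{*}}$ separately, with essentially all of the work already carried out in the preceding lemmas. The first inclusion is immediate: by \lemref{Rw terminating} we have $\RRw \subseteq \RRi$, so every $\RRw$-step is an $\RRi$-step and hence every $\RRw$-conversion is an $\RRi$-conversion.

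For the converse inclusion, the real content is supplied by \lemref{Ri to Rw}. First I would observe that an arbitrary rewrite step $s \to_\RRi t$ can be source-labeled, writing it as $s \xrightarrow{s}_\RRi t$ (by definition of the source labeling, the label coincides with the source). Applying \lemref{Ri to Rw} to this particular term $s$ yields a conversion $s \xlr[\RRw]{\smash{\ddg\,s}}^* t$ in which every step lies in $\RRw$. Discarding the labels therefore produces $s \xlr[\RRw]{*} t$, so that ${\to_\RRi} \subseteq {\xlr[\RRw]{*}}$.

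From here the argument is routine. Since the inverse of an $\RRw$-conversion is again an $\RRw$-conversion, a single symmetric $\RRi$-step (being either an $\RRi$-step or its inverse) is also contained in $\xlr[\RRw]{*}$, giving ${\xlr[\RRi]{}} \subseteq {\xlr[\RRw]{*}}$. Taking the reflexive transitive closure of both sides and using that $\xlr[\RRw]{*}$ is reflexive and transitive yields ${\xlr[\RRi]{*}} \subseteq {\xlr[\RRw]{*}}$, and together with the first inclusion this establishes that the two conversion relations coincide.

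I do not expect a genuine obstacle; the corollary is a direct consequence of \lemref{Ri to Rw}. The only point that warrants care is the passage from the labeled inclusion of that lemma to its unlabeled form: one must note that each label appearing in $\xlr[\RRw]{\smash{\ddg\,s}}^*$ indexes an actual $\RRw$-step, so that forgetting the labels genuinely yields an $\RRw$-conversion rather than merely a conversion in some larger system.
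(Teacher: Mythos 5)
Your proposal is correct and matches the paper's intent exactly: the paper states this corollary without proof as a direct consequence of \lemref{Ri to Rw} (using source labeling to cover arbitrary $\RRi$-steps and then forgetting labels) together with the inclusion $\RRw \subseteq \RRi$ from \lemref{Rw terminating}, which is precisely your argument. The care you take about passing from the labeled to the unlabeled inclusion is the right (and only) subtlety.
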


We arrive at the main theorem of this section.
Note that Bachmair's correctness proof~\cite{B91}
uses induction with respect to a well-founded order on conversions to
directly show that any conversion of $\EEi \cup \RRi$ can be transformed
into a joining sequence of $\RRw$.
In contrast, we prove confluence via source decreasingness.
This allows us to concentrate on \emph{local} peaks.

\begin{thm}%
\label{thm:KBi correctness}
If $\Gamma$ is fair then $\RRw$ is a complete presentation of $\EE_0$.
\hfill\isaforlink{Completion_Fairness}{lem:infinite_fair_run}
\end{thm}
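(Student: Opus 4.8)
The plan is to split the claim into the two requirements of a complete presentation: that the equational theory $\xlr[\EE_0]{*}$ coincides with $\xlr[\RRw]{*}$, and that $\RRw$ is complete, \ie, terminating and confluent. The equational theory and termination follow almost directly from the corollaries already at hand, so the genuine work lies in confluence of $\RRw$, which I would obtain via source decreasingness.

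For the equational theory I would first show $\xlr[\EE_0]{*} = \xlr[\EEi \scup \RRi]{*}$. The left-to-right inclusion is immediate since $\EE_0 = \EE_0 \cup \RR_0 \subseteq \EEi \scup \RRi$. For the converse, every pair in $\EEi \scup \RRi$ lies in some $\EE_n \cup \RR_n$, and applying \corref{1} to the finite prefix $(\EE_0,\RR_0) \vdi^* (\EE_n,\RR_n)$ exhibits it as an $\EE_0$-conversion. Next, \corref{2} gives $\fromto_{\EEi} \subseteq \xlr[\RRi]{*}$, which together with $\RRi \subseteq \EEi \scup \RRi$ collapses the theory to $\xlr[\RRi]{*}$, and finally \corref{3} identifies $\xlr[\RRi]{*}$ with $\xlr[\RRw]{*}$. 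Termination of $\RRw$ is then immediate from \lemref{Rw terminating}: as $\RRw \subseteq {>}$ and $>$ is a reduction order, $\to_{\RRw}$ is well-founded.

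The core is confluence of $\RRw$, which I would establish by showing that $\RRw$ is source decreasing with respect to the reduction order $>$ taken as the well-founded order on terms, so that each step is labeled by its source. Consider a local peak $t \FromB{\RRw} s \to_{\RRw} u$; both steps carry label $s$, and since $\RRw \subseteq {>}$ we have $s > t$ and $s > u$. By \lemref{pcp} applied to $\RRw$ there is a term $v$ with $t \mathrel{\tds^2} u$, so it suffices to convert a single link $x \mathrel{\tds} y$, where $s \to_{\RRw}^+ x$ and $s \to_{\RRw}^+ y$ and hence $x,y < s$, into an $\RRw$-conversion all of whose step sources are strictly below $s$. By the definition of $\tds$ we have either $x \join_{\RRw} y$ or $x \fromto_{\PCP(\RRw)} y$. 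In the joinability case the sources of all steps are reducts of $x$ or of $y$, hence $\leqslant x$ or $\leqslant y$ and thus $< s$. In the critical-pair case fairness (\defref{KBi fairness}) yields $x \join_{\RRw} y$ or $x \fromto_{\EEi} y$; the former is handled as before, while for the latter \lemref{E to R}, combined with closure of joinability under contexts and substitutions, gives $x \join_{\RRi} y$, and \lemref{Ri to Rw} rewrites each $\RRi$-step into an $\RRw$-conversion whose step sources stay $\leqslant$ the original source, which is again $< s$. Collecting these pieces produces a conversion $t \xlr[\RRw]{*} u$ with all labels below $s$, so $\RRw$ is source decreasing; \lemref{sd => pd} and \lemref{pd => cr} then deliver confluence, and with termination $\RRw$ is complete.

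The main obstacle I anticipate is the bookkeeping of source labels in the confluence step: one must guarantee that after routing a prime critical pair through $\EEi$ and back into $\RRw$ via \lemref{Ri to Rw}, every resulting $\RRw$-step is labeled by a term \emph{strictly} smaller than the peak source $s$. The strictness is supplied solely by the first peak step, giving $s > t$ and $s > u$, so that the $\leqslant$-bounds coming from \lemref{Ri to Rw} and from taking reducts compose into a strict decrease; keeping this chain of inequalities intact across the $\tds^2$-decomposition and the fairness case split is the delicate part of the argument.
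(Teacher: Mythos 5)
Your proposal is correct and follows essentially the same route as the paper's proof: termination from \lemref{Rw terminating}, confluence via source decreasingness using \lemref{pcp}, fairness, \lemref{E to R}, and \lemref{Ri to Rw} (then \lemref{sd => pd} and \lemref{pd => cr}), and the equational theory via Corollaries~\ref{cor:1}, \ref{cor:2}, and~\ref{cor:3}. Your handling of the label bookkeeping (reduct sources bounded by $x,y < s$, and \lemref{Ri to Rw} preserving the bound) is exactly the argument the paper makes.
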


\begin{proof}
We have $\EEw = \varnothing$ because $\Gamma$ is non-failing.
The TRS $\RRw$ is terminating by \lemref{Rw terminating}.
We show source decreasingness of labeled $\RRw$ reduction with
respect to the reduction order $>$. So let
$t \FRom{\RRw}{}{\smash{s}} s \xrightarrow{\smash{s}}_\RRw u$.
From \lemref{pcp} we obtain $t \mathrel{\tds^2} u$.
Let $v \mathrel{\tds} w$ appear in this sequence
(so $t = v$ or $w = u$). We have $s > v$, $s > w$, and
\(
(v,w) \in {\join_\RRw} \cup \fromto_{\EEi}
\)
by the definition of $\tds$ and fairness of $\Gamma$.
\smallskip
\begin{itemize}
\item
If $v \join_\RRw w$ then
$v \xrightarrow{\smash{\ddg\,v}}_\RRw^* \cdot
\FRom{\RRw}{*}{\smash{\ddg\,w}} w$
and thus $v \xlr{\smash{\vee s}}_\RRw^* w$.
\smallskip
\item
If $v \fromto_{\EEi} w$ then
$v \fromto_{\EE_i} w$ for some $i \geqslant 0$ then $v \join_\RRi w$
by \lemref{E to R}. We obtain $v \xlr{\smash{\vee s}}_\RRi^* w$ as
in the previous case and thus $v \xlr{\smash{\vee s}}_\RRw^* w$ by
\lemref{Ri to Rw}.
\end{itemize}
\smallskip
Hence $t \xlr{\smash{\vee s}}_\RRw^* u$. Confluence of $\RRw$ now follows
from Lemmata~\ref{lem:sd => pd} and~\ref{lem:pd => cr}.
It remains to show
${\fromto_{\EE_0}^*} = {\fromto_\RRw^*}$. Using \corref{1} we obtain
${\to_{\EE_i \scup \RR_i}} \subseteq {\fromto_{\EE_0}^*}$
by a straightforward induction on $i$. This in turn yields
${\fromto_{\EE_0}^*} = {\fromto_{\ERi}^*}$.
From \corref{2} we infer
${\fromto_{\ERi}^*} = {\fromto_\RRi^*}$ and we conclude by an
appeal to \corref{3}.
\end{proof}

\begin{exa}
Consider the ES $\EE$ and the KBO $>$ from
Example~\ref{example: braid monoid}.
Let $\PP_n$ for $n \geqslant 1 $ denote the TRS
$\{ \m{ab}^{i+1}\m{ab} \to \m{babba}^{i} \mid
1 \leqslant i \leqslant n \}$.
One possible infinite completion run is the following:
\begin{alignat*}{4}
(\EE,\varnothing) ~ & \vdir{orient} &~&
(\varnothing,\{ \m{aba} \to \m{bab} \})
&~& \vdir{deduce} &~&
(\{ \m{abbab} \approx \m{babba} \},\{ \m{aba} \to \m{bab} \}) \\
& \vdir{orient} &&
(\varnothing,\{ \m{aba} \to \m{bab} \} \cup \PP_1 )
&& \vdir{deduce} &&
(\{ \m{abbbab} \approx \m{babbaa} \},\{ \m{aba} \to \m{bab} \}
\cup \PP_1) \\
& \vdir{orient} &&
(\varnothing,\{ \m{aba} \to \m{bab} \} \cup \PP_2)
&& \vdir{\phantom{deduce}} && \cdots
\end{alignat*}
If this run is continued in a fair way we subsequently construct the TRSs
$\PP_n$ and can in the limit obtain the result
$\RRw = \{ \m{aba} \to \m{bab} \} \cup
\{ \m{ab}^{i+1}\m{ab} \to \m{babba}^i \mid i \geqslant 1 \}$, which is
complete according to \thmref{KBi correctness}.
\end{exa}

This section recapitulates our results on infinite runs~\cite{HMSW17}.
Our correctness proof (\thmref{KBi correctness}) differs substantially
from earlier proofs in the literature. Due to a less monolithic structure
we consider this proof to be more formalization friendly:
Instead of lexicographically combining several orders into a
single proof reduction relation, we use source decreasingness
together with different orders as necessary to prove auxiliary results.
In particular, our approach naturally supports prime critical pairs.

\section{Ordered Completion}%
\label{sec:ordered completion}

Completion may fail to construct a complete system if unorientable
equations are encountered.
For example, the ES $\EE$ consisting of the two equations
$\m{0} + x \approx x$ and $x + y \approx y + x$ admits no complete
presentation. (We will prove it in \secref{completeness}.)
This can happen even if a finite complete
system exists, as illustrated by the following example.

\begin{exa}%
\label{exa:okb1}
Consider the ES $\EE$~\cite{BD94} consisting of the
three equations
\begin{xalignat*}{3}
\m{1} \cdot (-x + x) &\approx \m{0} &
\m{1} \cdot (x + -x) &\approx x + -x &
-x + x &\approx y + - y
\end{xalignat*}
Any run of standard Knuth-Bendix completion will fail on this input
system; the first two equations may be oriented from left to right if
a suitable order is employed but no further steps are possible.
However,
the TRS $\RR$ consisting of the rules
\begin{xalignat*}{3}
\m{1} \cdot \m{0} &\to \m{0} &
x + -x &\to \m{0} &
-x + x &\to \m{0}
\end{xalignat*}
constitutes a canonical presentation of $\EE$.
\end{exa}

Ordered completion was developed to remedy this shortcoming.
In contrast to completion as presented in the preceding section it
never fails, though the resulting system is in general only ground
complete.

For an ES $\EE$, an ordered rewrite step is a rewrite
step using a rule from $\EE^>$, which is the infinite set of rewrite rules
$\ell\sigma \to r\sigma$ such that $\ell \approx r \in \EE^\pm$ and
$\ell\sigma > r\sigma$ for some substitution $\sigma$.

The following inference rules for ordered completion are due to Bachmair,
Dershowitz, and Plaisted~\cite{BDP89}.
In order to simplify the notation, we abbreviate $\REwgt$ to
$\SS$, and use the following shorthands.
We write $t \xrightarrow{\pe}_{\EE^>} u$ if there exist
an equation $\ell \approx r \in \EE^\pm$,
a context $C$,
and a substitution $\sigma$
such that
$t = C[\ell\sigma]$, $u = C[r\sigma]$,
$\ell\sigma > r\sigma$, and $t \prencompasses \ell$.
The union of $\xrightarrow{}_\RR$ and
$\smash{\xrightarrow{\pe}_{\EE^>}}$ is denoted by
$\smash{\xrightarrow{\pe_1}_\SS}$ and we write
$\smash{\xrightarrow{\pe_2}_\SS}$ for the union of
$\smash{\xrightarrow{\pe}_\RR}$ and
$\smash{\xrightarrow{\pe}_{\EE^>}}$.

\begin{defi}[Ordered Completion
\isaforlink{Ordered_Completion}{ind:oKBi}]%
\label{def:KBo}
The inference system \KBo of ordered completion
operates on pairs $(\EE,\RR)$ of equations $\EE$ and rules $\RR$
over a common signature $\FF$. It consists of the
following inference rules:
\begin{center}
\begin{tabular}{@{}lcl@{\qquad}lcl@{}}
\tsfs{deduce} &
$\displaystyle \frac
{\EE,\RR}
{\EE \cup \{ s \approx t \},\RR}$
& if
$s \xleftarrow[\RR \scup \EE^{\pm}]{} \cdot
\xrightarrow[\RR \scup \EE^{\pm}]{} t$
&
\tsfs{compose} &
$\displaystyle \frac
{\EE,\RR \uplus \{ s \to t \}}
{\EE,\RR \cup \{ s \to u \}}$
& if $t \xrightarrow{}_{\SS} u$
\\ & \\
&
$\displaystyle \frac
{\EE \uplus \{ s \approx t \},\RR}
{\EE,\RR \cup \{ s \to t \}}$
& if $s > t$
&
&
$\displaystyle \frac
{\EE \uplus \{ s \approx t \},\RR}
{\EE \cup \{ u \approx t \},\RR}$
&
if $s \xrightarrow{\pe_1}_{\SS} u$
\\[-.5ex]
\tsfs{orient} & & &
\tsfs{simplify}
\\[-.5ex]
&
$\displaystyle \frac
{\EE \uplus \{ s \approx t \},\RR}
{\EE,\RR \cup \{ t \to s \}}$
& if $t > s$
&
&
$\displaystyle \frac
{\EE \uplus \{ s \approx t \},\RR}
{\EE \cup \{ s \approx u \},\RR}$
&
if $t \xrightarrow{\pe_1}_{\SS} u$
\\ & \\
\tsfs{delete} &
$\displaystyle \frac
{\EE \uplus \{ s \approx s \},\RR}
{\EE,\RR}$ &
&
\tsfs{collapse} &
$\displaystyle \frac
{\EE,\RR \uplus \{ t \to s \}}
{\EE \cup \{ u \approx s \},\RR}$
&
if $t \xrightarrow{\pe_2}_{\SS} u$
\end{tabular}
\end{center}
\medskip
\end{defi}

The \tsfs{deduce} rule may be applied to any peak, though in practice it
is typically limited to the addition of extended critical pairs
(which are defined in \defref{extended overlap} below).
We write $(\EE,\RR) \vdo (\EE',\RR')$ if $(\EE',\RR')$ can be reached from
$(\EE,\RR)$ by employing one of the inference rules of
\defref{KBo}. We start by stating the equivalents of
\lemref{inclusions infinite} and \corref{1} for ordered completion.

\begin{lem}%
\label{lem:okb inclusions infinite}
If $(\EE,\RR) \vdo (\EE',\RR')$ then the following inclusions
hold:
\begin{enumerate}
\item
$
\EE' \cup \RR'
\:\subseteq\:
{\xlr[\ER]{*}}
$
\hfill\isaforlink{Ordered_Completion}{lem:oKBi_subset}

\smallskip
\item%
\label{lem:okb inclusions infinite:E}
$
\EE \setminus \EE' \:\subseteq\:
{(\xrightarrow[\SS']{\pe_1} \cdot \mathrel{\EE'}^\pm)}^\pm_{\vphantom{i}}
\:\cup\: \RR'^\pm_{\vphantom{i}}
\:\cup\: {=}
$
\hfill\isaforlink{Ordered_Completion}{lem:oKBi_E_supset}

\smallskip
\item%
\label{lem:okb inclusions infinite:R}
$
\RR \setminus \RR' \:\subseteq\:
(\xrightarrow[\SS']{\pe_2} \cdot \mathrel{\EE'}) \:\cup\:
(\mathrel{\RR'} \cdot \xleftarrow[\SS']{})
$
\hfill\isaforlink{Ordered_Completion}{lem:oKBi_R_supset}
\end{enumerate}
\end{lem}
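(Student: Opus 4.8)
The plan is to establish all three inclusions at once by a single case analysis on the inference rule of \KBo witnessing $(\EE,\RR) \vdo (\EE',\RR')$, in the same style as the proofs of \lemref{KB rewrite steps} and \lemref{inclusions infinite}. For each of the six rules I would read off how $\EE$ and $\RR$ are modified and then verify the relevant containment, noting that most cases are vacuous: only \tsfs{orient}, \tsfs{delete} and \tsfs{simplify} delete an equation, so only they are relevant to the second inclusion, while only \tsfs{compose} and \tsfs{collapse} delete a rule and hence matter for the third.

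The first inclusion is routine. Every member of $\EE' \cup \RR'$ is either inherited from $\EE \cup \RR$, and is then already a one-step $\ER$-conversion, or it is freshly produced as the two endpoints of a peak (\tsfs{deduce}), of an orientation (\tsfs{orient}), or of a single $\SS$-step hung onto an existing equation or rule (\tsfs{compose}, \tsfs{simplify}, \tsfs{collapse}). The only thing to observe is that a single $\RR$-step and a single ordered $\EE^>$-step are each one $\ER$-conversion step (an ordered rewrite step is an instance of an equation of $\EE$), so chaining the at most two steps involved yields the required $\xlr[\ER]{*}$ conversion; for a \tsfs{collapse} step discarding $t \to s$ and adding $u \approx s$ with $t \xrightarrow{\pe_2}_\SS u$, for instance, one reads $u \xlr[\ER]{} t \xlr[\ER]{} s$.

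For the remaining two inclusions I would track the deleted equation or rule and produce the stated witness: \tsfs{orient} moves the equation into $\RR'^\pm$, \tsfs{delete} lands it in ${=}$, \tsfs{simplify} rewrites one of its sides to a term paired in $\EE'$ (its two symmetric variants being absorbed by the outer $\pm$ and by $\EE'^\pm$), \tsfs{compose} factors the deleted rule through the new rule as $\RR' \cdot \xleftarrow[\SS']{}$, and \tsfs{collapse} turns it into $\xrightarrow[\SS']{\pe_2} \cdot \EE'$.

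The main obstacle, and the only genuinely delicate point, is that every such witness is required in the \emph{updated} system $\SS'$, whereas the side conditions of the rules refer to the current $\SS = \EE^> \cup \RR$, which still contains the rule or equation that is being deleted. I therefore have to argue that the witnessing rewrite does not in fact use the deleted item, so that it persists in $\SS'$. This is exactly the purpose of the proper-encompassment guards $\pe_1$ and $\pe_2$: since $t \prencompasses t$ never holds, the \tsfs{collapse} step discarding $t \to s$ cannot rewrite $t$ by $t \to s$ itself, and a \tsfs{simplify} step cannot rewrite the term $s$ with the oriented equation $s \approx t$, so both steps already lie in $\SS'$. The delicate remaining case is \tsfs{compose}, whose condition $t \xrightarrow{}_\SS u$ carries no such guard; here, exactly as in the infinite-run counterpart \lemref{inclusions infinite}, I expect to rely on compatibility of the rules with the reduction order $>$ to exclude a rule from rewriting an instance of its own left-hand side occurring in its right-hand side, which again places the reducing step in $\SS'$ and finishes the proof.
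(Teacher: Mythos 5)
The paper prints no proof of this lemma (it is deferred to the formalization), but the intended argument is exactly the rule-by-rule case analysis you propose, in the style of the proof of \lemref{KB rewrite steps}; your treatment of the first inclusion and your identification of which rules matter for the second and third are fine. The problem lies in your ``genuinely delicate point'' and the way you resolve it. The inference rules of \KBo are written with disjoint unions: \tsfs{compose}, for instance, has premise $\EE,\RR \uplus \{s \to t\}$ and side condition $t \to_{\SS} u$, where $\SS = \EE^> \cup \RR$ is built from the schematic variables $\EE$ and $\RR$ of the rule. Hence, whenever a rule or equation is genuinely deleted, the side condition refers to the remainder sets, which do \emph{not} contain the deleted item. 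Since in every case these remainder sets are contained in the primed sets ($\EE' = \EE$ and $\RR \subseteq \RR'$ for \tsfs{compose}, $\EE \subseteq \EE'$ and $\RR' = \RR$ for \tsfs{collapse} and \tsfs{simplify}), monotonicity of (ordered) rewriting in the underlying sets gives $\SS \subseteq \SS'$, and every witnessing step is automatically a step in $\SS'$. Neither the encompassment guards nor any compatibility with $>$ is needed anywhere in this lemma.

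Under your reading---where the $\SS$ of the side condition still contains the item being deleted---the lemma is simply false as stated, so no local patch can rescue the \tsfs{compose} case. Take $\EE = \varnothing$ and $\RR = \{\m{f}(x) \to \m{g}(\m{f}(x))\}$ and compose this rule with itself: $\m{g}(\m{f}(x)) \to_{\SS} \m{g}(\m{g}(\m{f}(x)))$, yielding $\RR' = \{\m{f}(x) \to \m{g}(\m{g}(\m{f}(x)))\}$ and $\SS' = \RR'$. Then $\m{f}(x) \to \m{g}(\m{f}(x)) \in \RR \setminus \RR'$, but ${\xrightarrow[\SS']{\pe_2}} \cdot \mathrel{\EE'}$ is empty (as $\EE' = \varnothing$) and the only $\SS'$-reduct of $\m{g}(\m{f}(x))$ is $\m{g}(\m{g}(\m{g}(\m{f}(x))))$, not $\m{g}(\m{g}(\m{f}(x)))$, so the third inclusion fails. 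Your proposed repair---compatibility of the rules with the reduction order---would indeed exclude this rule, but it is not available: the lemma concerns a single arbitrary $\vdo$-step, not a step inside a run, so $\RR \subseteq {>}$ is not a hypothesis (and the same objection applies to your reading of \lemref{inclusions infinite}, which has no such hypothesis either). There is a smaller gap of the same kind in your \tsfs{simplify} case: the guard only forbids using the deleted equation $s \approx t$ with $s$ as left-hand side; using it \emph{reversed}, rewriting an instance of $t$ inside $s$, is consistent with the guard whenever $s \prencompasses t$, and excluding it requires a separate argument (an oriented instance $t\rho > s\rho$ with $t\rho$ a subterm of $s$ yields an infinite descending chain, contradicting well-foundedness of $>$), not the guard. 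In short: adopt the correct reading of the side conditions, under which the whole difficulty evaporates, or else you are proving a different, weaker lemma.
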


\begin{cor}%
\label{cor:oKB conv}
If $(\EE,\RR) \vdo (\EE',\RR')$ then the relations
$\xlr[\EE\cup\RR]{*}$ and $\xlr[\EE'\cup\RR']{*}$ coincide.
\hfill
\isaforlink{Ordered_Completion}{lem:oKBi_conversion}
\end{cor}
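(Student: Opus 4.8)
The goal is the set equality $\xlr[\ER]{*} = \xlr[\EE'\cup\RR']{*}$, so the plan is to prove the two inclusions separately, each by the same routine pattern: show that every defining pair of the source system is connected by a conversion over the target system, then promote this to arbitrary steps using closure of rewrite relations under contexts and substitutions, and finish with transitivity of the conversion relation. This mirrors the proof of \corref{1} for $\vdi$, now using \lemref{okb inclusions infinite} in place of \lemref{inclusions infinite}.

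For the inclusion $\xlr[\EE'\cup\RR']{*} \subseteq \xlr[\ER]{*}$ I would invoke the first inclusion of \lemref{okb inclusions infinite}, namely $\EE' \cup \RR' \subseteq {\xlr[\ER]{*}}$. Reading each pair $(\ell,r) \in \EE' \cup \RR'$ as an $\ER$-conversion between $\ell$ and $r$, closure under contexts and substitutions turns a single $\xlr[\EE'\cup\RR']{}$ step between arbitrary terms into an $\ER$-conversion, and transitivity lifts this to the reflexive-transitive-symmetric closures.

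For the reverse inclusion $\xlr[\ER]{*} \subseteq \xlr[\EE'\cup\RR']{*}$ I would analyze a single $\EE \cup \RR$ step according to whether its pair persists. If the pair lies in $\EE' \cup \RR'$ the step is already a step over the target system. Otherwise it lies in $\EE \setminus \EE'$ or $\RR \setminus \RR'$, and \lemref[E]{okb inclusions infinite} respectively \lemref[R]{okb inclusions infinite} express such a pair through relations built only from $\RR'$, $\EE'$, and $\SS'$. The crucial observation is that every $\SS'$-step is itself an $\EE' \cup \RR'$-conversion: a $\smash{\xrightarrow{}_{\RR'}}$ step is an $\RR'$ step, and a $\smash{\xrightarrow{\pe}_{\EE'^{>}}}$ step is an instance of an $\EE'^{\pm}$ equation, hence an $\EE'$ step (the $\pe_1$- and $\pe_2$-labels only add an encompassment restriction and do not leave $\RR' \cup \EE'^{>}$). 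Therefore each composite relation on the right-hand sides of the two inclusions, including the symmetric closures $(\cdot)^{\pm}$, is contained in $\xlr[\EE'\cup\RR']{*}$; the non-persistent pair is thus an $\EE' \cup \RR'$-conversion, and closure under contexts and substitutions yields the original step as such a conversion.

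The argument is essentially book-keeping, so the main obstacle I anticipate is organizational rather than conceptual: one must verify that each of the several relations appearing on the right-hand sides of \lemref[E]{okb inclusions infinite} and \lemref[R]{okb inclusions infinite}---in particular the ordered rewrite steps using $\EE'^{>}$ and the symmetrizations $(\cdot)^{\pm}$---really is subsumed by $\xlr[\EE'\cup\RR']{*}$. Once this has been checked, both inclusions follow by the standard step-by-step composition, and the two conversion relations coincide.
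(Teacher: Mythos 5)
Your proposal is correct and takes the same route the paper intends: the result is stated as an immediate corollary of \lemref{okb inclusions infinite}, with inclusion~(1) giving the direction ${\xlr[\EE'\cup\RR']{*}} \subseteq {\xlr[\ER]{*}}$ and inclusions~(2) and~(3) giving the reverse, exactly as you organize it. Your explicit check that every $\SS'$-step (whether an $\RR'$-step or an orientable instance of an $\EE'^\pm$-equation, with or without the $\pe_1$/$\pe_2$ encompassment restrictions) is an $\EE' \cup \RR'$-conversion, followed by lifting via closure under contexts and substitutions and transitivity, is precisely the bookkeeping the paper leaves implicit.
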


We illustrate \KBo by means of an example.

\begin{exa}%
\label{exa:okb2}
Consider the ES $\EE$ consisting of the following three equations:
\begin{xalignat*}{3}
\m{f}(x) &\approx \m{f}(\m{a}) &
\m{f}(\m{b}) &\approx \m{b} &
\m{g}(\m{f}(\m{b}),x) &\approx \m{g}(x,\m{b})
\end{xalignat*}
By taking the Knuth-Bendix order $\KBO$ with precedence $\m{f} > \m{b}$ and
where all function symbols are assigned weight $1$,
the following \KBo inference sequence can be obtained:
\begin{alignat*}{2}
(\EE,\varnothing) ~ & \mathrel{{\vdor{orient}}^{+}} &~&
(\{ \m{f}(x) \approx \m{f}(\m{a}) \},
 \{ \m{f}(\m{b}) \to \m{b}, \m{g}(\m{f}(\m{b}),x) \to \m{g}(x,\m{b}) \}) \\
 & \vdor{deduce} &&
(\{ \m{f}(x) \approx \m{f}(\m{a}), \m{f}(\m{b}) \approx \m{f}(\m{a}) \},
 \{ \m{f}(\m{b}) \to \m{b}, \m{g}(\m{f}(\m{b}),x) \to \m{g}(x,\m{b}) \}) \\
 & \vdor{simplify} &&
(\{ \m{f}(x) \approx \m{f}(\m{a}), \m{b} \approx \m{f}(\m{a}) \},
 \{ \m{f}(\m{b}) \to \m{b}, \m{g}(\m{f}(\m{b}),x) \to \m{g}(x,\m{b}) \}) \\
 & \vdor{orient} &&
(\{ \m{f}(x) \approx \m{f}(\m{a}) \},
 \{ \m{f}(\m{b}) \to \m{b}, \m{g}(\m{f}(\m{b}),x) \to \m{g}(x,\m{b}),
    \m{f}(\m{a}) \to \m{b} \}) \\
 & \vdor{simplify} &&
(\{ \m{f}(x) \approx \m{b} \},
 \{ \m{f}(\m{b}) \to \m{b}, \m{g}(\m{f}(\m{b}),x) \to \m{g}(x,\m{b}),
    \m{f}(\m{a}) \to \m{b} \}) \\
 & \mathrel{{\vdor{collapse}}^+} &&
(\{ \m{f}(x) \approx \m{b}, \m{b} \approx \m{b},
    \m{g}(\m{b},x) \approx \m{g}(x,\m{b}) \}, \varnothing) \\
 & \vdor{orient} &&
(\{ \m{b} \approx \m{b}, \m{g}(\m{b},x) \approx \m{g}(x,\m{b}) \},
 \{ \m{f}(x) \to \m{b} \}) \\
 & \vdor{delete} &&
(\{ \m{g}(\m{b},x) \approx \m{g}(x,\m{b}) \}, \{ \m{f}(x) \to \m{b} \}) \\
 & \vdor{deduce} &&
(\{ \m{g}(\m{b},x) \approx \m{g}(x,\m{b}), \m{b} \approx \m{b} \},
 \{ \m{f}(x) \to \m{b} \})
 \end{alignat*}
This sequence can be extended to an (infinite) run by repeating the last
two steps. Then we have $\RRw = \{ \m{f}(x) \to \m{b} \}$ and
$\EEw = \{ \m{g}(\m{b},x) \approx \m{g}(x,\m{b}) \}$.
\end{exa}

Below, we consider an arbitrary run $\Gamma\colon
(\EE_0,\RR_0) \,\vdo\, (\EE_1,\RR_1) \,\vdo\, (\EE_2,\RR_2) \,\vdo\,
\cdots$\,.
In general $\EEi \subseteq {\join_\RRi}$ does not hold, as
\exaref{okb2} illustrates. So unlike in the preceding section we now
omit the condition $\EEw = \varnothing$. However, this comes at the price
of weaker properties of the resulting system, as the remainder of this
section shows.

\begin{lem}%
\label{lem:oKB Rw terminating}
The inclusions $\RRw \subseteq \RRi \subseteq {>}$
and $\EEw \subseteq \EEi$ hold.
\hfill
\isaforlink{Ordered_Completion}{lem:Rw_subset_Rinf}
\isaforlink{Ordered_Completion}{lem:Rinf_less}
\isaforlink{Ordered_Completion}{lem:Ew_subset_Einf}
\end{lem}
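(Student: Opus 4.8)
The lemma bundles three inclusions, which I would treat separately. The two inclusions $\RRw \subseteq \RRi$ and $\EEw \subseteq \EEi$ are immediate from the definitions of the persistent and the cumulative systems: for every index $i$ we have $\bigcap_{j \geqslant i} \RR_j \subseteq \RR_i \subseteq \RRi$, and taking the union over all $i$ gives $\RRw \subseteq \RRi$; the argument for $\EEw \subseteq \EEi$ is verbatim the same with equations in place of rules.

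The substance of the lemma is the inclusion $\RRi \subseteq {>}$, which I would prove exactly as in \lemref{KB termination}, namely by showing that $\RR \subseteq {>}$ is an invariant of the inference relation $\vdo$. Since $\RR_0 = \varnothing \subseteq {>}$ and $\RRi = \bigcup_i \RR_i$, it suffices to verify that a single step $(\EE,\RR) \vdo (\EE',\RR')$ with $\RR \subseteq {>}$ satisfies $\RR' \subseteq {>}$. Inspecting \defref{KBo}, the rules \tsfs{deduce}, \tsfs{simplify}, and \tsfs{delete} leave the rule component unchanged, while \tsfs{collapse} only deletes a rule (adding an equation to $\EE$), so in all these cases $\RR' \subseteq \RR \subseteq {>}$. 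An \tsfs{orient} step adds a rule $\ell \to r$ together with the side condition $\ell > r$, whence $\RR' \subseteq {>}$.

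The only case needing a short computation is \tsfs{compose}, which replaces a rule $s \to t$ by $s \to u$ under the condition $t \to_\SS u$. The induction hypothesis gives $s > t$, so it remains to show $s > u$, for which by transitivity of $>$ it is enough to prove $t > u$. The step $t \to_\SS u$ employs either a rule of $\RR$ or a rule of $\EE^>$: in the former case $t > u$ follows from $\RR \subseteq {>}$ and closure of the reduction order $>$ under contexts and substitutions; in the latter case $t > u$ holds because ${\to_{\EE^>}} \subseteq {>}$ by the definition of $\EE^>$ (every contributed instance $\ell\sigma \to r\sigma$ satisfies $\ell\sigma > r\sigma$) together with the closure properties of $>$. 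Hence $t > u$, so $s > u$ and $\RR' \subseteq {>}$.

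I do not anticipate a genuine obstacle, as this is a routine termination-preservation invariant. The single subtlety is the meaning of $\SS$ in the side conditions: I read it as the \emph{current} ordered system $\EE^> \cup \RR$, which is consistent with the primed notation $\SS'$ appearing in \lemref{okb inclusions infinite}. Under this reading the containment ${\to_\SS} \subseteq {>}$ exploited in the \tsfs{compose} case follows from the induction hypothesis $\RR \subseteq {>}$ together with the definition of $\EE^>$, without any circularity.
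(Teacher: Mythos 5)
Your proposal is correct: the two set-theoretic inclusions are indeed immediate from the definitions, and your invariant argument for $\RRi \subseteq {>}$ -- including the reading of $\SS$ in the side conditions as the \emph{current} $\EE^> \cup \RR$, and the treatment of \tsfs{compose} via ${\to_\RR} \subseteq {>}$ and ${\to_{\EE^>}} \subseteq {>}$ -- is exactly the intended argument. The paper states this lemma without a textual proof (deferring to the formalization), but your route coincides with the paper's own proof of the finite-run analogue (\lemref{KB termination}), extended in the obvious way to ordered rewrite steps, so it is essentially the same approach.
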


We use the relation $\xrs{M}$ from \defref{mset labeled rewriting}
to show that any equation step below a term set $M$ eventually turns into
a conversion over $\REiw$ that is still below $M$.
Note that just like in \secref{infinite runs} we
avoid the use of a synthesized termination argument by
handling equations and rules separately.

\begin{lem}%
\label{lem:oKB Ei to EwRi}
The inclusion
${\xrightarrow[\EEi]{S}} \:\subseteq\: {\xlr[\REiw]{\smash{S}}^*}$
holds for all sets $S$ of terms.
\hfill
\isaforlink{Ordered_Completion}{lem:Ei_subset_EwRi}
\end{lem}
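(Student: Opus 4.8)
The plan is to induct on the way the witnessing equation was removed from the run, using \lemref[E]{okb inclusions infinite}. Consider a labeled step $s \xrightarrow[\EEi]{S} t$. By symmetry of conversions it is witnessed by an equation $u \approx v$ occurring in some $\EE_j$, together with a context $C$ and substitution $\sigma$ such that $s = C[u\sigma]$ and $t = C[v\sigma]$; the labeling provides terms of $S$ bounding $s$ and $t$ from above. Since every rule of $\RRi$ is contained in $>$ (\lemref{oKB Rw terminating}) and every ordered equation step decreases $>$ by definition, every term produced in the construction below will lie below $s$ or $t$ and hence remain bounded by $S$, so \emph{label preservation will be automatic} and I will not comment on it further. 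If $u \approx v$ is persistent it already belongs to $\EEw$ and the single step is itself the desired $\REiw$-conversion; otherwise $u \approx v \in \EE_k \setminus \EE_{k+1}$ for some $k \geqslant j$, and I apply \lemref[E]{okb inclusions infinite} to the step $(\EE_k,\RR_k) \vdo (\EE_{k+1},\RR_{k+1})$.

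The outer induction runs on the multiset $\{s,t\}$ with respect to $>_\mul$. The cases $u = v$ and $u \approx v \in \RR_{k+1}^{\pm}$ are immediate, giving a trivial step or an $\RRi$-step. In the \tsfs{simplify} case (up to swapping the two sides of the equation) we have $u \xrightarrow{\pe_1}_{\SS_{k+1}} w$ with $w \approx v \in \EE_{k+1}$, and lifting through $C$ and $\sigma$ yields $s \xrightarrow{\pe_1}_{\SS_{k+1}} s' \to_{\EEi} t$ with $s' = C[w\sigma]$ and $s > s'$. The trailing equation step $s' \to_{\EEi} t$ is converted by the outer induction hypothesis, as $\{s,t\} \MUL \{s',t\}$. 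For the leading step $s \xrightarrow{\pe_1}_{\SS_{k+1}} s'$, a rule application is an $\RRi$-step and we are done, while an ordered-equation application with left-hand side $\ell$ gives $s \prencompasses \ell$ and is passed to an auxiliary induction.

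The auxiliary (inner) induction converts any ordered step $s \xrightarrow{\pe}_{\EEi^>} b$ out of the \emph{fixed} source $s$, by well-founded induction on its left-hand side $\ell$ with respect to the well-founded order $\prencompasses$. Persistent equations, trivial equations, and rules are handled as before. In the remaining \tsfs{simplify} case, \lemref[E]{okb inclusions infinite} and lifting produce $s \xrightarrow{\pe_1}_{\SS_{k'+1}} s'' \to_{\EEi} b$ with $s > s''$ and $s > b$. The remainder $s'' \to_{\EEi} b$ satisfies $\{s,t\} \MUL \{s'',b\}$, since both $s''$ and $b$ lie below $s \in \{s,t\}$, and is therefore discharged by the \emph{outer} induction hypothesis. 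If the leading step $s \xrightarrow{\pe_1}_{\SS_{k'+1}} s''$ is itself an ordered equation step with left-hand side $\ell''$, then the side condition of that step gives $\ell \prencompasses \ell''$, and since $s \encompasses \ell$ and $\encompasses \cdot \prencompasses \subseteq \prencompasses$ the lifted step still has source $s$ with $s \prencompasses \ell''$; thus it is discharged by the \emph{inner} induction hypothesis, a rule application again being an $\RRi$-step. Concatenating all the pieces yields the required $\REiw$-conversion.

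The hard part will be exactly the ordered-equation subcase of \tsfs{simplify}: there the source of the step does \emph{not} decrease, so the term multiset alone cannot drive the induction, and one must instead exploit that the equation used for simplification has a properly smaller left-hand side. Isolating this into the inner induction on $\prencompasses$—while routing the genuinely smaller ``remainder'' steps back to the outer multiset induction—is what makes the argument terminate. In effect this splits the two ingredients ($>$ and $\prencompasses$) of the enriched order $\succ$ of \defref{succ}, which played the analogous role in \lemref{Ri to Rw}, across the two levels of a nested induction rather than combining them into a single relation.
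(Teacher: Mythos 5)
Your nested-induction strategy is a genuine departure from the paper's proof (which uses a single induction on the multiset of the \emph{equation's} sides with respect to $\succ_\mul$), and the outer induction together with the inner sub-case you do treat is sound. But the inner induction has a gap: it only considers the possibility that the equation $\ell \approx r$ justifying the ordered step was removed by simplifying its \emph{left-hand} side. \lemref[E]{okb inclusions infinite} carries an outer symmetric closure, ${(\xrightarrow[\SS']{\pe_1} \cdot \mathrel{\EE'}^\pm)}^\pm$, so the $\pe_1$-step may equally well apply to $r$: there may be a term $w$ with $r \xrightarrow{\pe_1}_{\SS_{k'+1}} w$ and $w \approx \ell \in \EE_{k'+1}^\pm$. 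In your outer induction the analogous symmetry is harmless (``up to swapping the two sides''), because the outer statement is symmetric in $s$ and $t$; the inner statement is not, since the ordered step $s \xrightarrow{\pe}_{\EEi^>} b$ with $s = C[\ell\sigma]$, $b = C[r\sigma]$ is directional and the encompassment condition $s \prencompasses \ell$ sits on the source side. In this missed sub-case, lifting yields a $\pe_1$-step $b \xrightarrow{\pe_1}_{\SS_{k'+1}} C[w\sigma]$ emanating from $b$, not from $s$, together with a residual $\EEi^\pm$-step between $s$ and $C[w\sigma]$. The step between $b$ and $C[w\sigma]$ is unproblematic: both terms lie below $s$, so your outer hypothesis applies. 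The residual step is not: it is again an ordered step out of $s$ --- we have $\ell\sigma > r\sigma > w\sigma$ and still $s \prencompasses \ell$ --- using an equation whose left-hand side is exactly $\ell$, so your inner measure does not decrease; and its endpoint multiset $\{ s, C[w\sigma] \}$ need not be below $\{ s, t \}$ in $\MUL$, because nothing relates $C[w\sigma]$ to $t$, so your outer hypothesis does not apply either. Under the measures as you stated them, the argument loops on this case.

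The decrease that is actually available here is $r > w$, a decrease in $>$ at the level of the equation's right-hand side (equivalently, of the step's target, since $C[w\sigma] < b$), and your measures track neither. The repair is to give the inner induction a second, lexicographically subordinate component: induct on the pair $(\ell, r)$ with respect to the lexicographic combination of $\prencompasses$ and $>$; then simplification of the left side decreases the first component, and simplification of the right side keeps $\ell$ and decreases the second. Note, however, that this reunites the two ingredients you intended to keep separate --- which is essentially what the paper does from the start: its induction runs on the unordered pair $\{ \ell, r \}$ of the equation's sides under $\succ_\mul$, where both sub-cases are $\succ_\mul$-decreases (a simplified side gets $>$-smaller, and the ordered-step case passes from the instance to the equation itself via ${\prencompasses} \subseteq {\succ}$ and ${\encompasses} \cdot {>} \cdot {\encompasses} \subseteq {\succ}$), so the directionality problem never arises.
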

\begin{proof}
Let $t \approx u \in \EEi$. We prove
\[
{\xrightarrow[t\,\approx\,u]{M}} \:\subseteq\:
{\xlr[\REiw]{\smash{M}}^*}
\]
by induction on $\{ t, u \}$ with respect to the well-founded
order $\succ_\mul$. If $t \approx u \in \EEw^\pm$ then the claim follows
trivially. Otherwise,
$t \approx u \in {(\EE_{i-1} \setminus \EE_i)}^\pm$ for some $i > 0$.
Using \lemref[E]{okb inclusions infinite}, we distinguish two
subcases.
\smallskip
\begin{itemize}
\item
Suppose $t \approx u \in {(\xrightarrow{\pe_1}_{\SS_i} \cdot
\mathrel{\EE_i^\pm})}^\pm_{\vphantom{i}}$.
There exist a term $t'$ and an equation
$v' \approx u' \in \EE_i^\pm$ such
that $\{ t, u \} = \{ t', u' \}$ and $t' \xrightarrow{\pe_1}_{\SS_i} v'$.
It is sufficient to show
\[
t' \xlr[\REiw]{\smash{\{ t',\,u' \}}}^* v' \quad\text{and}\quad
v' \xlr[\REiw]{\smash{\{ t',\,u' \}}}^* u'
\]
The second conversion follows from $t' > v'$ and
the induction hypothesis for
$v' \approx u' \in \EE_i^\pm$, which is applicable as
$\{ t, u \} = \{ t', u' \} \succ_\mul \{ v', u' \}$.
The first conversion is obtained as follows.
Because of $\smash{t' \xrightarrow{\pe_1}_{\SS_i} v'}$,
we have $t' \to_{\RR_i} v'$ or
$\smash{t' \xrightarrow{\pe}_{\EE_i^>} v'}$.
If $t' \to_{\RR_i} v'$ then this step can be labeled with
$\{ t', u' \}$ as $t' > v'$.
Otherwise, there exist an equation $\ell \approx r \in \EE_i^\pm$, a
context $C$, and a substitution $\sigma$ such that
$t' = C[\ell\sigma]$, $v' = C[r\sigma]$, $\ell\sigma > r\sigma$, and
$t' \prencompasses \ell$. We have $t' \succ \ell$ and $t' \succ r$ as
$t' \encompasses \ell\sigma > r\sigma \encompasses r$.
Therefore $\{ t', u' \} \succ_\mul \{ \ell, r \}$ holds, so
\[
\ell \xlr[\REiw]{\smash{\{ \ell,\,r \}}}^* r
\]
follows from the induction hypothesis. Closure under contexts and
substitutions now yields
$t \xlr{\smash{\{ t,\,u \}}}_{\REiw}^* u$.
\smallskip
\item
If $t \approx u \in \RR_i^\pm \cup {=}$ then
$t \xlr[\RRi]{\smash{\{ t,\,u \}}}^= u$.
\end{itemize}
\smallskip
In both cases $t \xlr{\smash{\{ t,\,u \}}}_{\REiw}^* u$
holds. Since $M$ contains upper bounds of $t$ and $u$ with respect to
$\geqslant$, the desired inclusion follows from the closure
under contexts and substitutions of
$\to_{\REiw}$ and $\geqslant$.
\end{proof}

Next, we show that a rewrite step that uses a rule in $\RRi$ and is
below a multiset of terms $M$ eventually turns into a
conversion over persistent rules and equations that is still below $M$.
To this end we write $\curlyvee t$ for the set
$\{ u \in \TT(\FF,\VV) \mid t \succ u \}$.

\begin{lem}%
\label{lem:oKB Ri to EwRw}
The inclusion
${\xrightarrow[\RRi]{M}} \:\subseteq\: {\xlr[\ERw]{\smash{M}}^*}$
holds for all multisets $M$ of terms.
\hfill
\isaforlink{Ordered_Completion}{lem:Ri_subset_ERw}
\end{lem}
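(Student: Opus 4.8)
The plan is to mirror the proof of \lemref{Ri to Rw} from \secref{infinite runs}, adapting it to the richer ordered setting where persistent equations in $\EEw$ survive. Concretely, I would fix a step $s \xrightarrow{M}_\RRi t$ that employs a rule $\ell \to r \in \RRi$ and perform well-founded induction on the pair $(\ell, r)$ with respect to $\succ_\lex$, where $\succ$ is the order of \defref{succ} (well-founded by \lemref{proper encompassment extension1}). Since rewriting and $\geqslant$ are closed under contexts and substitutions, it suffices to establish the \emph{root} version $\ell \xlr[\ERw]{\{\ell, r\}}^* r$, that is, an $\ERw$-conversion between $\ell$ and $r$ all of whose terms are $\leqslant \ell$ (recall $\ell > r$ by \lemref{oKB Rw terminating}); lifting this conversion through the context and substitution underlying the given step and then coarsening the label yields $s \xlr[\ERw]{M}^* t$. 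If $\ell \to r \in \RRw$ the single step $\ell \to_\RRw r$ already works, so I assume $\ell \to r \in \RR_{i-1} \setminus \RR_i$ and invoke the two cases of \lemref[R]{okb inclusions infinite}, using the induction hypothesis (the general statement for $\succ_\lex$-smaller rules) on the auxiliary steps.

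For the \emph{compose} case we have $\ell \to u \in \RR_i$ and $r \xrightarrow{}_{\SS_i} u$. The rule $\ell \to u$ satisfies $(\ell, r) \succ_\lex (\ell, u)$ because $r > u$ (hence $r \succ u$), so the induction hypothesis turns $\ell \to_\RRi u$ into a conversion below $\ell$. The step $r \xrightarrow{}_{\SS_i} u$ takes place strictly below $\ell$ since $r < \ell$: if it is a rule step using $\ell' \to r'$ then $\ell > r \encompasses \ell'$ gives $\ell \succ \ell'$ and the induction hypothesis applies; if it is an ordered equation step I would apply \lemref{oKB Ei to EwRi} with label $\{r\}$ to obtain an $\REiw$-conversion, all of whose terms lie below $r$, and then convert its finitely many $\RRi$-steps (each using a left-hand side $\ell''$ with $\ell > a \encompasses \ell''$, so $\ell \succ \ell''$, where $a$ is the source) by the induction hypothesis. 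Concatenating the reversed conversion for $r \to u$ with the one for $\ell \to u$ gives a conversion below $\ell$.

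The \emph{collapse} case is where the real work lies. Here $\ell \xrightarrow{\pe_2}_{\SS_i} u$ using some left-hand side $\ell'$ with $\ell \prencompasses \ell'$, and the fresh equation is $u \approx r \in \EE_i$. The equation $u \approx r$ is unproblematic: both $u, r < \ell$, so \lemref{oKB Ei to EwRi} with label $\{u, r\}$ produces an $\REiw$-conversion strictly below $\ell$, whose $\RRi$-steps are converted exactly as above. If the collapse step is itself a rule step it is a single $\RRi$-step with $\ell \succ \ell'$, handled by one use of the induction hypothesis. The main obstacle is the remaining subcase, where the collapse step is an \emph{ordered equation} step $\ell = C_0[\ell'\sigma_0] \to C_0[r'\sigma_0] = u$ with $\ell'\sigma_0 > r'\sigma_0$: naively feeding this step into \lemref{oKB Ei to EwRi} yields a conversion that \emph{starts} at $\ell$ and may rewrite $\ell$ at its root by a rule whose left-hand side is merely a \emph{variant} of $\ell$, which does not decrease $\succ_\lex$ and blocks the induction.

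I would resolve this by applying \lemref{oKB Ei to EwRi} not to the whole step but to the equation $\ell' \approx r'$ at the root, obtaining a conversion below $\{\ell', r'\}$ between $\ell'$ and $r'$, and only then instantiating by $\sigma_0$ and placing the result in the context $C_0$. Each term of the lifted conversion has the form $C_0[w\sigma_0]$ with $w \leqslant \ell'$ or $w \leqslant r'$, hence is $\leqslant \ell$; and by strict monotonicity of the reduction order under $\sigma_0$ such a term can equal $\ell = C_0[\ell'\sigma_0]$ only when $w = \ell'$. Now consider an $\RRi$-step of the lifted conversion using left-hand side $\ell''$ with source $a$ (so $a \encompasses \ell''$). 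If $a < \ell$ then $\ell > a \encompasses \ell''$ yields $\ell \succ \ell''$. If $a = \ell$ then the corresponding root-conversion step rewrites $\ell'$, so $\ell' \encompasses \ell''$, and together with $\ell \prencompasses \ell'$ we obtain $\ell \succ \ell''$ from $\prencompasses \subseteq {\succ}$ and ${\succ} \cdot {\encompasses} \subseteq {\succ}$ (which follows from the shape of $\succ$ in \defref{succ} and transitivity of $\encompasses$, as already exploited in \lemref{proper encompassment extension1}). In either case the induction hypothesis converts the step into an $\ERw$-conversion below $\ell$, while the persistent equation steps are already over $\ERw$. Chaining the conversion for $\ell \to u$ with the one for $u \approx r$ completes the root version, and the closure argument of the first paragraph finishes the proof.
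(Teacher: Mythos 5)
Your proposal is correct and takes essentially the same route as the paper: the same well-founded induction on $(\ell,r)$ with respect to $\succ_\lex$, the same reduction to a root conversion below $\ell$ followed by lifting, the same case split via \lemref[R]{okb inclusions infinite}, and---crucially---the same resolution of the collapse case by applying \lemref{oKB Ei to EwRi} to the equation $\ell' \approx r'$ at the root rather than to the whole step $\ell \to u$. The only (immaterial) difference is bookkeeping: the paper first converts the root conversion's $\RRi$-steps via the induction hypothesis (packaging \lemref{oKB Ei to EwRi} and the induction hypothesis into one inclusion valid for all labels $N \subseteq {\curlyvee \ell}$) and then lifts through the context and substitution, whereas you lift first and convert afterwards, which is what forces your extra argument that a lifted term can equal $\ell$ only when the underlying root term is $\ell'$.
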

\begin{proof}
Let $\ell \approx r \in \RRi$. We prove
\[
{\xrightarrow[\ell \to r]{\smash{M}}} \:\subseteq\:
{\xlr[\ERw]{\smash{M}}^*}
\]
by induction on $(\ell,r)$ with respect to the well-founded order
$\succ_\lex$. If $\ell \to r \in \RRw$ then the claim trivially holds.
Otherwise, there is some $i > 0$ such that
$\ell \to r \in \RR_{i-1} \setminus \RR_i$.
From \lemref{oKB Ei to EwRi} and the induction hypothesis the
inclusions
\begin{equation}
\label{eq:RiEi}
{\xrightarrow[\ERi]{N}}
\:\subseteq\: {\xlr[\REiw]{\smash{N}}^*}
\:\subseteq\: {\xlr[\ERw]{\smash{N}}^*}
\end{equation}
are obtained for every set $N \subseteq \curlyvee \ell$.
Using \lemref{okb inclusions infinite}, we
distinguish two cases.
\smallskip
\begin{itemize}
\item
Suppose $\ell \xrightarrow{\pe_2}_{\SS_i} u$ and $u \approx r \in \EE_i$
for some term $u$. There exist an equation
$\ell' \approx r' \in \EEi^\pm \cup \RRi$, a context $C$ and a
substitution $\sigma$ such that $\ell = C[\ell'\sigma]$, $u = C[r'\sigma]$,
$\ell\sigma > r\sigma$, and $\ell \prencompasses \ell'$.
We have $\ell \succ \ell', r'$ as $\ell \prencompasses \ell'$ and
$\ell \encompasses \ell'\sigma > r'\sigma \encompasses r'$
and thus
\[
\ell' \xlr[\ERi]{\smash{\{ \ell' \}}} r'
\]
Since
$\{ \ell', r' \} \subseteq {\curlyvee \ell}$ we obtain
$\ell' \xlr{\smash{\{ \ell',\,r' \}}}_{\ERw}^* r'$
from~\eqref{eq:RiEi}. Therefore,
$\ell \xlr{\smash{\{ \ell \}}}_{\ERw}^{*} u$ follows
from closure under contexts and substitutions and $\ell > u$.
Again from $\ell > u, r$ we obtain
$u \xlr{\smash{\curlyvee \ell}}_{\ERi} r$ and thus
$u \xlr{\smash{\curlyvee \ell}}_{\ERw} r$
follows from~\eqref{eq:RiEi}.
\smallskip
\item
Suppose $\ell \to u \in \RR_i$ and $u \FromB{\SS_i} r$ for some term $u$.
We have $r > u$ and thus $(\ell,r) \succ_\lex (\ell,u)$. Hence we can
apply the induction hypothesis to
$\ell \xrightarrow{\smash{\{ \ell \}}}_{\ell \to u} u$,
yielding $\ell \xlr{\smash{\{ \ell \}}}_{\ERw}^* u$.
From $\ell > r > u$ we obtain
$u \xlr{\smash{\curlyvee \ell}}_{\ERi} r$ and thus
$u \xlr{\smash{\curlyvee\ell}}_{\ERw}^* r$ follows by~\eqref{eq:RiEi}.
\end{itemize}
\smallskip
In both cases $\ell \xlr{\smash{\{ \ell \}}}_{\ERw}^* r$ holds.
Since $\to_{\ERw}$ and $\geqslant$ are closed under contexts and
substitutions, the desired inclusion on steps using $\ell \to r$ follows.
\end{proof}

We can combine the preceding lemmata to obtain an inclusion in
conversions over persistent equations and rules.

\begin{cor}%
\label{cor:REi subset REw}
The inclusion ${\xrightarrow[\ERi]{\smash{M}}} \:\subseteq\:
{\xlr[\ERw]{\smash{M}}^*}$ holds for all multisets of terms $M$.
\hfill
\isaforlink{Ordered_Completion}{lem:ERi_subset_ERw}
\end{cor}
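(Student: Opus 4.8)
The plan is to obtain the inclusion directly by composing the two
preceding lemmas. The left-hand side is a single labeled step
$s \xrightarrow[\ERi]{M} t$, so since $\ERi = \EEi \cup \RRi$ I would
case-split on whether the employed rule or equation lies in $\RRi$ or in
$\EEi$.

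In the $\RRi$ case the claim is immediate: \lemref{oKB Ri to EwRw} already
asserts ${\xrightarrow[\RRi]{M}} \subseteq {\xlr[\ERw]{M}^*}$. In the
$\EEi$ case I would first apply \lemref{oKB Ei to EwRi} to turn the step
into a conversion $s \xlr[\REiw]{M}^* t$ over $\REiw = \EEw \cup \RRi$
whose every step carries the label $M$. (A minor point: that lemma is
phrased for sets whereas $M$ is a multiset, but the labeling condition
depends only on which terms occur in the label and not on their
multiplicities, so it applies to the set underlying $M$.) Each step of
this intermediate conversion is then either an $\EEw$-step---already an
$\ERw$-step since $\EEw \subseteq \ERw$---or an $\RRi$-step labeled $M$,
which I would rewrite by a further application of \lemref{oKB Ri to EwRw};
as $\xlr[\ERw]{M}^*$ is symmetric, the same reasoning covers steps
traversed backwards. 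Concatenating the resulting fragments yields the
desired $s \xlr[\ERw]{M}^* t$.

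I do not expect a genuine obstacle here, since the corollary is in essence
the composite of the two lemmas. The only thing that needs care is the
bookkeeping of labels. The crucial observation making the two-stage
argument go through is that both lemmas emit conversions every step of
which still carries the original label $M$; hence the $\RRi$-steps arising
inside the intermediate $\REiw$-conversion are themselves labeled $M$ and
thus stay within the scope of \lemref{oKB Ri to EwRw}, and the multiset
label never grows as the fragments are glued together.
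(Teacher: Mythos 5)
Your proposal is correct and matches the paper's intent exactly: the paper states this corollary without explicit proof, merely noting that it follows by ``combining the preceding lemmata,'' which is precisely your case split on $\RRi$ versus $\EEi$ followed by post-processing the $\RRi$-steps of the intermediate $\REiw$-conversion via \lemref{oKB Ri to EwRw}. Your attention to the set/multiset mismatch in the label and to backward steps is sound, since the labeling of \defref{mset labeled rewriting} depends only on membership in $M$ and is symmetric in the two endpoints.
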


For instance, in \exaref{okb2} we have
$\smash{\m{f}(x) \xlr{\,M\,}_{\EEi} \m{f}(\m{a})}$
for $M = \{ \m{f}(x), \m{f}(\m{a}) \}$ and the conversion
$\m{f}(x) \fromto \m{b} \fromto \m{f}(\m{a})$ in $\ERw$
clearly satisfies
$\smash{\m{f}(x) \xlr{\smash{\,M\,}}_{\ERw}^{*} \m{f}(\m{a})}$.

The results obtained so far are sufficient to show that ordered completion
can produce a complete system.

\begin{thm}%
\label{thm:complete presentation okb1}
If $\Gamma$ satisfies $\PCP(\RRw) \subseteq
{\join_\RRw \cup \fromto_{\EEi}}$
and $\EEw = \varnothing$ then $\RRw$ is a
complete presentation of $\EE_0$.
\hfill
\isaforlink{Ordered_Completion}{lem:Ew_empty_implies_CR_Rw}
\end{thm}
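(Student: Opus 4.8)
The plan is to adapt the correctness proof for infinite runs (\thmref{KBi correctness}) to the ordered setting. Concretely, I would (i) obtain termination of $\RRw$ from \lemref{oKB Rw terminating} together with the fact that $>$ is a reduction order, (ii) prove confluence by showing that source-labeled $\RRw$-reduction is source decreasing with respect to $>$, and (iii) verify ${\conv_{\EE_0}} = {\conv_\RRw}$. The hypothesis $\EEw = \varnothing$ enters through the identity $\ERw = \RRw$, which specializes \corref{REi subset REw} to ${\xrightarrow[\ERi]{\smash{M}}} \subseteq {\xlr[\RRw]{\smash{M}}^*}$ for every multiset $M$; this is precisely what forces the conversions produced below to live inside the persistent rules.

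For source decreasingness I would begin with a local peak $t \FRom{\RRw}{}{\smash{s}} s \xrightarrow{\smash{s}}_\RRw u$ and apply \lemref{pcp} (legitimate since $\RRw$ is a TRS) to get $t \mathrel{\tds^2} u$, say $t \mathrel{\tds} x \mathrel{\tds} u$. For each occurrence $v \mathrel{\tds} w$, the definition of $\tds$ furnishes $s \to_\RRw^+ v$ and $s \to_\RRw^+ w$ (hence $s > v$ and $s > w$) together with $v \join_\RRw w$ or $v \fromto_{\PCP(\RRw)} w$; the theorem's assumption $\PCP(\RRw) \subseteq {\join_\RRw \cup \fromto_{\EEi}}$ turns the second alternative into $v \fromto_{\EEi} w$, so in all cases $(v,w) \in {\join_\RRw \cup \fromto_{\EEi}}$. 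When $v \join_\RRw w$, the valley $v \xrightarrow{\smash{\ddg\,v}}_\RRw^* \cdot \FRom{\RRw}{*}{\smash{\ddg\,w}} w$ has, by termination, every source bounded by $v$ or $w$ and thus below $s$, so $v \xlr{\smash{\vee s}}_\RRw^* w$. Composing the two halves gives $t \xlr{\smash{\vee s}}_\RRw^* u$, and confluence then follows from Lemmata~\ref{lem:sd => pd} and~\ref{lem:pd => cr}.

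The step I expect to be the crux is the equation case $v \fromto_{\EEi} w$, because the ordered setting offers no analogue of \lemref{E to R} (indeed $\EEi \subseteq {\join_\RRi}$ generally fails, as \exaref{okb2} shows). My intended workaround is to read this as an $\ERi$-step, label it with the multiset $\{ v, w \}$---which dominates both endpoints---and apply the specialized \corref{REi subset REw} to obtain $v \xlr[\RRw]{\smash{\{ v,\,w \}}}^* w$. The delicate bookkeeping is the passage from this multiset labeling to the single-source labeling required by source decreasingness: since $v, w < s$, any step of the conversion carrying the label $\{ v, w \}$ has its source dominated by an element of $\{ v, w \}$ and hence strictly below $s$, so the conversion is indeed a $\vee s$-conversion and yields $v \xlr{\smash{\vee s}}_\RRw^* w$. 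This is exactly the point where $\EEw = \varnothing$ is indispensable, via the collapse $\ERw = \RRw$.

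Finally, for ${\conv_{\EE_0}} = {\conv_\RRw}$ I would first run a routine induction on $i$ using \corref{oKB conv} to get ${\to_{\EE_i \scup \RR_i}} \subseteq {\conv_{\EE_0}}$, and hence ${\conv_{\EE_0}} = {\conv_{\ERi}}$. The inclusion ${\conv_\RRw} \subseteq {\conv_{\ERi}}$ is immediate from $\RRw \subseteq \ERi$, while the converse follows by discarding the labels in the specialized \corref{REi subset REw}, giving ${\to_{\ERi}} \subseteq {\conv_\RRw}$ and therefore ${\conv_{\ERi}} \subseteq {\conv_\RRw}$. Chaining the equalities yields ${\conv_{\EE_0}} = {\conv_\RRw}$, completing the argument that $\RRw$ is a complete presentation of $\EE_0$.
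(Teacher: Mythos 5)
Your proposal is correct and takes essentially the same route as the paper's own proof: termination from \lemref{oKB Rw terminating}, confluence by showing source decreasingness of labeled $\RRw$-reduction via \lemref{pcp} and then handling the two cases with the joinability valley and with \corref{REi subset REw} (where $\EEw = \varnothing$ collapses $\ERw$ to $\RRw$), and finally ${\conv_{\EE_0}} = {\conv_{\RRw}}$ by induction using \corref{oKB conv}. The only difference is that you spell out the multiset-to-source labeling bookkeeping that the paper leaves implicit.
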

\begin{proof}
We prove that $\RRw$ is confluent by showing that labeled $\RRw$ reduction
on arbitrary terms is source decreasing. Consider
$t \FRom{\RRw}{}{\smash{s}} s \xrightarrow{\smash{s}}_{\RRw} u$.
From \lemref{pcp} we obtain $t \mathrel{\tds^2} u$
(where $\RRw$ takes the place of $\RR$ in the definition of $\tds$).
Let $v \mathrel{\tds} w$ appear in this sequence
(so $t = v$ or $w = u$). We have $s > v$, $s > w$, and
$v \join_{\RRw} w$ or $v \fromto_{\EEi} w$
by the definition of $\tds$
and the assumption $\PCP(\RRw) \subseteq {\fromto_{\EEi}}$.
\begin{itemize}
\item
If $v \join_{\RRw} w$ then
$v \xrightarrow{\smash{\ddg\,v}}_{\RRw}^* \cdot
\FRom{\RRw}{*}{\smash{\ddg\,w}} w$
and thus $v \xlr{\smash{\vee s}}_{\RRw}^* w$.
\medskip
\item
If $v \fromto_{\EEi} w$ then
$v \xlr{\smash{\vee s}}_{\RRw}^* w$ by \corref{REi subset REw}.
\end{itemize}
\smallskip
Hence $t \xlr{\smash{\vee s}}_{\RRw}^* u$. Confluence of $\RRw$
follows from Lemmata~\ref{lem:sd => pd} and~\ref{lem:pd => cr}.
Termination of $\RRw$ holds by
\lemref{oKB Rw terminating}. We have ${\conv_{\EE_0}} = {\conv_{\RRw}}$
by an easy induction argument using \corref{oKB conv}, so
$\RRw$ is a complete presentation of $\EE_0$.
\end{proof}

From now on we specialize our results to ground terms.
In the remainder of this section we therefore assume that $>$ is a
ground-total reduction order.
Before continuing with results on ordered completion, we define
extended critical pairs.

\begin{defi}[Extended Overlaps
\isaforlink{Ordered_Rewriting}{def:ooverlap}]%
\label{def:extended overlap}
An \emph{extended overlap} of a given ES $\EE$ is a triple
$\langle \ell_1 \approx r_1, p, \ell_2 \approx r_2 \rangle$
satisfying the following properties:
\begin{itemize}
\item
there are renamings $\pi_1$ and $\pi_2$ such that
$\pi_1 (\ell_1 \approx r_1), \pi_2 (\ell_2 \approx r_2)
\in \EE^\pm$ (\ie, the equations are variants of equations in $\EE^\pm$),
\smallskip
\item
$\Var(\ell_1 \approx r_1) \cap \Var(\ell_2 \approx r_2) = \varnothing$,
\smallskip
\item
$p \in \Pos_\FF(\ell_2)$,
\smallskip
\item
$\ell_1$ and ${\ell_2}|_p$ are unifiable with some mgu $\mu$, and
\smallskip
\item
$r_1\mu \not> \ell_1\mu$ and $r_2\mu \not> \ell_2\mu$.
\end{itemize}
An extended overlap gives rise to the \emph{extended critical pair}
$\ell_2{[r_1]}_p\mu \approx r_2\mu$.
An extended critical pair is called \emph{prime} if all proper subterms of
$\ell_1\mu$ are $\EE^>$-normal forms. The set of extended prime critical
pairs among equations in $\EE$ is denoted by $\PCP_>(\EE)$.
\end{defi}

For example, the equations
$\m{1} \cdot (x + -x) \approx x + -x$ and
$y + - y \approx -z + z $ are variable-disjoint variants of equations in
\exaref{okb1}. Neither of them can be oriented from right to left
(independent of the choice of $>$). Because of the peak
$\m{1} \cdot (-z + z) \fromto \m{1} \cdot (x + -x) \fromto x + -x$
they admit the extended overlap $\langle y + - y \approx -z + z, 1,
\m{1} \cdot (x + -x) \approx x + -x\rangle$
which gives rise to the extended critical pair
$\m{1} \cdot (-z + z) \approx x + -x$.
Note that since the second equation is unorientable, a run of a standard
completion procedure will not encounter this critical pair.

Extended critical pairs are important due to the Extended Critical
Pair Lemma~\cite{BDP89}, according to which these are the only peaks
relevant for ground confluence. In our formalization we use the following
variant. The proof employs a similar peak analysis as in
\lemref{cpeakL}.

\begin{lem}%
\label{lem:xcp}
Let $\EE$ be an ES and consider a peak
\hfill
\isaforlink{Ordered_Rewriting}{lem:non_ooverlap_GROUND_joinable}
\[
t \xleftarrow[r_1\,\approx\,\ell_1]{pq,\,\sigma_1} s
 \xrightarrow[\ell_2\,\approx\,r_2]{p,\,\sigma_2} u
\]
involving ground terms $s$, $t$, and $u$
such that $\ell_1\sigma_1 > r_1\sigma_1$ and
$\ell_2\sigma_2 > r_2\sigma_2$. If
$\ell_1 \approx r_1, \ell_2 \approx r_2 \in \EE$ do not form an
extended overlap at position $q$ then
$t \downarrow_{\EE^>} u$.
\end{lem}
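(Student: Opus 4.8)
The plan is to carry out a peak analysis in the spirit of \lemref{cpeakL}, but for ordered rewriting and exploiting ground-totality of $>$. Since the inner step sits at $pq$ and the outer step at $p$, we have $p \leqslant pq$, so the inner redex $\ell_1\sigma_1 = s|_{pq}$ occupies position $q$ inside $\ell_2\sigma_2 = s|_p$, with $t = s{[r_1\sigma_1]}_{pq}$ and $u = s{[r_2\sigma_2]}_p$. After renaming we may assume the two equations are variable-disjoint and merge $\sigma_1,\sigma_2$ into a single substitution $\sigma$ that acts as $\sigma_1$ on $\Var(\ell_1 \approx r_1)$ and as $\sigma_2$ on $\Var(\ell_2 \approx r_2)$. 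The argument then branches on whether $q$ addresses a function symbol of $\ell_2$.

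First I would rule out the case $q \in \PosF(\ell_2)$ by contradiction. Here $(\ell_2|_q)\sigma = \ell_2\sigma_2|_q = \ell_1\sigma_1 = \ell_1\sigma$, so $\sigma$ unifies $\ell_1$ and $\ell_2|_q$; let $\mu$ be a most general unifier and write $\sigma = \mu\tau$. The two orientation requirements of \defref{extended overlap} are then automatic from the peak: were $r_1\mu > \ell_1\mu$, closure of $>$ under substitutions would give $r_1\sigma_1 = r_1\mu\tau > \ell_1\mu\tau = \ell_1\sigma_1$, contradicting $\ell_1\sigma_1 > r_1\sigma_1$; hence $r_1\mu \not> \ell_1\mu$, and symmetrically $r_2\mu \not> \ell_2\mu$. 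Thus $\langle \ell_1 \approx r_1, q, \ell_2 \approx r_2 \rangle$ meets every condition of an extended overlap at $q$, contradicting the hypothesis. Therefore $q \notin \PosF(\ell_2)$, and we may factor $q = q_1 q_2$ with $q_1 \in \PosV(\ell_2)$ and $\ell_2|_{q_1} = x$, so the inner redex lies inside $\sigma_2(x)$.

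For the remaining variable case I would adapt the substitution-shifting argument. Setting $\sigma_2'(x) = \sigma_2(x){[r_1\sigma_1]}_{q_2}$ and $\sigma_2'(y) = \sigma_2(y)$ for $y \neq x$, the step $\ell_1\sigma_1 > r_1\sigma_1$ gives $\sigma_2(x) \to_{\EE^>} \sigma_2'(x)$. Rewriting the remaining occurrences of $\sigma_2(x)$ then turns $t|_p$ into $\ell_2\sigma_2'$ and $u|_p = r_2\sigma_2$ into $r_2\sigma_2'$, both over $\to_{\EE^>}$. It therefore suffices to join $\ell_2\sigma_2'$ and $r_2\sigma_2'$, since closure under contexts lifts this back to $t \downarrow_{\EE^>} u$.

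The hard part is precisely this final join. In ordinary rewriting (\lemref{cpeakL}) one simply reapplies $\ell_2 \to r_2$ to the smaller substitution $\sigma_2'$; here that step is legal only if $\ell_2\sigma_2' > r_2\sigma_2'$, and passing from $\sigma_2$ to $\sigma_2'$ may invert this orientation. I expect to resolve it through ground-totality: as $s$, $t$, $u$ are ground, so are $\ell_2\sigma_2'$ and $r_2\sigma_2'$, whence they are comparable under $>$ or equal. If $\ell_2\sigma_2' > r_2\sigma_2'$ I take the step $\ell_2\sigma_2' \to_{\EE^>} r_2\sigma_2'$ via $\ell_2 \approx r_2$; if $r_2\sigma_2' > \ell_2\sigma_2'$ I instead use the reversed equation $r_2 \approx \ell_2 \in \EE^\pm$; and if the two coincide nothing is left to do. In every case $\ell_2\sigma_2' \downarrow_{\EE^>} r_2\sigma_2'$, which completes the join and hence the proof.
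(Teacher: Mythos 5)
Your proof is correct and takes the paper's intended route: the paper justifies this lemma only by the remark that it ``employs a similar peak analysis as in \lemref{cpeakL}'', and that is precisely the analysis you carry out (non-overlap case refuted via the mgu and closure of $>$ under substitutions, variable-overlap case via the modified substitution $\sigma_2'$). Your use of ground-totality of $>$ to orient $\ell_2\sigma_2'$ versus $r_2\sigma_2'$ is exactly the point where the ordered setting departs from \lemref{cpeakL}, and correctly explains why the lemma is restricted to ground terms $s$, $t$, $u$.
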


In the sequel, we write $\SSw$ for the TRS $\REwgt$.

\begin{cor}%
\label{cor:Ei to Sw}
If $s \xlr[\EEi]{} t$ for ground terms $s$ and $t$ then
$s \xlr[\SSw]{\smash{\{ s, t \}}}^* t$.
\hfill
\isaforlink{Ordered_Completion}{lem:ground_Einf_Sw}
\end{cor}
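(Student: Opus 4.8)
The plan is to first reduce the given conversion step to a conversion over the persistent equations and rules by invoking \corref{REi subset REw}, and then to orient every persistent-equation step by means of ground-totality of $>$. Since conversion is symmetric and $\{s,t\} = \{t,s\}$, I would start by assuming without loss of generality that the given step is $s \to_{\EEi} t$; the opposite orientation follows by symmetry of the desired $\SSw$-conversion. This step can be labeled with $\{s,t\}$, because $s \geqslant s$ and $t \geqslant t$ witness the side condition of \defref{mset labeled rewriting}, so that $s \xrightarrow[\EEi]{\smash{\{ s, t \}}} t$ and hence $s \xrightarrow[\ERi]{\smash{\{ s, t \}}} t$ as $\EEi \subseteq \ERi$. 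Applying \corref{REi subset REw} with $M = \{ s, t \}$ then yields a labeled conversion $s \xlr[\ERw]{\smash{\{ s, t \}}}^* t$ over $\ERw = \EEw \scup \RRw$.

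Next I would establish that \emph{all} terms occurring in this conversion are ground, which is the crucial enabling observation. By \defref{mset labeled rewriting}, every term $w$ appearing in a conversion labeled $\{ s, t \}$ satisfies $w \leqslant s$ or $w \leqslant t$. Since $s$ and $t$ are ground and $>$ is a reduction order — so that $a > b$ implies $\Var(b) \subseteq \Var(a)$ — any term below a ground term is itself ground. Consequently the entire conversion lives among ground terms.

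Finally I would convert the $\ERw$-conversion into an $\SSw$-conversion, recalling that $\SSw = \EEw^> \scup \RRw$. Every $\RRw$-step is already an $\SSw$-step. For an $\EEw$-step $a \xlr[\EEw]{} b$ between ground terms, we have $a = C[\ell\sigma]$ and $b = C[r\sigma]$ for some $\ell \approx r \in \EEw^\pm$, with $\ell\sigma$ and $r\sigma$ ground; ground-totality of $>$ then gives $\ell\sigma > r\sigma$, $r\sigma > \ell\sigma$, or $\ell\sigma = r\sigma$, so the step is an $\EEw^>$-step in one direction or else trivial, hence an $\SSw$-step in either case. Because only the employed equation, and not the two terms, changes, the multiset labels are preserved throughout, yielding the desired $s \xlr[\SSw]{\smash{\{ s, t \}}}^* t$.

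I expect the main obstacle to be precisely the groundness of the intermediate terms. One might worry that applying a persistent equation (which need not satisfy the variable condition) along the conversion introduces fresh variables, so that instantiated intermediate terms fail to be ground and the orientation via ground-totality no longer applies. The plan resolves this by combining the multiset labeling with the variable-monotonicity of reduction orders: the label $\{ s, t \}$ forces every intermediate term below the ground endpoints, hence ground, which is exactly what makes the orientation step legitimate.
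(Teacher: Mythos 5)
Your proof is correct and follows essentially the same route as the paper: apply \corref{REi subset REw} to obtain a labeled $\ERw$-conversion, then use ground-totality of $>$ to turn every $\EEw$-step into an oriented $(\EEw^>)^\pm$-step or a trivial one. Your explicit justification that all intermediate terms are ground (via the label $\{s,t\}$ together with variable-monotonicity of reduction orders) is a detail the paper's two-line proof leaves implicit, and it is exactly the argument needed to make ground-totality applicable.
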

\begin{proof}
We obtain $s \xlr{\smash{\{ s, t \}}}_{\ERw}^* t$ from
\corref{REi subset REw}. Since $>$ is ground-total, all
$\EEw$ steps in this conversion are ${(\EEw^>)}^\pm$ steps or
trivial steps between identical terms.
Hence
\smash{$s \xlr[\SSw]{\smash{\{ s, t \}}}^* t$} as desired.
\end{proof}

\begin{defi}%
\label{def:KBo fairness}
A run $(\EE_0,\RR_0) \,\vdo\, (\EE_1,\RR_1) \,\vdo\, (\EE_2,\RR_2)
\,\vdo\, \cdots$ is called \emph{fair} if the inclusion
$\PCP_>(\ERw) \subseteq {\join_{\SSw} \cup \fromto_{\EEi}}$ holds.
\end{defi}

The following lemma links extended prime
critical pairs to standard critical pairs and hence allows us
to reuse results from \secref{critical peaks}
for our main correctness result (\thmref{KBo correctness} below).

\begin{lem}%
\label{lem:xpcp}
For a TRS $\RR$ and an ES $\EE$, the inclusion
\smash{${\xlr[\PCP(\SS)]{}} \subseteq
{{\xlr[\PCP_>(\ER)]{}} \cup {\downarrow_{\SS}}}$}
holds on ground terms.
\hfill
\isaforlink{Ordered_Completion}{lem:PCP_xPCP}
\end{lem}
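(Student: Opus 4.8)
The plan is to trace every prime critical pair of the ordered system $\SS$ (built as $\EE^> \cup \RR$ from the given $\EE$ and $\RR$) back to the overlap that produces it, and then to decide by a single application of the extended critical pair lemma (\lemref{xcp}) whether it already belongs to the extended prime critical pairs of $\ER$ or is joinable in $\SS$. Since both relations on the right-hand side are symmetric, I may assume $s \xr[\PCP(\SS)]{} t$ (the reverse direction being identical). This means that $s$ and $t$ are a context--substitution instance of a prime critical pair of $\SS$, which by \defref{pcp} originates from a prime critical peak arising from an overlap $\langle L_1 \to R_1, o, L_2 \to R_2 \rangle$ with $L_i \to R_i \in \SS$ and most general unifier $\tau$. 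Fixing a context $C$ with hole at position $\pi$ and a substitution $\rho$ that witness $s \xr[\PCP(\SS)]{} t$ and make everything ground, I obtain the ground peak
\[
s \xleftarrow[\SS]{\pi o} w \xrightarrow[\SS]{\pi} t
\]
whose outer redex (at $\pi$) uses $L_2 \to R_2$ and whose inner redex (at $\pi o$) uses $L_1 \to R_1$.

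Next I pass from the two $\SS$-rules to their underlying equations $\ell_1 \approx r_1$ and $\ell_2 \approx r_2$ in $(\EE \cup \RR)^\pm$: for a rule taken from $\EE^>$ this is the generating equation, whose relevant instance is $>$-oriented by construction; for a rule of $\RR$ it is the rule read as an equation, whose instance in the peak is $>$-oriented under compatibility of $\RR$ with $>$ (available in the intended application $\RR = \RRw$ by \lemref{oKB Rw terminating}). The ground peak above therefore meets the hypotheses of \lemref{xcp} with the ambient ES instantiated to $\EE \cup \RR$, second-redex position $p = \pi$ and relative position $q = o$, and I apply \lemref{xcp}.

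If $\ell_1 \approx r_1$ and $\ell_2 \approx r_2$ do \emph{not} form an extended overlap at $o$ --- which in particular covers all variable overlaps, where $o$ lies below a variable position of $\ell_2$, exactly as in case~\eqref{lem:cpeakL:c} of \lemref{cpeakL} --- then \lemref{xcp} yields $s \downarrow_{\EE^>} t$, and hence $s \downarrow_{\SS} t$ because $\EE^> \subseteq \SS$. Otherwise the two equations form an extended overlap at $o$, producing by \defref{extended overlap} the extended critical pair $\ell_2{[r_1]}_o\mu \approx r_2\mu$ for the mgu $\mu$ of $\ell_1$ and ${\ell_2}|_o$. Since $w$ is, up to the context $C$ and substitution $\rho$, a ground instance of this extended overlap, the pair $(s,t)$ is a context--substitution instance of the extended critical pair, so $s \xlr[\PCP_>(\ER)]{} t$ \emph{provided} the pair is prime. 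For primality I transfer the prime condition of the $\SS$-peak: \defref{pcp} makes all proper subterms of $(L_2\tau)|_o = L_1\tau$ normal forms of $\SS$, hence of $\EE^>$; and because reducibility is preserved under instantiation while $L_1\tau$ is an instance of the more general $\ell_1\mu$, irreducibility of the proper subterms of $L_1\tau$ forces irreducibility of the proper subterms of $\ell_1\mu$, which is precisely the primality demanded by \defref{extended overlap}.

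The main obstacle is the bookkeeping in this last branch: one must verify that the ordinary overlap of $\SS$ and the extended overlap of $\ER$ describe the same redex configuration, and in particular that $L_1\tau$ really is a substitution instance of $\ell_1\mu$. This rests on the mgu property --- the (possibly already instantiated) unifier $\tau$ of the $\SS$-rules factors through $\mu$ on the underlying equation sides --- and requires splitting on whether each rule stems from $\RR$ or from $\EE^>$, since an $\EE^>$-rule contributes an extra instance substitution that has to be folded into $\mu$, and on whether the overlap position is a genuine function position of the equation's left-hand side. The reassuring point is that the genuinely non-function and misoriented cases are exactly those that \lemref{xcp} discharges through joinability, so only the clean function-position overlaps need the instance-and-primality comparison sketched above.
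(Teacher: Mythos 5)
Your proof is correct and follows essentially the same route as the paper's: both pass from the two $\SS$-rules of the overlap to their underlying equations in ${(\ER)}^\pm$, split on whether these form an extended overlap at the overlap position (equivalently, whether that position is a function position of the underlying left-hand side), discharge the negative case by \lemref{xcp}, and in the positive case obtain the extended critical pair by factoring the instantiated unifier through the mgu $\mu$ and transferring primality via closure of reducibility under substitution. The only cosmetic difference is that the paper explicitly verifies the orientation conditions $r_i\mu \not> \ell_i\mu$ to establish that the extended overlap exists, whereas your case split on its existence makes that verification unnecessary; both proofs equally rely on the implicit assumption $\RR \subseteq {>}$, which you rightly flag.
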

\begin{proof}
Suppose $s \fromto_e t$ for ground terms $s$ and $t$ and a prime
critical pair
$e\colon \ell_2\sigma{[r_1\sigma]}_p \approx r_2\sigma$ generated from
the overlap $\langle \ell_1 \to r_1, p, \ell_2 \to r_2 \rangle$ in $\SS$.
Let $u_i \approx v_i$ be the equation $\ell_i \approx r_i$ if
$\ell_i \to r_i \in \RR$ and the equation in $\EE^\pm$ such that
$\ell_i = u_i\tau_i$ and $r_i = v_i\tau_i$ for some substitution
$\tau_i$ if $\ell_i \to r_i \in \EE^>$.
In the former case we let $\tau_i$ be the empty substitution.
Since the equations $u_1 \approx v_1$ and $u_2 \approx v_2$ are
assumed to be variable-disjoint, the substitution
$\tau = \tau_1 \cup \tau_2$ is well-defined. We distinguish two cases.
\smallskip
\begin{itemize}
\item
If $p \notin \PosF(u_2)$ then
$\langle u_1 \approx v_1, p, u_2 \approx v_2\rangle$ is not an overlap
and hence $s \downarrow_{\SS} t$ by \lemref{xcp}.
\smallskip
\item
Suppose $p \in \PosF(u_2)$. Since
$u_2|_p\tau\sigma = \ell_2|_p\sigma = \ell_1\sigma = u_1\tau\sigma$ there
exist an mgu $\mu$ of $u_2|_p$ and $u_1$, and a substitution $\rho$ such
that $\mu\rho = \tau\sigma$.
Because $u_i\mu\rho = \ell_i\sigma > r_i\sigma = v_i\mu\rho$,
$v_i\mu > u_i\mu$ is impossible.
Hence $e'\colon u_2\mu{[v_1\mu]}_p \approx v_2\mu \in \CP_>(\ER)$
and
\[
\ell_2\sigma{[r_1\sigma]}_p
= u_2\mu\rho{[v_1\mu\rho]}_p
= u_2\mu{[v_1\mu]}_p\rho \xlr[~e'~]{} v_2\mu\rho
= r_2\sigma
\]
Since $e$ is prime, proper subterms of $\ell_2\sigma|_p = u_2\mu\rho|_p$
are irreducible with respect to $\SS$, and hence the same holds for
proper subterms of $u_2\mu$. It follows that
$e' \in \PCP_>(\ER)$ and thus $\ell_2\sigma{[r_1\sigma]}_p
\xlr[\PCP_>(\ER)]{} r_2\sigma$. Hence also
$s \xlr[\PCP_>(\ER)]{} t$.
\qedhere
\end{itemize}
\end{proof}

\noindent
This relationship between extended critical pairs among $\ER$ and
critical pairs among $\SS$ is the final ingredient for the main result of
this section. As in the preceding section, we establish correctness
of ordered completion via source decreasingness.

\begin{thm}%
\label{thm:KBo correctness}
If $\Gamma$ is fair then $\SSw$ is ground-complete and
$\smash{\xlr[~\EE_{0}~]{*}}$ and
$\smash{\xlr[\ERw]{*}}$ coincide.
\hfill
\isaforlink{Ordered_Completion}{lem:correctness_okb}
\end{thm}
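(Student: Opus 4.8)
The plan is to prove the two assertions separately: ground-completeness of $\SSw = \EEw^> \cup \RRw$, and the coincidence of $\xlr[\EE_0]{*}$ with $\xlr[\ERw]{*}$. Termination of $\SSw$ is immediate, since $\RRw \subseteq {>}$ by \lemref{oKB Rw terminating} and every instance $\ell\sigma \to r\sigma \in \EEw^>$ satisfies $\ell\sigma > r\sigma$ by construction, whence $\to_{\SSw} \subseteq {>}$ and well-foundedness of the reduction order $>$ applies. The substance of the first claim is ground-confluence, which I would obtain — following the template of \thmref{complete presentation okb1} — by showing that source-labeled $\SSw$-reduction is source decreasing on the ARS of ground terms (which is genuinely closed, as rewriting preserves groundness), with $>$ as the well-founded labeling; source decreasingness of that ground ARS is precisely ground-confluence of $\SSw$.

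So I would take a ground peak $t \FRom{\SSw}{}{\smash{s}} s \xrightarrow{\smash{s}}_{\SSw} u$. Viewing $\SSw$ as a (possibly infinite) TRS, \lemref{pcp} yields $t \mathrel{\tds^2} u$, with $\SSw$ in the role of $\RR$. Let $v \mathrel{\tds} w$ be either constituent (so $v = t$ or $w = u$). By definition of $\tds$ one has $s \to_{\SSw}^+ v$ and $s \to_{\SSw}^+ w$, hence $s > v$ and $s > w$, together with $v \join_{\SSw} w$ or $v \fromto_{\PCP(\SSw)} w$. In the joinability case the valley through $v$ and $w$ consists of steps whose sources are bounded by $v$ or $w$ and therefore lie strictly below $s$, giving $v \xlr{\smash{\vee s}}_{\SSw}^* w$ directly. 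In the critical-pair case I would invoke \lemref{xpcp} (applicable as $v,w$ are ground) to obtain $v \fromto_{\PCP_>(\ERw)} w$ or $v \join_{\SSw} w$; the latter reduces to the previous case, while for the former fairness (\defref{KBo fairness}) gives $v \join_{\SSw} w$ or $v \fromto_{\EEi} w$. The remaining subcase $v \fromto_{\EEi} w$ on ground terms is handled by \corref{Ei to Sw}, producing $v \xlr[\SSw]{\smash{\{v,w\}}}^* w$; since both multiset labels are below $s$, every step of that conversion has source below $s$, so again $v \xlr{\smash{\vee s}}_{\SSw}^* w$. Concatenating the two constituents yields $t \xlr{\smash{\vee s}}_{\SSw}^* u$, establishing source decreasingness, whence ground-confluence follows from Lemmata~\ref{lem:sd => pd} and~\ref{lem:pd => cr}, and with termination $\SSw$ is ground-complete.

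For the second claim I would argue purely at the level of conversions. An induction on the run using \corref{oKB conv} shows $\xlr[\EE_i \cup \RR_i]{*} = \xlr[\EE_0 \cup \RR_0]{*} = \xlr[\EE_0]{*}$ for every $i$ (using $\RR_0 = \varnothing$). Since each single $\ERi$-step lies in some $\EE_i \cup \RR_i$, while conversely $\EE_0 \subseteq \ERi$, this gives $\xlr[\EE_0]{*} = \xlr[\ERi]{*}$. It then remains to pass from $\ERi$ to $\ERw$: the inclusion $\xlr[\ERw]{*} \subseteq \xlr[\ERi]{*}$ is trivial from $\ERw \subseteq \ERi$, and the converse follows by labeling each $\ERi$-step $a \to b$ with $\{a,b\}$ and applying \corref{REi subset REw} to turn it into an $\ERw$-conversion, so $\xlr[\ERi]{*} = \xlr[\ERw]{*}$. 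Combining the two equalities finishes the claim.

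The main obstacle I anticipate is the middle argument: correctly threading the source bound ``$\vee s$'' through the three successive reductions (standard prime critical pairs of $\SSw$, then extended prime critical pairs of $\ERw$ via \lemref{xpcp}, then fairness, then $\EEi$-steps via \corref{Ei to Sw}) while keeping every intermediate term ground, so that the ground-only hypotheses of \lemref{xpcp} and \corref{Ei to Sw} remain available. In particular one must verify that the multiset-labeled conversion delivered by \corref{Ei to Sw} is itself a valid source-labeled conversion below $s$ — which hinges on $s > v$ and $s > w$ bounding all labels — and that applying \lemref{pcp} to the infinite TRS $\SSw$ is legitimate, which is exactly what \lemref{xpcp} is designed to accommodate.
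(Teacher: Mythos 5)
Your proposal is correct and follows essentially the same route as the paper's proof: termination from \lemref{oKB Rw terminating} and the definition of $\EEw^>$, ground-confluence by showing source decreasingness of labeled $\SSw$-reduction on ground terms via \lemref{pcp}, \lemref{xpcp}, fairness, and \corref{Ei to Sw}, and the conversion equivalence via \corref{oKB conv} and \corref{REi subset REw}. The only difference is presentational: you unfold the chain of reductions (from $\PCP(\SSw)$ to $\PCP_>(\ERw)$ to fairness to $\fromto_{\EEi}$) and the label-threading argument that the paper compresses into a single sentence.
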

\begin{proof}
Termination of $\SSw$ is a consequence of
\lemref{oKB Rw terminating} and the definition of $\EEw^>$.
Next we show that $\SSw$ is ground-confluent. To this end, we show that
labeled $\SSw$ reduction is source decreasing on ground terms.
So let $s$, $t$, and $u$ be ground terms such that
\[
t \xl[\SSw]{s} s \xr[\SSw]{s} u
\]
\lemref{pcp} yields $t \mathrel{\tds^2} u$
(where $\SSw$ takes the place of $\RR$ in the definition of $\tds$).
Let $v \mathrel{\tds} w$ appear in this sequence
(so $t = v$ or $w = u$ and both terms are ground). We have $s > v$,
$s > w$, and
\(
(v,w) \in {\join_{\SSw}} \cup {\fromto_{\EEi}}
\)
by the definition of $\tds$, \lemref{xpcp}, and fairness of
$\Gamma$.
\begin{itemize}
\smallskip
\item
If $v \join_{\SSw} w$ then
$v \xrightarrow{\smash{\ddg\,v}}_{\SSw}^* \cdot
\FRom{\SSw}{*}{\smash{\ddg\,w}} w$
and thus $v \xlr{\smash{\vee s}}_{\SSw}^* w$.
\medskip
\item
If $v \fromto_{\EEi} w$ then $v \fromto_{\EE_i} w$ for some
$i \geqslant 0$ and thus $v \xlr{\smash{\vee s}}_{\SSw}^* w$ by
\corref{Ei to Sw}.
\end{itemize}
\smallskip
Hence $t \xlr{\smash{\vee s}}_{\SSw}^* u$. Confluence of the ARS
that is obtained by restricting $\SSw$ to ground terms now
follows from Lemmata~\ref{lem:sd => pd} and~\ref{lem:pd => cr}.
It remains to show ${\fromto_{\EE_0}^*} = {\fromto_{\ERw}^*}$.
Using \corref{oKB conv} we obtain
${\to_{\EE_i \scup \RR_i}} \subseteq {\fromto_{\EE_0}^*}$ for all
$i \geqslant 0$ by a straightforward induction argument.
This in turn yields
${\fromto_{\ERi}^*} \subseteq {\fromto_{\EE_0}^*}$
and in particular
${\fromto_{\ERw}^*} \subseteq {\fromto_{\EE_0}^*}$.
The reverse inclusion follows from
\corref{REi subset REw} and the inclusion
${\fromto_{\EE_0}^*} \subseteq {\fromto_{\ERi}^*}$.
\end{proof}

If $\EEw$ is empty, the TRS $\RRw$ is not only ground-confluent
but actually confluent on all terms.
Even though this result is not surprising, we did not find it
explicitly stated in the literature.

\begin{thm}%
\label{thm:complete presentation okb}
If $\Gamma$ is fair and $\EEw = \varnothing$ then $\RRw$ is a
complete presentation of $\EE_0$.
\hfill\isaforlink{Ordered_Completion}{lem:Ew_empty_CR_Rw_gtotal}
\end{thm}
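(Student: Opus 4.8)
The plan is to reduce everything to \thmref{complete presentation okb1}, which already produces a complete presentation of $\EE_0$ from the two hypotheses $\EEw = \varnothing$ and $\PCP(\RRw) \subseteq {\join_\RRw} \cup {\fromto_{\EEi}}$. Since $\EEw = \varnothing$ is assumed, I would first record that $\EEw^> = \varnothing$, so that $\SSw = \REwgt$ collapses to $\RRw$ and likewise $\ERw$ collapses to $\RRw$, both as a TRS and (reading each rule $\ell \to r$ as an equation $\ell \approx r$) as an ES. Consequently the fairness condition of \defref{KBo fairness} specializes to $\PCP_>(\RRw) \subseteq {\join_\RRw} \cup {\fromto_{\EEi}}$. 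Hence it suffices to prove the inclusion $\PCP(\RRw) \subseteq \PCP_>(\RRw)$, after which fairness transfers the joinability/equation alternative from extended to ordinary prime critical pairs.

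To establish this inclusion I would start from an ordinary overlap $\langle \ell_1 \to r_1, p, \ell_2 \to r_2 \rangle$ of $\RRw$ with mgu $\sigma$ of $\ell_1$ and ${\ell_2}|_p$ that gives rise to a prime critical pair, and argue that $\langle \ell_1 \approx r_1, p, \ell_2 \approx r_2 \rangle$ is an extended overlap of the ES $\RRw$. All positional and unifiability requirements of \defref{extended overlap} transfer verbatim with $\mu = \sigma$, so the only genuinely new obligations are $r_1\mu \not> \ell_1\mu$ and $r_2\mu \not> \ell_2\mu$. These follow from \lemref{oKB Rw terminating}: because $\RRw \subseteq {>}$ we have $\ell_i > r_i$, and closure of the reduction order under substitutions gives $\ell_i\mu > r_i\mu$, whence $r_i\mu \not> \ell_i\mu$. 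The resulting extended critical pair ${\ell_2}{[r_1]}_p\mu \approx r_2\mu$ then coincides with the ordinary one $\ell_2\sigma{[r_1\sigma]}_p \approx r_2\sigma$.

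The hard part will be to confirm that primality is preserved, since the two definitions phrase it differently: ordinary primality asks the proper subterms of $s|_p = \ell_1\sigma$ to be $\RRw$-normal forms, whereas extended primality asks the proper subterms of $\ell_1\mu = \ell_1\sigma$ to be normal forms with respect to $(\RRw)^>$. Here I would use that every rule of $\RRw$ is oriented by $>$, so the reversed orientation of an equation is never applicable and therefore ${\to_{(\RRw)^>}} = {\to_\RRw}$; the induced normal-form sets coincide, and the two primality notions agree. (This is precisely the specialization of \lemref{xpcp} to the empty ES, in which the ground restriction and the joinability alternative both become vacuous, because $p \in \Pos_\FF(\ell_2)$ always places us in the second case of that lemma.) With $\PCP(\RRw) \subseteq \PCP_>(\RRw)$ in hand, fairness yields $\PCP(\RRw) \subseteq {\join_\RRw} \cup {\fromto_{\EEi}}$, and a single appeal to \thmref{complete presentation okb1} completes the argument.
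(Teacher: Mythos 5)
Your proof is correct and follows essentially the same route as the paper: the paper's entire proof is the observation that $\PCP(\RRw) \subseteq \PCP_>(\ERw)$ holds because $\RRw \subseteq {>}$, after which fairness and \thmref{complete presentation okb1} finish the argument. Your write-up merely makes explicit the details the paper leaves implicit (that orientedness of $\RRw$ under substitutions discharges the $\not>$ conditions of extended overlaps, and that ${\to_{(\RRw)^>}} = {\to_{\RRw}}$ so the two primality notions coincide).
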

\begin{proof}
We have $\PCP(\RRw) \subseteq \PCP_>(\ERw)$ since
$\RRw \subseteq {>}$. Hence the result follows from fairness and
\thmref{complete presentation okb1}.
\end{proof}

\begin{exa}
Consider the ES $\EE$ from \exaref{okb1} and $\LPO$ with
precedence $\m{+} > \m{0}$.
After two \tsfs{orient} steps, we apply \tsfs{deduce}:
\begin{xalignat*}{4}
\m{1} \cdot (-x + x) &\to \m{0} &
\m{1} \cdot (x + -x) &\to x + -x &
-x + x &\approx y + -y &
\m{1} \cdot (-z + z) &\approx x + -x
\intertext{The newly added equation is simplified and then oriented:}
\m{1} \cdot (-x + x) &\to \m{0} &
\m{1} \cdot (x + -x) &\to x + -x &
-x + x &\approx y + -y &
x + -x &\to \m{0}
\intertext{Using the new rewrite rule, the remaining equation is
simplified, the second rule is subjected to \tsfs{compose} and
subsequently to \tsfs{collapse}:}
\m{1} \cdot (-x + x) &\to \m{0} &
\m{1} \cdot \m{0} &\approx \m{0} &
-x + x &\approx \m{0} &
x + -x &\to \m{0}
\intertext{Orienting both equations results in:}
\m{1} \cdot (-x + x) &\to \m{0} &
\m{1} \cdot \m{0} &\to \m{0} &
-x + x &\to \m{0} &
x + -x &\to \m{0}
\intertext{At this point the first rule is collapsed using the third rule,
and subsequently oriented (into an existing rule):}
&&
\m{1} \cdot \m{0} &\to \m{0} &
-x + x &\to \m{0} &
x + -x &\to \m{0}
\end{xalignat*}
This sequence can be extended to an infinite run by repeatedly
adding (using \tsfs{deduce}) and deleting the trivial equation
$\m{0} \approx \m{0}$.
Then the set of persistent rules $\RRw$ coincides with the TRS
$\RR$ from \exaref{okb1}, and $\EEw = \varnothing$.
\end{exa}

The final result in this section is in the spirit of
\thmref{Metivier normalization equivalent} but for ordered completion,
showing that a ground-complete system can be interreduced to some extent.

\begin{defi}%
\label{def:okb reduced transformation}
Given a ground-complete system $\SS = \EE^> \cup \RR$, we define
\begin{align*}
\RR' &= \{ \ell \to r \mid \text{$\ell \to r \in \dot{\QQ}$
and
$\ell \in \NF(\xrightarrow{\pe}_\SS)$} \}
\\
\EE' &= \{ s{\downarrow_{\RR'}} \approx t{\downarrow_{\RR'}} \mid
s \approx t \in \EE \} \setminus {=}
\end{align*}
where $\QQ = \RR \cup (\EE^\pm \cap {>})$
and $\dot{\QQ}$ is defined in \defref{reduced transformation}.
\end{defi}

Here we write $t \xrightarrow{\pe}_{\SS} u$ if there are
a rule $\ell \to r \in \SS$, a context $C$,
and a substitution $\sigma$ such that
$t = C[\ell\sigma]$, $u = C[r\sigma]$, and $t \prencompasses \ell$.
For example, if $\EE$ is empty and $\RR$ consists
of the single rule $\m{f}(x,y) \to \m{g}(x)$ we have
$\m{f}(y,z) \in \NF(\xrightarrow{\pe}_\SS)$, but
$\m{f}(\m{g}(x),y) \notin \NF(\xrightarrow{\pe}_\SS)$ and
$\m{f}(x,x) \notin \NF(\xrightarrow{\pe}_\SS)$.

\begin{thm}%
\label{thm:reduced okb}
If $\SS = \EE^> \cup \RR$ is ground-complete then
$\SS' = \EE'^> \cup \RR'$ is ground-complete and normalization and
conversion equivalent on ground terms.
\hfill
\isaforlink{Ordered_Completion}{lem:reduced_ground_complete}
\end{thm}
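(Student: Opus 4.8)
The plan is to deduce the statement from \lemref[b]{reduced abstract}, applied to the abstract rewrite systems obtained by restricting $\to_\SS$ and $\to_{\SS'}$ to the ground terms $\TT(\FF)$; call these $\AA$ and $\BB$. Since $>$ is ground-total and $\SS$ is ground-complete, $\AA$ is complete, so it suffices to establish the three hypotheses of part~(b): that $\BB$ is terminating, that ${\to_{\SS'}} \subseteq {\conv_\SS}$ on ground terms, and that $\NF(\BB) \subseteq \NF(\AA)$, i.e.\ that every $\SS$-reducible ground term is $\SS'$-reducible. The lemma then yields that $\SS'$ is ground-complete and normalization equivalent to $\SS$ on ground terms, and conversion equivalence on ground terms follows at once from \lemref{(normalization) equivalence}, both systems being terminating.

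The first two hypotheses are routine. For termination, note $\QQ \subseteq {>}$ (as $\RR \subseteq {>}$ and $\EE^\pm \cap {>} \subseteq {>}$); since $\dot{\QQ}$ merely normalizes right-hand sides, every rule $\ell \to r$ of $\RR' \subseteq \dot{\QQ}$ satisfies $\ell > r$, and $\EE'^{>} \subseteq {>}$ by construction, so $\SS' \subseteq {>}$ is well-founded. For the conversion inclusion I would use ${\to_\QQ} \subseteq {\to_\SS}$ to obtain $\RR' \subseteq \dot{\QQ} \subseteq {\to_\SS^+} \subseteq {\conv_\SS}$, while for a ground $\EE'^{>}$-step coming from $\ell \approx r \in \EE$ via its normalized form $\ell{\downarrow_{\RR'}} \approx r{\downarrow_{\RR'}}$ I would chain $\ell\sigma \,{\to_{\RR'}^*}\, \ell{\downarrow_{\RR'}}\sigma$ and $r\sigma \,{\to_{\RR'}^*}\, r{\downarrow_{\RR'}}\sigma$ around the conversion $\ell\sigma \conv_\SS r\sigma$, the latter being available because ground-totality lets the instance of $\ell \approx r$ be realised inside $\conv_\SS$.

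The heart of the proof is the third hypothesis, and it is the step I expect to be the main obstacle. Following the encompassment induction used for \thmref{Metivier normalization equivalent}, I would argue by contradiction with a $\prencompasses$-minimal ground term $s$ that is $\SS$-reducible but $\SS'$-irreducible. Minimality forces every proper subterm of $s$ to be an $\SS$-normal form (otherwise a strictly smaller counterexample, or directly an $\SS'$-redex inside $s$, would arise), so $s$ is a root redex $s = \ell\sigma$. If the left-hand side $\ell$ is itself $\xrightarrow{\pe}_\SS$-reducible, the induced redex would be a proper subterm of $s$ unless it coincides with $s$, which lets me replace $\ell$ by a strictly $\prencompasses$-smaller left-hand side reducing $s$ at the root; iterating, I may assume $\ell \in \NF(\xrightarrow{\pe}_\SS)$. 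From here two cases remain: if the contracting rule stems from $\QQ$, then the corresponding rule of $\dot{\QQ}$ has a left-hand side that is a variant of $\ell$ and hence lies in $\RR'$, contradicting $\SS'$-irreducibility of $s$; if it stems from an instance of an equation $\ell \approx r \in \EE^\pm$ that is not oriented as a $\QQ$-rule, then (after checking that $\ell$ is also $\RR'$-irreducible, since any $\RR'$-redex would be either proper, contradicting minimality of $\ell$, or at the root, yielding an $\RR'$-redex in $s$) the normalized equation $\ell \approx r{\downarrow_{\RR'}} \in \EE'$ supplies an $\EE'^{>}$-redex $s = \ell\sigma \to_{\EE'^{>}} r{\downarrow_{\RR'}}\sigma$, again a contradiction. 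The delicate point throughout is the coordinated bookkeeping between $\RR'$, which only accounts for $\QQ$-reducibility, and $\EE'^{>}$, which must cover exactly the reducibility coming from orientable instances of equations not captured by $\QQ$, all while respecting the proper-encompassment side condition that defines $\RR'$ and using ground-totality to guarantee the relevant orientations.
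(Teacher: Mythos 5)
Your overall plan coincides with the paper's own proof: termination of $\SS'$ from $\SS' \subseteq {>}$, the inclusion ${\to_{\SS'}} \subseteq {\conv_\SS}$ on ground terms, the normal-form inclusion $\NF(\SS') \subseteq \NF(\SS)$, and then \lemref[b]{reduced abstract} applied to the restrictions to ground terms, with \lemref{(normalization) equivalence} supplying conversion equivalence. For the normal-form inclusion the paper proves $\ell \notin \NF(\SS')$ for \emph{every} rule $\ell \to r \in \SS$ by induction on $(\ell, b_{\ell \to r})$, where the flag $b$ records membership in $\QQ$; your minimal ground counterexample together with the inner $\prencompasses$-iteration is the same well-founded argument in different packaging. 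Your unorientable-equation case is even slightly smoother than the paper's, since $\SS'$-irreducibility of the counterexample $s$ immediately yields $\RR'$-irreducibility of the equation's left-hand side, whereas the paper needs an extra induction subcase (this is precisely what its flag $b$ is for).

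One step, however, is asserted without justification, and it is exactly the ``delicate bookkeeping'' you flag at the end. A rule of $\SS$ ``stemming from $\QQ$'' is in general a \emph{proper instance} $u\tau \to v\tau$ of an orientable equation $u \approx v \in \EE^\pm \cap {>}$, because $\SS$ contains all oriented instances of $\EE^\pm$, while $\QQ$, $\dot{\QQ}$, and $\RR'$ contain only the equations themselves. For such a rule the corresponding rule of $\dot{\QQ}$ has a left-hand side that is a variant of $u$, \emph{not} of $\ell = u\tau$, and to place it in $\RR'$ one needs $u \in \NF(\xrightarrow{\pe}_\SS)$ --- which your iteration provides only for the instance $u\tau$. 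The claim can be rescued, but an argument is required: since $u\tau \encompasses u$, either $u\tau \doteq u$, in which case $\ell$ is indeed a variant of $u$ and your reasoning applies ($\xrightarrow{\pe}_\SS$-irreducibility is invariant under renaming); or $u\tau \prencompasses u$, and then, because $u > v$ places the rule $u \to v$ itself in $\SS$, the term $\ell = u\tau$ has a $\xrightarrow{\pe}_\SS$-redex at its root, contradicting the postcondition $\ell \in \NF(\xrightarrow{\pe}_\SS)$ of your iteration. So a proper instance of a $\QQ$-rule can in fact never survive the iteration, but this observation (or, alternatively, a lemma that membership in $\NF(\xrightarrow{\pe}_\SS)$ is inherited by generalizations) must be made explicit; without it your first case has a hole. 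The remaining omissions in your second case are routine: the normalized equation must be non-trivial to survive the ``$\setminus {=}$'' in the definition of $\EE'$, and its instance must still be oriented; both follow from $\ell\sigma > r\sigma$, $\RR' \subseteq {>}$, and closure of $>$ under substitutions.
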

\begin{proof}
We first show $\NF(\SS')\subseteq \NF(\SS)$.
For a rule $\ell \to r \in \SS$, let $b_{\ell \to r}$ be
$\bot$ if $\ell \to r \in \QQ$ and $\top$ otherwise.
We prove $\ell \notin \NF(\SS')$
for every rule $\ell \to r \in \SS$, by induction on
$(\ell,b_{\ell \to r})$ with respect to the lexicographic
combination of $\prencompasses$ and the order where $\top > \bot$.
\begin{itemize}
\item
If $\ell \to r \in \QQ$ two cases can be distinguished.
If $\ell \notin \NF(\xrightarrow{\pe}_\SS)$ then
$\ell \prencompasses \ell'$ for some rule $\ell' \to r' \in \SS$
and thus $\ell' \notin \NF(\SS')$ by the induction hypothesis.
Hence also $\ell \notin \NF(\SS')$.
If $\ell \in \NF(\xrightarrow{\pe}_\SS)$ then,
by construction of $\RR'$, there is
some rule $\ell \to r' \in \RR'$ (modulo renaming), so
$\ell \notin \NF(\SS')$.
\smallskip
\item
If $\ell \to r \notin \QQ$ then $\ell = u\sigma$ and $r = v\sigma$
for some equation $u \approx v \in \EE^\pm$ and substitution $\sigma$
such that $\ell > r$. We distinguish two cases. First, if
$u \in \NF(\RR')$ then $u = u{\downarrow_{\RR'}}$. We have
$\ell > r \geqslant v{\downarrow_{\RR'}}\sigma$ because
$\RR' \subseteq {>}$ and hence
$u \neq v{\downarrow_{\RR'}}$. It follows that
$u \approx v{\downarrow_{\RR'}} \in \EE'^\pm$ and thus
$\ell \to v{\downarrow_{\RR'}}\sigma \in \EE'^\succ$.
Hence $\ell \notin \NF(\SS')$.
Second, if $u \notin \NF(\RR')$ then
$u \notin \NF(\dot{\QQ})$ since $\RR' \subseteq \dot{\QQ}$.
So there exists a rule $\ell' \to r' \in \QQ$ such that
$u \encompasses \ell'$. Clearly $\ell \encompasses \ell'$.
Since $\ell \to r \notin \QQ$, the induction hypothesis yields
$\ell' \notin \NF(\SS')$. Hence also $\ell \notin \NF(\SS')$.
\end{itemize}
We next establish the inclusion ${\to_{\SS'}} \subseteq {\conv_{\SS}}$ on
ground terms. We have $\EE' \cup \RR' \subseteq {\conv_{\ER}}$
by construction.
For ground terms $s$ and $t$, a step $s \to_{\SS'} t$ implies
$s \fromto_{\EE' \cup \RR' } t$ and hence existence
of a conversion $s \conv_{\ER} t$. We can also obtain such a
conversion where all intermediate terms are ground by replacing every
variable with some ground term. Since the reduction order $>$ is
ground-total, ${\to_{\ER}} \subseteq {\fromto_{\SS}^=}$ holds on
ground terms. Hence there is a conversion $s \conv_{\SS} t$.

Moreover, the system $\SS'$ is clearly terminating as it is included
in $>$. Thus the result follows from \lemref[b]{reduced abstract}, viewing
$\SS$ and $\SS'$ as ARSs on ground terms.
\end{proof}

We illustrate the transformation of
\defref{okb reduced transformation} on a concrete example.

\begin{exa}
Consider the following system with $\RR$ consisting of one rule and
$\EE$ consisting of three equations:
\begin{xalignat*}{3}
\m{s}(\m{s}(x)) + \m{s}(x) &\to \m{s}(x) + \m{s}(\m{s}(x)) &
x + \m{s}(y) &\approx \m{s}(x + y) &
x + y &\approx y + x \\
&&
\m{s}(x) + y &\approx \m{s}(x + y)
\end{xalignat*}
It is ground-complete for the lexicographic path order~\cite{KL80} with
${+} > \m{s}$ as precedence.
We have
$\QQ = \RR \cup \{
x + \m{s}(y) \to \m{s}(x + y),~
\m{s}(x) + y \to \m{s}(x+y)\}$.
Since the term $\m{s}(\m{s}(x)) + \m{s}(x)$ is reducible by the rule
$\m{s}(x) + x \to x + \m{s}(x) \in \SS$ and
$\m{s}(\m{s}(x)) + \m{s}(x) \pe \m{s}(x) + x$,
the rule of $\RR$ does not remain in $\RR'$. Hence,
$\RR' = \{ x + \m{s}(y) \to \m{s}(x + y),~ \m{s}(x) + y \to \m{s}(x+y) \}$
and $\EE' = \{ x + y \approx y + x \}$.
\end{exa}

One may wonder whether $\RR'$ can simply be defined as
$\ddot{\QQ}$ instead of imposing a strict encompassment condition.
The following example shows that this destroys reducibility.

\begin{exa}
Consider the following system where $\RR$ consists of
two rules and $\EE$ consists of one equation:
\begin{xalignat*}{3}
\m{f}(x,y) &\to \m{g}(x) &
\m{f}(x,y) &\to \m{g}(y) &
\m{g}(x) &\approx \m{g}(y)
\end{xalignat*}
Then $\EE^> \cup \RR$ is ground-complete if $>$ is the lexicographic
path order with $\m{f} > \m{g}$ as precedence.
We have $\RR' = \dot{\QQ} = \QQ = \RR$ and $\EE' = \EE$ but
$\ddot{\QQ} = \varnothing$.

Note that we obtain an equivalent ground-complete system if
we add, for instance, an equation $\m{g}(\m{g}(x)) \approx \m{g}(y)$.
This shows that even systems which are simplified with respect to the
procedure suggested by \thmref{reduced okb} are not unique.
\end{exa}

This section resumes our results on ordered completion~\cite{HMSW17}.
Like in Sections~\ref{sec:finite runs} and~\ref{sec:infinite runs}, our
proofs deviate from the standard approach~\cite{BDP89} in that we avoid
proof orders in favor of different, simpler orderings as required,
together with source decreasingness. Again, we also support prime
critical pairs.
For \thmref{complete presentation okb} and the interreduction result of
\thmref{reduced okb} we are not aware of earlier references in the
literature.

\section{Completeness Results for Ordered Completion}%
\label{sec:completeness}

Ordered completion never fails and its limit always constitutes a
ground-complete system.
On the other hand, if there is a \emph{complete} presentation that is
compatible with the employed reduction order, does ordered completion
also produce a complete presentation, ending with $\EEw = \varnothing$?
In this section we revisit two results from the literature which
provide sufficient conditions for ordered completion to
always derive a complete system, independent of the strategy
employed by a completion procedure.
In \ssecref{ground total} we reprove the result by Bachmair, Dershowitz,
and Plaisted for the case where the reduction order is ground
total~\cite{BDP89}.
The corresponding result by Devie~\cite{D91} for linear systems is
considered in~\ssecref{linear}.

\subsection{Ground-Total Orders}%
\label{ssec:ground total}

In this subsection we consider a fair run $\Gamma$ of ordered completion
\[
(\EE_0, \RR_0) ~\vdo~ (\EE_1, \RR_1) ~\vdo~ (\EE_2, \RR_2) ~\vdo~ \cdots
\]
with respect to a ground-total reduction order $>$. If
$\EEw = \varnothing$ then the TRS $\RRw$ is a
complete presentation of $\EE_0$ by
\thmref{complete presentation okb}.
According to Bachmair \etal~\cite[Theorem 2]{BDP89}, under
certain conditions fair runs always conclude with $\EEw = \varnothing$
whenever there exists a complete presentation of $\EE_0$ compatible with
$>$.
In the remainder of this subsection we give a formalized proof of
this result.
Like the original proof, it is based on the idea
that ground-completeness of $\RRw$ is preserved under signature
extension with constants.
Let $\KK$ be a set of different fresh constants $\hat{x}$ for every
variable $x \in \VV$.
We first show that the reduction order $>$ can be extended to
a ground-total order on the signature augmented by $\KK$ such that
minimum constants are preserved.

\begin{lem}%
\label{lem:gtK}
There exists a ground-total reduction order $>^\KK$ on
$\TT(\FF \cup \KK, \VV)$ such that ${>} \subseteq {>^\KK}$ and the
minimum constant with respect to $>$ is also minimum in $>^\KK$.
\hfill
\isaforlink{Ordered_Completion}{lem:less_sk}
\end{lem}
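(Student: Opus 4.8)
The plan is to build $>^\KK$ as a lexicographic combination of $>$ with a tie-breaking well-order on the fresh constants. Write $c$ for the minimum constant of $>$; by ground-totality of $>$ it is the least ground term. Let $\pi \colon \TT(\FF \cup \KK, \VV) \to \TT(\FF, \VV)$ be the homomorphism that leaves variables and $\FF$-symbols untouched and maps every fresh constant $\hat x \in \KK$ to $c$. In addition fix a well-order $\prec$ on the set $\KK \cup \{ c \}$ in which $c$ is the least element (possible since $\KK$ is in bijection with $\VV$, which we may well-order). I would then define $s >^\KK t$ to hold iff either $\pi(s) > \pi(t)$, or else $\pi(s) = \pi(t)$ and $s$ exceeds $t$ in the lexicographic comparison, along a fixed total order on positions, of the constants sitting at the leaf positions, using $\prec$ at each leaf. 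Intuitively a term over $\FF \cup \KK$ is viewed as an $\FF$-term whose $c$-leaves carry a variable tag; terms are first compared by forgetting the tags with $>$, and ties are broken by the tags.

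Next I would check the order conditions. Transitivity and irreflexivity are immediate from those of the lexicographic product. For closure under substitutions I use that $\pi(s\sigma) = \pi(s)\sigma'$ where $\sigma'(x) = \pi(\sigma(x))$ is a substitution over $\FF$: if the first component decides, closure follows from closure of $>$ under substitutions with the common $\sigma'$; if the tie-breaker decides then $\pi(s) = \pi(t)$ forces $\pi(s\sigma) = \pi(t\sigma)$, and since $\sigma$ only rewrites variable leaves and does so identically in $s$ and $t$, the first position at which $s\sigma$ and $t\sigma$ differ is still the first original differing leaf, so the verdict is preserved. Closure under contexts is analogous, using $\pi(C[s]) = \pi(C)[\pi(s)]$. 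For $> \subseteq {>^\KK}$ note that $\FF$-terms contain no fresh constants, so $\pi$ is the identity on them and the first component reproduces $>$ exactly. Well-foundedness follows because a $>^\KK$-descending chain yields a $>$-non-increasing chain of $\pi$-images that can strictly descend only finitely often; on the tail the $\pi$-image, and hence the skeleton, is fixed, so the tie-break must descend infinitely in the well-order $\prec$ over a fixed finite set of leaf positions, which is impossible.

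Ground-totality is where the tie-breaker earns its keep: for distinct ground $s, t$ over $\FF \cup \KK$ the images $\pi(s), \pi(t)$ are ground $\FF$-terms, hence comparable by ground-totality of $>$; if they are equal the two terms share a skeleton and differ at some leaf, where $\prec$ decides. Finally, preservation of the minimum is immediate: for any ground $t \neq c$ we have $\pi(t) \geqslant c = \pi(c)$, and when $\pi(t) = c$ the term $t$ is some $\hat x$ with $c \prec \hat x$, so $t >^\KK c$ in either case.

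The main obstacle I anticipate is reconciling two competing demands: ground-totality forces $>^\KK$ to compare the Skolem constants $\hat x$ — which, regarded as grounded variables, $>$ cannot compare with each other at all — while the rewrite-order axioms forbid any comparison that is not stable under the variable renamings implicit in substitution. The lexicographic design threads this needle by routing all structural comparison through $\pi$, so that the genuinely $>$-incomparable cases collapse to a common $\pi$-image and are resolved only by the fixed tagging order $\prec$; the delicate verification is precisely that this tagging layer survives substitution and context closure, which hinges on the observation that equal material introduced below variable leaves never shifts the first differing leaf.
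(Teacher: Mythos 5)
Your proposal is correct and follows the same overall design as the paper's proof: both define $>^\KK$ as a lexicographic combination whose first component applies $>$ to the images of terms under the collapse map sending every fresh constant to the minimal constant (your $\pi(t)$, the paper's $t_\bot$), and whose second component is a total tie-breaker for terms with equal images; the verification of stability, well-foundedness, ground-totality, extension of $>$, and preservation of the minimum then proceeds by the same case split on whether the collapsed images coincide. The one genuine divergence is the tie-breaker. The paper instantiates it with a Knuth--Bendix order over $\FF \cup \KK$ (all weights $1$, a total precedence with $\bot$ least and the constants of $\KK$ above all of $\FF$), so that transitivity, closure under contexts and substitutions, and ground-totality of the tie-breaker are inherited wholesale from existing KBO theory --- a choice driven by the Isabelle formalization, where ground-totality of KBO for total precedences was already available. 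You instead hand-roll a first-differing-leaf lexicographic comparison under a well-order $\prec$ on $\KK \cup \{c\}$, which is more elementary and self-contained, but shifts the burden onto you: you must argue that for $\pi$-equal terms the differing positions are exactly $\KK$-leaves and that this set (hence the first differing leaf and its verdict) is unchanged by substitution and by placement in a context. Your argument for this is sound, though the context case, which you dismiss as ``analogous,'' silently needs the fixed total order on positions to be compatible with prefixing ($q_1 < q_2$ implies $pq_1 < pq_2$); this holds for, e.g., the lexicographic order on position strings, and should be said explicitly, since it is precisely what makes the minimum of $\{pq \mid q \in D\}$ correspond to the minimum of $D$.
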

\begin{proof}
Let $\bot \in \FF$ be the minimum constant with respect to $>$.
We consider the KBO $\KBO[\sqsupset]$ with weights
$w_0 = 1$ and $w(f) = 1$ for all $f \in \FF \cup \KK$ together with
a precedence $\sqsupset$ which is total on $\FF \cup \KK$, has $\bot$ as
the minimum element, and satisfies $\hat{x} \sqsupset f$ for all
$f \in \FF$ and $\hat{x} \in \KK$.
Given a term $t \in \TT(\FF \cup \KK, \VV)$, we write
$t_\bot$ for the term
obtained from $t$ by replacing every constant in $\KK$
with $\bot$.
Furthermore, we define $s >^\KK t$ as $s_\bot > t_\bot$, or both
$s_\bot = t_\bot$ and $s \sqsupset_\kbo t$.
We show that $>^\KK$ is a ground-total reduction order with the
stated properties. Ground totality of $>^\KK$ follows from ground
totality of $\KBO[\sqsupset]$ given the total precedence.
Well-foundedness holds by construction as a lexicographic combination of
well-founded relations. Closure under substitutions is satisfied because
it holds for both $>$ and $\KBO[\sqsupset]$, and $s_\bot = t_\bot$ implies
$s\sigma_\bot = t\sigma_\bot$. Similar arguments apply to closure under
contexts and transitivity. By construction of $\sqsupset$ and the
definition of $>^\KK$, the constant $\bot$ is still minimal.
Moreover $>^\KK$ extends $>$ because $s > t$ implies
$s, t \in \TT(\FF,\VV)$, so
$s_\bot = s > t = t_\bot$ and hence $s >^\KK t$.
\end{proof}

We write $\hat{t}$ for the ground term that is obtained from $t$ by
replacing every variable $x$ by the constant $\hat{x}$.
In the next lemma we verify some basic properties related to this
\emph{grounding} operation.

\begin{lem}%
\label{lem:hat}
Let $\RR$ be a TRS over a signature $\FF$ and let $s, t \in \TT(\FF,\VV)$.
\begin{enumerate}
\item%
\label{lem:hat:1}
If $s > t$ then $\hat{s} >^\KK \hat{t}$.
\hfill
\isaforlink{Ordered_Completion}{lem:sk_less_compat}

\smallskip
\item%
\label{lem:hat:2}
Suppose $s \neq t$. Then $s \to_\RR t$ if and only if
$\hat{s} \to_\RR \hat{t}$.
\hfill
\isaforlink{Ordered_Completion}{lem:rstep_step_sk}
\end{enumerate}
\end{lem}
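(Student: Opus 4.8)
The plan is to treat the grounding operation $\hat{\cdot}$ as the substitution $\hat{\sigma}$ that sends each variable $x$ to the fresh constant $\hat{x} \in \KK$; on any fixed term only finitely many variables occur, so $\hat{\sigma}$ restricts to a genuine (finite-domain) substitution there, and one has $\hat{t} = t\hat{\sigma}$. Two elementary properties of $\hat{\sigma}$ will carry the whole argument. First, it is \emph{injective} on $\TT(\FF,\VV)$: the constants $\hat{x}$ are pairwise distinct and fresh, so the map has the obvious left inverse (replace every $\hat{x}$ by $x$), and hence $\hat{a} = \hat{b}$ implies $a = b$. Second, grounding merely \emph{relabels the variable leaves} and therefore preserves tree structure: $\Pos(\hat{t}) = \Pos(t)$, and $\hat{t}|_p = \widehat{t|_p}$ for every $p \in \Pos(t)$.

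For part~\eqref{lem:hat:1}, I would argue directly: if $s > t$ then $s >^\KK t$ by the inclusion ${>} \subseteq {>^\KK}$ of \lemref{gtK}, and since $>^\KK$ is a reduction order it is closed under substitutions, whence $\hat{s} = s\hat{\sigma} >^\KK t\hat{\sigma} = \hat{t}$. The forward direction of part~\eqref{lem:hat:2} is equally immediate: $\to_\RR$ is closed under substitutions, so $s \to_\RR t$ yields $\hat{s} = s\hat{\sigma} \to_\RR t\hat{\sigma} = \hat{t}$.

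The interesting direction, and the one I expect to be the main obstacle, is the backward direction of~\eqref{lem:hat:2}, which is where injectivity becomes essential. I would assume $\hat{s} \to_\RR \hat{t}$ by a rule $\ell \to r \in \RR$ at a position $p$ with matcher $\tau$, so that $\hat{s}|_p = \ell\tau$ and $\hat{t} = \hat{s}[r\tau]_p$. Because $\ell$ uses only symbols from $\FF$ (never a constant of $\KK$) and because $\Pos(\hat{s}) = \Pos(s)$, the position $p$ is also a position of $s$. For each variable $y \in \Var(\ell)$ I pick an occurrence $q \in \Pos(\ell)$ with $\ell|_q = y$, observe that $\tau(y) = (\ell\tau)|_q = \hat{s}|_{pq} = \widehat{s|_{pq}}$, and define the pulled-back matcher by $\sigma(y) = s|_{pq}$. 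This $\sigma$ is well defined even for non-left-linear $\ell$: different occurrences of $y$ give subterms of $s$ that agree after grounding, hence are equal by injectivity.

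I would then close the argument with two more appeals to injectivity. Computing $\widehat{\ell\sigma}$ replaces each $y$ by $\widehat{\sigma(y)} = \tau(y)$, so $\widehat{\ell\sigma} = \ell\tau = \widehat{s|_p}$ and therefore $\ell\sigma = s|_p$, giving the step $s \to_\RR s[r\sigma]_p$. The same computation yields $\widehat{r\sigma} = r\tau$, so $\widehat{s[r\sigma]_p} = \hat{s}[r\tau]_p = \hat{t}$, and injectivity forces $s[r\sigma]_p = t$; thus $s \to_\RR t$. The whole difficulty therefore reduces to the bookkeeping of this pull-back, which is controlled entirely by structure preservation and injectivity of $\hat{\cdot}$. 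The hypothesis $s \neq t$ plays no essential role in the equivalence beyond ensuring that the related steps are genuine non-identity steps on both sides, which is automatic here since $\hat{\cdot}$ is injective.
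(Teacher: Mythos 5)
Your proof is correct and follows essentially the same route as the paper's: part~(1) and the forward direction of~(2) are the paper's argument verbatim (via ${>} \subseteq {>^\KK}$ from \lemref{gtK} and closure under substitutions), and your backward direction---positional pullback of the matcher plus injectivity of the grounding---is exactly the paper's device of applying the unhatting map $\phi$ (your left inverse, replacing each $\hat{x}$ by $x$) to the step, your pulled-back $\sigma$ coinciding with the paper's $\sigma_\phi$. Incidentally, the paper's own proof also never invokes the hypothesis $s \neq t$, so your closing observation is consistent with it.
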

\begin{proof}
\hfill
\begin{enumerate}
\item
Suppose $s > t$. \lemref{gtK} yields $s >^\KK t$ and, because
$>^\KK$ is closed under substitutions, $\hat{s} >^\KK \hat{t}$.
\item We consider the two implications separately.
\begin{itemize}
\item
If $s \to_\RR t$ then $\Var(t) \subseteq \Var(s)$.
Let $\sigma$ be a substitution such that $\hat{s} = s\sigma$.
We have $\hat{t} = t\sigma$ and thus
$\hat{s} = s\sigma \to_\RR t\sigma = \hat{t}$.
\smallskip
\item
Conversely, if $\hat{s} \to_\RR \hat{t}$ then $\hat{s}|_p = \ell\sigma$
and $\hat{t} = \hat{s}{[r\sigma]}_p$ for some rule $\ell \to r \in \RR$,
position $p$, and substitution $\sigma$.
We denote the substitution
$\{ x \mapsto \phi(\sigma(x)) \mid x \in \VV \}$
by $\sigma_\phi$. Here $\phi(u)$ denotes the term obtained from
$u$ after replacing every constant $\hat{x}$ of $\KK$ by $x$.
Because $s|_p = \phi(\hat{s}|_p) = \phi(\ell\sigma) = \ell\sigma_\phi$
and $t = \phi(\hat{t}) = \phi(\hat{s}{[r\sigma]}_p) = s{[r\sigma_\phi]}_p$,
we obtain $s \to_\RR t$ as desired.
\qedhere
\end{itemize}
\end{enumerate}
\end{proof}

\noindent
It is not hard to see that the TRS $\SSw$ still constitutes a
ground-complete presentation of $\EE_0$ when considered over the
extended signature, as shown below.

\begin{lem}%
\label{lem:signature extension}
The TRS $\SSw$ is ground-complete over $\FF \cup \KK$
and ${\fromto_{\ERw}^*} = {\fromto_{\EE_0}^*}$.
\hfill
\isaforlink{Ordered_Completion}{lem:correctness_okb_sk}
\end{lem}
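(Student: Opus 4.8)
The plan is to show that the fair run $\Gamma$, which by assumption is carried out over the signature $\FF$ with respect to $>$, may be reinterpreted verbatim as a fair run over the extended signature $\FF \cup \KK$ with respect to the order $>^\KK$ supplied by \lemref{gtK}, and then simply to appeal to \thmref{KBo correctness} in this extended setting. First I would invoke \lemref{gtK} to fix a ground-total reduction order $>^\KK$ on $\TT(\FF \cup \KK, \VV)$ with ${>} \subseteq {>^\KK}$. Reading $\SSw$ over $\FF \cup \KK$ means taking its ordered part $\EEw^{>^\KK}$ with respect to $>^\KK$; since $\RRw \subseteq {>}$ by \lemref{oKB Rw terminating} and every instance in $\EEw^{>^\KK}$ is contained in $>^\KK$ by definition, the resulting TRS is contained in the well-founded $>^\KK$ and hence terminating over $\FF \cup \KK$.

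Next I would check that each inference step of $\Gamma$ remains valid over $(\FF \cup \KK, >^\KK)$. The \tsfs{orient} steps use $s > t$, which yields $s >^\KK t$; the rewrite-based steps (\tsfs{compose}, \tsfs{simplify}, \tsfs{collapse}) use rules drawn from $\EE_i^> \cup \RR_i$, and from ${>} \subseteq {>^\KK}$ we get $\EE_i^> \subseteq \EE_i^{>^\KK}$, so exactly the same rules are available in the extended setting; \tsfs{deduce} and \tsfs{delete} are order-independent. Moreover the persistent sets $\EEw$, $\RRw$ and the union $\EEi$ depend only on which equations and rules occur cofinally in $\Gamma$ and are therefore unchanged by the reinterpretation.

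It then remains to transfer fairness (\defref{KBo fairness}), and I expect this to be the main obstacle, as it is the only point where the distinction between $>$ and $>^\KK$ genuinely matters. The key claim is $\PCP_{>^\KK}(\ERw) \subseteq \PCP_{>}(\ERw)$: since a most general unifier of $\FF$-terms is an $\FF$-substitution, the side conditions $r_i\mu \not>^\KK \ell_i\mu$ of an extended overlap are more restrictive than $r_i\mu \not> \ell_i\mu$ (because ${>} \subseteq {>^\KK}$ on $\TT(\FF,\VV)$), and likewise primality with respect to $\EE^{>^\KK}$ implies primality with respect to $\EE^{>}$ since $\NF(\EE^{>^\KK}) \subseteq \NF(\EE^{>})$. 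Combining this with $\SSw \subseteq \EEw^{>^\KK} \cup \RRw$, so that joinability is preserved, gives
$\PCP_{>^\KK}(\ERw) \subseteq \PCP_{>}(\ERw) \subseteq \join_{\SSw} \cup \fromto_{\EEi} \subseteq \join_{\EEw^{>^\KK} \cup \RRw} \cup \fromto_{\EEi}$,
that is, $\Gamma$ is fair with respect to $>^\KK$. Applying \thmref{KBo correctness} over $(\FF \cup \KK, >^\KK)$ then delivers both conclusions at once: $\SSw$ is ground-complete over $\FF \cup \KK$, and $\fromto_{\ERw}^* = \fromto_{\EE_0}^*$. The remaining bookkeeping (validity of the reinterpreted run, unchangedness of the persistent sets) is routine once the fairness transfer is in place.
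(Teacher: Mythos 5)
Your proposal is correct and takes essentially the same route as the paper's proof: reinterpret $\Gamma$ as a valid and fair run over $\FF \cup \KK$ with respect to the order $>^\KK$ of \lemref{gtK}, transfer fairness via the key inclusion $\PCP_{>^\KK}(\ERw) \subseteq \PCP_{>}(\ERw)$, and conclude by \thmref{KBo correctness}. The paper compresses this into three sentences; your per-rule validity check and the joinability transfer to $\EEw^{>^\KK} \scup \RRw$ are precisely the details it leaves implicit.
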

\begin{proof}
Since $>^\KK$ contains $>$ by \lemref{gtK}, the run $\Gamma$ is also a
valid run with respect to $>^\KK$. It is moreover fair since
${>} \subseteq {>^\KK}$ implies
$\PCP_{>^\KK}(\EE) \subseteq \PCP_{>}(\EE)$ for any set of equations
$\EE$, by \defref{extended overlap}.
Hence the result follows from \thmref{KBo correctness}.
\end{proof}

An important observation for the completeness proof below is that
normal forms with respect to the final system $\SSw$ and
with respect to the union $\SSi$ of intermediate systems
coincide, as shown below.

\begin{lem}%
\label{lem:reducibility}
The inclusion $\NF(\SSw) \subseteq \NF(\SSi)$ holds.
\hfill
\isaforlink{Ordered_Completion}{lem:NF_Sw_subset_NF_Sinf}
\end{lem}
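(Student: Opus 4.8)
The plan is to establish the contrapositive: every \emph{ground} term $t$ that is reducible with respect to $\SSi = \EEi^{>} \cup \RRi$ is also reducible with respect to $\SSw = \EEw^{>} \cup \RRw$. (The reverse inclusion $\NF(\SSi) \subseteq \NF(\SSw)$ is immediate from $\SSw \subseteq \SSi$.) So suppose $t \to_{\SSi} u$. Since $\RRi \subseteq {>}$ by \lemref{oKB Rw terminating} and $\EEi^{>} \subseteq {>}$ by the definition of $\EE^{>}$, we have $\SSi \subseteq {>}$ and hence $t > u$; in particular $t \neq u$. Reading this as an $\ERi$-step (an equation of $\EEi^{\pm}$ used in the appropriate direction, or a rule of $\RRi$) and using $t \geqslant t$ together with $t \geqslant u$, \defref{mset labeled rewriting} yields the labeled step $t \xrightarrow[\ERi]{\{t\}} u$. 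Now \corref{REi subset REw} provides a conversion $t \xlr[\ERw]{\{t\}}^* u$, every term of which is bounded by $t$ with respect to $\geqslant$; as $t \neq u$ this conversion is non-empty.

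Next I would make this conversion ground. Since $\to_{\ERw}$ and $\geqslant$ are closed under substitutions, applying a fixed ground substitution $\theta$ to the whole conversion again yields a labeled $\ERw$-conversion, now between $t$ (unchanged, as $t$ is ground) and $u\theta$, all of whose terms are ground and $\leqslant t$. From $t > u$ and closure of $>$ under substitutions we obtain $t > u\theta$, so the grounded conversion is still non-empty. Let $a$ be its first term different from $t$; since every term is $\leqslant t$ we have $t > a$, and because all preceding terms equal $t$, there is a single $\ERw$-step $t \xlr[\ERw]{} a$.

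This step uses a rule of $\RRw$ or an equation of $\EEw^{\pm}$. If it uses a rule of $\RRw$, it must be oriented from $t$ to $a$, since the opposite orientation would give $a > t$ (as $\RRw \subseteq {>}$), contradicting $t > a$; hence $t$ is $\RRw$-reducible and therefore $\SSw$-reducible. If it uses an equation $\ell \approx r \in \EEw^{\pm}$, then $t = C[\ell\sigma]$ and $a = C[r\sigma]$ for a context $C$ and substitution $\sigma$, with $\ell\sigma$ and $r\sigma$ ground. This last case is the main obstacle, because from $C[\ell\sigma] = t > a = C[r\sigma]$ one cannot in general infer $\ell\sigma > r\sigma$. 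Here I would invoke ground-totality of $>$ (the standing assumption of this subsection): on the ground terms $\ell\sigma$ and $r\sigma$ both $\ell\sigma = r\sigma$ (which forces $t = a$) and $r\sigma > \ell\sigma$ (which forces $a > t$ by closure under contexts) are excluded, so $\ell\sigma > r\sigma$. Thus $t \to_{\EEw^{>}} a$, and $t$ is again $\SSw$-reducible, completing the proof of the contrapositive.
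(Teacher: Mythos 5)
Your argument establishes the inclusion only for \emph{ground} terms, and that is a genuine gap: the lemma is stated for arbitrary terms, and the paper later needs it in that generality. Via \corref{normalization equivalence of Si and Sw} it is applied to terms containing variables, both inside the induction of \lemref{oKBi_hat_Si_step_Si_step} (the step from $t \notin \NF(\SSi)$ to $t \notin \NF(\SSw)$) and in the proof of \lemref{oKBi_NF_Sw_subset_NF_R}. Moreover, the ground restriction is not incidental to your method but essential, in two places. First, your grounding substitution $\theta$ keeps the source of the conversion fixed only because $t\theta = t$ for ground $t$; if $t$ contains variables, the grounded conversion starts at $t\theta$, and $\SSw$-reducibility of $t\theta$ does not yield $\SSw$-reducibility of $t$ (reducibility is not reflected under substitution --- this transfer is precisely the content of \lemref{oKBi_hat_Si_step_Si_step}, whose proof in turn relies on the present lemma, so invoking anything of that kind here would be circular). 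Second, your concluding case analysis orients the first $\EEw^\pm$-step $t = C[\ell\sigma] \fromto C[r\sigma] = a$ by appealing to ground-totality of $>$ on $\ell\sigma$ and $r\sigma$; for open terms these instances may be incomparable, and $C[\ell\sigma] > C[r\sigma]$ does not imply $\ell\sigma > r\sigma$, so no $\SSw$-step can be extracted from the conversion.

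The paper's proof works uniformly for all terms because it never orients steps of a conversion. Instead it proves two claims by well-founded induction over the run history, using \lemref[E]{okb inclusions infinite} and \lemref[R]{okb inclusions infinite}: (a) if $\ell \approx r \in \EEi^\pm$ with $\ell \in \NF(\RRi)$ and $\ell\sigma > r\sigma$, then $\ell\sigma \notin \NF(\EEw^>)$, by induction on $\{\ell,r\}$ with respect to $\succ_\mul$; and (b) if $\ell \to r \in \RRi$ then $\ell \notin \NF(\SSw)$, by induction on $(\ell,r)$ with respect to $\succ_\lex$. These claims track the fate of the particular equation or rule used in the reducing step, and closure of reducibility under contexts, substitutions and encompassment then yields $\NF(\SSw) \subseteq \NF(\SSi)$ for arbitrary terms, with no appeal to totality. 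Within the ground fragment your combination of \corref{REi subset REw} with ground-totality is sound (it resembles the proof of \corref{Ei to Sw}), but to prove the lemma itself you would have to replace the totality argument in the $\EEw$-case by an induction of the above kind --- at which point you would essentially have reconstructed the paper's proof.
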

\begin{proof}
The result is an immediate consequence of the following two claims:
\begin{enumerate}
\item[(a)]%
\label{lem:ERw_reducibility:E}
If $\ell \approx r \in \EEi^\pm$, $\ell \in \NF(\RRi)$, and
$\ell\sigma > r\sigma$ then $\ell\sigma \notin \NF(\EEw^>)$.
\smallskip
\item[(b)]%
\label{lem:ERw_reducibility:R}
If $\ell \to r \in \RRi$ then $\ell \notin \NF(\SSw)$.
\end{enumerate}
For claim (a) we use induction on $\{ \ell, r \}$ with respect to
$\succ_\mul$. If $\ell \approx r \in \EEw^\pm$ the result is immediate.
Otherwise, $\ell \approx r \in \EE_i \setminus \EE_{i+1}$ or
$r \approx \ell \in \EE_i\setminus \EE_{i+1}$
for some $i \geqslant 0$. Without loss of generality we assume the
former since the latter case is similar. From
\lemref{okb inclusions infinite}(\ref{lem:okb inclusions infinite:E}) 
we obtain $\ell
\mathrel{{(\to_{\SS_{i+1}}^{\pe_1} \cdot \mathrel{\EE_{i+1}}^\pm)}^\pm} r$,
$\ell \to r \in \RR_{i+1}$, $r \to \ell \in \RR_{i+1}$, or $\ell = r$.
The latter two cases are impossible because of the assumption
$\ell\sigma > r\sigma$ and the inclusion $\RR_{i+1} \subseteq
\RR_\infty \subseteq {>}$. Also $\ell \to r \in \RR_{i+1}$ is
impossible because of the assumption $\ell \in \NF(\RRi)$.
\begin{itemize}
\item
Suppose $\ell \to_{\SS_{i+1}}^{\pe_1} u$ and
$u \approx r \in \EE_{i+1}^\pm$ for some term $u$. The step
$\ell \to_{\SS_{i+1}} u$ cannot use a rule in $\RR_{i+1}$
because $\ell \in \NF(\RRi)$. So there must be an equation
$\ell' \approx r' \in \EE_{i+1}^\pm$, a substitution $\tau$, and a
position $p$ in $\ell$ such that $\ell|_p = \ell'\tau$, $u|_p = r'\tau$,
$\ell'\tau > r'\tau$, and $\ell \prencompasses \ell'$. Because of
$\ell \encompasses \ell'\tau > r'\tau \encompasses r'$ we have
$\ell \succ r'$, and therefore $\{ \ell, r \} \succ_\mul \{ \ell', r' \}$.
Moreover, $\ell' \in \NF(\RRi)$. The induction hypothesis yields
$\ell'\tau \notin \NF(\EEw^>)$. Since $\ell \encompasses \ell'\tau$, we
have $\ell \notin \NF(\EEw^>)$ and thus also
$\ell\sigma \notin \NF(\EEw^>)$.
\smallskip
\item
In the remaining case we have $r \to_{\SS_{i+1}}^{\pe_1} u$ and
$u \approx \ell \in \EE_{i+1}^\pm$ for some term $u$. We have $r > u$
and thus also
$r \succ u$ and $\{ \ell, r \} \succ_\mul \{ \ell, u \}$.
Because $\ell\sigma > r\sigma > u\sigma$, the result follows from the
induction hypothesis.
\end{itemize}
For claim (b) we use induction on $(\ell,r)$ with respect to
$\succ_\lex$. If $\ell \to r \in \RRw$ then $\ell \notin \NF(\SSw)$
trivially holds. Otherwise, $\ell \to r \in \RR_i \setminus \RR_{i+1}$
for some $i \geqslant 0$. From
\lemref{okb inclusions infinite}(\ref{lem:okb inclusions infinite:R}) 
we obtain $\ell \to_{\SS_{i+1}}^{\pe_2} \cdot \mathrel{\EE_{i+1}} r$
or $\ell \mathrel{\RR_{i+1}} \cdot \FromB{\SS_{i+1}} r$. In the latter
case there is a term $u$ such that $\ell \to u \in \RR_{i+1}$ and
$r \to_{\SS_{i+1}} u$. Since this implies $r > u$ and thus
$(\ell,r) \succ_\lex (\ell,u)$, we obtain $\ell \notin \NF(\SSw)$ from
the induction hypothesis. In the former case there is a term $u$ such that
$\ell \to_{\SS_{i+1}}^{\pe_2} u$ and $u \approx r \in \EE_{i+1}$.
If the step $\ell \to_{\SS_{i+1}} u$ uses a rule
$\ell' \to r' \in \RR_{i+1}$ then the result follows from the induction
hypothesis because $\ell \prencompasses \ell'$ implies
$(\ell,r) \succ_\lex (\ell',r')$, and $\ell' \notin \NF(\SSw)$ implies
$\ell \notin \NF(\SSw)$. Otherwise, there exist an equation
$\ell' \approx r' \in \EE_{i+1}$, a position $p$ in $\ell$, and a
substitution $\sigma$ such that $\ell|_p = \ell'\sigma$,
$r|_p = r'\sigma$, $\ell'\sigma > r'\sigma$, and
$\ell \prencompasses \ell'$. If $\ell' \in \NF(\RRi)$ then we obtain
$\ell'\sigma \notin \NF(\EEw^>)$ from claim (a) and thus
$\ell \notin \NF(\SSw)$ because $\ell \encompasses \ell'\sigma$. If
$\ell' \notin \NF(\RRi)$ then there exists some rule
$\ell'' \to r'' \in \RRi$ such that $\ell' \encompasses \ell''$. In this
case we have $\ell \succ \ell''$ and thus
$(\ell,r) \succ_\lex (\ell'',r'')$. We obtain
$\ell'' \notin \NF(\SSw)$ from the induction hypothesis. Hence also
$\ell \notin \NF(\SSw)$.
\end{proof}

\begin{cor}%
\label{cor:normalization equivalence of Si and Sw}
The identity $\NF(\SSw) = \NF(\SSi)$ holds.
\hfill
\isaforlink{Ordered_Completion}{lem:NF_Sw_NF_Sinf}
\end{cor}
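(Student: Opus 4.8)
The plan is to obtain the identity by combining the nontrivial inclusion already established in \lemref{reducibility} with an essentially trivial reverse inclusion. \lemref{reducibility} directly provides $\NF(\SSw) \subseteq \NF(\SSi)$, which is the substantial half of the claim and is where all the real work (the two inductive subclaims about $\EEi^\pm$ and $\RRi$) has already been carried out. Hence the corollary reduces to proving the opposite inclusion $\NF(\SSi) \subseteq \NF(\SSw)$.

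For that reverse inclusion I would first show that $\SSw \subseteq \SSi$ as rewrite systems. From \lemref{oKB Rw terminating} we have $\RRw \subseteq \RRi$ and $\EEw \subseteq \EEi$. Since enlarging a set of equations can only enlarge its set of ordered instances, $\EEw \subseteq \EEi$ yields $\EEw^> \subseteq \EEi^>$. Therefore $\SSw = \EEw^> \cup \RRw \subseteq \EEi^> \cup \RRi = \SSi$, and consequently ${\to_{\SSw}} \subseteq {\to_{\SSi}}$. Because a smaller rewrite relation has at least as many normal forms, every $\SSi$-normal form is an $\SSw$-normal form, that is, $\NF(\SSi) \subseteq \NF(\SSw)$.

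Combining the two inclusions gives $\NF(\SSw) = \NF(\SSi)$. The only genuine obstacle lives entirely inside \lemref{reducibility}; the corollary itself needs no new idea beyond monotonicity of the map $\EE \mapsto \EE^>$ and the order-reversing behaviour of $\NF(\cdot)$ under inclusion of rewrite relations.
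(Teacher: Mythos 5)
Your proposal is correct and matches the paper's own proof: both directions are obtained exactly as in the paper, with \lemref{reducibility} supplying $\NF(\SSw) \subseteq \NF(\SSi)$ and the trivial inclusion ${\to_{\SSw}} \subseteq {\to_{\SSi}}$ (which you justify in slightly more detail via \lemref{oKB Rw terminating}) giving the reverse.
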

\begin{proof}
We obtain $\NF(\SSi) \subseteq \NF(\SSw)$ from the
inclusion ${\to_{\SSw}} \subseteq {\to_{\SSi}}$ and hence the result
follows from \lemref{reducibility}.
\end{proof}

Hereafter we assume that there is a complete presentation $\RR$ of
$\EE_0$ with $\RR \subseteq {>}$.
We next show that grounded terms which are $\SSw$-normal forms
are also $\RR$-normal forms.

\begin{lem}%
\label{lem:R_NFs}
If $\hat{t} \in \NF(\SSw)$ then $\hat{t} \in \NF(\RR)$.
\hfill
\isaforlink{Ordered_Completion}{lem:NF_S_NF_R}
\end{lem}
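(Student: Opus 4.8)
The plan is to compare the two candidate normal forms of $\hat{t}$. Since $\RR \subseteq {>} \subseteq {>^\KK}$ by \lemref{gtK} and $>^\KK$ is well-founded, $\RR$ is terminating on ground terms over $\FF \cup \KK$, so $\hat{t}$ has an $\RR$-normal form $\hat{t}{\downarrow_\RR}$, which is again ground because the rules of $\RR$ satisfy the variable condition. It suffices to show $\hat{t} = \hat{t}{\downarrow_\RR}$, as this immediately gives $\hat{t} \in \NF(\RR)$. I would first record the easy half of the ordering: every $\RR$-step from $\hat{t}$ strictly decreases $>^\KK$, so either $\hat{t} = \hat{t}{\downarrow_\RR}$ (in which case we are done) or $\hat{t} >^\KK \hat{t}{\downarrow_\RR}$.

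Next I would transfer the reduction $\hat{t} \to_\RR^* \hat{t}{\downarrow_\RR}$ into an $\SSw$-conversion. Let $\phi$ be the map replacing each constant $\hat{x} \in \KK$ by $x$, which is inverse to grounding on the terms at hand. Since consecutive terms of the reduction $\hat{t} \to_\RR^* \hat{t}{\downarrow_\RR}$ differ (each step decreases $>^\KK$), repeated application of \lemref[2]{hat} produces a reduction $t \to_\RR^* \phi(\hat{t}{\downarrow_\RR})$ over $\FF$. Because $\RR$ is a complete presentation of $\EE_0$, this gives $t \conv_{\EE_0} \phi(\hat{t}{\downarrow_\RR})$, and applying the grounding substitution $x \mapsto \hat{x}$ (under which equational conversion is closed) yields $\hat{t} \conv_{\EE_0} \hat{t}{\downarrow_\RR}$. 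Now \lemref{signature extension} supplies ${\conv_{\ERw}} = {\conv_{\EE_0}}$ over $\FF \cup \KK$; and since $>^\KK$ is ground-total and both terms are ground, every $\EEw$-step occurring between ground terms is an oriented $\SSw$-step or a trivial step, so $\hat{t} \conv_{\SSw} \hat{t}{\downarrow_\RR}$.

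Finally I would invoke ground-confluence of $\SSw$ over $\FF \cup \KK$ (again \lemref{signature extension}) to obtain a common $\SSw$-reduct of $\hat{t}$ and $\hat{t}{\downarrow_\RR}$. As $\hat{t} \in \NF(\SSw)$, this common reduct is $\hat{t}$ itself, hence $\hat{t}{\downarrow_\RR} \to_{\SSw}^* \hat{t}$. Every $\SSw$-step decreases $>^\KK$, because $\RRw \subseteq {>}$ by \lemref{oKB Rw terminating} and the oriented instances $\EEw^{>^\KK}$ are $>^\KK$-decreasing; thus $\hat{t}{\downarrow_\RR} = \hat{t}$ or $\hat{t}{\downarrow_\RR} >^\KK \hat{t}$. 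Combined with $\hat{t} >^\KK \hat{t}{\downarrow_\RR}$ from the first step, acyclicity of the well-founded order $>^\KK$ forces $\hat{t} = \hat{t}{\downarrow_\RR}$, whence $\hat{t} \in \NF(\RR)$. I expect the main obstacle to be this middle step: shuttling between the base signature $\FF$ and the extended signature $\FF \cup \KK$ through the grounding correspondence so that the conversion relations of $\RR$, $\EE_0$, and $\SSw$ line up, since this is exactly where the fresh-constant bookkeeping and the appeals to \lemref[2]{hat} and \lemref{signature extension} must be handled with care.
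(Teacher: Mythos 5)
Your proof is correct and uses essentially the same machinery as the paper's: ground-completeness of $\SSw$ over the extended signature (\lemref{signature extension}) joins $\hat{t}$ with an $\RR$-reduct, compatibility of $\RR$ with ${>} \subseteq {>^\KK}$ (Lemmata~\ref{lem:gtK} and~\ref{lem:hat}) rules out the joining valley running towards $\hat{t}$, and the hypothesis $\hat{t} \in \NF(\SSw)$ then forces the conclusion. The differences are only in packaging---the paper takes a single $\RR$-step and argues by contradiction rather than normalizing fully, and it compresses into one appeal to ground-completeness the signature bookkeeping (your $\phi$-transfer via \lemref[2]{hat} and the grounding of intermediate conversion terms) that you spell out explicitly.
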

\begin{proof}
Suppose $\hat{t} \in \NF(\SSw)$ but $\hat{t} \notin \NF(\RR)$, so
$\hat{t} \to_\RR u$ for some term $u$.
Since $\hat t$ is ground and $\RR$ is terminating, also $u$ is
ground. We obtain $\hat{t} \join_\SSw u$ from the ground-completeness of
$\SSw$ (\lemref{signature extension}).
Since $\hat{t} \mathrel{>^\KK} u$ by the global assumption
$\RR \subseteq {>}$ and \lemref[1]{hat},
the joining sequence cannot be of the form
$\hat{t} \From{\SSw}{*} u$ as this would imply $u \geqslant \hat{t}$
and thus $u \geqslant^\KK \hat{t}$,
contradicting the well-foundedness of $>^\KK$.
Therefore we must have
$\hat{t} \to^+_\SSw \cdot \From{\SSw}{*} u$ which means that
$\hat{t}$ is reducible in $\SSw$,
contradicting the assumption $\hat{t} \in \NF(\SSw)$.
\end{proof}

The preliminary results collected so far now lead to the following
key observation: If a grounded term $\hat{s}$ is reducible then so
is its (possibly non-ground) counterpart $s$.
In \lemref{oKBi_NF_Sw_subset_NF_R} below we can then connect
$\RR$-reducibility to $\SSw$-reducibility for terms over the original
signature.

\begin{lem}%
\label{lem:oKBi_hat_Si_step_Si_step}
If $\hat{s} \to_\SSi \hat{t}$ then $s \notin \NF(\SSi)$.
\hfill
\isaforlink{Ordered_Completion}{lem:sk_Swinf_step_imp_no_Sinf_NF}
\end{lem}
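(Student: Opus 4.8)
The plan is to produce an $\SSi$-redex inside $s$ by \emph{de-grounding} the redex that $\hat{s}$ already offers. Write $\phi$ for the map that replaces every fresh constant $\hat{x}\in\KK$ by the variable $x$, so that $\phi(\hat{u})=u$ for all $u\in\TT(\FF,\VV)$, and associate to a ground matcher $\tau$ the $\FF$-substitution $\sigma_\phi=\{x\mapsto\phi(\tau(x))\}$ exactly as in the proof of \lemref[2]{hat}. First I would record that grounding commutes with taking subterms: since $s$ is over $\FF$, the only $\KK$-constants in $\hat{s}$ are the grounded variables of $s$, so if the contracted redex of $\hat{s}\to_\SSi\hat{t}$ sits at position $p$ then $\hat{s}|_p=\widehat{s|_p}$. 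Because $\to_\SSi$ is contained in ${>^{\KK}}$ we have $\hat{s}\neq\hat{t}$, hence $s\neq t$; and as reducibility propagates to superterms it suffices to show that the subterm $a=s|_p$ is $\SSi$-reducible. The proof then splits on the kind of rule used in the contracted step.

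If the redex $\hat{s}|_p=\widehat{a}$ is contracted with a rule $\ell\to r\in\RRi$, the argument is clean and mirrors \lemref[2]{hat}: the ground matcher $\tau$ with $\widehat{a}|_q=\ell\tau$ yields $a|_q=\phi(\widehat{a}|_q)=\ell\sigma_\phi$ because $\ell$ is over $\FF$, so the very same rule applies to $a$ and therefore to $s$. This case uses only the variable condition of $\RRi$ and never refers to the complete presentation.

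The \emph{equation case} is where I expect the real difficulty to lie. If the step contracts an oriented instance of some $\ell\approx r\in\EEi^\pm$, then de-grounding again gives $a|_q=\ell\sigma_\phi$ matching the left-hand side, but orientability is lost: although $\widehat{\ell\sigma_\phi}>^{\KK}\widehat{r\sigma_\phi}$, once the fresh constants become variables the pair $\ell\sigma_\phi,r\sigma_\phi$ need not be comparable by $>$ (nor by ${>^{\KK}}$), so the same equation need not yield an $\SSi$-step on $a$. The hard part will be to recover reducibility of $a$ by a different route, namely through the complete presentation $\RR\subseteq{>}\subseteq{>^{\KK}}$. Concretely, $\widehat{a}$ is an $\SSi$-redex, hence by \corref{normalization equivalence of Si and Sw} it is not an $\SSw$-normal form; combining \lemref{R_NFs} with ground-completeness of $\SSw$ over $\FF\cup\KK$ (\lemref{signature extension}), completeness of $\RR$, and a short termination argument (any $\SSw$-reduct of an $\RR$-normal form would $\RR$-reduce back to it, contradicting well-foundedness of ${>^{\KK}}$) shows that $\NF(\SSw)$ and $\NF(\RR)$ coincide on grounded terms. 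Thus $\widehat{a}\notin\NF(\RR)$, and transferring this $\RR$-redex through $\phi$ as in the rule case exhibits genuine reducibility of $a$; here ground-totality of $>$ is exactly what guarantees that the witnessing step is available on the ground instance.

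The delicate bookkeeping point — and the place where I would be most careful — is keeping this last transfer \emph{inside} $\SSi$ rather than merely inside $\RR$, since $\RR$ is an external presentation with $\RR\not\subseteq\SSi$; this is precisely where the standing assumption that a complete presentation compatible with $>$ exists is indispensable (without it the statement fails, as a commutativity-style equation in $\EEi$ would make $\hat{s}$ reducible while $s$ is not). Finally, I would make sure the proof invokes only the already established \lemref{R_NFs}, \corref{normalization equivalence of Si and Sw}, \lemref{signature extension}, \lemref{reducibility} and \lemref[2]{hat}, because this statement is itself the bridge later used in \lemref{oKBi_NF_Sw_subset_NF_R} to compare $\RR$- and $\SSw$-reducibility over the original signature, and must not depend on that subsequent result.
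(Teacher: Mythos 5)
There is a genuine gap, and it sits exactly where you flagged it. Your rule case is fine, and your treatment of the branch where the grounded target is a normal form (via \corref{normalization equivalence of Si and Sw}, \lemref{R_NFs}, \lemref{signature extension} and \lemref[2]{hat}) recovers one piece of the paper's argument: in the paper this branch ends in the contradiction $s > t$, using that $\RR$ is a complete presentation. But in the equation case your route only ever produces \emph{$\RR$-reducibility} of $a = s|_p$: from $\hat{a} \notin \NF(\SSw)$ you get $\hat{a} \notin \NF(\RR)$ and then, by de-grounding, $a \notin \NF(\RR)$. The statement to be proved is $s \notin \NF(\SSi)$, and an $\RR$-redex in $a$ gives no $\SSi$-redex whatsoever: $\RR$ is an external presentation, and the instance $\ell\sigma_\phi \approx r\sigma_\phi$ of an equation of $\EEi$ that one would like to use instead need not be orientable by $>$ once the fresh constants have become variables. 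The implication ``$\RR$-reducible $\Rightarrow$ $\SSi$-reducible'' on $\TT(\FF,\VV)$ is precisely the contrapositive of \lemref{oKBi_NF_Sw_subset_NF_R}, which the paper derives \emph{from} the present lemma; so the transfer you defer as ``delicate bookkeeping'' is not bookkeeping but the entire mathematical content of the statement, and relying on it makes the argument circular.

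What is missing is the machinery the paper uses to close exactly this gap: a well-founded induction on $\hat{t}$ with respect to $>^\KK$, combined with \emph{fairness} of the run, which your proposal never invokes. In the critical branch ($s \not> t$, $\ell \approx r \in \EEw^\pm$, and $\hat{t}$ reducible), the paper picks an innermost step $t \ito_{\SSw} v$ and analyzes the peak $s \from_{\ell \approx r} t \to_{\SSw} v$: a proper overlap yields a prime extended critical pair, so fairness gives $s \fromto_{\EEi} v$ (then $\hat{s} >^\KK \hat{v}$ lets the induction hypothesis apply to $\hat{s} \to_{\SSi} \hat{v}$) or $s \join_{\SSw} v$ (then order reasoning via \lemref[1]{hat} forces $s$ to be $\SSw$-reducible); a variable overlap is handled directly. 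Non-persistent equations are dispatched separately through \lemref[E]{okb inclusions infinite}, again feeding the induction. Without fairness and without this inductive descent there is no way to convert reducibility of the grounded instance into reducibility of the non-ground term $s$, so the proposal as it stands does not prove the lemma.
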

\begin{proof}
There exist an equation $\ell \approx r \in \EEi^\pm \cup \RRi$, a
position $p$, and a substitution $\sigma$ such that
$\hat{s}|_p = \ell\hat{\sigma}$,
$\hat{t} = \hat{s}{[r\hat{\sigma}]}_p$, and
$\ell\hat{\sigma} >^\KK r\hat{\sigma}$. We perform induction on $\hat{t}$
with respect to $>^\KK$. If $p \neq \epsilon$ then
$\hat{t} \rhd r\hat{\sigma}$ and thus
$\hat{t} >^\KK r\hat{\sigma}$ because $>^\KK$ is a ground-total
reduction order.
The induction hypothesis yields $\ell\sigma \notin \NF(\SSi)$, which
implies $s \notin \NF(\SSi)$. So in the following we assume that the
step $\hat{s} \to_\SSi \hat{t}$ takes place at the root position.
If $s > t$ then $s \to t \in \SSi$, from which the claim is immediate.
This covers the case $\ell \approx r \in \RRi$, so if
$s \not> t$ then $\ell \approx r \in \EEi^\pm$.
We distinguish two cases, $\hat{t} \in \NF(\SSi)$ and
$\hat{t} \notin \NF(\SSi)$.
\begin{itemize}
\item
If $\hat{t} \in \NF(\SSi)$ then $\hat{t} \in \NF(\SSw)$
by \corref{normalization equivalence of Si and Sw} and thus
$\hat{t} \in \NF(\RR)$ by \lemref{R_NFs}. From
\lemref{signature extension} and the fact that $\RR$ is a complete
presentation of $\EE_0$ we obtain
$\hat{s} \to_\RR^+ \hat{t}$.  The latter implies $s \to_\RR^+ t$
by \lemref[2]{hat} and thus $s > t$, which is a contradiction.
\smallskip
\item
Suppose $\hat{t} \notin \NF(\SSi)$.
We distinguish two further cases, depending on whether or not
$\ell \approx r$ belongs to $\EEw^\pm$.

\smallskip

If $\ell \approx r \notin \EEw^\pm$ then
$\ell \approx r \in {(\EE_i \setminus \EE_{i+1})}^\pm$
for some $i \geqslant 0$.
From \lemref{okb inclusions infinite}(\ref{lem:okb inclusions infinite:E}) 
we obtain
$\ell \mathrel{{(\to_{\SS_{i+1}} \cdot
\mathrel{\EE_{i+1}^\pm})}_{\phantom{i}}^\pm} r$,
$\ell \to r \in \RR_{i+1}$, $r \to \ell \in \RR_{i+1}$, or $\ell = r$.
The last two cases contradict $\hat{s} >^\KK \hat{t}$. If
$\ell \mathrel{\to_{\SS_{i+1}} \cdot \mathrel{\EE_{i+1}^\pm}} r$ or
$\ell \to r \in \RR_{i+1}$ then $\ell \notin \NF(\SSi)$ and thus
$s = \ell\sigma \notin \NF(\SSi)$. Otherwise, $r \to_{\SS_{i+1}} u$
for some term $u$ with $u \approx \ell \in \EE_{i+1}^\pm$.
We have $s = \ell\sigma \xlr{}_{\EEi} u\sigma
\FromB{\SSi} r\sigma = t$ and thus
$\hat{s} = \ell\hat{\sigma} \to_\SSi u\hat{\sigma} = \widehat{u\sigma}$
and $\hat{t} = r\hat{\sigma} >^\KK u\hat{\sigma}$.
The induction hypothesis yields $s \notin \NF(\SSi)$.

\smallskip

In the second case we assume $\ell \approx r \in \EEw^\pm$.
From the assumption
$\hat{t} \notin \NF(\SSi)$ we obtain a term $u$ such that
$\hat{t} \to_\SSi \hat{u}$. We have
$\hat{t} >^\KK \hat{u}$ and thus $t \notin \NF(\SSw)$ by the
induction hypothesis.
Consider an innermost $\SSw$-step starting from
$t$, say $t \ito_\SSw v$,
such that there exists a peak
\begin{align}
\label{eq:oKBi_hat_Si_step_Si_step_peak}
\tag{$\star$}
s = \ell\sigma \xlr{}_{\ell \approx r}
r\sigma = t \to_{\SSw}^{q} v
\end{align}
with $l\sigma = s \not> t = r\sigma$.
If the two steps form an overlap then we have
$s \fromto_{\PCP_>(\ERw)} v$ since $t \ito_\SSw v$
is innermost, and thus $s \fromto_{\EEi} v$ or $s \join_\SSw v$
is obtained from the fairness of the run.
In the former case, since
$\hat{s} >^\KK \hat{t} >^\KK \hat{v}$, we have
$\hat{s} \to_\SSi \hat{v}$ and thus the induction hypothesis applies.
If on the other hand $s \join_\SSw v$ then
we cannot have $v \to_\SSw^* s$ as this would imply $v \geqslant s$,
contradicting $\hat{s} >^\KK \hat{v}$ because $>$ and $>^\KK$
are compatible by \lemref[1]{hat}. So $s$ must be $\SSw$-reducible.

Otherwise, the peak~\eqref{eq:oKBi_hat_Si_step_Si_step_peak}
constitutes a variable overlap, so there is some variable
$x \in \Var(r)$ and positions $q_1$ and $q_2$ such that $r|_{q_1} = x$ and
$q = q_1q_2$. If $x \notin \Var(\ell)$ then
$s \xlr{}_{\EEi} v$ and the induction hypothesis applies
as before. Otherwise, $s = \ell\sigma$ is reducible in $\SSi$.
\qedhere
\end{itemize}
\end{proof}

\begin{lem}%
\label{lem:oKBi_NF_Sw_subset_NF_R}
The inclusion $\NF(\SSw) \subseteq \NF(\RR)$ holds.
\hfill
\isaforlink{Ordered_Completion}{lem:NF_Sw_subset_NF_R}
\end{lem}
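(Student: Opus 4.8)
The plan is to prove the contrapositive on grounded terms, reducing everything to the lemmas already established. Concretely, I take an arbitrary $t \in \TT(\FF,\VV)$ with $t \in \NF(\SSw)$ and argue by contradiction, assuming $t \notin \NF(\RR)$, so that $t \to_\RR t'$ for some $t'$. Since $\RR \subseteq {>}$ and $>$ is a strict order, $t > t'$ and in particular $t \neq t'$, which lets me invoke \lemref[2]{hat} to transport the step to the grounded terms: $\hat t \to_\RR \hat{t'}$, whence $\hat t \notin \NF(\RR)$.

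Next I would run the following chain backwards through the structural lemmas. The contrapositive of \lemref{R_NFs} turns $\hat t \notin \NF(\RR)$ into $\hat t \notin \NF(\SSw)$, and then \corref{normalization equivalence of Si and Sw} gives $\hat t \notin \NF(\SSi)$. Thus there is a step $\hat t \to_\SSi w$; since $\hat t$ is ground, $w$ is a ground term over $\FF \cup \KK$. Replacing every auxiliary constant $\hat x \in \KK$ by the variable $x$ (the inverse of the grounding operation) yields a term $t''$ over $\FF$ with $\widehat{t''} = w$. Hence $\hat t \to_\SSi \widehat{t''}$, so \lemref{oKBi_hat_Si_step_Si_step} applies and yields $t \notin \NF(\SSi)$. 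A final appeal to \corref{normalization equivalence of Si and Sw} rewrites this as $t \notin \NF(\SSw)$, contradicting the choice of $t$. Therefore $t \in \NF(\RR)$, establishing the inclusion.

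I do not expect a genuine obstacle here, as all the difficult work already sits in \lemref{R_NFs} and \lemref{oKBi_hat_Si_step_Si_step}; the role of this lemma is essentially to package them. The one point requiring care is that \lemref{oKBi_hat_Si_step_Si_step} demands a redex \emph{and} a reduct that are both groundings of $\FF$-terms, whereas the reducibility I obtain for $\hat t$ only produces some $\SSi$-reduct $w$. The resolution is the observation that every ground term over $\FF \cup \KK$ is of the form $\widehat{t''}$ (via the constant-to-variable inverse of the hat operation), so the hypothesis of \lemref{oKBi_hat_Si_step_Si_step} is automatically met. Equally important is the bridge between $\NF(\SSw)$ and $\NF(\SSi)$ supplied by \corref{normalization equivalence of Si and Sw}, which is what allows the grounded reducibility derived from $\SSw$ to be fed into a lemma phrased in terms of $\SSi$ and then translated back.
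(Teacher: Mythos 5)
Your proof is correct, and its overall skeleton matches the paper's: both argue by contraposition, pass to grounded terms, feed a grounded $\SSi$-step into \lemref{oKBi_hat_Si_step_Si_step}, and return to the original term via \corref{normalization equivalence of Si and Sw}. The difference lies in how reducibility of $\hat{t}$ is established. The paper does \emph{not} invoke \lemref{R_NFs} as a black box; starting from $t \to_\RR u$ it re-runs that lemma's argument: $\hat{t} >^\KK \hat{u}$ by \lemref[1]{hat}, then $\hat{t} \join_{\SSw} \hat{u}$ by ground-completeness over the extended signature (\lemref{signature extension}), and finally an ordering argument showing the valley cannot consist only of steps into $\hat{t}$ (that would contradict well-foundedness of $>^\KK$), so $\hat{t}$ is $\SSw$-reducible and hence $\SSi$-reducible. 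You instead lift the $\RR$-step itself to groundings via \lemref[2]{hat} (legitimate, since $\RR \subseteq {>}$ gives $t \neq t'$) and then apply the contrapositive of \lemref{R_NFs}; this is sound and non-circular, as \lemref{R_NFs} precedes the present lemma and does not depend on it. Your decomposition is the more modular one: it reuses \lemref{R_NFs} instead of duplicating its joinability-and-ordering reasoning, at the small cost of needing \lemref[2]{hat}. A further merit of your write-up is that you make explicit a point the paper glosses over when applying \lemref{oKBi_hat_Si_step_Si_step}: the reduct $w$ of the ground term $\hat{t}$ is again ground (rewriting, ordered or not, with respect to a well-founded rewrite order cannot introduce fresh variables), and every ground term over $\FF \cup \KK$ is the grounding of some term in $\TT(\FF,\VV)$, so the hypothesis of that lemma --- that \emph{both} sides of the step are groundings --- is indeed satisfied.
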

\begin{proof}
Suppose $t \to_\RR u$, so $t > u$ and thus also
$\hat{t} >^\KK \hat{u}$. From \lemref{signature extension} we obtain
$\hat{t} \join_\SSw \hat{u}$.
Like in the proof of \lemref{R_NFs},
$\hat{u} \to_\SSw^* \hat{t}$ would imply
$\hat{u} \geqslant^\KK \hat{t}$, contradicting well-foundedness
of $>^\KK$. Therefore the joining sequence must be of the shape
$\hat{t} \to^{+}_\SSw \cdot \From{\SSw}{*} \hat{u}$ and thus
$\hat{t}$ is reducible in $\SSw$ and also in $\SSi$ because
$\SSw \subseteq \SSi$.
\lemref{oKBi_hat_Si_step_Si_step} yields $t \notin \NF(\SSi)$
and thus $t \notin \NF(\SSw)$
by \corref{normalization equivalence of Si and Sw}.
\end{proof}

We call $(\EEw,\RRw)$ \emph{simplified} if $\RRw$ is reduced and
all equations in $\EEw$ are irreducible with respect to $\RRw$,
unorientable with respect to $>$, and {non-trivial}.
We use the following auxiliary result before proving the main
completeness theorem.

\begin{lem}%
\label{lem:oKBi_NF_Rw_subset_NF_Sw}
If $(\EEw,\RRw)$ is simplified then $\NF(\RRw) \subseteq \NF(\SSw)$.
\hfill
\isaforlink{Ordered_Completion}{lem:NF_Rw_subset_NF_Sw}
\end{lem}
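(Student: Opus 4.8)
The plan is to prove the equivalent inclusion $\NF(\RRw) \subseteq \NF(\EEw^>)$. Since $\SSw = \EEw^> \cup \RRw$ we have $\NF(\SSw) = \NF(\RRw) \cap \NF(\EEw^>)$, so the claim amounts to showing that every $\RRw$-normal form is $\EEw^>$-irreducible. I would establish this via the sharper statement that \emph{every orientable instance is already reducible by the persistent rules}: for all $\ell \approx r \in \EEw^\pm$ and every substitution $\sigma$ with $\ell\sigma > r\sigma$, the term $\ell\sigma$ is $\RRw$-reducible. Granting this, if $t \in \NF(\RRw)$ then no subterm of $t$ can be an orientable instance of an $\EEw$-equation, so $t \in \NF(\EEw^>)$. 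The sharper statement I would prove by well-founded induction on $\ell$ with respect to the proper encompassment order $\prencompasses$.

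For the inductive step, fix $\ell \approx r \in \EEw^\pm$ and $\sigma$ with $\ell\sigma > r\sigma$. Since $\ell \approx r \in \EEw \subseteq \EEi$, \corref{oKB conv} gives $\ell \conv_{\EE_0} r$, and as $\RR$ is a complete presentation of $\EE_0$ we obtain $\ell \conv_\RR r$. Let $n$ be the common $\RR$-normal form of $\ell$ and $r$. Because $\RR \subseteq {>}$ we have $\ell \geqslant n$ and $r \geqslant n$; were $\ell = n$, then $r \to_\RR^* \ell$ would yield $r > \ell$ (using non-triviality $\ell \neq r$), contradicting unorientability of the simplified equation $\ell \approx r$. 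Hence $\ell > n$, so $\ell \notin \NF(\RR)$, and the companion \lemref{oKBi_NF_Sw_subset_NF_R} gives $\ell \notin \NF(\SSw)$. Since $\ell \in \NF(\RRw)$ by simplifiedness, $\ell$ must be $\EEw^>$-reducible: there are a position $q$, an equation $\ell_1 \approx r_1 \in \EEw^\pm$, and a substitution $\tau_1$ with $\ell|_q = \ell_1\tau_1$ and $\ell_1\tau_1 > r_1\tau_1$.

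The crucial observation is that $\ell \prencompasses \ell_1$. We certainly have $\ell \encompasses \ell_1$; if instead $\ell \doteq \ell_1$, then by \lemref{variants-terms} $\tau_1$ is a renaming, so $\ell_1\tau_1 > r_1\tau_1$ would force $\ell_1 > r_1$, contradicting the unorientability of the persistent equation $\ell_1 \approx r_1 \in \EEw$. Thus $\ell \prencompasses \ell_1$ by the identity ${\encompasses} = {\prencompasses} \cup {\doteq}$, so the induction hypothesis applies to $\ell_1 \approx r_1$ and $\tau_1$, yielding that $\ell_1\tau_1$ is $\RRw$-reducible. But $\ell_1\tau_1 = \ell|_q$ is a subterm of $\ell \in \NF(\RRw)$ and hence $\RRw$-irreducible, a contradiction. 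Consequently the hypotheses of the inductive step are vacuous, which trivially establishes the claimed $\RRw$-reducibility of $\ell\sigma$ and completes the induction.

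The hard part is the heart of the second and third paragraphs: recognizing that \emph{unorientability together with the existence of a compatible complete presentation $\RR$} forces the left-hand side $\ell$ to be $\RR$-reducible and therefore, via \lemref{oKBi_NF_Sw_subset_NF_R} and $\RRw$-irreducibility, $\EEw^>$-reducible by a \emph{strictly} $\prencompasses$-smaller persistent equation; this strict descent is exactly what powers the well-founded induction. I expect the fiddly points to be the degenerate cases where a left-hand side is a variable (so that the $\doteq$ analysis needs the variable condition) and the bookkeeping ensuring that the reducing equation $\ell_1 \approx r_1$ is again persistent and unorientable, so that the induction hypothesis is genuinely applicable.
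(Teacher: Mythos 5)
Your reduction to the sharper statement and your handling of the case $\ell \prencompasses \ell_1$ are sound, but the case $\ell \doteq \ell_1$ contains a genuine gap, and it is exactly the case that kills your induction. From $\ell \doteq \ell_1$ you infer that $\tau_1$ is a renaming and hence that $\ell_1\tau_1 > r_1\tau_1$ forces $\ell_1 > r_1$. What \lemref{variants-terms} actually gives you is that $\tau_1$ acts as a renaming \emph{on $\Var(\ell_1)$ only}; equations in $\EEw$, unlike rewrite rules, are not subject to the variable condition, so $r_1$ may contain variables not occurring in $\ell_1$, and on those variables $\tau_1$ is arbitrary. This is not a degenerate corner: instantiating such extra variables is precisely how unorientable equations acquire orientable instances, which is the whole point of ordered rewriting. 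Concretely, suppose $\EEw$ consists of the single equation $\m{f}(x) \approx \m{g}(x,y)$, $\RRw = \varnothing$, and $>$ is the reduction order induced by the polynomial interpretation $[\m{f}](a) = 2a+10$, $[\m{g}](a,b) = a+b$. The equation is non-trivial, $\RRw$-irreducible, and unorientable (no reduction order satisfies $\m{f}(x) > \m{g}(x,y)$ because of the extra variable, and $\m{g}(x,y) > \m{f}(x)$ fails for this interpretation), so the pair is simplified as far as your argument can tell; yet $\tau_1 = \{ y \mapsto x \}$ gives $\ell_1\tau_1 = \m{f}(x) > \m{g}(x,x) = r_1\tau_1$. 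Running your inductive step on this equation lands you in the $\doteq$ case (the left-hand side is reduced by an orientable instance of itself), your claimed contradiction ``$\ell_1 > r_1$'' is simply false, and since $\doteq$ yields no strict $\prencompasses$-descent the induction hypothesis is unavailable: literally similar left-hand sides can recur indefinitely, so the well-founded induction on $\prencompasses$ never closes. Of course such a configuration must be impossible under the standing assumptions---that is what the lemma asserts---but your proof does not rule it out; it gets stuck.

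The paper escapes this trap by inducting on a richer order and recursing through the \emph{other} side of the reducing equation. It proves $s \in \NF(\SSw)$ for $s \in \NF(\RRw)$ by induction on $s$ with respect to ${\succ} = {(({>} \cup {\prencompasses}) / {\encompasses})}^+$ (\defref{succ}, well-founded by \lemref{proper encompassment extension1}). If $s = C[\ell\sigma]$ with $\ell \approx r \in \EEw^\pm$ and $\ell\sigma > r\sigma$, then $s \encompasses \ell\sigma > r\sigma \encompasses r$ gives $s \succ r$, and $r \in \NF(\RRw)$ by simplifiedness, so the induction hypothesis applies to $r$ and yields $r \in \NF(\SSw)$, hence $r \in \NF(\RR)$ by \lemref{oKBi_NF_Sw_subset_NF_R}; completeness of $\RR$ then gives $\ell \to_\RR^* r$, so $\ell > r$ or $\ell = r$, contradicting unorientability resp.\ non-triviality. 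Note that in the configuration above this route does produce the contradiction ($\m{f}(x) \to_\RR^* \m{g}(x,y)$ is impossible), whereas your route through $\ell$ cannot. The descent built into $\succ$ across the strict inequality $\ell\sigma > r\sigma$ is what covers orientable instances of unorientable equations---the very situation your $\doteq$ analysis cannot handle. Your second paragraph (deriving $\ell \notin \NF(\RR)$ and then $\ell \notin \NF(\SSw)$) is correct but becomes unnecessary once the induction is set up this way; to salvage your outline you would have to replace the induction on left-hand sides modulo encompassment by an induction with respect to $\succ$ (or another order that strictly decreases across orientable instances), which in essence reconstructs the paper's proof.
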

\begin{proof}
Suppose $(\EEw,\RRw)$ is simplified and let $s \in \NF(\RRw)$.
We prove $s \in \NF(\SSw)$ by induction on $s$ with respect to $\succ$.
If $s \notin \NF(\SSw)$ then there exist an equation
$\ell \approx r \in \EEw^\pm$, a context $C$, and a substitution $\sigma$
such that $s = C[\ell\sigma]$ and $\ell\sigma > r\sigma$.  Since
$s \encompasses \ell\sigma > r\sigma \encompasses r$ implies $s \succ r$
and $r \in \NF(\RRw)$ by the assumption that $(\EEw,\RRw)$ is simplified,
the induction hypothesis yields $r \in \NF(\SSw)$.
Hence $r \in \NF(\RR)$ by \lemref{oKBi_NF_Sw_subset_NF_R}. Since
$\RR$ is a complete presentation of $\EE_0$, we have $\ell \join_\RR r$
and thus $\ell \to_\RR^* r$. From $\RR \subseteq {>}$ we infer
$\ell > r$ or $\ell = r$, contradicting the
assumption that $(\EEw,\RRw)$ is simplified.
\end{proof}

\begin{thm}%
\label{thm:KBo completeness}
If $(\EEw,\RRw)$ is simplified then $\EEw = \varnothing$ and
$\RRw$ is literally similar to $\RR$.
\hfill
\isaforlink{Ordered_Completion}{lem:completeness}
\end{thm}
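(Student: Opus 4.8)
The plan is to establish the two conclusions in turn: first $\EEw = \varnothing$ by contradiction, and then literal similarity through M\'etivier's uniqueness result. The heavy technical work (the two normal-form inclusions) has already been done, so the argument here will be short.

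For $\EEw = \varnothing$, I would suppose toward a contradiction that some equation $\ell \approx r$ lies in $\EEw$. Since $(\EEw,\RRw)$ is simplified, this equation is non-trivial, so $\ell \neq r$, and both $\ell$ and $r$ are irreducible with respect to $\RRw$, that is, $\ell, r \in \NF(\RRw)$. I would then push these two terms through the normal-form inclusions proved above: \lemref{oKBi_NF_Rw_subset_NF_Sw} (which applies precisely because the system is simplified) gives $\NF(\RRw) \subseteq \NF(\SSw)$, and \lemref{oKBi_NF_Sw_subset_NF_R} gives $\NF(\SSw) \subseteq \NF(\RR)$, so that $\ell, r \in \NF(\RR)$. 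On the other hand, the single step $\ell \fromto_{\EEw} r$ is already a conversion over $\ERw$, so by \thmref{KBo correctness} we obtain $\ell \conv_{\EE_0} r$, and since $\RR$ is a complete presentation of $\EE_0$ this yields $\ell \conv_{\RR} r$. Confluence of $\RR$ then gives $\ell \join_{\RR} r$, but as both $\ell$ and $r$ are $\RR$-normal forms, joinability forces $\ell = r$, contradicting non-triviality. Hence $\EEw = \varnothing$.

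With $\EEw = \varnothing$ in hand, the fairness of $\Gamma$ lets me invoke \thmref{complete presentation okb} to conclude that $\RRw$ is a complete presentation of $\EE_0$; being reduced by the simplifiedness assumption, $\RRw$ is in fact canonical. The final step is an appeal to \thmref{Metivier 2b}: both $\RRw$ and $\RR$ are canonical, their conversion relations coincide with $\conv_{\EE_0}$ and hence they are equivalent, and both are compatible with the same reduction order $>$ (here $\RRw \subseteq {>}$ follows from \lemref{oKB Rw terminating}, while $\RR \subseteq {>}$ is the standing assumption). Therefore $\RRw \doteq \RR$, as claimed.

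I expect the only genuine subtlety to lie in the first part, where the whole argument hinges on correctly chaining the two normal-form inclusions — whose own proofs, especially that of \lemref{oKBi_NF_Sw_subset_NF_R} via the signature extension by fresh constants and \lemref{R_NFs}, are the technical heart of this subsection — together with the unorientability and non-triviality constraints packaged in the notion of simplifiedness. Once those lemmas are available the present reasoning is routine. One point to state carefully is that the appeal to \thmref{Metivier 2b} needs $\RR$ to be canonical, not merely complete; this is the place where reducedness of the given presentation $\RR$ is used (if $\RR$ were only known to be complete, one would first pass to its canonical form).
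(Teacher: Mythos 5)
Your proof is correct and takes essentially the same route as the paper's: the identical chaining of Lemmata~\ref{lem:oKBi_NF_Rw_subset_NF_Sw} and~\ref{lem:oKBi_NF_Sw_subset_NF_R} to place both sides of a persistent equation in $\NF(\RR)$ and contradict non-triviality, followed by the same appeals to \thmref{complete presentation okb} and \thmref{Metivier 2b}. Your closing caveat is in fact more careful than the paper itself, whose proof invokes \thmref{Metivier 2b} although the standing assumption of the subsection only makes $\RR$ a complete (not canonical) presentation compatible with $>$; as you observe, reducedness of $\RR$ is genuinely needed at that step, since the canonical $\RRw$ could otherwise only be claimed literally similar to $\ddot{\RR}$ rather than to $\RR$.
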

\begin{proof}
Suppose $\EEw \neq \varnothing$ and let $s \approx t$ be an equation in
$\EEw$. The terms $s$ and $t$ are $\RR$-normal forms by
the assumption that $(\EEw,\RRw)$ is simplified in combination with
Lemmata~\ref{lem:oKBi_NF_Sw_subset_NF_R} and~\ref{lem:oKBi_NF_Rw_subset_NF_Sw}.
Since $\RR$ is a complete presentation of $\EE_0$, we obtain $s = t$,
contradicting the assumption that $(\EEw,\RRw)$ is simplified.
Hence, $\EEw = \varnothing$ and therefore $\RRw$ is a complete
presentation of $\EE_0$ by \thmref{complete presentation okb}.
Since $(\EEw,\RRw)$ is simplified, $\RRw$ is even a canonical
presentation of $\EE_0$. As $\RRw \subseteq {>}$, literal similarity
of $\RRw$ and $\RR$ is concluded by \thmref{Metivier 2b}.
\end{proof}

A run of ordered completion is called \emph{simplifying} if its limit
$(\EEw,\RRw)$ is simplified.

\begin{exa}
Consider again the ES $\EE$ from \exaref{okb1} and its complete
presentation $\RR$, which cannot be derived using standard completion.
Termination of $\RR$ can be shown by a suitable KBO\@.
Thus, by \thmref{KBo completeness} any fair and simplifying run of
ordered completion on $\EE$ using the same order will succeed with a
variant of $\RR$, independent of the employed strategy.
\end{exa}

The results in this subsection are due to Bachmair, Dershowitz,
and Plaisted~\cite{BDP89}. However, our proof is structured into many
preliminary results, as opposed to the monolithic original version, and
we fill in numerous details omitted in the original version.

\subsection{Linear Systems}%
\label{ssec:linear}

The previously presented correctness and completeness results
(Theorems~\ref{thm:KBo correctness} and~\ref{thm:KBo completeness}) do not
state any properties of the system obtained when running \KBo\ with
a reduction order that is not ground-total. The following example
from Devie~\cite{D91} shows that the restriction to ground-total
orders can actually be severe.

\begin{exa}
Consider the ES $\EE$ consisting of the following equations:
\begin{xalignat*}{3}
\m{f}_1(\m{g}_1(\m{i}_1(x)))
 &\approx \m{g}_1(\m{i}_1(\m{f}_1(\m{g}_1(\m{i}_2(x))))) &
\m{h}_1(\m{g}_1(\m{i}_1(x)))
 &\approx \m{g}_1(\m{i}_1(x)) &
\m{f}_1(\m{a}) &\approx \m{a}\\
\m{f}_2(\m{g}_2(\m{i}_2(x)))
 &\approx \m{g}_2(\m{i}_2(\m{f}_2(\m{g}_2(\m{i}_1(x))))) &
\m{h}_2(\m{g}_2(\m{i}_2(x)))
 &\approx \m{g}_2(\m{i}_2(x)) &
\m{f}_2(\m{a}) &\approx \m{a}\\
\m{g}_1(\m{a}) &\approx \m{a} &
\m{h}_1(\m{a}) &\approx \m{a} &
\m{i}_1(\m{a}) &\approx \m{a} \\
\m{g}_2(\m{a}) &\approx \m{a} &
\m{h}_2(\m{a}) &\approx \m{a} &
\m{i}_2(\m{a}) &\approx \m{a}
\end{xalignat*}
When orienting all equations from left to right we obtain a TRS $\RR$
which is easily shown to be terminating by automatic tools. As all
critical pairs are joinable it is confluent, and thus
canonical since it is also reduced.
However, $\RR$ cannot be oriented by any ground-total reduction order $>$.
We have $\m{i}_1(\m{a}) \conv_{\EE} \m{i}_2(\m{a})$ but neither
$\m{i}_2(\m{a}) > \m{i}_1(\m{a})$ nor $\m{i}_1(\m{a}) > \m{i}_2(\m{a})$
can hold; using the rule $\m{f}_1(\m{g}_1(\m{i}_1(x))) \to
\m{g}_1(\m{i}_1(\m{f}_1(\m{g}_1(\m{i}_2(x)))))$,
the former would imply
\[
\m{f}_1(\m{g}_1(\m{i}_2(\m{a}))) > \m{f}_1(\m{g}_1(\m{i}_1(\m{a})))
 > \m{g}_1(\m{i}_1(\m{f}_1(\m{g}_1(\m{i}_2(\m{a})))))
\]
which contradicts well-foundedness, and for the latter a
similar argument applies. As a matter of fact, in~\cite{D91}
it is shown that any \KBo\ run starting from $\EE$ and using a
\emph{ground-total} reduction order will fail to generate a finite result.
\end{exa}

Devie~\cite{D91} gives a second sufficient condition for an ordered
completion procedure to compute a canonical result whenever such a
presentation exists, without imposing any restriction on the reduction
order. Instead, the set of input
equalities $\EE_0$ is required to be linear, and Devie considers an
ordered completion inference system with a modified
deduction rule to ensure that linearity is preserved.
He moreover shows that under these circumstances a relaxed
fairness condition is sufficient.
In this section we give a new proof of this result which has been
formalized.
First we recall Devie's inference system.

\begin{defi}[Linear Ordered Completion
\isaforlink{Ordered_Completion}{ind:oKBilin}]%
\label{def:KBl}
The inference system \KBl of linear ordered completion consists
of the rules \tsfs{orient}, \tsfs{delete}, \tsfs{compose},
\tsfs{simplify}, and \tsfs{collapse}$_{\smprencompasses}$ of \KBi
(\defref{KBi})
together with the following modified deduction rule:
\begin{center}
\bigskip
\begin{tabular}{@{}l@{\qquad}c@{\qquad}l@{}}
\tsfs{deduce}$_{\m{l}}$ &
$\displaystyle \frac
{\EE,\RR}
{\EE \cup \{ s \approx t \},\RR}$ &
\end{tabular}
if $s \xleftarrow[\EE^{\pm} \scup \RR]{} \cdot
\xrightarrow[\EE^{\pm} \scup \RR]{} t$ and $s \approx t$ is linear
\end{center}
\end{defi}

We write $(\EE,\RR) \vdl (\EE',\RR')$ if $(\EE',\RR')$ can be reached
from $(\EE,\RR)$ by employing one of the inference rules of \defref{KBl}.

\begin{lem}%
\label{lem:KBl_KBo}
The inclusion ${\textup{\KBl}} \subseteq {\textup{\KBo}}$ holds.
\hfill
\isaforlink{Ordered_Completion}{lem:oKBilin_step_imp_oKBi_step}
\end{lem}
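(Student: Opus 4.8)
The goal is to prove $\textup{\KBl} \subseteq \textup{\KBo}$, that is, every inference step permitted by the linear ordered completion system \KBl\ is also a valid inference step of the general ordered completion system \KBo. Looking at \defref{KBl}, the system \KBl\ shares the rules \tsfs{orient}, \tsfs{delete}, \tsfs{compose}, \tsfs{simplify}, and $\text{\tsfs{collapse}}_{\smprencompasses}$ with \KBi, but replaces ordinary \tsfs{deduce} with the restricted $\text{\tsfs{deduce}}_{\m{l}}$. So the plan is a rule-by-rule inspection: I would argue that each \KBl\ rule is subsumed by a corresponding \KBo\ rule.

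The plan is to proceed by case analysis on which inference rule is applied in a given step $(\EE,\RR) \vdl (\EE',\RR')$. For the shared rules I would exhibit the matching \KBo\ rule from \defref{KBo}. The \tsfs{orient} and \tsfs{delete} rules are literally identical in both systems, so those cases are immediate. For \tsfs{compose}, \tsfs{simplify}, and $\text{\tsfs{collapse}}_{\smprencompasses}$ I would need to check that the side conditions of the \KBi-style rules imply the (more permissive) side conditions of the corresponding \KBo\ rules: \KBo's \tsfs{compose} requires $t \xrightarrow{}_{\SS} u$, its \tsfs{simplify} requires $s \xrightarrow{\pe_1}_{\SS} u$ (resp.\ $t \xrightarrow{\pe_1}_{\SS} u$), and its \tsfs{collapse} requires $t \xrightarrow{\pe_2}_{\SS} u$, where $\SS = \REwgt$ includes all of $\RR$. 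Since a plain $\RR$-step $t \to_\RR u$ is in particular an $\SS$-step, and a step satisfying the proper-encompassment condition $t \xrightarrow{\smprencompasses}_\RR u$ is in particular a $\xrightarrow{\pe_2}_{\SS}$ step (and likewise feeds into $\xrightarrow{\pe_1}_{\SS}$ where relevant), these inclusions hold straightforwardly from the definitions of $\xrightarrow{\pe_1}_\SS$ and $\xrightarrow{\pe_2}_\SS$ given just before \defref{KBo}.

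The only genuinely non-trivial case is $\text{\tsfs{deduce}}_{\m{l}}$. Here \KBl\ adds $s \approx t$ provided $s \xleftarrow[\EE^{\pm} \scup \RR]{} \cdot \xrightarrow[\EE^{\pm} \scup \RR]{} t$ \emph{and} $s \approx t$ is linear. The \KBo\ rule \tsfs{deduce} adds $s \approx t$ under exactly the same peak condition $s \xleftarrow[\RR \scup \EE^{\pm}]{} \cdot \xrightarrow[\RR \scup \EE^{\pm}]{} t$ but with \emph{no} linearity requirement. Thus the side condition of $\text{\tsfs{deduce}}_{\m{l}}$ is strictly stronger than that of \KBo's \tsfs{deduce}: dropping the linearity constraint can only enlarge the set of admissible steps. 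Hence every $\text{\tsfs{deduce}}_{\m{l}}$ step is a \tsfs{deduce} step of \KBo, and the inclusion follows.

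I expect no real obstacle here, since the lemma is essentially bookkeeping: \KBl\ is obtained from \KBo\ by \emph{restricting} the deduce rule and reusing the other rules in forms whose side conditions are special cases of the \KBo\ side conditions. The one point demanding care is verifying that the $\smprencompasses$-annotated rules of \KBl\ (inherited from \KBi) correctly map to the $\pe_1/\pe_2$-annotated rules of \KBo; this amounts to unfolding the definitions $t \xrightarrow{\smprencompasses}_\RR u$ and $t \xrightarrow{\pe_2}_\SS u$ and observing that the former (a proper-encompassment $\RR$-step) is an instance of the latter (a proper-encompassment step in $\RR \cup \EE^> \subseteq \SS$). Once each rule is matched, the conclusion is immediate.
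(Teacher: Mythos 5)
Your proposal is correct and matches the intended argument: the paper states this lemma without a textual proof (deferring to the formalization), and the formalized proof is precisely the rule-by-rule inspection you describe, where \tsfs{orient} and \tsfs{delete} coincide, the side conditions of \tsfs{compose}, \tsfs{simplify}, and \tsfs{collapse}$_{\smprencompasses}$ are instances of \KBo's conditions because ${\to_\RR} \subseteq {\to_\SS}$, ${\to_\RR} \subseteq {\xrightarrow{\pe_1}_\SS}$, and ${\xrightarrow{\smprencompasses}_\RR} \subseteq {\xrightarrow{\pe_2}_\SS}$, and \tsfs{deduce}$_{\m{l}}$ is \KBo's \tsfs{deduce} with an added linearity restriction. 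No gaps.
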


Note that in contrast to the ordered completion system \KBo,
ordered rewriting using orientable instances of $\EE$ is not permitted
in \tsfs{compose}, \tsfs{simplify}, and
\tsfs{collapse}$_{\smprencompasses}$.
This is because ordered rewrite steps need not preserve linearity
as stated in \lemref{oKBilin_linearity} below.
For example, a \tsfs{compose} step in \KBo on the linear rule
$\m{g}(x) \to \m{f}(\m{f}(x))$ using the linear equation
$\m{f}(x) \approx \m{f}(y)$ may result in the nonlinear rule
$\m{g}(x) \to \m{f}(\m{h}(x,x))$ when
$\m{h}(x,x)$ is substituted for the variable $y$ and
a reduction order $>$
is used such that $\m{f}(\m{f}(x)) > \m{f}(\m{h}(x,x))$.

With these restrictions, it is not hard to prove that inference
steps preserve linearity.

\begin{lem}%
\label{lem:oKBilin_linearity}
If $\EE \cup \RR$ is linear and
$(\EE,\RR) \vdl (\EE',\RR')$ then $\EE' \cup \RR'$ is linear.
\hfill
\isaforlink{Ordered_Completion}{lem:oKBilin_step_linearity_preserving}
\end{lem}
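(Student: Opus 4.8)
The plan is to proceed by case analysis on the inference rule of \defref{KBl} that justifies the step $(\EE,\RR) \vdl (\EE',\RR')$, checking in each case that every newly added rule or equation is linear while the existing ones are either kept unchanged or only moved between $\EE$ and $\RR$. Recall that a term is linear if no variable occurs in it more than once, and a rule or equation is linear if both of its sides are linear. The only nontrivial ingredient, which I would isolate first as an auxiliary observation, is that \emph{plain} rewriting with a linear rule preserves linearity: if $v$ is a linear term, $\ell \to r$ is a linear rule, and $v \to_\RR w$ using this rule, then $w$ is linear and $\Var(w) \subseteq \Var(v)$. Since $\EE \cup \RR$ is assumed linear, every rule of $\RR$ available in a step is indeed linear, and by the variable condition it satisfies $\Var(r) \subseteq \Var(\ell)$.

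To establish the observation, suppose $v|_p = \ell\sigma$ and $w = v{[r\sigma]}_p$. As $v$ is linear, its subterm $\ell\sigma$ is linear, and because $\ell$ itself is linear the terms $\sigma(x)$ for the distinct variables $x \in \Var(\ell)$ are linear and pairwise variable-disjoint (each occurs exactly once as a subterm of the linear term $\ell\sigma$). Since $r$ is linear and $\Var(r) \subseteq \Var(\ell)$, it follows that $r\sigma$ is linear with $\Var(r\sigma) \subseteq \Var(\ell\sigma)$. Linearity of $v$ moreover guarantees that the variables occurring below $p$ are disjoint from those in the surrounding context, so replacing $\ell\sigma$ by $r\sigma$ — whose variables are among those of $\ell\sigma$ — neither duplicates a variable inside $r\sigma$ nor creates a clash with the context. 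Hence $w$ is linear and introduces no fresh variables.

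With this in hand the case analysis is routine. For \tsfs{orient} and \tsfs{delete} only a linear pair is moved between $\EE$ and $\RR$ or removed, so linearity is immediate. The rules \tsfs{compose}, \tsfs{simplify}, and \tsfs{collapse}$_{\smprencompasses}$ of \KBl each replace one side of a linear rule or equation by the result of a single plain $\RR$-step, so the auxiliary observation applies directly and the resulting rule or equation is again linear. Finally, \tsfs{deduce}$_\m{l}$ adds an equation $s \approx t$ whose linearity is part of its very side condition, so the conclusion is trivial there.

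The main obstacle is really the linearity-preservation observation, and in particular the reason it is applicable at all: in \KBl the rewriting in \tsfs{compose}, \tsfs{simplify}, and \tsfs{collapse}$_{\smprencompasses}$ is restricted to plain $\RR$-steps rather than the ordered $\SS$-steps used in \KBo. An ordered step may instantiate a linear equation with a nonlinear substitution, producing a nonlinear rewrite rule — as the step from $\m{g}(x) \to \m{f}(\m{f}(x))$ to $\m{g}(x) \to \m{f}(\m{h}(x,x))$ following \defref{KBl} illustrates — which would invalidate the argument. The restriction built into \KBl is exactly what excludes this pathology.
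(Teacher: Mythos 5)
Your proof is correct and takes essentially the approach the paper intends: the paper itself omits the proof as routine (``it is not hard to prove''), deferring to the Isabelle formalization, and your case analysis on the inference rules combined with the auxiliary observation that plain rewriting with a linear rule preserves linearity (and introduces no fresh variables) is exactly the routine argument being alluded to. You also correctly identify that the whole argument hinges on \tsfs{compose}, \tsfs{simplify}, and \tsfs{collapse}$_{\smprencompasses}$ in \KBl{} being restricted to plain $\RR$-steps rather than ordered $\SS$-steps, which is precisely the point the paper makes with its $\m{g}(x) \to \m{f}(\m{h}(x,x))$ example immediately before the lemma.
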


From now on we consider $\EE_0 \cup \RR_0$ to be linear.

\begin{defi}%
\label{def:KBl fairness}
An extended overlap (\defref{extended overlap}) which satisfies
$\ell_1 > r_1$ and $r_2 \not> \ell_2$,
or $\ell_2 > r_2$ and $r_1 \not> \ell_1$ gives rise to a
\emph{linear critical pair}~\cite{D91}. The set of all linear critical
pairs originating from equations in $\EE$ is denoted $\LCP_>(\EE)$.
An infinite run
\[
(\EE_0,\RR_0) ~\vdl~ (\EE_1,\RR_1) ~\vdl~ (\EE_2,\RR_2) ~\vdl~ \cdots
\]
is \emph{fair} if the inclusion
$\LCP_>(\ERw) \subseteq {\join_\RRw} \cup {\fromto_{\EEi}}$ holds.
\end{defi}

Below, we consider an infinite fair run $\Gamma$. We next show that the
result of a fair run without persistent equations is indeed complete.

\begin{thm}%
\label{thm:KBl correctness}
If $\Gamma$ is fair and $\EEw = \varnothing$ then $\RRw$ is a complete
presentation of $\EEw$.
\hfill
\isaforlink{Ordered_Completion}{lem:Ew_empty_CR_Rw_linear}
\end{thm}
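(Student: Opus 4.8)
The plan is to derive the statement from \thmref{complete presentation okb1}, which already shows that a \KBo run with $\EEw = \varnothing$ whose persistent rules satisfy $\PCP(\RRw) \subseteq {\join_\RRw} \cup {\fromto_{\EEi}}$ yields a complete presentation of $\EE_0$. First I would observe that by \lemref{KBl_KBo} the given run $\Gamma$ is also a \KBo run, so that \thmref{complete presentation okb1} and all auxiliary results about \KBo runs --- in particular $\RRw \subseteq {>}$ from \lemref{oKB Rw terminating} --- are available for $\Gamma$. It then remains to bridge the gap between the \KBl-fairness hypothesis, phrased in terms of linear critical pairs $\LCP_>(\ERw)$, and the prime-critical-pair condition required by \thmref{complete presentation okb1}.

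The key step is the containment $\PCP(\RRw) \subseteq \LCP_>(\ERw)$, which I would prove as follows. Every pair in $\PCP(\RRw)$ stems from an overlap $\langle \ell_1 \to r_1, p, \ell_2 \to r_2 \rangle$ of $\RRw$ with a most general unifier $\mu$ of $\ell_1$ and $\ell_2|_p$. Because $\RRw \subseteq {>}$, we have $\ell_i > r_i$ and hence $\ell_i\mu > r_i\mu$, so that $r_i\mu \not> \ell_i\mu$ for $i \in \{1,2\}$; viewing the two rules as equations of $\ERw$, this shows that $\langle \ell_1 \approx r_1, p, \ell_2 \approx r_2 \rangle$ is an extended overlap in the sense of \defref{extended overlap}. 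Moreover $\ell_1 > r_1$ and $r_2 \not> \ell_2$ (the latter again from $\ell_2 > r_2$), so the orientation condition of \defref{KBl fairness} is met and the associated extended critical pair --- which coincides with the ordinary critical pair since the unifiers agree up to renaming --- is a linear critical pair of $\ERw$. This establishes $\PCP(\RRw) \subseteq \CP(\RRw) \subseteq \LCP_>(\ERw)$.

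Combining this containment with the \KBl-fairness assumption $\LCP_>(\ERw) \subseteq {\join_\RRw} \cup {\fromto_{\EEi}}$ immediately yields $\PCP(\RRw) \subseteq {\join_\RRw} \cup {\fromto_{\EEi}}$. Together with $\EEw = \varnothing$, an appeal to \thmref{complete presentation okb1} then finishes the proof that $\RRw$ is a complete presentation of $\EE_0$. Note that, perhaps surprisingly, linearity of $\EE_0 \cup \RR_0$ is not needed for this correctness direction; it only becomes essential for the accompanying completeness result. The main obstacle is precisely the containment $\PCP(\RRw) \subseteq \LCP_>(\ERw)$: one must check that an overlap of the oriented persistent rules genuinely satisfies every clause of the extended-overlap definition and the orientation side condition, and that the primality restriction built into $\PCP(\RRw)$ causes no mismatch, since $\LCP_>$ imposes no primality requirement of its own.
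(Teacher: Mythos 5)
Your proposal is correct and follows essentially the same route as the paper's own proof: view $\Gamma$ as a \KBo run via \lemref{KBl_KBo}, establish $\PCP(\RRw) \subseteq \LCP_>(\ERw)$ from $\RRw \subseteq {>}$, invoke fairness to get $\PCP(\RRw) \subseteq {\join_\RRw} \cup {\fromto_{\EEi}}$, and conclude by \thmref{complete presentation okb1}. The only difference is that you spell out the verification of the extended-overlap and orientation conditions, which the paper compresses into the phrase ``since $\RRw \subseteq {>}$''; your added detail (and your remark that linearity is not needed here) is accurate.
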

\begin{proof}
The run $\Gamma$ is also a valid \KBo run by \lemref{KBl_KBo}. We moreover
have $\PCP(\RRw) \subseteq \LCP(\ERw)$ since
$\RRw \subseteq {>}$, and hence
$\PCP(\RRw) \subseteq {\join_\RRw} \cup {\fromto_{\EEi}}$
by fairness. So the result follows from
\thmref{complete presentation okb1}.
\end{proof}

The following result relates equations in $\EEi$ and rules in $\RRi$ to
persistent equations and rules, respectively.

\begin{lem}\hfill%
\label{lem:oKBilin_ERinf_to_ERw}
\begin{enumerate}
\item%
\label{lem:oKBilin_E}
$\EEi \:\subseteq\:
{\xrightarrow[\RRi]{*} \cdot \xlr[\EEw]{=} \cdot \xleftarrow[\RRi]{*}}$
\hfill\isaforlink{Ordered_Completion}{lem:Einf_to_Ew}
\item%
\label{lem:oKBilin_R}
If $\ell \to r \in \RRi$ then
$\ell \xrightarrow[\RRw]{}\cdot\mathrel{{(\xlr[\ERw]{<\ell})}^*} r$.
\hfill\isaforlink{Ordered_Completion}{lem:Rinf_Rw_msteps}
\end{enumerate}
\end{lem}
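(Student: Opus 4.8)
The plan is to prove both inclusions by well-founded induction, reusing the machinery developed for \KBi in \secref{infinite runs}. The first thing I would record is that, although \KBl modifies the deduction rule, its \tsfs{orient}, \tsfs{delete}, \tsfs{compose}, \tsfs{simplify}, and \tsfs{collapse}$_{\pe}$ rules are exactly those of \KBi, and \tsfs{deduce}$_{\m{l}}$ only \emph{adds} an equation, so it never shrinks $\EE$ or $\RR$. Consequently the \KBl analogues of \lemref[E]{inclusions infinite} and \lemref[R]{inclusions infinite} hold verbatim for every step $(\EE,\RR) \vdl (\EE',\RR')$. The decisive point, distinguishing this from the \KBo situation, is that \KBl performs rewriting in \tsfs{compose}/\tsfs{simplify}/\tsfs{collapse}$_{\pe}$ only with $\RR$ (never with $\EE^>$); this is precisely what lets the outer parts of the decomposition in~(\ref{lem:oKBilin_E}) stay within $\RRi$ and lets~(\ref{lem:oKBilin_R}) land in $\ERw$ rather than in $\REiw$.

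For part~(\ref{lem:oKBilin_E}) I would fix $s \approx t \in \EE_i \subseteq \EEi$ and argue by induction on $\{s,t\}$ with respect to $>_\mul$. If $s \approx t \in \EEw$ the single persistent step already has the required shape, and if $s = t$ the claim is trivial. Otherwise $s \approx t \in \EE_{j} \setminus \EE_{j+1}$ for some $j \geqslant i$, and \lemref[E]{inclusions infinite} leaves three cases. When $s \approx t \in \RR_{j+1}^\pm$ a single $\RRi$ step already fits the pattern $\xrightarrow[\RRi]{*} \cdot \xlr[\EEw]{=} \cdot \xleftarrow[\RRi]{*}$ with no $\EEw$ step. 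When $s \to_{\RR_{j+1}} u$ and $u \approx t \in \EE_{j+1}$ we have $s > u$ (as $\RRi \subseteq {>}$), hence $\{s,t\} >_\mul \{u,t\}$, and prepending $s \to_\RRi u$ to the conversion obtained from the induction hypothesis for $u \approx t$ yields the claim; the symmetric case $t \to_{\RR_{j+1}} u$ with $s \approx u \in \EE_{j+1}$ is analogous. Since at most one $\EEw$ step is ever introduced, the $\xlr[\EEw]{=}$ in the statement is justified.

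For part~(\ref{lem:oKBilin_R}) I would fix $\ell \to r \in \RRi$ and induct on $(\ell,r)$ with respect to $\succ_\lex$, where $\succ$ is the well-founded order of \defref{succ}. The base case $\ell \to r \in \RRw$ is immediate. Otherwise $\ell \to r \in \RR_i \setminus \RR_{i+1}$ and \lemref[R]{inclusions infinite} gives two cases. In the \tsfs{compose} case $\ell \to u \in \RR_{i+1}$ and $r \to_{\RR_{i+1}} u$: from $r > u$ we get $(\ell,r) \succ_\lex (\ell,u)$, while the rule $\ell' \to r'$ effecting $r \to_\RRi u$ satisfies $\ell > r \encompasses \ell'$, hence $\ell \succ \ell'$ and $(\ell,r) \succ_\lex (\ell',r')$; applying the induction hypothesis to $\ell \to u$ and, under the appropriate context and substitution, to $\ell' \to r'$, and noting all arising terms are $< \ell$, produces $\ell \xrightarrow[\RRw]{} \cdot (\xlr[\ERw]{<\ell})^* r$. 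In the \tsfs{collapse}$_{\pe}$ case $\ell \xrightarrow{\pe}_{\RR_{i+1}} u$ with used rule $\ell'' \to r''$ satisfying $\ell \prencompasses \ell''$ (so $\ell \succ \ell''$) and $u \approx r \in \EE_{i+1} \subseteq \EEi$: the induction hypothesis for $\ell'' \to r''$, closed under context and substitution, gives $\ell \xrightarrow[\RRw]{} \cdot (\xlr[\ERw]{<\ell})^* u$; then part~(\ref{lem:oKBilin_E}) supplies $u \xrightarrow[\RRi]{*} \cdot \xlr[\EEw]{=} \cdot \xleftarrow[\RRi]{*} r$, every term of which is $< \ell$ (since $\ell > u$, $\ell > r$, and $\RRi \subseteq {>}$), and each of its $\RRi$ steps is replaced by an $\ERw$ conversion below $\ell$ through another appeal to the induction hypothesis.

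The main obstacle is the bookkeeping in this last case: to invoke the induction hypothesis on every rule $\ell''' \to r'''$ used by an $\RRi$ step inside the conversion delivered by part~(\ref{lem:oKBilin_E}), I must verify $(\ell,r) \succ_\lex (\ell''',r''')$. This follows because the rewritten term $a$ satisfies $\ell > a \encompasses \ell'''$, so $\ell \mathrel{({>} / {\encompasses})} \ell'''$ and hence $\ell \succ \ell'''$ by \defref{succ}; one then has to confirm that closing the resulting $\ERw$ conversions under the contexts and substitutions of the individual steps keeps every label strictly below $\ell$, which holds since $a < \ell$ and $>$ is closed under contexts and substitutions. Finally, checking that $\succ_\lex$ is well-founded via \lemref{proper encompassment extension1}, and that the leading $\RRw$ step may legitimately carry the label $\ell$ while all remaining steps stay strictly below it, completes the argument.
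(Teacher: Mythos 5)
Your proof is correct, and its skeleton is the same as the paper's: part~(1) by well-founded induction on $\{s,t\}$ with respect to $>_\mul$, and part~(2) by well-founded induction on the rule, with the case analysis coming from the \KBi-style step inclusions (which, as you observe, carry over to \KBl because \tsfs{deduce}$_{\m{l}}$ never removes anything and the remaining rules rewrite with $\RR$ only). The one substantive difference is the induction order in part~(2): the paper's one-line proof says $>_\lex$, while you use $\succ_\lex$ with $\succ$ from \defref{succ}. Your choice is the right one, and it appears to be necessary: in the \tsfs{compose} case the rule $\ell' \to r'$ effecting $r \to_{\RRi} u$, in the \tsfs{collapse}$_{\smprencompasses}$ case the rule $\ell'' \to r''$ with $\ell \prencompasses \ell''$, and the rules used in the $\RRi$ steps of the conversion supplied by part~(1) are all related to $(\ell,r)$ only via ${>} \cdot {\encompasses}$ or $\prencompasses$, and a reduction order gives no comparison between a term and a left-hand side that it merely encompasses, so with plain $>_\lex$ none of these induction-hypothesis applications would be justified. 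This is exactly why the analogous lemmas for \KBi and \KBo (\lemref{Ri to Rw} and \lemref{oKB Ri to EwRw}) are proved by induction with respect to $\succ_\lex$; the $>_\lex$ in the paper's proof of part~(2) is best read the same way. Your closing bookkeeping ($\ell > a \encompasses \ell'''$ giving $\ell \succ \ell'''$, and closure under contexts and substitutions keeping every label strictly below $\ell$) fills in precisely what the paper leaves implicit.
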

\begin{proof}\hfill
\begin{enumerate}
\item
For an equation $s \approx t \in \EEi$ we prove the desired inclusion
by induction on $\{ s, t \}$ with respect to $>_\mul$.
\smallskip
\item
By induction on $(\ell,r)$ with respect to $>_\lex$.
\qedhere
\end{enumerate}
\end{proof}

\noindent
In order to show that $\RRw$ is Church-Rosser modulo $\EEw$, we
need a result about joinability of critical peaks
modulo persistent equations.

\begin{lem}%
\label{lem:oKBilin_peak}
If there is an equation $\ell \approx r \in  \EEw^\pm \cup \RRw$
with $r \not >\ell$ that is involved in a peak
$s \xleftarrow[r\,\approx\,\ell]{} \cdot \xrightarrow[\RRw]{} t$
then $s \xrightarrow[\RRi]{*} \cdot \xlr[\EEw]{=} \cdot
\xleftarrow[\RRi]{*} t$.
\hfill
\isaforlink{Ordered_Completion}{lem:linear_peak_cases}
\end{lem}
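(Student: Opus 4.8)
The plan is to resolve the peak by the same position-based case analysis that underlies the (extended) critical pair lemmas \lemref{cpeakL} and \lemref{xcp}. Write the peak as $s \xleftarrow[r\,\approx\,\ell]{p_1} c \xrightarrow[\RRw]{p_2} t$, where the equation $\ell \approx r \in \EEw^\pm \cup \RRw$ with $r \not> \ell$ is applied at position $p_1$ and a persistent rule $g \to d \in \RRw$ (so $g > d$) is applied at $p_2$. The target relation $R = {\xrightarrow[\RRi]{*} \cdot \xlr[\EEw]{=} \cdot \xleftarrow[\RRi]{*}}$ is closed under contexts, since both $\to_\RRi$ and $\xlr[\EEw]{}$ are; hence it suffices to resolve the peak after removing the common context above $p_1$ and $p_2$ and then to re-embed the result. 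I distinguish the cases $p_1 \parallel p_2$, a \emph{variable overlap} (the positions are nested but meet inside a variable instance of one side), and a genuine \emph{function overlap}.

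If $p_1 \parallel p_2$ the two steps commute: performing the rule step at $p_2$ on $s$ and the equation step at $p_1$ on $t$ leads to a common term $u$. Using $\RRw \subseteq \RRi$ from \lemref{oKB Rw terminating}, this yields $s \xrightarrow[\RRi]{} u$ together with either $t \xrightarrow[\RRi]{} u$ (if $\ell \approx r \in \RRw$) or $t \xlr[\EEw]{} u$ (if $\ell \approx r \in \EEw^\pm$), both of which have shape $R$. For a variable overlap I use that $\EE_0 \cup \RR_0$ is linear and that, by \lemref{oKBilin_linearity}, linearity is preserved along $\Gamma$, so the relevant variable occurs at most once in the side carrying the outer redex. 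Consequently the inner rule step can be propagated into, or out of, the matching substitution without duplication, and the two steps again commute into the form $R$.

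The essential case is a function overlap. Because the $\RRw$ side is orientable ($g > d$) while the equation satisfies $r \not> \ell$, the overlap meets the orientation requirement of \defref{KBl fairness} and thus induces a linear critical pair $e \in \LCP_>(\ERw)$ relating the two reducts at the overlap position, exactly as in the peak analysis of \lemref{xcp}. Fairness of $\Gamma$ gives $e \in {\join_\RRw} \cup {\fromto_{\EEi}}$. In the first subcase the valley ${\xrightarrow[\RRw]{*} \cdot \xleftarrow[\RRw]{*}}$ already lies in $R$ (with an empty middle $\EEw$ step) via $\RRw \subseteq \RRi$. In the second subcase the first statement of \lemref{oKBilin_ERinf_to_ERw} converts the $\EEi$ step into a conversion of precisely the shape $R$. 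Re-embedding into the stripped context, which preserves $R$, concludes the case.

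I expect the main obstacle to be the variable-overlap case, and---more subtly---the bookkeeping that makes the function-overlap case fall under \defref{KBl fairness}. One must verify that linearity genuinely rules out the duplications that would otherwise force branching (this is exactly where \lemref{oKBilin_linearity} is indispensable, and where restricting \tsfs{compose}, \tsfs{simplify}, and \tsfs{collapse}$_{\smprencompasses}$ of \KBl to plain $\RR$-rewriting is what keeps the system linear), and that the combination of one orientable side with an equation satisfying $r \not> \ell$ is precisely the configuration classified as a linear critical pair, so that the fairness inclusion applies. The concluding context-closure and re-embedding steps are then routine.
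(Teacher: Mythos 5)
Your proposal is correct and follows essentially the same route as the paper's proof: the same three-way case split (parallel positions commute; a genuine overlap yields a linear critical pair by the orientation conditions $g > d$ and $r \not> \ell$, resolved by fairness together with \lemref[1]{oKBilin_ERinf_to_ERw} and $\RRw \subseteq \RRi$; a variable overlap commutes thanks to linearity via \lemref{oKBilin_linearity}). The only difference is that you spell out the context/substitution-closure bookkeeping that the paper leaves implicit.
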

\begin{proof}
If the two steps occur at parallel positions then they commute
and thus $s \to_{\RRw} \cdot \FromB{r\,\approx\,\ell} t$.
If the peak constitutes an overlap then $s \xlr{}_{\LCP(\ERw)} t$
since $\RRw \subseteq {>}$ and $r \not> \ell$ by assumption. We thus have
$s \fromto_{\EEi} t$ or $s \join_{\RRw} t$
by fairness such that the claim follows from
\lemref{oKBilin_ERinf_to_ERw}(\ref{lem:oKBilin_E}) 
and $\RRw \subseteq \RRi$.
Otherwise, we have a variable overlap. By \lemref{oKBilin_linearity}
both $\EEw$ and $\RRw$ are linear. This implies
$s \to_{\RRw}^= \cdot \From{r\,\approx\,\ell}{=} t$, so the claim follows
from the inclusion $\RRw \subseteq \RRi$.
\end{proof}

\begin{lem}%
\label{lem:CRM}
The TRS $\RRw$ is Church-Rosser modulo $\EEw$.
\hfill\isaforlink{Ordered_Completion}{lem:CRm}
\end{lem}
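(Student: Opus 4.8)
The plan is to show that $\RRw$ is \emph{peak decreasing modulo} $\EEw$ in the sense of \defref{peak decreasing modulo} and then to read off the Church-Rosser-modulo property directly from \lemref{pdm => crm}. As index set for both $\RRw$ and $\EEw$ I would take finite multisets of terms, ordered by the multiset extension $\MUL$ of the reduction order $>$ (which is well founded), labeling every $\ERw$-step by a multiset as in \defref{mset labeled rewriting}. Since $\RRw \subseteq {>}$ by \lemref{oKB Rw terminating}, an $\RRw$-step out of a term $s$ may always be labeled by the singleton $\{ s \}$; an $\EEw$-step $s \to u$, by contrast, need not decrease $s$, because persistent equations can be unorientable, so I would label it by $\{ s, u \}$. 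This asymmetry is precisely what forces the use of multiset labels rather than plain source labels.

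Next I would analyze a local peak $t \FromB{\RRw} s \to_\gamma u$ with $\gamma \in \{ \RRw, \EEw \}$, labeled minimally by $M_1 = \{ s \}$ on the left and by $M_2 = \{ s \}$ (if $\gamma = \RRw$, where also $s > u$) or $M_2 = \{ s, u \}$ (if $\gamma = \EEw$) on the right. In either case the branch reaching $u$ uses a rule of $\RRw$ or an equation of $\EEw^\pm$, which can be oriented as $\ell \approx r$ with $r \not> \ell$, so \lemref{oKBilin_peak} applies and produces a valley modulo $\EEw$ over $\RRi$, namely $t \xrightarrow[\RRi]{*} \cdot \xlr[\EEw]{=} \cdot \xleftarrow[\RRi]{*} u$. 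To turn this into a conversion over $\ERw$ I would expand every $\RRi$-step by its persistent replacement from \lemref{oKBilin_ERinf_to_ERw}(\ref{lem:oKBilin_R}): a step contracting a redex $\ell$ becomes an $\RRw$-step followed by an $\ERw$-conversion all of whose steps are labeled below $\ell$. Because $\RRi \subseteq {>}$, the left part of the valley then only involves terms $\leqslant t < s$ and the right part only terms $\leqslant u$.

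The delicate---and, I expect, the main---part is the label accounting. Every step in the expanded left part lies between terms $\leqslant t < s$, hence admits a label $M$ with $\{ s \} \MUL M$; every step in the expanded right part lies between terms $\leqslant u$, hence admits a label $M$ with $\{ s, u \} \MUL M$ (remove $s$); and the central $\EEw$-step connects a term $a \leqslant t < s$ with a term $b \leqslant u$, so $\{ s, u \} \MUL \{ a, b \}$. The $\gamma = \RRw$ case is easier still, since then $u < s$ forces every step below $\{ s \} = M_1$. Consequently all steps carry labels in $\vee M_1 M_2$, which is exactly the inclusion demanded by \defref{peak decreasing modulo}; and since enlarging a label only enlarges $\vee M_1 M_2$, verifying the minimal labels suffices. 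The obstacle is to keep these bounds \emph{strict} while replacing non-persistent $\RRi$-steps by persistent $\ERw$-conversions, which hinges on the replacement of \lemref{oKBilin_ERinf_to_ERw}(\ref{lem:oKBilin_R}) staying strictly below the contracted redex, hence strictly below $s$ on the left and below $u$ on the right. Once this is in place, \lemref{pdm => crm} yields that $\RRw$ is Church-Rosser modulo $\EEw$.
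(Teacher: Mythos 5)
Your proposal is correct and follows essentially the same route as the paper's own proof: the paper likewise labels $\RRw$-steps by source singletons and $\EEw$-steps by endpoint pairs, invokes \lemref{oKBilin_peak} to resolve the peak and \lemref{oKBilin_ERinf_to_ERw} to replace $\RRi$-steps by $\RRw$-steps followed by conversions below the contracted redex, and concludes via peak decreasingness modulo and \lemref{pdm => crm}. The only difference is one of exposition: the paper's proof is a four-line citation of these ingredients, whereas you spell out the case analysis and the multiset label accounting that the paper (and its formalization) leave implicit.
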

\begin{proof}
Define the ARSs $\AA$ and $\BB$ with multiset labeling as follows:
\begin{itemize}
\item
$s \xrightarrow{M}_\AA t$ if $s \xrightarrow{\{ s' \}}_\RRw t$ and
$M = \{ s' \}$ for some term $s' \geqslant s$.
\item
$s \xrightarrow{M}_\BB t$ if $s \xlr{\{ s',\,t' \}}_{\EEw} t$ and
$M = \{ s', t' \}$ for some terms $s' \geqslant s$ and $t' \geqslant t$.
\end{itemize}
By equipping them with the well-founded order $>_\mul$
Lemmata~\ref{lem:oKBilin_peak} and~\ref{lem:oKBilin_ERinf_to_ERw}
imply the condition of peak decreasingness modulo. Hence,
\lemref{pdm => crm} applies.
\end{proof}

For a run of \KBl we call $(\EEw,\RRw)$ \emph{simplified} if $\RRw$
is reduced and $\EEw$ is irreducible with respect to $\RRw$ and does not
contain trivial equations. From now on we assume that $(\EEw,\RRw)$ is
simplified.
This allows us to establish relationships between $\RR$-normal forms
and normal forms with respect to the result of the linear completion run.

\begin{lem}%
\label{lem:Ew}
The inclusion $\NF(\RR) \subseteq \NF(\EEw^{\pm}) \cap \NF(\RRw)$ holds.
\hfill
\isaforlink{Ordered_Completion}{lem:NF_R_subset_NF_REw}
\end{lem}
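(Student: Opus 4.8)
The plan is to establish the two inclusions $\NF(\RR) \subseteq \NF(\RRw)$ and $\NF(\RR) \subseteq \NF(\EEw^\pm)$ separately. Throughout I use the standing assumption that $\RR$ is a canonical presentation of $\EE_0$ compatible with $>$, so that $\RR \subseteq {>}$ and $\RR$ is confluent and terminating. I also record that the persistent system induces the same conversion as $\RR$: by \lemref{KBl_KBo} the run is a valid \KBo run, so \corref{oKB conv} (applied inductively along the run) together with \corref{REi subset REw} give ${\conv_{\ERw}} = {\conv_{\EE_0}} = {\conv_\RR}$, and $\RRw \subseteq {>}$ holds by \lemref{oKB Rw terminating}. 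The first inclusion is then immediate: if $t \in \NF(\RR)$ admitted a step $t \to_\RRw u$, then $t > u$ since $\RRw \subseteq {>}$, while $t \conv_\RR u$ by the conversion identity; confluence of $\RR$ and $t \in \NF(\RR)$ force $u \to_\RR^* t$, and $\RR \subseteq {>}$ then yields $u = t$ or $u > t$, either of which contradicts $t > u$. Hence $t \in \NF(\RRw)$.

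The bulk of the work is the inclusion $\NF(\RR) \subseteq \NF(\EEw^\pm)$. Suppose $t \in \NF(\RR)$ but $t = C[\ell\sigma] \to_{\EEw^\pm} C[r\sigma]$ using one orientation of an equation $\ell \approx r \in \EEw$ at the matched side $\ell$. Since $t$ is an $\RR$-normal form and $\to_\RR$ is closed under contexts and substitutions, the subterm $\ell\sigma$, and therefore $\ell$ itself, are $\RR$-normal forms. As $\ell \conv_\RR r$ and $\RR$ is confluent, $r \to_\RR^* \ell$; because $(\EEw,\RRw)$ is simplified the equation is non-trivial, so $r \neq \ell$ and thus $r \to_\RR^+ \ell$, whence $r > \ell$ by $\RR \subseteq {>}$. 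In other words, the only way a persistent equation can apply inside an $\RR$-normal form is with its $>$-smaller, $\RR$-irreducible side matching and its strictly larger side being $\RR$-reducible to it. Simplifiedness moreover gives $\ell, r \in \NF(\RRw)$.

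The crux, and the step I expect to be the main obstacle, is to rule out such an equation $\ell \approx r \in \EEw$ with $r > \ell$, $r \to_\RR^+ \ell$ and $\ell, r \in \NF(\RRw)$; note that no contradiction is available from $\RR$-normal-form reasoning alone, since $r$ is genuinely $\RR$-reducible, so the impossibility must come from the dynamics of a fair \KBl run. I would argue by well-founded induction on $\{\ell,r\}$ with respect to $\succ_\mul$ (with $\succ$ as in \defref{succ}): tracing the first $\RR$-step out of $r$ and using that each $\RR$-rule lies in $\conv_{\ERw}$, I would invoke Church-Rosser modulo $\EEw$ (\lemref{CRM}) to transport $r \conv_\RR \ell$ into a conversion over the persistent system, and then use \lemref{oKBilin_ERinf_to_ERw} and the critical-peak analysis of \lemref{oKBilin_peak}, together with preservation of linearity (\lemref{oKBilin_linearity}), to exhibit an $\RRw$- or $\EEw$-redex in $r$ or a strictly $\succ_\mul$-smaller instance of the same configuration. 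Either outcome contradicts $r \in \NF(\RRw)$ or the non-triviality of the equation. The delicate point is the choice of measure: the offending $\EEw^\pm$-step itself increases with respect to $>$, so the induction cannot be run on $>$ directly, and the measure must simultaneously dominate the two equation sides and the witnessing $\RR$-derivation; organizing the case analysis so that every branch strictly decreases in $\succ_\mul$ is where the real difficulty lies. Once $\EEw^\pm$-irreducibility of $\RR$-normal forms is secured, the lemma follows by combining it with the first inclusion.
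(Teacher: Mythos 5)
Your first inclusion $\NF(\RR) \subseteq \NF(\RRw)$ is correct and coincides with the paper's argument (the paper treats both parts uniformly: from a step $t \to u$ using $\ell \approx r \in \EEw^\pm \cup \RRw$ inside an $\RR$-normal form it derives $\ell \in \NF(\RR)$ and $r \to_\RR^* \ell$, exactly as you do). Your setup for the second inclusion is also on track: you correctly reduce it to the configuration of an equation $\ell \approx r \in \EEw^\pm$ with $r \to_\RR^+ \ell$ and hence $r > \ell$. But at precisely this point you declare that ``no contradiction is available'' and replace the conclusion by an unexecuted sketch (``I would argue \dots'', ``where the real difficulty lies''). This is the gap, and it is the whole content of the lemma's remaining step: the paper's proof finishes here by observing that $r > \ell$ makes the persistent equation \emph{orientable}, contradicting simplifiedness, which for the paper includes unorientability of all equations in $\EEw$ with respect to $>$ (this is explicit in the definition of simplified for \KBo in \ssecref{ground total}, and it is what the paper's proof of this lemma invokes; the restatement of ``simplified'' in \ssecref{linear} omits the unorientability clause, but the proof relies on it). Once unorientability is in play, your own derivation of $r > \ell$ is already the contradiction; no induction and no appeal to \lemref{CRM} or \lemref{oKBilin_peak} is needed.

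Moreover, the route you propose for the crux cannot be completed, because under the hypotheses you allow yourself ($\RRw$-irreducibility and non-triviality of $\EEw$ only) the statement is false. Take $\EE_0 = \{ \m{a} \approx \m{b} \}$ with $\m{a} > \m{b}$, whose complete presentation is $\RR = \{ \m{a} \to \m{b} \}$. The infinite run that forever deduces and deletes the trivial equation $\m{b} \approx \m{b}$ is a fair \KBl run (its only linear critical pair is $\m{b} \approx \m{b}$, which is trivially joinable) with $\EEw = \{ \m{a} \approx \m{b} \}$ and $\RRw = \varnothing$; this limit is reduced, $\RRw$-irreducible, and non-trivial, yet $\m{b} \in \NF(\RR)$ while $\m{b} \to_{\EEw^\pm} \m{a}$, so $\NF(\RR) \not\subseteq \NF(\EEw^\pm)$. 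All the lemmas you plan to combine (\lemref{CRM}, \lemref{oKBilin_ERinf_to_ERw}, \lemref{oKBilin_peak}, \lemref{oKBilin_linearity}) hold for this run, so no induction built from them can rule out your ``offending'' configuration. The missing idea is thus not a cleverer measure or case analysis, but the hypothesis that persistent equations are unorientable; with it, the proof closes in one line exactly where you stopped.
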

\begin{proof}
Let $t \in \NF(\RR)$. Assume to the contrary that
$t \to u$ for some term $u$ by applying an equation
$\ell \approx r \in \EEw^\pm \cup \RRw$ from left to right.
Because $\RR$ is a complete presentation of $\ERw$, we have
$\ell \join_\RR r$. Since $t \in \NF(\RR)$ implies $\ell \in \NF(\RR)$,
we obtain $r \to_\RR^* \ell$.
If $\ell \approx r \in \RRw$ this contradicts $\RRw \subseteq {>}$,
otherwise $\ell \approx r \in \EEw^\pm$ and $r \to_\RR^* \ell$ contradict
unorientability and non-triviality of $\EEw$, which hold by the
assumption that $\EEw$ is simplified.
\end{proof}

\begin{lem}%
\label{lem:Rw and R}
The inclusion $\NF(\RRw) \subseteq \NF(\RR)$ holds.
\hfill
\isaforlink{Ordered_Completion}{lem:NF_Rw_subset_NF_R}
\end{lem}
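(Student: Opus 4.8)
The plan is to argue by contradiction, reducing the claim to the two results already established: \lemref{Ew}, which pins down $\RR$-normal forms as being simultaneously irreducible with respect to $\RRw$ and to $\EEw^\pm$, and \lemref{CRM}, which gives the Church-Rosser modulo $\EEw$ property of $\RRw$. So I would suppose $t \in \NF(\RRw)$ but $t \notin \NF(\RR)$. Since $\RR$ is a complete presentation it is terminating, so $t$ has an $\RR$-normal form $t'$ with $t \to_\RR^+ t'$; in particular $t \neq t'$.

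First I would transfer the trivial conversion $t \conv_\RR t'$ to a conversion over the persistent rules and equations. As $\RR$ is a complete presentation of $\EE_0$ and $\Gamma$ is a \KBl run, hence a \KBo run by \lemref{KBl_KBo}, the relations $\conv_\RR$, $\conv_{\EE_0}$, and $\conv_{\ERw}$ all coincide (the last identity by \corref{oKB conv} via a routine induction on the run). This yields $t \xlr[\ERw]{*} t'$. Next I would collapse this conversion using \lemref{CRM}: Church-Rosser modulo $\EEw$ produces terms $a$ and $b$ with $t \to_\RRw^* a$, $a \xlr[\EEw]{*} b$, and $t' \to_\RRw^* b$. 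By \lemref{Ew} the $\RR$-normal form $t'$ lies in $\NF(\EEw^\pm) \cap \NF(\RRw)$; in particular $t'$ and (by assumption) $t$ are both $\RRw$-normal forms, so the outer $\RRw$-reductions are empty, giving $a = t$ and $b = t'$. Hence the decomposition degenerates to a pure conversion $t \xlr[\EEw]{*} t'$.

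Finally, since $t \neq t'$ this conversion is non-empty, so it ends in a step $c \xlr[\EEw]{} t'$; by symmetry of $\xlr[\EEw]{}$ this means $t' \to_{\EEw^\pm} c$, contradicting $t' \in \NF(\EEw^\pm)$. Therefore $t \in \NF(\RR)$, establishing $\NF(\RRw) \subseteq \NF(\RR)$. I expect the only real subtlety to be this middle step: it is \lemref{Ew} that guarantees the chosen $\RR$-normal form $t'$ is irreducible with respect to \emph{both} $\RRw$ and $\EEw^\pm$, and this double irreducibility is exactly what forces the Church-Rosser-modulo decomposition to collapse to a bare $\EEw$-conversion whose endpoint is $\EEw^\pm$-irreducible, whence the contradiction. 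Note that, in contrast to the ground-total case, no use is made of $\RR \subseteq {>}$, which is consistent with the linear setting where $\RR$ need not be compatible with the order driving the run.
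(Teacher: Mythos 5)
Your proof is correct and, at its core, takes the same route as the paper: both arguments funnel through \lemref{CRM} (Church--Rosser modulo $\EEw$) and \lemref{Ew}, and both use exactly the collapse you describe---the CRM decomposition degenerates because its endpoints are $\RRw$-normal and one endpoint is additionally $\EEw^\pm$-normal. The packaging differs: the paper proves the stronger per-rule simulation that $\ell \to_{\RRw}^+ r$ for every $\ell \to r \in \RR$ (applying CRM to the two sides of a rule and using that $\RR$ is \emph{reduced} to get $r \in \NF(\RR)$, with strictness of the reduction coming from termination of $\RR$), and then $\NF(\RRw) \subseteq \NF(\RR)$ follows by contraposition; you instead argue per term, normalizing a hypothetical $\RRw$-normal but $\RR$-reducible $t$ to its $\RR$-normal form $t'$ and contradicting $t' \in \NF(\EEw^\pm)$. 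Your variant is marginally more economical in its hypotheses---it never uses that $\RR$ is reduced, only that it is complete---whereas the paper's variant yields the reusable inclusion $\RR \subseteq {\to_{\RRw}^+}$.

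One citation should be repaired, though it does not affect correctness. The identity ${\conv_{\EE_0}} = {\conv_{\ERw}}$ is \emph{not} obtained by routine induction on \corref{oKB conv}: that induction gives ${\conv_{\EE_0}} = {\conv_{\EE_i \scup \RR_i}}$ for each stage $i$, hence ${\fromto_{\ERw}^*} \subseteq {\fromto_{\ERi}^*} \subseteq {\fromto_{\EE_0}^*}$, which is precisely the direction you do \emph{not} need. The inclusion you do need, ${\fromto_{\EE_0}^*} \subseteq {\fromto_{\ERw}^*}$ (so that $t \conv_\RR t'$ can be fed into \lemref{CRM}), requires the persistence machinery: \corref{REi subset REw}, which holds for arbitrary \KBo runs and hence for \KBl runs by \lemref{KBl_KBo}, or its linear analogue \lemref{oKBilin_ERinf_to_ERw}. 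Equivalently, one can read the standing hypothesis the way the paper does in \lemref{Ew}, namely that $\RR$ is a complete presentation of $\ERw$ rather than of $\EE_0$, in which case the transfer is immediate. With that justification corrected, your argument goes through as written.
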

\begin{proof}
We show that $\ell \to_{\RRw}^+ r$ for every $\ell \to r \in \RR$, which
is sufficient to prove the claim. Let $\ell \to r \in \RR$.
By \lemref{CRM} we have
\[
\ell \xrightarrow[\RRw]{*} u \xlr[\EEw]{}^* v \xleftarrow[\RRw]{*} r
\]
for some terms $u$ and $v$. Since $\RR$ is reduced, $r \in \NF(\RR)$.
According to \lemref{Ew},
both $r \in \NF(\EEw^\pm)$ and $r \in \NF(\RRw)$ hold.
Hence $r = v$ follows from $r \to_\RRw^* v$
and $u \fromto_{\EEw}^* v = r$ implies $u = v$.
Therefore $\ell \to_\RRw^* r$. Since
$\RR$ is terminating, $\ell = r$ is impossible and thus
$\ell \to_\RRw^+ r$ as desired.
\end{proof}

As in the previous section, the last result allows us to establish
the main completeness theorem.

\begin{thm}%
\label{thm:KBl completeness}
If $(\EEw,\RRw)$ is simplified then
$\EEw = \varnothing$ and $\RRw$ is literally similar to $\RR$.
\isaforlink{Ordered_Completion}{lem:linear_completeness}
\end{thm}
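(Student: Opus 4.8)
The plan is to mirror the proof of \thmref{KBo completeness} almost verbatim, replacing the ground-total normal-form machinery by the two inclusions available in the linear setting (\lemref{Ew} and \lemref{Rw and R}). The argument proceeds in two stages: first forcing $\EEw = \varnothing$, then identifying $\RRw$ with $\RR$ up to literal similarity.

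First I would show $\EEw = \varnothing$ by contradiction. Suppose $s \approx t \in \EEw$. Since $(\EEw,\RRw)$ is simplified, $\EEw$ is irreducible with respect to $\RRw$, so $s, t \in \NF(\RRw)$, and \lemref{Rw and R} upgrades this to $s, t \in \NF(\RR)$. On the other hand $s \approx t \in \EEw \subseteq \EEi$ by \lemref{oKB Rw terminating}, so $s \approx t \in \EE_i$ for some $i$ and thus $s \conv_{\EE_i \cup \RR_i} t$; a routine induction on the run, lifting each step of \KBl to a step of \KBo via \lemref{KBl_KBo} and applying \corref{oKB conv}, yields $\conv_{\EE_0} = \conv_{\EE_i \cup \RR_i}$ (recall $\RR_0 = \varnothing$), whence $s \conv_{\EE_0} t$. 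As $\RR$ is a complete presentation of $\EE_0$ we have $s \conv_\RR t$, so confluence of $\RR$ together with $s, t \in \NF(\RR)$ forces $s = t$---contradicting the non-triviality of $\EEw$ guaranteed by simplifiedness. Hence $\EEw = \varnothing$.

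Once $\EEw = \varnothing$ is established, \thmref{KBl correctness} shows that $\RRw$ is a complete presentation of $\EE_0$, and simplifiedness makes $\RRw$ reduced, hence canonical. It then remains to apply M\'etivier's uniqueness result: both $\RRw$ and $\RR$ are canonical presentations of $\EE_0$ compatible with the common reduction order $>$, using $\RRw \subseteq {>}$ from \lemref{oKB Rw terminating} (via \lemref{KBl_KBo}) and the standing assumption $\RR \subseteq {>}$ of this subsection. Therefore \thmref{Metivier 2b} yields $\RRw \doteq \RR$, as required.

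I expect the genuinely substantive step to be the normal-form bridge $s, t \in \NF(\RRw) \Rightarrow s, t \in \NF(\RR)$, which is exactly the content of \lemref{Rw and R} (resting in turn on \lemref{CRM} and \lemref{Ew}); once it is in place the remainder is bookkeeping transported directly from the ground-total case. The one point demanding care is that, in contrast to \ssecref{ground total}, the order $>$ need not be ground-total and $\RR$ need not be orientable by any ground-total order---this being the whole motivation for the linear setting---so I must invoke \thmref{Metivier 2b} with the shared order $>$ rather than with any extension of it.
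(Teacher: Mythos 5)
Your proof is correct, and it uses a genuinely different decomposition from the paper's, although both arguments hinge on the same key bridge, namely the inclusion $\NF(\RRw) \subseteq \NF(\RR)$ of \lemref{Rw and R}. The paper does not establish $\EEw = \varnothing$ first: it feeds \lemref{Rw and R}, the inclusion ${\to_{\RRw}} \subseteq {\conv_\RR}$, and termination of $\RRw$ into the abstract criterion of \lemref[b]{reduced abstract} (with $\AA = \RR$ and $\BB = \RRw$), which yields in one stroke that $\RRw$ is complete and normalization equivalent to $\RR$; emptiness of $\EEw$ then falls out (a non-trivial persistent equation would have both sides $\RRw$-irreducible yet $\RRw$-convertible, contradicting confluence), and literal similarity follows from normalization equivalence of two reduced TRSs, that is, from \thmref{Metivier 2a}---a final step the paper leaves implicit. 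You instead transplant the proof of \thmref{KBo completeness}: first force $\EEw = \varnothing$ by the normal-form contradiction, then invoke \thmref{KBl correctness} (which needs fairness, a standing assumption here) to conclude that $\RRw$ is a complete presentation of $\EE_0$, and finish with \thmref{Metivier 2b} using compatibility of both systems with the shared order $>$. Both routes rest on the subsection's standing assumptions that $\RR$ is a reduced complete presentation with $\RR \subseteq {>}$ (the paper's auxiliary Lemmata~\ref{lem:Ew} and~\ref{lem:Rw and R} already rely on them), so you assume nothing beyond what the paper does. As for what each approach buys: the paper's route is more economical, since \lemref[b]{reduced abstract} delivers completeness of $\RRw$ without a second appeal to the correctness theorem, and normalization equivalence makes the uniqueness step immediate; your route is more uniform with the ground-total case of \thmref{KBo completeness} and has the expository advantage of making the final uniqueness step explicit, which the paper's terse proof omits.
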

\begin{proof}
The TRS $\RR$ is complete, the TRS $\RRw$ is terminating, and the
inclusion ${\to_\RRw} \subseteq {\xlr{}_\RR^*}$ holds because $\RR$ is a
complete presentation. Moreover,
$\NF(\RRw) \subseteq \NF(\RR)$ by \lemref{Rw and R}. Hence,
\lemref[b]{reduced abstract} applies.
Since $\RRw$ is a complete presentation, $\EEw = \varnothing$
by the assumption of a simplified system.
\end{proof}

\begin{exa}
By \thmref{KBl completeness} any simplifying \KBl run on the
equational system $\EE$ and the reduction order $\LPO$ from
\exaref{strategy} will result in a canonical presentation, independent of
the order in which inference steps are applied.
Note that \thmref{KBo completeness} does not apply since the given order
$\LPO$ is not ground total.
\end{exa}

We conclude the subsection by showing the absence of a complete
presentation for the equational system mentioned in the first
paragraph of \secref{ordered completion}.

\begin{exa}
Let $\EE$ be the ES consisting of the two equations $\m{0} + x \approx x$
and $x + y \approx y + x$. We show that $\EE$ admits no complete
presentation. Assume to the contrary that $\RR$ is a complete
presentation of $\EE$. We use
$\to_\RR^+$ as the reduction order $>$ for \KBl. Because
$\m{0} + x \xlr{}_\EE^* x$ implies $\m{0} + x \join_\RR x$
and $x \in \NF(\RR)$, we have $\m{0} + x > x$. In the same way
$x + \m{0} > x$ is derived. Therefore, the following fair run of \KBl
is constructed:
\begin{alignat*}{2}
(\EE,\varnothing) ~
& \vdlr{orient} &~&
(\{ x + y \approx y + x \},
 \{ \m{0} + x \to x \}) \\
& \vdlr{deduce} &&
(\{ x + y \approx y + x, x + \m{0} \approx x \},
 \{ \m{0} + x \to x \}) \\
& \vdlr{orient} &&
(\{ x + y \approx y + x \},
 \{ \m{0} + x \to x, x + \m{0} \to x \}) \\
& \vdlr{deduce} &&
(\{ x + y \approx y + x, \m{0} \approx \m{0} \},
 \{ \m{0} + x \to x, x + \m{0} \to x \}) \\
& \vdlr{delete} &&
(\{ x + y \approx y + x \},
 \{ \m{0} + x \to x, x + \m{0} \to x \}) \\
& \vdlr{deduce} &&
(\{ x + y \approx y + x, \m{0} \approx \m{0} \},
 \{ \m{0} + x \to x, x + \m{0} \to x \}) \\
& \vdlr{delete} && \cdots
\end{alignat*}
It is easy to check that this run is fair; the
two non-trivial critical pairs $x + \m{0} \approx x$ and
$\m{0} + x \approx x$ belong to $\EEi^\pm$.
We have $\EEw = \{ x + y \approx y + x \}$ and
$\RRw = \{ \m{0} + x \to x, x + \m{0} \to x \}$. Note that
$(\EEw,\RRw)$ is simplified.
According to \thmref{KBl completeness},
the persistent set $\EEw$ must be empty. This is a contradiction and
thus $\RR$ does not exist.
\end{exa}

In summary, our proof of \thmref{complete presentation okb1} resembles
the approach by Devie~\cite{D91}, though our version
is structured along several preliminary results
that are of independent interest, such as Lemmata~\ref{lem:CRM},~\ref{lem:Ew}, and~\ref{lem:Rw and R}.

\section{Conclusion}%
\label{sec:conclusion}

In this paper we have presented new and formalized proofs for a number
of correctness and completeness results for abstract completion, ranging
from the decidable case of ground-completion to completeness results for
ordered completion. By using modern abstract confluence criteria,
we could avoid the use of proof orders, which had a positive effect
on the Isabelle/HOL formalization.

We mention some topics for future work.
Concerning completion of ground systems, the literature contains other
interesting results that we might consider as target for future
formalization efforts.
Gallier \etal~\cite{GNPRS93} showed that every ground ES $\EE$
can be transformed into an equivalent canonical TRS in $O(n^3)$ time,
where $n$ is the combined size of the terms appearing in $\EE$.
Snyder~\cite{S93} improved this result to an $O(n \log n)$ time
algorithm. Moreover, his algorithm can enumerate all canonical
presentations, of which there are at most $2^k$~\cite[Theorem~4.7]{S93},
where $k$ is the number of equations in $\EE$. Furthermore, all canonical
presentations have the same number of rules.

In the context of ordered completion, completeness remains an open
problem in the general case: It is unknown whether an ordered completion
run can find a complete system $\RR$ for a set of input equations $\EE$
if neither $\EE$ is linear (\thmref{KBl completeness})
nor $\RR$ is compatible with a ground-total
reduction order (\thmref{KBo completeness}).

There are several important extensions of completion that we did not
consider in this paper. We mention completion in the presence of
associative and commutative (AC) symbols~\cite{PS81}, normalized
completion~\cite{M96,WM13}, as well as maximal completion~\cite{KH11}.
They are natural candidates for future formalization efforts.

\section*{Acknowledgement}
We are grateful for the detailed comments by the anonymous reviewers,
which greatly helped us to improve the paper.

\bibliographystyle{alpha}
\bibliography{submission}

\end{document}